\documentclass[11pt,reqno]{amsart}

\usepackage[utf8]{inputenc}
\usepackage{amsfonts,latexsym,amssymb,amsmath,amsthm,amsrefs,etex,slashed,cancel}
\usepackage{mathtools}
\usepackage{todonotes}
\usepackage{hyperref}
\usepackage{a4wide}
\usepackage{enumerate}
\usepackage{mathrsfs}
\let\originallhook\lhook
\usepackage{MnSymbol}
\let\lhook\originallhook
\usepackage{nicefrac}

\usepackage{nicematrix}
\usepackage{tikz}

\renewcommand{\Re}{\operatorname{Re}}
\renewcommand{\Im}{\operatorname{Im}}

\DeclareMathOperator{\supp}{supp}
\newcommand{\der}{\mathrm{d}}
\newcommand{\rmi}{\mathrm{i}}

\newcommand{\tr}{\mathrm{tr}}
\newcommand{\Tr}{\mathrm{tr}}

\newcommand{\M}{\mathbb{M}}
\DeclarePairedDelimiterX{\norm}[1]{\lVert}{\rVert}{#1}

\newcommand{\R}{\mathbb{R}}
\newcommand{\C}{\mathbb{C}}
\newcommand{\N}{\mathbb{N}}

\newcommand{\comp}{\mathrm{c}}
\newcommand{\compp}{\mathrm{comp}}

\newcommand{\loc}{\mathrm{loc}}

\newcommand{\DN}{Q}
\newcommand{\CLP}{K}
\newcommand{\tm}{\Delta_T}
\newcommand{\Nuc}{\mathscr{I}_1}
\newcommand{\HS}{\mathscr{I}_2}

\newcommand{\diag}{\mathrm{diag}}

\newcommand{\gammaDir}[1]{\gamma_D^{#1}}
\newcommand{\gammaNeu}[1]{\gamma_N^{#1}}
\newcommand{\gammaCau}[1]{\gamma^{#1}}

\numberwithin{equation}{section}

\newtheorem{theorem}{Theorem}[section]
\newtheorem{definition}[theorem]{Definition}

\newtheorem{lemma}[theorem]{Lemma}

\newtheorem{corollary}[theorem]{Corollary}
\newtheorem{proposition}[theorem]{Proposition}

\newtheorem{rem}[theorem]{Remark}

\usetikzlibrary{calc}

\def\tikzeps   {8}        
\def\tikzR     {3}        
\def\tikzrmin  {.6}       
\def\tikzdelta {.15}      

\def\gSect    {gray!30}  
\def\gOverlap {gray!60}  
\def\arcstyle {thin}     

\newcommand*\fillwedge[3]{%
  \path[fill=#1] (0,0) -- (#2:\tikzR) arc (#2:#3:\tikzR) -- cycle;
}

\title[Relative trace formulas]{Relative trace formulas for obstacle scattering with Neumann and transmission boundary conditions}

\author[A. Hofmann]{Arne Hofmann}
\address{Institute for Analysis, Leibniz University Hannover, Welfengarten 1, D-30167 Hannover} \email{arne.hofmann@math.uni-hannover.de}

\author[A. Strohmaier]{Alexander Strohmaier}
\address{Institute for Analysis, Leibniz University Hannover, Welfengarten 1, D-30167 Hannover} \email{a.strohmaier@math.uni-hannover.de}

\begin{document}

\begin{abstract}
 We consider the case of scattering by several obstacles in $\mathbb{R}^d$ for $d \geq 2$.
 We establish a relative trace formula for Neumann and transmission boundary conditions analogous to the one obtained in \cite{HSW} for Dirichlet boundary conditions.

 In the case of $f(x) = x^{1/2}$ the trace has the interpretation of the Casimir energy of the obstacle configuration.
 In the one-dimensional case, we recover a rigorous version of the Lifshitz formula for the Casimir energy of parallel plates with frequency-independent electric permittivity and magnetic permeability.
 We thereby strengthen the mathematical foundations of the Casimir effect and demonstrate the flexibility of the rigorous approach established in \cite{Fang2021AMA} and \cite{HSW}.
\end{abstract}

\maketitle
\setcounter{tocdepth}{1}
\tableofcontents

\section{Introduction}

\subsection{Overview.}
Trace formulae are an important tool to study spectra of self-adjoint differential operators on manifolds.
On compact manifolds one typically studies traces of functions of the Laplace operator $f(-\Delta)$, which are trace-class if the function $f$ is decaying sufficiently fast near $\infty$. In the noncompact setting the spectrum of the Laplace operator is no longer discrete and such functions of the Laplacian are not trace-class. Scattering theory in this case builds on forming operator differences. An instructive abstract example is the Birman--Krein trace formula. 
Let $A$ and $A_0$ be self-adjoint operators, bounded below, and assume that for some $k\in \N$ and some $z\in \mathbb C\setminus \mathbb R$ one has
\begin{align}
    \label{eq:trace-class-hypothesis-birman-krein}
    (A-z)^{-k}-(A_0-z)^{-k}\in \Nuc(L^2(\mathbb R^d)),
\end{align}
where $\Nuc$ denotes the trace class. 
Under this hypothesis one can define a spectral shift function $\xi(\lambda)$ for the pair $(A,A_0)$, and for suitable functions $f$ the Birman--Krein trace formula gives
\[
    \tr\bigl(f(A)-f(A_0)\bigr)
    =
    \int_{\mathbb R} f'(\lambda)\,\xi(\lambda)\,d\lambda .
\]
In the scattering setting, the same spectral shift function is related to the determinant of the scattering matrix by the Birman--Krein formula
\[
    \det S(\lambda)=e^{-2\pi \rmi\, \xi(\lambda)}.
\]
The admissible class of functions $f$ depends on the resolvent power appearing in the trace-class hypothesis (\ref{eq:trace-class-hypothesis-birman-krein}), but always requires decay at infinity.

For applications in mathematical physics, in particular in the context of quantum field theories and Casimir energies, one is also interested in traces associated to positive powers of the Laplace operator, $(-\Delta)^\alpha, \; \alpha>0$, the case $\alpha=\frac12$ being of particular interest in the context of the Casimir effect. The Birman-Krein formula does not apply in this situation.

In \cite{HSW}, a trace formula was proved for a \emph{relative operator} associated with a configuration of obstacles satisfying Dirichlet boundary conditions. 
The relative construction subtracts not only the free operator, but also the contributions of the individual obstacles. 
These additional cancellations make it possible to treat functions that are not necessarily bounded, including functions relevant to Casimir energy.
The trace is computed in terms of a function $\Xi$ which is a close relative of the spectral shift function and the scattering phase (and of the relative determinant studied in \cite{carron1999determinant}),
and can be expressed as a Fredholm determinant of an operator built from the single layer operators of the obstacles.
As a special case, applying the trace formula to the function $f(x) = x^{1/2}$ yields the Casimir energy of the configuration.\\

Trace-formula representations have a long history at the
formal level in the physics literature on Casimir energies of obstacle
configurations. An early functional-integral treatment, in which the vacuum energy is expressed as a logarithmic determinant of a boundary operator, can for example be found already in \cite{bordagRobaschikWieczorek1985}.
Around 2006--2009 this was given a scattering-theoretic
interpretation by Kenneth and Klich \cite{kenneth06,kennethKlich2008} and
by Emig, Graham, Jaffe, Kardar and Rahi \cite{emig2007,emig2008casimir,%
rahiEmig2009}, who wrote the Casimir energy of a configuration of compact
objects as an integral over a determinant built from the $T$-matrices of the individual
objects and the free propagator between them. A survey of numerical methods exploiting representations
of this kind can be found in \cite{johnson2011numerical}. The function
$\Xi$ in our trace formula plays the role of this physics-side determinant,
with two differences: we work with boundary layer operators (which are
well adapted to Lipschitz boundaries) and all convergence statements are proved rigorously in trace-class norm
rather than at the level of formal manipulations of functional integrals.\\

In the present paper we extend these theorems by treating obstacles with Neumann or transmission boundary conditions.
In particular the transmission case requires new machinery. The function $\Xi$ is defined as a
Fredholm determinant on a $\lambda$-dependent subspace bundle
rather than on a fixed Sobolev space, and
the boundary integral system involves an operator
built from Calderon projectors at two distinct spectral parameters, intertwined
by the transmission matrix. The treatment of Lipschitz boundaries
in arbitrary dimension, and in particular the analysis at $\lambda=0$ in
dimension two, requires Hahn-meromorphic Fredholm theory rather than ordinary holomorphic
Fredholm theory. A further feature is that in the Neumann
case the relevant layer operators are singular at $\lambda=0$.

As a special case, the formula for the transmission case allows us to compute the Casimir energy for materials with a dielectric constant which is independent of the spectral parameter.
Our methods are general enough to handle obstacles with Lipschitz boundaries, and therefore apply to the polyhedral obstacles that occur in practical NEMS (nanoelectromechanical systems) devices where Casimir forces are relevant.
As a special case we recover the Lifshitz formula for dielectric half spaces that was given in \cite{miltonCasimirEnergyDispersion2010}.

We emphasize that our formula is proved completely rigorously, and does not rely on undefined path integrals.
We also give a more general formula for bounded dielectric obstacles.

\subsection{Setting and main results.}
We fix once and for all the geometric setting for the rest of the article.
Let $\Omega_1, \ldots, \Omega_M$ be a finite collection of open, bounded, connected Lipschitz domains in $\mathbb{R}^d$, $d\geq 2$, with $\operatorname{dist}(\Omega_i, \Omega_j) \geq \delta > 0$ for all $i\neq j$.
We write $\Omega := \bigsqcup_{i=1}^M \Omega_i$ for the full obstacle arrangement.
We also call $\Omega^- := \Omega$ the \emph{interior domain}.
The \emph{exterior domain} is $\Omega^+ := \mathbb{R}^d \setminus \overline{\Omega}$.
We have the same distinction of interior and exterior for each individual obstacle, so that $\Omega_i^+ := \mathbb{R}^d \setminus \overline{\Omega}_i$.

One can then consider the Laplace operator $\Delta_D$ obtained by imposing Dirichlet boundary conditions on $\partial \Omega$, as an unbounded self-adjoint operator in the Hilbert space $L^2(\R^d)$. Similarly, one constructs the operators $\Delta_{D,i}$ by imposing Dirichlet boundary conditions on $\partial \Omega_i$ only.
Finally we denote by $\Delta_0$ the Laplace operator on $\R^d$ without boundary conditions.
In this paper, $\Delta=\sum_{i=1}^d \partial_i^2$ (so $-\Delta$ is the \emph{positive} Laplacian).

The focus of our analysis is on linear operators of the form
$$
 R_{s,D} = (-\Delta_{D}+m^2)^s - (-\Delta_{0}+m^2)^s - \sum_{i=1}^M \left[ (-\Delta_{D,i}+m^2)^s - (-\Delta_{0}+m^2)^s\right].
$$
A priori, these are densely defined operators with domain containing $C_0^\infty(\mathbb{R}^d\setminus \partial \Omega)$ for any $s \in \mathbb{C}$.

For smooth domains it was shown in \cite{HSW} that for any $s \in \C$ with $\Re(s)>0$ and any $m > 0$, $R_{s, D}$ is bounded and the operator closure of $R_{s,D}$ is a trace-class operator on $L^2(\R^d)$.

In the case $s=-1$ we obtain the \emph{relative resolvent} at $\lambda = \rmi m$, i.e. $R_{-1, D} = R_{\mathrm{rel}, D}(
\rmi m)$ where
\begin{align*}
R_{\mathrm{rel}, D}(\lambda) = (-\Delta_{D}+m^2)^{-1} - (-\Delta_{0}+m^2)^{-1} - \sum_{i=1}^M \left[ (-\Delta_{D,i}+m^2)^{-1} - (-\Delta_{0}+m^2)^{-1}\right].
\end{align*}

The trace of $R_{s,D}$ can be computed as
$$
\tr(R_{s,D}) = \frac{2 s}{\pi} \sin (\pi s) \int_m^{\infty} \lambda\left(\lambda^2-m^2\right)^{s-1} \Xi_D(\rmi \lambda) \mathrm{d} \lambda.
$$
Here $\Xi_D(i\lambda)$ is a continuous function on $[0,\infty)$, which decays exponentially as $\lambda \to \infty$.
It is defined as a Fredholm determinant of an operator obtained from the single layer $S_\lambda$ operator on $\partial \Omega$.
This allows boundary element methods to be applied to compute operator traces. A particular example is the Casimir energy of the configuration $\Omega= \Omega_1 \cup \Omega_2$, thought of as two compact objects placed in space. One can define this Casimir energy as $\frac{1}{2} \tr R_{\frac{1}{2},D}$ and it was shown in \cite{Fang2021AMA} for the case of smooth domains that the force computed from this energy is indeed always the same as the force computed from the renormalised stress energy tensor. This therefore allows to compute Casimir forces via boundary integral operators.

The aim of this article is to establish such formulas for boundary conditions other than Dirichlet boundary conditions.

Our main results are stated in terms of the boundary layer operators, to be defined in Section \ref{sec:boundary-layer-ops}. We denote by $S_\lambda$ the single layer operator, $D_\lambda$ and $D_\lambda'$ are the double layer operator and its adjoint. The hypersingular operator will be denoted by $N_\lambda$.

Our first theorem states the main result of \cite{HSW} for Lipschitz domains.

\begin{theorem} \label{thm:dirichlet-trace-formula}
    Assume $\Re(s)>0$. Then
 the operator $R_{s,D}$ extends continuously to a trace-class operator in $L^2(\R^d)$ with trace
\begin{align*}
\tr(R_{s,D}) = \frac{2 s}{\pi} \sin (\pi s) \int_m^{\infty} \lambda\left(\lambda^2-m^2\right)^{s-1} \Xi_D(\rmi \lambda) \mathrm{d} \lambda.
\end{align*}
where 
$$
  \Xi_D(\zeta) = \log \det \left( S_\zeta (S_{\textrm{diag},\zeta})^{-1} \right), \qquad \operatorname{Im}(\zeta) > 0.
$$
\end{theorem}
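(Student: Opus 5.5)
We follow the strategy of \cite{HSW} and replace each smooth-boundary ingredient by its Lipschitz counterpart. The argument has three steps. First, express the relative resolvent through boundary layer operators. Second, establish a trace-class bound with exponential decay along the imaginary axis. Third, convert the trace of $R_{s,D}$ into an integral over the relative resolvent and identify the integrand as a logarithmic derivative of a Fredholm determinant.

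\textbf{Step 1: layer potential representation.} For $\Im\zeta>0$ the Dirichlet resolvent has the representation
\[
(-\Delta_D-\zeta^2)^{-1}=(-\Delta_0-\zeta^2)^{-1}-\mathcal{S}_\zeta S_\zeta^{-1}\gamma_D(-\Delta_0-\zeta^2)^{-1},
\]
where $\mathcal{S}_\zeta$ is the single layer potential. The same formula holds for each $\Omega_i$, with $S_{\zeta}$ replaced by the block $S_{ii,\zeta}$. Invertibility of $S_\zeta\colon H^{-1/2}(\partial\Omega)\to H^{1/2}(\partial\Omega)$ for Lipschitz boundaries and $\Im\zeta>0$ follows from coercivity, via a Costabel/Verchota-type argument. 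Write $S_\zeta=S_{\mathrm{diag},\zeta}+T_\zeta$, where $T_\zeta$ collects the off-diagonal blocks. The relative resolvent then becomes a sum of terms, and every term contains at least one off-diagonal kernel $G_\zeta(x-y)$ with $x\in\partial\Omega_i$, $y\in\partial\Omega_j$, $i\neq j$. These kernels are smooth, because $\operatorname{dist}(\Omega_i,\Omega_j)\ge\delta$, and they are $O(e^{-\delta\Im\zeta})$ in any $C^k$ norm.

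From this we conclude that $T_\zeta$ is trace class between the Sobolev spaces, with exponential decay. Consequently $R_{\mathrm{rel},D}(\zeta)$ is trace class on $L^2(\R^d)$ with norm $O(e^{-\delta'\Im\zeta})$ as $\Im\zeta\to\infty$. The only Lipschitz-specific inputs here are the mapping properties $\gamma_D\colon H^1\to H^{1/2}$ and $\mathcal{S}_\zeta\colon H^{-1/2}\to H^1_{\loc}$, both of which hold on Lipschitz domains.

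\textbf{Step 2: trace identity.} The identity to prove is
\[
\tr R_{\mathrm{rel},D}(\zeta)=-\frac{1}{2\zeta}\frac{d}{d\zeta}\Xi_D(\zeta).
\]
This follows from $\frac{d}{d\zeta}S_\zeta=\gamma_D\,2\zeta(-\Delta_0-\zeta^2)^{-2}\gamma_D^*$ together with cyclicity of the trace, in the form $\tr(AB)=\tr(BA)$ for $A$ Hilbert–Schmidt-type boundary/bulk maps. The result is
\[
\frac{d}{d\zeta}\log\det\bigl(S_\zeta S_{\mathrm{diag},\zeta}^{-1}\bigr)=\tr\bigl(S_\zeta^{-1}\dot S_\zeta-S_{\mathrm{diag},\zeta}^{-1}\dot S_{\mathrm{diag},\zeta}\bigr).
\]
We have to check that this difference is trace class. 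It is, because it equals an expression in which $T_\zeta$ or $\dot T_\zeta$ appears at least once. The determinant is well defined since $S_\zeta S_{\mathrm{diag},\zeta}^{-1}-1=T_\zeta S_{\mathrm{diag},\zeta}^{-1}$ is trace class on $H^{1/2}$.

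\textbf{Step 3: functional calculus.} For $\Re s>0$ we use a Cauchy/Stieltjes representation of $(x+m^2)^s$. We take the difference of the four operator families, which cancels the divergent parts, and deform the contour onto the imaginary axis $\zeta=\rmi\lambda$, $\lambda>m$. Combining the jump $2\rmi\sin(\pi s)(\lambda^2-m^2)^s$ across the branch cut with the identity from Step 2 and integrating by parts gives
\[
\tr R_{s,D}=\frac{2s}{\pi}\sin(\pi s)\int_m^\infty\lambda(\lambda^2-m^2)^{s-1}\Xi_D(\rmi\lambda)\,\der\lambda .
\]
The boundary terms vanish: at $\infty$ this is the exponential decay of $\Xi_D$, and at $\lambda=m$ it is the factor $(\lambda^2-m^2)^s$ with $\Re s>0$. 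The same trace-class estimates, integrated along the contour, also show that the operator closure of $R_{s,D}$ is trace class.

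\textbf{Main obstacle.} We expect the hardest part to be Step 3 for unbounded $f$, when $0<\Re s$ is large. There the contour integral converges in trace norm only because of the exponential decay from Step 1. We must also control the trace norm of terms such as $\mathcal{S}_\zeta S_{\mathrm{diag},\zeta}^{-1}T_\zeta\cdots$ uniformly, polynomially in $|\zeta|$ on the Lipschitz boundary, where only $H^{\pm1/2}$ mapping properties are available. We would handle this by factoring through a smooth collar: the off-diagonal kernel extends smoothly to neighbourhoods of $\partial\Omega_i$, which gives trace-class smoothing with explicit bounds regardless of boundary regularity.
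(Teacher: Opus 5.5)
Your architecture is the one the paper inherits from \cite{HSW}: the Krein formula, the splitting $S_\zeta=S_{\mathrm{diag},\zeta}+T_\zeta$, the identity $\dot S_\zeta=2\zeta\widetilde S_\zeta^t\widetilde S_\zeta$ with Jacobi's formula, and a contour argument. In Step 2 you should also state explicitly that $\tr(\dot T_\zeta S_{\mathrm{diag},\zeta}^{-1})=0$ because the operator is off-diagonal (Remark \ref{rem:off-diag-trace-zero}).

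\textbf{The gap.} It is in Step 1, where you pass from ``$T_\zeta$ is trace class with exponential decay'' to $\norm{R_{\mathrm{rel},D}(\zeta)}_{\Nuc}=O(e^{-\delta'\Im\zeta})$. Since $R_{\mathrm{rel},D}(\zeta)=\widetilde S_\zeta S_{\mathrm{diag},\zeta}^{-1}T_\zeta S_\zeta^{-1}\widetilde S_\zeta^t$, what you actually have is
\[
\norm{R_{\mathrm{rel},D}(\zeta)}_{\Nuc}\le\norm{\widetilde S_\zeta}^2\,\norm{S_{\mathrm{diag},\zeta}^{-1}}\,\norm{S_\zeta^{-1}}\,\norm{T_\zeta}_{\Nuc}.
\]
Your smooth-collar argument, which is what Lemma \ref{lem:trace-class-smoothing-ops} does, controls only the last factor.

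The other three factors must grow at most polynomially in $|\zeta|$ along the whole contour. Moreover, $\widetilde S_\zeta$ has to be bounded as a map $H^{-1/2}(\partial\Omega)\to L^2(\R^d)$, not merely into $H^1_{\loc}$. Invertibility of $S_\zeta$ at each fixed $\zeta$ (coercivity at $\zeta=\rmi$ plus Fredholm theory, as in Theorem \ref{thm:boundary-ops-fredholm}) says nothing about how $\norm{S_\zeta^{-1}}$ behaves as $|\zeta|\to\infty$.

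So your claim that the only Lipschitz-specific inputs are the mapping properties of $\gamma_D$ and $\widetilde S_\zeta$ is not correct. This quantitative invertibility is precisely the new Lipschitz work in the paper. It comes from the energy estimate of Theorem \ref{thm:helmholtz-lambda-estimate}: integrate by parts and take the real or imaginary part of $\lambda^2$, depending on the subsector of $\mathfrak D_\epsilon$. This gives $\norm{Q_\lambda^\pm}\lesssim|\lambda|^2$, hence $\norm{S_\lambda^{-1}}=\norm{Q_\lambda^++Q_\lambda^-}\lesssim|\lambda|^2$ on $\mathfrak D_\epsilon\setminus B_R(0)$ (Proposition \ref{prop:boundary-op-bounds}). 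The global bound on $\widetilde S_\lambda$ is Lemma \ref{lem:bd-potential-bounds}, proved by a near-field/far-field split with a separate monopole term.

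\textbf{A cheaper fix for the Dirichlet case.} You can close the gap more easily than the paper does by never leaving the imaginary axis. For $\zeta=\rmi\mu$ and $u=\widetilde S_{\rmi\mu}\varphi$ one has $\int_{\partial\Omega}\overline{\varphi}\,S_{\rmi\mu}\varphi\,\der\sigma=\norm{\nabla u}^2+\mu^2\norm{u}^2$ and $\norm{\varphi}_{H^{-1/2}}\lesssim(1+\mu^2)\norm{u}_{H^1}$. Tracking constants then gives polynomial bounds on $S_{\rmi\mu}^{-1}$, $S_{\mathrm{diag},\rmi\mu}^{-1}$ and $\widetilde S_{\rmi\mu}$ for $\mu\ge m>0$.

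Combine this with the Stieltjes formula $(A+m^2)^s=\frac{\sin\pi s}{\pi}\int_0^\infty t^{s-1}(A+m^2)(A+m^2+t)^{-1}\,\der t$ for $0<\Re s<1$. The identity terms cancel in the relative combination, leaving
\[
R_{s,D}=-\frac{2\sin\pi s}{\pi}\int_m^\infty\mu(\mu^2-m^2)^sR_{\mathrm{rel},D}(\rmi\mu)\,\der\mu .
\]
This extends to $\Re s>0$ by analytic continuation in $s$, because $C_0^\infty(\R^d\setminus\partial\Omega)$ lies in the domain of every power of each operator. Step 2 and one integration by parts then give the theorem, with no sector estimates and no analysis near $\zeta=0$.

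As written, however, your Step 3 deforms a contour onto the axis. That needs the sector bounds above. If the contour runs through $0$, as in Section \ref{sec:proof-main-thm}, it also needs control of $S_\zeta$ and $\widetilde S_\zeta$ near $\zeta=0$, where $S_\zeta$ has a $\log\zeta$ singularity in $d=2$. Neither is supplied.
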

In the above the operator $S_{\textrm{diag},\lambda}$ is the diagonal part of the single layer operator 
$S_\lambda: H^{-\frac{1}{2}}(\partial \Omega) \to H^{\frac{1}{2}}(\partial \Omega)$ 
on $\partial \Omega$ with respect to the decomposition $H^s(\partial \Omega) = \bigoplus_{i=1}^M H^s(\partial \Omega_i)$,  for $-1 \leq s \leq 1$.

We will not go through the proof of Theorem \ref{thm:dirichlet-trace-formula} in detail.
The proof of the corresponding theorem for Neumann boundary conditions (Theorem \ref{thm:neumann-trace-formula}) will establish all the analytical properties required to see that \cite{HSW} can be adapted to transfer the result from smooth to Lipschitz boundaries.

We now formulate the main results of this article.
Replacing Dirichlet boundary conditions with Neumann boundary conditions we obtain the Neumann analogue of the relative operator
$$
R_{s,N} = (-\Delta_{N}+m^2)^s - (-\Delta_0 + m^2)^{s} - \sum_{i=1}^M\left[(-\Delta_{N,i}+m^2)^s - (-\Delta_{0}+m^2)^s\right].
$$
\begin{theorem} \label{thm:neumann-trace-formula}
Assume $\Re(s)>0$. Then
 the operator $R_{s,N}$ extends continuously to a trace-class operator in $L^2(\R^d)$ with trace
 $$
 \mathrm{tr}(R_{s,N}) = \frac{2s}{\pi} \sin(\pi s) \int_m^\infty \lambda\left(\lambda^2-m^2\right)^{s-1} \Xi_N(\rmi \lambda) \mathrm{d} \lambda.
$$ 
where 
$$
  \Xi_N(\zeta) = \log \det \left( N_\zeta (N_{\textrm{diag},\zeta})^{-1} \right), \qquad \operatorname{Im}(\zeta) > 0.
$$
\end{theorem}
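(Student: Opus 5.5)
The plan is to follow the Dirichlet argument of \cite{HSW} and replace the single layer operator by the hypersingular operator throughout. There are four steps: a boundary integral formula for the relative resolvent, a trace-class bound on it, a trace identity at imaginary spectral parameter, and a contour deformation that yields the $\lambda^{2s}$ integral.

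\emph{Step 1: resolvent formula.} For $\operatorname{Im}\lambda>0$, write the Neumann resolvent as a double layer correction of the free resolvent:
\begin{align*}
(-\Delta_N-\lambda^2)^{-1}=(-\Delta_0-\lambda^2)^{-1}-\mathcal{D}_\lambda N_\lambda^{-1}\gammaNeu{}(-\Delta_0-\lambda^2)^{-1},
\end{align*}
where $\mathcal{D}_\lambda$ is the double layer potential. This uses the jump relations: the Neumann trace of $\mathcal{D}_\lambda\varphi$ is continuous across $\partial\Omega$ and equals $-N_\lambda\varphi$, with the sign convention to be fixed in Section~\ref{sec:boundary-layer-ops}. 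It also uses invertibility of $N_\lambda\colon H^{1/2}(\partial\Omega)\to H^{-1/2}(\partial\Omega)$ for $\operatorname{Im}\lambda>0$, which on Lipschitz domains follows from coercivity of $N_\lambda$ for imaginary $\lambda$ and the Fredholm property. The same formula holds for each $\Omega_i$ with $N_{\mathrm{diag},\lambda}$ in place of $N_\lambda$. Subtracting, the relative resolvent becomes
\begin{align*}
R_{\mathrm{rel},N}(\lambda)=-\mathcal{D}_\lambda\bigl(N_\lambda^{-1}-N_{\mathrm{diag},\lambda}^{-1}\bigr)\gammaNeu{}R_0(\lambda).
\end{align*}
Here $N_\lambda^{-1}-N_{\mathrm{diag},\lambda}^{-1}=-N_\lambda^{-1}(N_\lambda-N_{\mathrm{diag},\lambda})N_{\mathrm{diag},\lambda}^{-1}$, and the off-diagonal part $N_\lambda-N_{\mathrm{diag},\lambda}$ has a smooth kernel because $\operatorname{dist}(\Omega_i,\Omega_j)\ge\delta$.

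\emph{Step 2: trace-class bounds.} This step should give two things. First, $R_{\mathrm{rel},N}(\lambda)$ is trace class. Second, its trace norm, and that of its $\lambda$-derivatives, decays like $e^{-\delta|\lambda|/2}$ along the imaginary axis, with polynomial control near the real axis. The off-diagonal operator is smoothing, so I would insert smooth cutoffs near each boundary. That reduces the problem to showing that $\mathcal{D}_\lambda$ and $\gammaNeu{}R_0(\lambda)$ are bounded between suitable Sobolev spaces on Lipschitz boundaries, with polynomial bounds in $|\lambda|$. The exponential decay then comes from the off-diagonal kernel. Along the same lines one shows that $N_\lambda N_{\mathrm{diag},\lambda}^{-1}-I$ is trace class on $H^{-1/2}(\partial\Omega)$, so the Fredholm determinant defining $\Xi_N$ exists and is holomorphic in the upper half-plane.

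\emph{Step 3: trace identity.} The key identity to establish is
\begin{align*}
\operatorname{tr} R_{\mathrm{rel},N}(\lambda)=-\frac{1}{2\lambda}\,\frac{\der}{\der\lambda}\Xi_N(\lambda).
\end{align*}
To prove it, use cyclicity to move $\gammaNeu{}R_0(\lambda)\mathcal{D}_\lambda$ onto the boundary, and identify this operator with $\frac{1}{2\lambda}\partial_\lambda N_\lambda$, using $\partial_\lambda R_0(\lambda)=2\lambda R_0(\lambda)^2$. Then apply Jacobi's formula for $\partial_\lambda\log\det$. This calculation needs care, because $\mathcal{D}_\lambda$ is not a compact perturbation, and I would justify each manipulation with the cutoffs from Step 2.

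\emph{Step 4: functional calculus.} To obtain $(-\Delta+m^2)^s$, express $x^s$ for $\Re s>0$ through a resolvent integral along the branch cut, deform it onto the imaginary axis $\lambda=\rmi\mu$ with $\mu>m$, and insert Step 3. Integrating by parts then transfers the derivative from $\Xi_N$ onto the weight, giving the factor $\frac{2s}{\pi}\sin(\pi s)\lambda(\lambda^2-m^2)^{s-1}$. The boundary terms vanish thanks to the exponential decay of $\Xi_N(\rmi\mu)$ and its continuity at $\mu=m$. Trace-class convergence of the operator integral follows from the estimates of Step 2, and this also establishes the trace-class extension of $R_{s,N}$.

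The main obstacle is Step 2 in the Lipschitz setting. Invertibility and resolvent-type bounds for $N_\lambda$ on $H^{1/2}(\partial\Omega)$, uniform in $\lambda$ on the imaginary axis, must come from Verchota-type and coercivity estimates rather than pseudodifferential calculus. In addition, $N_\lambda$ is only invertible modulo constants near $\lambda=0$. Since $m>0$, we stay away from $\lambda=0$, so I would rely on coercivity of the form $\langle N_{\rmi\mu}\varphi,\varphi\rangle\gtrsim\|\varphi\|_{H^{1/2}}^2$ for $\mu\ge m$. This coercivity follows from the energy identity $\|\nabla u\|^2+\mu^2\|u\|^2$ for $u=\mathcal{D}_{\rmi\mu}\varphi$ on both sides of $\partial\Omega$.
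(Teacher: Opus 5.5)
\textbf{Verdict: correct, but Step~4 takes a different route from the paper.}

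\textbf{Steps 1--3.} These are the paper's argument. You use the Krein-type formula $R_{\mathrm{rel},N}(\lambda)=\widetilde D_\lambda\bigl(N_{\mathrm{diag},\lambda}^{-1}-N_\lambda^{-1}\bigr)\widetilde D_\lambda'$; your $\gammaNeu{}R_0(\lambda)$ is $\widetilde D_\lambda'$. Trace-class and exponential bounds come from the smooth off-diagonal part $T_\lambda=N_\lambda-N_{\mathrm{diag},\lambda}$. The trace identity follows from cyclicity, $\widetilde D_\lambda^t\widetilde D_\lambda=\frac{1}{2\lambda}\dot N_\lambda$, and Jacobi's formula. There you also need $\tr(\dot T_\lambda N_{\mathrm{diag},\lambda}^{-1})=0$, which holds because the operator is off-diagonal.

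\textbf{How the paper does Step 4.} The paper writes the relative operator for $f(z)=(z+m^2)^s-m^{2s}$ as a Riesz--Dunford integral over the boundary $\Gamma_\epsilon$ of the sector $\mathfrak D_\epsilon$. It inserts the regulator $e^{-z/n}$, removes it by dominated convergence in trace norm, integrates by parts, and only then collapses the contour onto the cut $\rmi[m,\infty)$. Since $\Gamma_\epsilon$ passes through $\lambda=0$, this requires trace-norm control on the whole sector and down to $\lambda=0$. That is where the sector estimates of Section~\ref{sec:boundary-op-poly-growth} come in. It is also where the cancellation of the $\lambda^{-2}\Pi$ pole of $N_{\mathrm{diag},\lambda}^{-1}$ comes in (Lemma~\ref{lem:pole-contribution-hypersingular}, Proposition~\ref{prop:hypersingular-inverse-trace-class}), with Hahn-holomorphic Fredholm theory in even dimensions. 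In exchange, the paper's argument also covers $m=0$.

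\textbf{How your route works.} If you work directly on the cut via $B^s=\frac{\sin\pi s}{\pi}\int_0^\infty t^{s-1}B(B+t)^{-1}\,\der t$ with $B=A+m^2$, the constant terms cancel in the relative combination and you get
\begin{align*}
R_{s,N}=-\frac{\sin\pi s}{\pi}\int_0^\infty t^{s}\,R_{\mathrm{rel},N}\bigl(\rmi\sqrt{m^2+t}\bigr)\,\der t .
\end{align*}
Only $\lambda\in\rmi[m,\infty)$ enters. There the energy identity you describe gives uniform invertibility of $N_{\rmi\mu}$ and $N_{\mathrm{diag},\rmi\mu}$ for $\mu\geq m>0$, and no zero-energy analysis is needed. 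For $m>0$ this is a genuine simplification, but it does not reach $m=0$.

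\textbf{Two points to fix.}
\begin{enumerate}
\item The representation above holds for each operator separately only when $0<\Re s<1$. For $\Re s\geq 1$ you still have to identify the integral, which converges in trace norm for $\Re s>-1$, with the densely defined $R_{s,N}$. One way: for $\phi,\psi\in C_c^\infty(\R^d\setminus\partial\Omega)$, which lie in $D(A^k)$ for every $k$ and every operator involved, $s\mapsto\langle\phi,R_{s,N}\psi\rangle$ is entire. The identity then extends by analytic continuation in $s$. The paper's regulator serves the same purpose.
\item Drop the ``polynomial control near the real axis'' and the talk of deforming onto the imaginary axis. On the branch-cut route neither is needed. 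A contour near the real axis would not converge in any case, because only exponential decay in $\Im\lambda$ beats the growth of $(\lambda^2+m^2)^s$. If you prefer to start from a Riesz--Dunford contour, you need exponential trace-norm decay on a sector $\mathfrak D_\epsilon$, as in the paper. Placing the vertex of the $z$-contour at $-c$ with $0<c<m^2$ still lets you avoid $\lambda=0$.
\end{enumerate}
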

Unlike in the case of Dirichlet boundary conditions there is the extra complication that $(N_{\textrm{diag},\lambda})^{-1}$ is not regular at $\lambda=0$.
However, it turns out that the singularities of $N_{\mathrm{diag}, \lambda}^{-1}$ are in the kernel of $N_\lambda$, and therefore $N_\lambda N_{\mathrm{diag}, \lambda}^{-1}$ is regular at $\lambda = 0$.
This resembles the parallel situation for the Maxwell problem treated in \cite{strohmaierRelativeTraceFormula2021}.

As shown for the Dirichlet case in \cite{HSW}, the theorem follows from a holomorphic functional calculus argument and the computation of the trace of the relative resolvent.

\begin{theorem}
The relative Neumann resolvent 
\begin{align*}
R_{\mathrm{rel}, N}(\lambda) = (-\Delta_{N}-\lambda^2)^{-1} - (-\Delta_{0}-\lambda^2)^{-1} - \sum_{i=1}^M\left[ (-\Delta_{N,i}-\lambda^2)^{-1} - (-\Delta_{0}-\lambda^2)^{-1}\right].
\end{align*} is trace class and
\begin{align}
\tr R_{\mathrm{rel}, N}(\lambda) = -\frac{1}{2\lambda}\Xi_N'(\lambda).
\end{align}
\end{theorem}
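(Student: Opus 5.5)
The plan is to follow the strategy of \cite{HSW} for the Dirichlet case and adapt it to Neumann data. First I write the Neumann resolvent via boundary layer potentials. For $\operatorname{Im}\lambda>0$ let $R_0(\lambda)=(-\Delta_0-\lambda^2)^{-1}$. For $f\in L^2(\R^d)$ the function $u=(-\Delta_N-\lambda^2)^{-1}f$ is $R_0(\lambda)f$ plus a correction solving the homogeneous Helmholtz equation in $\Omega^\pm$ and correcting the Neumann data on both sides. I represent that correction by a double layer potential $\mathcal{D}_\lambda\phi$. Its conormal derivative is continuous across $\partial\Omega$ and equals $-N_\lambda\phi$ (the sign depends on convention), so
\[
(-\Delta_N-\lambda^2)^{-1}=R_0(\lambda)+\mathcal{D}_\lambda N_\lambda^{-1}\gammaNeu{}R_0(\lambda),
\]
where $\gammaNeu{}R_0(\lambda)$ is the two-sided normal derivative of the free resolvent. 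This is valid wherever $N_\lambda:H^{1/2}(\partial\Omega)\to H^{-1/2}(\partial\Omega)$ is invertible, which holds for $\operatorname{Im}\lambda>0$. The same formula holds for each obstacle with $N_{\lambda,i}$, and the $N_{\lambda,i}$ are exactly the diagonal blocks of $N_\lambda$. Writing $N_\lambda=N_{\mathrm{diag},\lambda}+T_\lambda$, with $T_\lambda$ the off-diagonal part, I get
\[
R_{\mathrm{rel},N}(\lambda)=\mathcal{D}_\lambda\bigl(N_\lambda^{-1}-N_{\mathrm{diag},\lambda}^{-1}\bigr)\gammaNeu{}R_0(\lambda)
=-\mathcal{D}_\lambda N_\lambda^{-1}T_\lambda N_{\mathrm{diag},\lambda}^{-1}\gammaNeu{}R_0(\lambda).
\]

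Trace class comes from $T_\lambda$. Because $\operatorname{dist}(\Omega_i,\Omega_j)\ge\delta$, the kernel of $T_\lambda$ is smooth on $\partial\Omega_i\times\partial\Omega_j$. Hence $T_\lambda$ is smoothing, and it is trace class between any Sobolev spaces $H^{s}(\partial\Omega)\to H^{t}(\partial\Omega)$, even for Lipschitz boundaries. I would factor $T_\lambda=\chi_1\,\cdot\,\chi_2$ through smooth cutoffs supported near distinct obstacles, or expand it as a nuclear sum of smooth kernels.

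The remaining factors are bounded between the right spaces:
\[
\gammaNeu{}R_0(\lambda):L^2(\R^d)\to H^{-1/2}(\partial\Omega),\qquad \mathcal{D}_\lambda:H^{1/2}(\partial\Omega)\to L^2(\R^d).
\]
These bounds use the Lipschitz mapping properties and the exponential decay of the kernels for $\operatorname{Im}\lambda>0$. They are standard results, presumably established in Section~\ref{sec:boundary-layer-ops}. Together these give trace class.

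For the trace identity I use cyclicity to move everything onto the boundary:
\[
\tr R_{\mathrm{rel},N}(\lambda)=-\tr_{H^{1/2}}\bigl(N_\lambda^{-1}T_\lambda N_{\mathrm{diag},\lambda}^{-1}\,\gammaNeu{}R_0(\lambda)\mathcal{D}_\lambda\bigr).
\]
The key identity is $\gammaNeu{}R_0(\lambda)\mathcal{D}_\lambda=\frac{1}{2\lambda}\frac{d}{d\lambda}N_\lambda$, up to the sign convention. It follows from $\frac{d}{d\lambda}R_0(\lambda)=2\lambda R_0(\lambda)^2$ and from writing $\mathcal{D}_\lambda=R_0(\lambda)\gammaNeu{}^{*}$ and $N_\lambda=-\gammaNeu{}R_0(\lambda)\gammaNeu{}^{*}$ distributionally. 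The same identity holds for the diagonal blocks. Then
\[
-\frac{1}{2\lambda}\frac{d}{d\lambda}\log\det\bigl(N_\lambda N_{\mathrm{diag},\lambda}^{-1}\bigr)
=-\frac{1}{2\lambda}\tr\bigl(N_\lambda' N_\lambda^{-1}-N_{\mathrm{diag},\lambda}'N_{\mathrm{diag},\lambda}^{-1}\bigr),
\]
where the derivative formula for the Fredholm determinant applies because $N_\lambda N_{\mathrm{diag},\lambda}^{-1}-I=T_\lambda N_{\mathrm{diag},\lambda}^{-1}$ is trace class. A short algebraic manipulation using $N_\lambda^{-1}-N_{\mathrm{diag},\lambda}^{-1}=-N_\lambda^{-1}T_\lambda N_{\mathrm{diag},\lambda}^{-1}$ and cyclicity matches the two expressions. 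Only differences of individually non-trace-class terms appear, so I would regroup them as $\tr\bigl(N_\lambda'(N_\lambda^{-1}-N_{\mathrm{diag},\lambda}^{-1})\bigr)+\tr\bigl(T_\lambda'N_{\mathrm{diag},\lambda}^{-1}\bigr)$. Each of these is trace class, and the second is handled by the smoothness of $T_\lambda'$.

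The main obstacle I expect is the Lipschitz setting. Two things must be checked. First, the identity $\gammaNeu{}R_0\mathcal{D}_\lambda=\frac{1}{2\lambda}N_\lambda'$ must hold as bounded operators $H^{1/2}\to H^{-1/2}$. Since $R_0^2$ is not a layer operator, I would verify it weakly via Green's formula on $H^1$ and then extend by density. Second, I need to justify that moving the trace between $L^2(\R^d)$ and $H^{1/2}(\partial\Omega)$ is legitimate, via the Lidskii theorem for products $AB$ and $BA$ with one factor trace class. The $\lambda=0$ singularity of $N_{\mathrm{diag},\lambda}^{-1}$ plays no role here, since the statement is for $\operatorname{Im}\lambda>0$.
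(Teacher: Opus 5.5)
Your route is the paper's own: a Krein-type formula expressing the Neumann resolvent through the double layer potential (your $\mathcal D_\lambda N_\lambda^{-1}\gammaNeu{}R_0(\lambda)$ is the paper's $-\widetilde D_\lambda N_\lambda^{-1}\widetilde D_\lambda'$, with your opposite sign convention for $N_\lambda$). The remaining steps also match: cancellation of the free terms so that only $N_\lambda^{-1}-N_{\mathrm{diag},\lambda}^{-1}$ survives, trace class from the off-diagonal part $T_\lambda$, cyclicity, the identity $\widetilde D_\lambda^t\widetilde D_\lambda=\frac{1}{2\lambda}\dot N_\lambda$, and Jacobi's formula.

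One step in the final matching does not work as you describe it. After cyclicity you have
\[
\tr R_{\mathrm{rel},N}(\lambda)=-\frac{1}{2\lambda}\tr\bigl(\dot N_\lambda(N_\lambda^{-1}-N_{\mathrm{diag},\lambda}^{-1})\bigr),
\]
while Jacobi's formula, after your regrouping, gives
\[
\Xi_N'(\lambda)=\tr\bigl(\dot N_\lambda(N_\lambda^{-1}-N_{\mathrm{diag},\lambda}^{-1})\bigr)+\tr\bigl(\dot T_\lambda N_{\mathrm{diag},\lambda}^{-1}\bigr).
\]
Neither algebra nor cyclicity removes the second term. Smoothness of the kernel of $\dot T_\lambda$ only tells you that $\dot T_\lambda N_{\mathrm{diag},\lambda}^{-1}$ is trace class; it says nothing about the value of its trace. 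You need that trace to be zero, otherwise the identity is off by $\frac{1}{2\lambda}\tr(\dot T_\lambda N_{\mathrm{diag},\lambda}^{-1})$.

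The missing idea is the block structure. Since $q_iT_\lambda q_i=0$ for every $\lambda$, also $q_i\dot T_\lambda q_i=0$. Since $N_{\mathrm{diag},\lambda}^{-1}$ commutes with the $q_i$, the product $\dot T_\lambda N_{\mathrm{diag},\lambda}^{-1}$ is block off-diagonal on $\bigoplus_i H^{-1/2}(\partial\Omega_i)$. For any trace-class $A$ on this space, $\tr A=\sum_i\tr(q_iAq_i)$, which vanishes here. This is exactly how the paper closes the argument (Remark \ref{rem:off-diag-trace-zero}). With this observation added, your proof is complete.
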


We also give a formula for transmission boundary conditions, which model penetrable obstacles, where the incident wave is partially reflected and partially transmitted through the boundary.
The transmission boundary conditions ensure conservation of energy despite these jumps.
The transmission problem corresponds, in acoustics, to waves propagating in an ambient fluid containing bubbles of some fluid or gas.
In electromagnetics it corresponds to electromagnetic waves propagating at an interface between media of differing dielectric constants.

The definition of the relative resolvent is more complicated in the transmission case, because the presence or absence of an obstacle $\Omega_i$ changes the behaviour of waves not only at the boundary $\partial \Omega_i$ but in the whole domain $\Omega_i$.
In Section \ref{sec:transmission-resolvent} we will introduce a self-adjoint operator $\Delta_{T, i}$ on $L^2(\mathbb{R}^d)$ which acts like $\kappa_+^2\Delta$ on the exterior domain $\Omega_i^+$ but acts like $\kappa_-^2 \Delta$ on $\Omega_i$, where $\kappa_{\pm} > 0$ are fixed parameters encoding the speed of wave propagation in the exterior domain $\Omega^+$ and the interior domain $\Omega^-$.
In the absence of any obstacle, it reduces to the operator $\kappa_+^2 \Delta_0$.

Therefore we consider the relative resolvent
\begin{align}
  \label{eq:relative-resolvent-transmission}
R_{\mathrm{rel}, T}(\lambda) = (-\Delta_{T}-\lambda^2)^{-1} - (-\kappa_+^2\Delta_0 - \lambda^2)^{-1} - \sum_{i=1}^M 
\left[(-\Delta_{T,i}-\lambda^2)^{-1} -  (-\kappa_+^2\Delta_{0}-\lambda^2)^{-1}\right].
\end{align}
and the relative operators
\begin{align}
R_{s, T} = (-\Delta_{T}+m^2)^s - (-\kappa_+^2\Delta_0 + m^2)^{s} - \sum_{i=1}^M\left[(-\Delta_{T,i}+m^2)^s - (-\kappa_+^2\Delta_{0}+m^2)^s\right].
\end{align}
\begin{theorem}
The relative resolvent for transmission boundary conditions, (\ref{eq:relative-resolvent-transmission}) is trace class and satisfies
\begin{align}
\tr R_{\mathrm{rel}, T}(\lambda) = -\frac{1}{2\lambda}\Xi_T'(\lambda),
\end{align}
where $\Xi_T(\zeta) = \log \det \left( (H_\zeta (H_{\mathrm{diag},\zeta})^{-1})|_{\mathcal{B}_{\zeta/\kappa_+}^-} \right)$, 
and the operator $H_\zeta$ is given by
\begin{align}
 H_\zeta = P_{\zeta/\kappa_+}^+\M - \M P_{\zeta/\kappa_-}^-, \qquad \operatorname{Im}(\zeta) > 0.
\end{align}
The $P_\zeta^\pm$ are the Calderon projectors, $\M$
encodes the transmission boundary conditions, and $\mathcal{B}_{\zeta/\kappa_+}^- = \operatorname{ran} P_{\zeta/\kappa_+}^-$.
\end{theorem}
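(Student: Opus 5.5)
The proof follows the Dirichlet argument of \cite{HSW}, with the transmission layer potentials replacing the single layer representation. The first step is a boundary integral representation of $(-\Delta_T-\lambda^2)^{-1}$ for $\operatorname{Im}\lambda>0$. For $f\in L^2(\R^d)$, the transmission resolvent applied to $f$ equals a free piece plus a layer-potential correction. The free piece is
\[
F_\lambda f := \mathbf{1}_{\Omega^+}\kappa_+^{-2}R_0(\lambda/\kappa_+)f + \mathbf{1}_{\Omega^-}\kappa_-^{-2}R_0(\lambda/\kappa_-)f.
\]
The correction is $\mathcal{K}_{\lambda/\kappa_+}^+ \phi^+ + \mathcal{K}_{\lambda/\kappa_-}^-\phi^-$, where $\mathcal{K}^\pm$ denote the combined single and double layer potentials. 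The Cauchy data $\phi^\pm$ lie in the ranges $\mathcal{B}^\pm$ of the corresponding Calderon projectors and must satisfy $\phi^+ - \M\phi^- = g$, with $g$ the Cauchy-data mismatch of $F_\lambda f$. This is exactly the equation $H_\lambda \psi = g$, so unique solvability of the transmission problem for $\operatorname{Im}\lambda>0$ yields invertibility of $H_\lambda$ between the relevant subspaces. The same construction applies to each single obstacle, with $H_{\mathrm{diag},\lambda}$.

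The second step is to form the relative resolvent. The free parts cancel exactly, since $\mathbf{1}_{\Omega^\pm}$ splits additively over the obstacles; this gives $\mathbf{1}_{\Omega^-}=\sum_i\mathbf{1}_{\Omega_i}$ and $\mathbf{1}_{\Omega^+}-1=-\sum_i \mathbf{1}_{\Omega_i}$. What remains is
\[
\mathcal{K}_\lambda\bigl(H_\lambda^{-1}-H_{\mathrm{diag},\lambda}^{-1}\bigr)\gamma F_\lambda,
\]
with $\mathcal K_\lambda$ the combined layer potential and $\gamma$ the Cauchy trace, up to diagonal terms that cancel. I write $H_\lambda = H_{\mathrm{diag},\lambda} + T_\lambda$. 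Here $T_\lambda$ consists of off-diagonal blocks $P^+_{\lambda/\kappa_+}$ between distinct boundaries $\partial\Omega_i,\partial\Omega_j$. These blocks have smooth kernels because $\operatorname{dist}(\Omega_i,\Omega_j)\ge\delta$, so $T_\lambda$ is trace class between any Sobolev spaces. This makes $H_\lambda H_{\mathrm{diag},\lambda}^{-1}-1$ trace class on $\mathcal{B}^-_{\lambda/\kappa_+}$. Then
\[
H_\lambda^{-1}-H_{\mathrm{diag},\lambda}^{-1} = -H_\lambda^{-1}T_\lambda H_{\mathrm{diag},\lambda}^{-1},
\]
and sandwiching it between the Hilbert--Schmidt maps $\mathcal{K}_\lambda$ and $\gamma F_\lambda$ gives trace class on $L^2(\R^d)$.

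The third step is the trace computation. Cyclicity gives
\[
\tr R_{\mathrm{rel},T}(\lambda) = -\tr\bigl(H_\lambda^{-1}T_\lambda H_{\mathrm{diag},\lambda}^{-1}\,\gamma F_\lambda\mathcal{K}_\lambda\bigr).
\]
The key identity to prove is that $\gamma F_\lambda \mathcal{K}_\lambda$, restricted appropriately, equals $-\frac{1}{2\lambda}\frac{d}{d\lambda}H_\lambda$ modulo terms that cancel against the diagonal part. It comes from differentiating the free resolvent, $\frac{d}{d\lambda}R_0(\lambda)=2\lambda R_0(\lambda)^2$, combined with the fact that the layer potentials are $R_0$ composed with trace adjoints. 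Granting this identity, the right side becomes
\[
\frac{1}{2\lambda}\tr\bigl(H_\lambda^{-1}H'_\lambda - H^{-1}_{\mathrm{diag},\lambda}H'_{\mathrm{diag},\lambda}\bigr),
\]
which the standard Jacobi formula turns into $-\frac{1}{2\lambda}\Xi_T'(\lambda)$ once the sign conventions are matched.

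The main obstacle is that $\mathcal{B}^-_{\lambda/\kappa_+}$ moves with $\lambda$, so the determinant and its derivative must be taken on a subspace bundle. I would handle this by conjugating with a holomorphic family of isomorphisms $U_\lambda:\mathcal{B}^-_{\lambda_0/\kappa_+}\to\mathcal{B}^-_{\lambda/\kappa_+}$, for instance $U_\lambda=P^-_{\lambda/\kappa_+}P^-_{\lambda_0/\kappa_+}+(1-P^-_{\lambda/\kappa_+})(1-P^-_{\lambda_0/\kappa_+})$ as in Kato's transformation. Since $U_\lambda$ is also diagonal, it commutes with the relative structure, so the corrections coming from $U_\lambda'$ cancel between $H$ and $H_{\mathrm{diag}}$ in the Jacobi formula. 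Verifying this cancellation, together with the two-speed bookkeeping in the derivative identity, is the delicate part.
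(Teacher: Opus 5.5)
Your skeleton matches the paper's: a Krein-type representation, trace class from the smoothing off-diagonal blocks, cyclicity plus a derivative identity, and Jacobi's formula after a holomorphic trivialisation. Your free part $F_\lambda=\mathbf{1}_{\Omega^+}\kappa_+^{-2}G_{\lambda/\kappa_+}+\mathbf{1}_{\Omega^-}\kappa_-^{-2}G_{\lambda/\kappa_-}$ is a legitimate variant. With it the free terms do cancel exactly, whereas the paper uses the block-diagonal free part $p_-G_-p_-+p_+G_+p_+$ and is left with the traceless remainder $\sum_{i\neq j}p_i(G_--G_+)p_j$. The proof breaks right after that, in two places.

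\textbf{The configuration-dependent projector is lost.} If $\phi^\pm\in\mathcal{B}^\pm$ and $\phi^+-\M\phi^-=g$, then $H_\lambda\phi^-=P^-_{\lambda/\kappa_+}g$, not $H_\lambda\psi=g$. Up to signs the correction is $(K_{\lambda/\kappa_+}\M-K_{\lambda/\kappa_-})H_\lambda^{-1}P^-_{\lambda/\kappa_+}g+K_{\lambda/\kappa_+}g$. The last term is additive over the obstacles and does cancel. What remains is $(K_{\lambda/\kappa_+}\M-K_{\lambda/\kappa_-})\bigl(H_\lambda^{-1}P^-_{\lambda/\kappa_+}-H_{\mathrm{diag},\lambda}^{-1}P^-_{\mathrm{diag},\lambda/\kappa_+}\bigr)\gamma F_\lambda$, where $P^-_{\mathrm{diag}}$ is the direct sum of the single-obstacle projectors. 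It is not $\mathcal{K}_\lambda(H_\lambda^{-1}-H_{\mathrm{diag},\lambda}^{-1})\gamma F_\lambda$.

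The same problem appears if your $\mathcal{K}_\lambda$ carries the indicators $\mathbf{1}_{\Omega^\pm}$. It then depends on the configuration, since for $\Omega_i$ alone they are $\mathbf{1}_{\Omega_i^\pm}$. You are left with cross terms $\mathbf{1}_{\Omega_j}(K_{\lambda/\kappa_+}\M+K_{\lambda/\kappa_-})q_i(q_iH_\lambda q_i)^{-1}q_i\gamma F_\lambda$, $j\neq i$. These are trace class but not traceless; in the paper they are the terms $K_{\pm,i}^t(\chi_\Omega-\chi_{\Omega_i})K_{\pm,i}$ that turn $L_\pm$ into $L_{\pm,d}$.

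The projector changes the answer. Your final expression $\tr_{\mathscr{C}}\bigl(H_\lambda^{-1}\dot H_\lambda-H_{\mathrm{diag},\lambda}^{-1}\dot H_{\mathrm{diag},\lambda}\bigr)$ is the logarithmic derivative of a determinant over all of $\mathscr{C}$. It differs from the correct $\tr_{\mathscr{C}}\bigl[(\dot H_\lambda H_\lambda^{-1}-\dot H_{\mathrm{diag},\lambda}H_{\mathrm{diag},\lambda}^{-1})P^-_{\lambda/\kappa_+}\bigr]$ by the $\mathcal{B}^+_{\lambda/\kappa_+}$ part. That block of $H_\lambda$ is the boundary operator of the adjoint system of Assumption $\widetilde{A}$, where the two media are interchanged, and it has no reason to be traceless. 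So the moving subspace is not the real difficulty; producing $P^-_{\lambda/\kappa_+}$ inside the trace is.

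A smaller point: your $U_\lambda$ is not block diagonal, because the Calder\'on projectors of the whole configuration couple the components. This is harmless, since the $U_\lambda'$ terms cancel by cyclicity once $H^-_\lambda(H^-_{\mathrm{diag},\lambda})^{-1}-1$ is trace class.

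\textbf{The key identity is granted, and as stated it is false.} Up to sign, $\gamma F_\lambda=\kappa_+^{-2}K^t_{\lambda/\kappa_+}-\kappa_-^{-2}\M K^t_{\lambda/\kappa_-}$. Then $\gamma F_\lambda(K_{\lambda/\kappa_+}\M-K_{\lambda/\kappa_-})$ contains, besides the same-speed products, the mixed products $K^t_{\lambda/\kappa_+}K_{\lambda/\kappa_-}$ and $\M K^t_{\lambda/\kappa_-}K_{\lambda/\kappa_+}\M$. With indicators one gets $K^t_{\lambda/\kappa_+}\chi_{\Omega^-}K_{\lambda/\kappa_-}$ and similar. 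None of these can be read as a $\lambda$-derivative of a boundary operator.

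What is true is that the same-speed products give $\kappa_+^{-2}K^t_{\lambda/\kappa_+}K_{\lambda/\kappa_+}\M+\kappa_-^{-2}\M K^t_{\lambda/\kappa_-}K_{\lambda/\kappa_-}=\frac{1}{2\lambda}\dot H_\lambda$. The mixed products must then be shown to drop out of the relative trace. They can be, but this needs $\chi_{\Omega^\mp}K_\mu P^\pm_\mu=0$, the intertwining relation, and both expressions for $H_\lambda$. It also needs bookkeeping between whole-configuration and single-obstacle projectors, using that their difference is trace class. None of this is in your plan.

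The paper avoids the mixed products altogether. With the block-diagonal free part they occur only in blocks $L^2(\Omega^\mp)\to L^2(\Omega^\pm)$, which carry no trace. On the diagonal blocks it uses the region-restricted identity $K^t_\mu\chi_{\Omega^\pm}K_\mu=-\frac{1}{2\mu}\dot A_\mu P^\pm_\mu$. It then uses $P^+_{\lambda/\kappa_+}\M=\M P^-_{\lambda/\kappa_-}+H_\lambda$, the intertwining relation, and $\tr(\dot A_+-\dot A_{+,\mathrm{diag}})=0$ to move the projectors to the right. This yields $-\frac{1}{2\lambda}\tr_{\mathscr{C}}\bigl[(\dot H_\lambda H_\lambda^{-1}-\dot H_{\mathrm{diag},\lambda}H_{\mathrm{diag},\lambda}^{-1})P^-_{\lambda/\kappa_+}\bigr]$. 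That computation is the substance of the theorem, and your proposal replaces it with an identity that is unproved and, without the projector, incorrect.
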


Establishing this result requires more care than the Neumann case.
For one thing, the definition of the function $\Xi_T$ now requires taking a Fredholm determinant in a family of subspaces $\mathcal{B}_{\zeta/\kappa_+}^-$.
We will see that the bundle can be holomorphically trivialised, and therefore the usual definition of the Fredholm determinant can be applied.

The same functional calculus computation as in the Dirichlet and Neumann cases yields the result:
\begin{theorem}
    \label{thm:transmission-trace-formula}
If $\operatorname{Re}(s) > 0$, the operator $R_{s, T}$ extends continuously to a trace-class operator on $L^2(\mathbb{R}^d)$ with trace
\begin{align*}
\mathrm{tr}(R_{s,T}) = \frac{2s}{\pi} \sin(\pi s) \int_m^\infty \lambda\left(\lambda^2-m^2\right)^{s-1} \Xi_T(\rmi \lambda) \mathrm{d} \lambda.
\end{align*}
\end{theorem}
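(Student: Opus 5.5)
The plan is to follow the holomorphic functional calculus argument of \cite{HSW}, taking as input the preceding theorem: $R_{\mathrm{rel},T}(\lambda)$ is trace class with $\tr R_{\mathrm{rel},T}(\lambda) = -\frac{1}{2\lambda}\Xi_T'(\lambda)$ for $\Im\lambda>0$.

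\textbf{Step 1: contour representation.} Set $z=\lambda^2$ and write $F(z)=(-\Delta_T-z)^{-1}-(-\kappa_+^2\Delta_0-z)^{-1}-\sum_i[(-\Delta_{T,i}-z)^{-1}-(-\kappa_+^2\Delta_0-z)^{-1}]$. All operators are nonnegative, so for $f_s(x)=(x+m^2)^s$ and $\Re s<0$ we have
\begin{align*}
R_{s,T}=\frac{1}{2\pi\rmi}\int_\Gamma f_s(z)\,F(z)\,\der z,
\end{align*}
with $\Gamma$ a contour around $[0,\infty)$ avoiding the branch cut $(-\infty,-m^2]$. For $\Re s>0$ the integrand does not decay at infinity. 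This is where the relative structure matters: I would show that $\|F(z)\|_{\Nuc}$ decays faster than any power of $|z|$, in fact exponentially in $\Im\sqrt z$ away from the positive axis. Then, after deforming $\Gamma$ onto the two sides of the cut $(-\infty,-m^2]$, the integral converges in trace norm for all $\Re s>0$. It coincides with $R_{s,T}$ on $C_0^\infty(\R^d\setminus\partial\Omega)$, giving the trace-class extension.

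\textbf{Step 2: the decay estimate.} This is the main obstacle. Following \cite{HSW}, I would write $F(\rmi\mu)$ via boundary layer potentials. The full and single-obstacle transmission resolvents differ from the free one by $\mathcal{K}_\zeta\,(\text{boundary inverse})\,\mathcal{K}_\zeta^*$-type terms, with $\zeta=\rmi\mu$ and $\mathcal K_\zeta$ built from single and double layer potentials at $\zeta/\kappa_\pm$. The relative combination then involves only off-diagonal operators $H_\zeta-H_{\mathrm{diag},\zeta}$, whose kernels are smooth and of size $e^{-\delta\mu/\kappa_+}$ because $\operatorname{dist}(\Omega_i,\Omega_j)\ge\delta$. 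Composing with the potentials, which are polynomially bounded in $\mu$ in suitable Sobolev norms, gives $\|F(-\mu^2)\|_{\Nuc}\le C e^{-c\mu}$. The same argument applied to $\Xi_T(\rmi\mu)=\log\det(1+(H-H_{\mathrm{diag}})H_{\mathrm{diag}}^{-1}|_{\mathcal B^-})$ yields exponential decay of $\Xi_T(\rmi\mu)$ and of its derivative. The delicate points are these:
\begin{itemize}
\item Uniform invertibility of $H_{\mathrm{diag},\rmi\mu}$ on $\mathcal B^-_{\rmi\mu/\kappa_+}$ for large $\mu$. I would first try to obtain this from coercivity of the layer operators at imaginary frequency on Lipschitz boundaries.
\item Choosing the holomorphic trivialisation of the bundle $\mathcal B^-$ with controlled norms.
\end{itemize}
A similar uniform resolvent bound is needed near $z=-m^2$, but there it is harmless since $m>0$.

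\textbf{Step 3: computing the trace.} Take the trace inside the integral and substitute the preceding theorem, writing $z=\lambda^2$ with $\der z=2\lambda\,\der\lambda$:
\begin{align*}
\tr R_{s,T}=-\frac{1}{2\pi\rmi}\int f_s(\lambda^2)\,\Xi_T'(\lambda)\,\der\lambda .
\end{align*}
On the two sides of the cut we have $\lambda=\rmi\mu$ with $\mu\ge m$, and $f_s$ takes the values $(m^2-\mu^2)^s=(\mu^2-m^2)^s e^{\pm\rmi\pi s}$. The jump therefore contributes a factor $e^{\rmi\pi s}-e^{-\rmi\pi s}=2\rmi\sin(\pi s)$, and we get
\begin{align*}
\tr R_{s,T}=\frac{\sin(\pi s)}{\pi}\int_m^\infty(\mu^2-m^2)^s\,\frac{\der}{\der\mu}\Xi_T(\rmi\mu)\,\der\mu .
\end{align*}
Integrate by parts. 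The boundary term at $\infty$ vanishes by the exponential decay of $\Xi_T(\rmi\mu)$, and the one at $m$ vanishes since $\Re s>0$. This produces $-2s\,\mu(\mu^2-m^2)^{s-1}$, giving the stated formula up to the sign conventions, which I would fix by checking against the Dirichlet case of Theorem~\ref{thm:dirichlet-trace-formula}. Both sides of the resulting identity are holomorphic in $s$ for $\Re s>0$, so the identity holds throughout that half-plane.
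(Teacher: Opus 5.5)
Your skeleton is the paper's. You put the relative resolvent under a contour integral, use exponential trace-norm decay, insert $\tr R_{\mathrm{rel},T}(\lambda)=-\frac{1}{2\lambda}\Xi_T'(\lambda)$, collapse onto $\lambda\in\rmi[m,\infty)$, and integrate by parts.

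Step~2 does not need to be redone: it is exactly Propositions~\ref{prop:transmission-rel-resolvent-trace-estimate} and~\ref{prop:transmission-xi-holo-decay}. There, uniform invertibility of $H_\lambda$ is not needed; the polynomial bound $\norm{H_\lambda^{-1}}\lesssim|\lambda|^2$ from Proposition~\ref{prop:cald-op-inv-bound} suffices. The moving subspace $\mathcal B^-_{\lambda/\kappa_+}$ is handled by extending by the identity on $\mathcal B^+_{\lambda/\kappa_+}$, so no controlled trivialisation is required.

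Your sketch of Step~2 is also inaccurate for transmission conditions. By Proposition~\ref{prop:transmission-resolvent-krein-formula}, the resolvents differ by layer-potential terms from the piecewise operator $\mathrm{diag}(\kappa_-^{-2}G_{\lambda/\kappa_-},\kappa_+^{-2}G_{\lambda/\kappa_+})$, not from $(-\kappa_+^2\Delta_0-\lambda^2)^{-1}$. The bulk blocks $p_i(G_--G_+)p_i$ are not trace class, and they cancel only between the full and single-obstacle operators. What remains is off-diagonal blocks and cross terms $p_kK_{\lambda/\kappa_\pm}q_i$ with $i\neq k$, in addition to $H_\lambda^{-1}-H_{\mathrm{diag},\lambda}^{-1}$.

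\textbf{The gap is in Step~1.} For $\Re s>0$ you assert without argument that the trace-norm convergent integral ``coincides with $R_{s,T}$ on $C_0^\infty(\R^d\setminus\partial\Omega)$''. This is the one delicate point, since the Riesz--Dunford calculus does not apply to the growing $f_s$. The paper settles it with the regulator $f_n(z)=f(z)e^{-z/n}$ and dominated convergence for trace-norm Bochner integrals.

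In your set-up the natural repair is analytic continuation in $s$ across $\Re s=0$, done at the operator level and before any integration by parts. For $-1<\Re s<0$ the Stieltjes formula $(A+m^2)^s=-\frac{\sin(\pi s)}{\pi}\int_m^\infty 2\mu(\mu^2-m^2)^s(A+\mu^2)^{-1}\,\der\mu$ holds in operator norm for each operator $A$ separately. Hence
\[
R_{s,T}=-\frac{2\sin(\pi s)}{\pi}\int_m^\infty \mu(\mu^2-m^2)^s\,R_{\mathrm{rel},T}(\rmi\mu)\,\der\mu .
\]
The right side is a holomorphic $\Nuc$-valued function on $\Re s>-1$, by Proposition~\ref{prop:transmission-rel-resolvent-trace-estimate} with $R<m$. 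Vectors $u,v\in C_0^\infty(\R^d\setminus\partial\Omega)$ lie in $\bigcap_k D(A^k)$ for every operator involved. Since $m>0$, the map $s\mapsto\langle v,R_{s,T}u\rangle$ is therefore entire, and the identity extends to $\Re s>0$.

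Your closing remark about holomorphy ``throughout'' $\Re s>0$ is misplaced. The continuation must cross $\Re s=0$, and the integrated-by-parts formula does not exist for $\Re s<0$.

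On signs: using $\tr R_{\mathrm{rel},T}(\rmi\mu)=\frac{1}{2\mu}\frac{\der}{\der\mu}\Xi_T(\rmi\mu)$, one gets $\tr R_{s,T}=-\frac{\sin(\pi s)}{\pi}\int_m^\infty(\mu^2-m^2)^s\frac{\der}{\der\mu}\Xi_T(\rmi\mu)\,\der\mu$. This has the opposite sign to your pre-integration-by-parts display. Your integration by parts then yields exactly the stated sign, so no calibration against the Dirichlet case is needed.

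Once repaired, your route differs usefully from the paper's. It needs trace-norm bounds only on $\rmi[m,\infty)$ and never approaches $\lambda=0$, so none of the low-frequency analysis (e.g.\ Section~\ref{sec:dimension-two}) enters. It does, however, rely on $m>0$ essentially. The paper instead integrates along $\Gamma_\epsilon$ through $\lambda=0$, with $f$ normalised so that $f(0)=0$. It deforms to the branch cut only after taking traces and integrating by parts. That keeps $m=0$, the Casimir case, within reach, at the price of needing control of $\Xi_T$ near $\lambda=0$.
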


\subsection{Notation and conventions.}
Given a domain $\Omega \subset \mathbb{R}^d$ we will take $\Omega_+$ to be the exterior
and $\Omega_-$ to be the interior domain.
This is the convention in \cite{mcleanStronglyEllipticSystems2000}.
If $\Omega$ has connected components $\Omega = \Omega_1 \sqcup \ldots \sqcup \Omega_n$, then we will decompose function spaces ``in the bulk'' $L^2(\mathbb{R}^d) \simeq L^2(\Omega^+) \oplus L^2(\Omega_1) \oplus \ldots \oplus L^2(\Omega_n)$, and on the boundary $L^2(\partial \Omega) \simeq L^2(\partial \Omega_1) \oplus \ldots \oplus L^2(\partial \Omega_n)$.
We will write $p_i: L^2(\mathbb{R}^d) \to L^2(\Omega_i)$ for the projections in the bulk and $q_i: L^2(\partial \Omega) \to L^2(\partial \Omega_i)$ for the projections on the boundary.
Moreover we have $p_- := 1 - p_+ = \sum_{i=1}^n p_i$.
We will abuse notation and use the same symbols, $p_i, q_i$ for decompositions of Sobolev spaces $H^s(\partial \Omega)$, $H^s_{(\mathrm{loc})}(\mathbb{R}^d)$ etc.

The use of Green identities (integration by parts) often makes it more convenient to use a bilinear integration pairing rather than a sesquilinear inner product.
The transpose of an operator $A$ with respect to the
bilinear integration pairing will be denoted $A^t$.
Thus if we consider the free resolvent $R(\lambda) = (-\Delta - \lambda^2)^{-1}$ on $L^2(\mathbb{R}^d)$, then
$R(\lambda)^t = R(\lambda)$, whereas $R(\lambda)^* = R(\overline{\lambda})$.
We also have $D_\lambda' = D_\lambda^t$; in the case of the double layer operator $D_\lambda$, the prime is conventional notation.

In estimates of the form $f(x) \leq C g(x)$, the letter $C$ denotes a generic constant which may vary from line to line. We will also write $f \lesssim g$ to denote that $f(x) \leq C g(x)$ for some constant $C > 0$ which does not depend on $x$.

We use integral signs to also denote bilinear pairings like $H^s \times H^{-s} \to \mathbb{C}$ when there is no danger of confusion.

We denote by $\mathcal{L}(X, Y)$ the space of bounded linear operators from a Banach space $X$ to a Banach space $Y$, and write $\mathcal{L}(X) = \mathcal{L}(X, X)$.

We write $\mathbb{C}^+ := \{z \in \mathbb{C} \mid \operatorname{Im}(z) > 0\}$ for the open upper half plane.

\section{Boundary layer operators.}
\label{sec:boundary-layer-ops}
Our analysis is based on the theory of Sobolev spaces on Lipschitz domains. 
For surveys of the theory the reader is referred to \cites{costabel1988, mcleanStronglyEllipticSystems2000,kirsch}.
One notable restriction relative to the smooth case is that the Sobolev spaces $H^s(\partial\Omega)$ are only intrinsically defined for $s \in [-1, 1]$,
and therefore we will only consider this range of Sobolev spaces on the boundary.

\subsection{Sobolev spaces and boundary traces.}
Throughout this section, let $\Omega$ be a Lipschitz domain in $\R^d$, $d \geq 2$.
That is, $\partial \Omega$ is, up to rotation, locally the graph of a Lipschitz function.
By Rademacher's theorem there exists an almost everywhere defined exterior normal vector field $\nu_x$.
We fix this vector field once and for all.

\begin{definition}
The Sobolev space of order $s \in \mathbb{R}$ on $\Omega$ is
\begin{align*}
H^s(\Omega) = \{ u = U|_{\Omega} \mid U \in H^s(\mathbb{R}^d)\},
\end{align*}
and inherits the Hilbert space structure from $H^s(\mathbb{R}^d)$.
\end{definition}

We want to define the restriction $u \mapsto u|_{\partial \Omega}$ for $u$ in some Sobolev space.
This requires defining Sobolev spaces on $\partial\Omega$.
The construction of the spaces $H^s(\partial\Omega)$ for $s \in [-1, 1]$ is given in \cite{mcleanStronglyEllipticSystems2000}, Chapter 3, Section ``Lipschitz Domains''.
Here we only cite their salient properties, which are directly analogous to the case of smooth domains.

\begin{lemma}[Theorem 3.37 \cite{mcleanStronglyEllipticSystems2000}]
The restriction maps
\begin{align*}
\gammaDir{\pm}: C^0(\overline{\Omega}^\pm) \to C^0(\partial \Omega), \qquad \gammaDir{\pm}(u) = u|_{\partial\Omega}
\end{align*}
extend uniquely to continuous linear maps
\begin{align*}
\gammaDir{\pm}: H^{s+\frac{1}{2}}_\loc(\Omega^{\pm}) \to H^s(\partial \Omega),\qquad s \in (0,1)
\end{align*}
with continuous right inverses
\begin{align*}
\eta_\pm: H^s(\partial \Omega) \to H^{s+\frac{1}{2}}(\Omega^\pm).
\end{align*}
\end{lemma}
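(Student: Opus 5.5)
This lemma is quoted from McLean's monograph, Theorem 3.37, so the paper relies on that reference. If I were to reprove it, I would localise and flatten the boundary, and reduce the statement to the trace theorem on the half-space $\R^d_+$.

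\emph{Localisation.} First, I would cover $\partial\Omega$ by finitely many coordinate cylinders $W_j$. In each $W_j$, after a rotation, $\Omega^\pm \cap W_j$ is the region above or below the graph of a Lipschitz function $\zeta_j$. I would choose a smooth partition of unity $\chi_j$ subordinate to this cover. The bi-Lipschitz map $\kappa_j(x',x_d) = (x', x_d - \zeta_j(x'))$ sends the local piece of $\Omega^\pm$ onto a piece of the half-space. Since $\kappa_j$ is bi-Lipschitz, composition with it preserves $H^t(\R^d)$ for $0\le t\le 1$. Here $t = s+\frac12 \in (\frac12,\frac32)$, which can exceed $1$, and this is the main subtlety. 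McLean handles it by defining $H^s(\partial\Omega)$ for $|s|\le 1$ through pullback of $H^s(\R^{d-1})$ along the graph parametrisation $x'\mapsto (x',\zeta_j(x'))$. Bi-Lipschitz invariance of $H^s(\R^{d-1})$ holds precisely for $|s|\le 1$, which is why this range appears in the definition.

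\emph{Trace on the graph.} For the trace itself I would avoid flattening the bulk when $t>1$. Instead, for $U \in H^{t}(\R^d)$, I would write the trace on the graph as $x' \mapsto U(x',\zeta_j(x'))$ and estimate its $H^s(\R^{d-1})$ norm directly. For $s<\frac12$, i.e.\ $t<1$, the flattening map is a harmless change of variables, and the classical half-space trace $H^{t}(\R^d_+)\to H^{t-\frac12}(\R^{d-1})$ applies. For $s\ge \frac12$ I would differentiate tangentially: $\nabla_{x'}[U(x',\zeta(x'))] = (\nabla' U + \partial_d U\,\nabla\zeta)(x',\zeta(x'))$. Each term is a trace of an $H^{t-1}$ function multiplied by an $L^\infty$ factor, and $t-1\in(0,\frac12)$. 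This estimate is exactly the step I expect to be the hard one. The product of a trace in $H^{s-1}$, with $s-1<0$, with an $L^\infty$ function is not automatically controlled in $H^{s-1}$. McLean's workaround is an interpolation argument between the endpoint estimate at $s\to 1$ (the Gagliardo trace $H^1\to H^{1/2}$ combined with the $W^{1,\infty}$ structure) and small $s$. So I would prove the two endpoint bounds and interpolate, using that both scales interpolate correctly on Lipschitz graphs.

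\emph{Uniqueness, extension operators, and the local spaces.} Uniqueness of the extension follows from density of $C^\infty(\overline{\Omega^\pm})$ restrictions in $H^{s+\frac12}(\Omega^\pm)$. That density comes from Stein's extension operator for Lipschitz domains combined with mollification. For the right inverses $\eta_\pm$, I would construct local lifts on each half-space piece by the Poisson-type extension $\eta g(x',x_d) = \phi(x_d)\,(e^{-|x_d|\langle D'\rangle} g)(x')$. I would then pull these back with $\kappa_j$, which is fine because the lifted function can be taken with Lipschitz-graph structure, or alternatively lift first to $H^{t}$ with $t\le 1$ and use interpolation. Finally I would glue the local lifts with the $\chi_j$ and check that $\gammaDir{\pm}\eta_\pm = \mathrm{id}$ modulo the partition of unity. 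The statement is for $H^{s+\frac12}_{\loc}$ on the unbounded $\Omega^+$; this is handled by multiplying by a cutoff equal to $1$ near $\partial\Omega$.
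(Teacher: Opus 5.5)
The paper gives no proof of its own here (it simply quotes McLean), so everything rests on your sketch. Your localisation, uniqueness and $H_{\loc}$ remarks are fine, and so is the case $t=s+\tfrac12\le 1$. The genuine gap is in the only nontrivial range, $s\in(\tfrac12,1)$, i.e.\ $t\in(1,\tfrac32)$.

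\emph{The chain-rule step fails.} You treat the two terms of $\nabla_{x'}[U(x',\zeta(x'))]$ as traces on the graph of $\nabla'U$ and $\partial_dU$. But these lie only in $H^{t-1}$ with $t-1<\tfrac12$. For $r\le\tfrac12$ the trace is unbounded on $H^r$ into \emph{every} Sobolev space, even for a hyperplane: for fixed $\xi'$, the functional $\int\hat u(\xi',\xi_d)\,d\xi_d$ is not controlled by $\int(1+|\xi|^2)^r|\hat u(\xi',\xi_d)|^2\,d\xi_d$. So the two terms do not exist separately; only their tangential combination does. The $L^\infty$-multiplier issue you flag is therefore secondary.

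\emph{The interpolation repair does not close the gap.} Gagliardo's estimate $H^1\to H^{1/2}$ is the endpoint $s=\tfrac12$, not $s\to1$. Interpolating it with smaller $s$ only reproduces $s\le\tfrac12$. An upper endpoint $\gamma:H^{3/2}(\Omega)\to H^1(\partial\Omega)$ is false for general Lipschitz domains. For instance, take $U=g(x_d)$ locally with $g'(y)=(\log 1/|y|)^{\alpha}$, $\alpha<\tfrac12$, so that $g'\in H^{1/2}_{\loc}$. Take $\zeta$ a sawtooth whose teeth accumulate at height $0$ densely enough that $\int|\zeta'|^2|g'(\zeta)|^2\,dx'=\infty$. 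This failure is exactly why the theorem stops at $s<1$.

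\emph{What works instead} for $s\in(\tfrac12,1)$ is to avoid differentiating the trace at all. Put $F=U\circ\kappa_j^{-1}$ and $\sigma=s-\tfrac12\in(0,\tfrac12)$. Since $\kappa_j$ has Jacobian $1$, $y_d=x_d-\zeta_j(x')$, and $|\nabla F|\lesssim|\nabla U|\circ\kappa_j^{-1}$, you get
\[
\int|y_d|^{1-2s}\bigl(|F|^2+|\nabla F|^2\bigr)\,dy\lesssim\int|x_d-\zeta_j(x')|^{-2\sigma}\bigl(|U|^2+|\nabla U|^2\bigr)\,dx\lesssim\|U\|_{H^{s+1/2}(\R^d)}^2 .
\]
The last step is the fractional Hardy inequality $\int|y_d|^{-2\sigma}|w|^2\,dy\lesssim\|w\|_{H^\sigma(\R^d)}^2$, valid because $\sigma<\tfrac12$. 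Apply it to $w=U\circ\kappa_j^{-1}$ and $w=(\partial_kU)\circ\kappa_j^{-1}$, which lie in $H^\sigma$ by bi-Lipschitz invariance. The weighted trace theorem for the $A_2$-weight $|y_d|^{1-2s}$ then puts the trace of $F$ on $\{y_d=0\}$, namely $U(\cdot,\zeta_j(\cdot))$, in $H^s(\R^{d-1})$.

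\emph{Your right inverse has the same defect.} For $t>1$, $\partial_{x'}(v\circ\kappa_j)$ contains $(\partial_dv)\circ\kappa_j\,\nabla\zeta_j$, and $\nabla\zeta_j\in L^\infty$ is not a multiplier on $H^{t-1}$. If $\nabla\zeta_j\notin H^\tau_{\loc}$ for every $\tau>0$, the pulled-back Poisson extension of a smooth datum with $\partial_dv\neq0$ on the boundary is not in $H^t$. Lifting to $H^{t'}$ with $t'\le1$ and interpolating can never produce $t>1$. You need an extension built from a flattening map smoothed at the scale of the distance to the graph, or some other genuine surjectivity argument. Once $\gamma$ is known to be bounded and onto, a bounded right inverse on $(\ker\gamma)^\perp$ is automatic.
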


The importance of the Dirichlet trace is seen in the fact that the Rellich embedding theorem is inherited by the boundary Sobolev spaces.

\begin{lemma}
Let $\Omega$ be a bounded Lipschitz domain.
\begin{enumerate}
\item Let $s, t \in (-1, 1)$ and $s < t$. Then $H^t(\partial \Omega) \subset H^s(\partial \Omega)$ and the inclusion is compact.
\end{enumerate}
\end{lemma}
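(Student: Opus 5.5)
We reduce the statement to the compactness of Sobolev embeddings on the flat model space by localising with the Lipschitz charts of $\partial\Omega$. Following the construction in \cite{mcleanStronglyEllipticSystems2000}, we choose a finite open cover $W_1,\dots,W_J$ of $\partial\Omega$. On each $W_j$, after a rigid motion, $\partial\Omega$ is the graph $x_d=\zeta_j(x')$ of a Lipschitz function $\zeta_j:\R^{d-1}\to\R$. We also fix a subordinate smooth partition of unity $\phi_1,\dots,\phi_J$. For $s\in[-1,1]$ the space $H^s(\partial\Omega)$ is defined so that
\[
u\mapsto \bigl((\phi_j u)\circ \Phi_j\bigr)_{j=1}^J ,\qquad \Phi_j(x')=(x',\zeta_j(x')),
\]
is an isomorphism onto a closed subspace of $\bigoplus_j H^s(\R^{d-1})$. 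The norm is $\|u\|_{H^s(\partial\Omega)}^2\simeq\sum_j\|(\phi_ju)\circ\Phi_j\|_{H^s(\R^{d-1})}^2$. For $s\in[-1,0)$ one may equivalently take $H^s(\partial\Omega)$ as the dual of $H^{-s}(\partial\Omega)$ with respect to the $L^2(\partial\Omega)$ pairing. These identifications are consistent for all $s$, so the inclusion $H^t(\partial\Omega)\subset H^s(\partial\Omega)$ for $s<t$ is the restriction of a single map on $L^2$ or on distributions.

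With this in hand, continuity of the inclusion follows componentwise from the continuity of $H^t(\R^{d-1})\subset H^s(\R^{d-1})$. For compactness, let $u_n$ be bounded in $H^t(\partial\Omega)$. Then each $v_{n,j}=(\phi_ju_n)\circ\Phi_j$ is bounded in $H^t(\R^{d-1})$. Each $v_{n,j}$ is supported in the fixed compact set $K_j=\Phi_j^{-1}(\supp\phi_j)$, because $\supp \phi_j$ is compact and $\Phi_j$ is a homeomorphism onto its image. Multiplication by a cutoff $\chi_j\in C_0^\infty(\R^{d-1})$ with $\chi_j=1$ near $K_j$, composed with the embedding $H^t\to H^s$, is compact. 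This is the classical Rellich theorem for compactly supported Sobolev functions on $\R^{d-1}$, valid for all real $s<t$: on the Fourier side it follows from $\langle\xi\rangle^{s-t}\to0$ together with equicontinuity of the Fourier transforms of functions supported in $K_j$. Hence we can extract a subsequence along which every $v_{n,j}$ converges in $H^s(\R^{d-1})$. Then $u_n=\sum_j\phi_ju_n$ converges in $H^s(\partial\Omega)$ by the norm equivalence above.

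The main subtlety is keeping the charts within the admissible range. The chart description of the norm is only chart-independent for $|s|\le1$, because the pullback by a bi-Lipschitz map preserves $H^s(\R^{d-1})$ only for $|s|\le1$. This is exactly why the lemma is restricted to $s,t\in(-1,1)$. Concretely, one has to check that multiplication by $\phi_j$ and composition with $\Phi_j$, $\Phi_j^{-1}$ are bounded on $H^s$ for $|s|\le1$. For $s\in[0,1]$ this follows by interpolation between $L^2$ and $H^1$, using that Lipschitz changes of variables preserve both. For negative $s$ it follows by duality. Alternatively, one can avoid charts in the negative range: for $-1<s<t<0$ the inclusion is the adjoint of the inclusion $H^{-s}(\partial\Omega)\subset H^{-t}(\partial\Omega)$, where $0<-s<-t<1$. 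That inclusion is compact by the nonnegative case, and adjoints of compact operators are compact (Schauder). The mixed case $s<0\le t$ factors through $L^2(\partial\Omega)=H^0(\partial\Omega)$ as a composition of a compact inclusion with a continuous one.
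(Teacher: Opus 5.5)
Your proof is correct, but it takes a different route from the paper's. The paper writes the inclusion as $\gammaDir{+}\circ j\circ\eta_+$. It lifts boundary data into the bulk with the bounded right inverse of the trace, applies the Rellich embedding $H^{t+1/2}\hookrightarrow H^{s+1/2}$ on a bounded set, and takes the trace again. You instead work directly with the chart definition of $H^s(\partial\Omega)$ and apply Rellich on $\R^{d-1}$ to the compactly supported pullbacks $(\phi_ju)\circ\Phi_j$.

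The paper's route is shorter and hides the chart bookkeeping inside the trace theorem. However, the trace and its right inverse $\eta_\pm$ are only available for $s\in(0,1)$, so as written it covers only $0<s<t<1$. Your Schauder duality step for $s<t<0$ and your factorisation through $L^2(\partial\Omega)$ for $s<0\le t$ are exactly what is needed to reach the full range $(-1,1)$ of the statement. These steps could equally be appended to the paper's trace argument, with $s=0$ handled by factoring through $H^{\varepsilon}(\partial\Omega)$.

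A few small corrections:
\begin{enumerate}
\item For $s<t<0$ you have $0<-t<-s<1$, not $0<-s<-t<1$. The inclusion $H^{-s}\subset H^{-t}$ that you actually use is nonetheless the right one.
\item In the mixed case with $t=0$, the compact factor is $L^2(\partial\Omega)\hookrightarrow H^s(\partial\Omega)$. This again comes from Schauder's theorem applied to $H^{-s}\hookrightarrow L^2$, so your negative case should be stated to include $t=0$.
\item Your claim that the chart map $u\mapsto((\phi_ju)\circ\Phi_j)_j$ also characterises $H^s(\partial\Omega)$ for $s<0$ ``by duality'' is not justified as stated. For distributions the pullback must itself be defined by duality, which brings in the surface Jacobian $\sqrt{1+|\nabla\zeta_j|^2}$. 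For merely Lipschitz $\zeta_j$ this is only an $L^\infty$ function and need not be a multiplier on $H^{-s}(\R^{d-1})$. Your Schauder argument avoids charts in the negative range altogether, so use it as the proof there rather than as an alternative.
\end{enumerate}
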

\begin{proof}
The embedding $H^t(\partial \Omega) \subset H^s(\partial \Omega)$ is the composition $\gammaDir{+} \circ j \circ \eta_+$, where $j: H^{t+1/2}(K) \to H^{s+1/2}(K)$ is the compact embedding of the Sobolev spaces on a compact set $K$ containing $\Omega$.
\end{proof}

There is also a Neumann trace operator, which extends $u \mapsto \partial_\nu u|_{\partial \Omega}$.
It is defined weakly and requires the following definition.

\begin{definition}
The Banach space $H^{s + 1/2}_\Delta(\Omega)$, $s\in(0,1)$ is defined as
\begin{align*}
H^{s + 1/2}_\Delta(\Omega) = \{ u \in H^{s+1/2}(\Omega) \mid \Delta u \in L^2(\Omega)\}.
\end{align*}
with the norm
\begin{align*}
\norm{u}_{H^{s + 1/2}_\Delta(\Omega)}^2 = \norm{u}_{H^{s+1/2}(\Omega)}^2 + \norm{\Delta u}_{L^2(\Omega)}^2.
\end{align*}
\end{definition}

If $f \in H^{3/2-s}(\Omega)$, then $\gammaDir{\pm} f \in H^{-s+1}(\partial \Omega)$, and the Neumann trace $\gammaNeu{\pm} u \in H^{s -1 }(\partial\Omega)$ is defined by a formal integration by parts formula.
In order to get the sign conventions straight we state them explicitly.
For the interior trace we have:
\begin{align*}
 \int_{\partial \Omega} \gammaDir{-}(f) \gammaNeu{-}(u) \mathrm{d}\sigma = \int_\Omega f \Delta u \mathrm{d} x + \int_\Omega \nabla f \cdot \nabla u \mathrm{d}x.
\end{align*}

For the exterior Neumann trace $\gammaNeu{+} u$ we again use the vector field $\nu$ on the boundary, \emph{not} $-\nu$ (which would be the outward normal for $\Omega^+$).
Thus
\begin{align*}
-\int_{\partial \Omega} \gammaDir{+}(f) \gammaNeu{+}(u) \mathrm{d}\sigma = \int_{\Omega^+} f \Delta u \mathrm{d} x + \int_{\Omega^+} \nabla f \cdot \nabla u \mathrm{d}x
\end{align*}
for any $f \in H^{3/2-s}(\Omega^+)$ and any $u \in H^{s+1/2}_\Delta(\Omega^+)$.

\begin{lemma}
    \label{lem:continuity-neumann-trace}
The Neumann trace is a continuous linear map $\gammaNeu{\pm}: H^{s+1/2}_{\Delta, \loc}(\Omega) \to H^{s-1}(\partial \Omega)$ for $s \in (0,1)$.
If $u \in \mathscr{C}^\infty(\overline{\Omega}^\pm)$, then $\gammaNeu{\pm} u = \partial_\nu u|_{\partial\Omega}$.
We reiterate that $\nu$ is the normal vector field pointing into $\Omega^+$.
\end{lemma}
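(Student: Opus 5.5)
The plan is to define $\gammaNeu{\pm}$ by duality, taking the Green identities as the definition, and then to show that the resulting functional is well defined and bounded. Fix $s\in(0,1)$ and consider the interior case first. For $u \in H^{s+1/2}_\Delta(\Omega)$ and $\phi \in H^{1-s}(\partial\Omega)$ we have $1-s\in(0,1)$, so the preceding lemma (Theorem 3.37 of \cite{mcleanStronglyEllipticSystems2000}) provides the right inverse $\eta_-\phi \in H^{3/2-s}(\Omega)$. We then set
\begin{align*}
\langle \gammaNeu{-} u, \phi\rangle := \int_\Omega \eta_-\phi\, \Delta u \,\der x + \int_\Omega \nabla \eta_-\phi\cdot \nabla u\,\der x .
\end{align*}
Since $H^{1-s}(\partial\Omega)$ is dual to $H^{s-1}(\partial\Omega)$ under the $L^2(\partial\Omega)$ pairing, this defines an element of $H^{s-1}(\partial\Omega)$ once the right-hand side is shown to be bounded in $\phi$.

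Boundedness comes from two estimates. The first term is bounded by $\norm{\eta_-\phi}_{L^2}\norm{\Delta u}_{L^2}$. The second term pairs $\nabla u\in H^{s-1/2}(\Omega)$ with $\nabla\eta_-\phi\in H^{1/2-s}(\Omega)$. Here $|s-1/2|<1/2$, and in this range $H^{t}(\Omega)$ and $\widetilde H^{t}(\Omega)$ coincide; this is the Lipschitz-domain identification, valid for $|t|<1/2$ (McLean, Theorem 3.33). Hence $H^{1/2-s}(\Omega)$ is dual to $H^{s-1/2}(\Omega)$ and the bilinear pairing is continuous. Putting the two together gives
\begin{align*}
|\langle \gammaNeu{-} u,\phi\rangle| \lesssim \norm{u}_{H^{s+1/2}_\Delta(\Omega)}\norm{\phi}_{H^{1-s}(\partial\Omega)} .
\end{align*}

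Next we must show the definition does not depend on the choice of extension. If $f\in H^{3/2-s}(\Omega)$ has $\gammaDir{-} f = 0$, the right-hand side must vanish. We argue by density. Functions $f$ with zero trace lie in the closure of $C_c^\infty(\Omega)$ in $H^{3/2-s}$; this is McLean's characterization of the kernel of the trace for Lipschitz domains, available for $3/2-s\in(1/2,3/2)$. For $f\in C_c^\infty(\Omega)$ the right-hand side vanishes by the distributional definition of $\Delta u$. By continuity of the bilinear expression in $f$ (just shown), it vanishes for all $f$ with zero trace.

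Two further facts complete the lemma.
\begin{enumerate}
\item \emph{The Green identity for arbitrary $f$.} It holds for every $f \in H^{3/2-s}(\Omega)$, since $f-\eta_-\gammaDir{-}f$ has zero trace.
\item \emph{Agreement with $\partial_\nu u$ for smooth $u$.} For $u\in \mathscr{C}^\infty(\overline\Omega)$, the divergence theorem on Lipschitz domains, applied to the vector field $f\nabla u$ with $f$ smooth, gives $\int_{\partial\Omega} f\,\partial_\nu u\,\der\sigma$. Density of smooth functions in $H^{3/2-s}(\Omega)$, together with continuity of the trace, then identifies $\gammaNeu{-}u=\partial_\nu u|_{\partial\Omega}$.
\end{enumerate}

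The exterior case is handled the same way, using the sign convention stated earlier in the paper. Here the normal $\nu$ points into $\Omega^+$, so the divergence theorem on $\Omega^+$ produces the minus sign. To pass to the local spaces we multiply by a cutoff $\chi\in C_c^\infty(\R^d)$ with $\chi=1$ near $\partial\Omega$, which reduces everything to a bounded Lipschitz region. Continuity of $H^{s+1/2}_{\Delta,\loc}\to H^{s-1}$ is then continuity with respect to the seminorms of $\chi u$.

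The main obstacle is the second term of the boundedness estimate: the duality between $H^{s-1/2}(\Omega)$ and $H^{1/2-s}(\Omega)$ on a Lipschitz domain. This is exactly where the restriction $s\in(0,1)$ is needed. I would invoke the cited identification $H^t=\widetilde H^t$ for $|t|<1/2$ and not attempt to reprove it.
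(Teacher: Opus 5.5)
Your proposal is correct and is essentially the paper's route. The paper cites McLean's Lemma 4.3 (the case $s=1/2$) and says it goes through mutatis mutandis, and you have carried out exactly that adaptation, correctly locating where $s\in(0,1)$ enters: the pairing of $\nabla u\in H^{s-1/2}(\Omega)$ with $\nabla\eta_-\phi\in H^{1/2-s}(\Omega)$, and the trace-kernel characterization in $H^{3/2-s}(\Omega)$ with $3/2-s\in(1/2,3/2)$. One minor detail for the local spaces: put the cutoff on the test function $\eta_\pm\phi$ rather than on $u$ (or invoke interior elliptic regularity away from $\partial\Omega$), since for $s<1/2$ the term $2\nabla\chi\cdot\nabla u$ in $\Delta(\chi u)$ is not a priori in $L^2$.
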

\begin{proof}
The case $s=1/2$ is Lemma 4.3 in \cite{mcleanStronglyEllipticSystems2000}.
The same proof goes through, mutatis mutandis, for $s\in (0,1)$.
\end{proof}

Our most important tool is integration by parts, in the form of the first, second and third Green identities.
We state the first two identities here, whereas the third Green identity will be given after the definition of the boundary layer potentials.

\begin{lemma}
\begin{enumerate}
\item \emph{First Green identity.} Suppose that $u \in H^1(\Omega)$ and $v \in H^1_\Delta(\Omega)$. Then
\begin{align*}
\int_\Omega u \Delta v = - \int_\Omega \nabla u \cdot \nabla v + \int_{\partial \Omega} (\gammaDir{}u) (\gammaNeu{}v).
\end{align*}
\item \emph{Second Green identity.} Suppose that $u, v \in H^1_\Delta(\Omega)$. Then
\begin{align*}
\int_\Omega \left[u \Delta v - v \Delta u\right] = \int_{\partial \Omega} \left[(\gammaDir{}u) (\gammaNeu{}v) - (\gammaDir{}v) (\gammaNeu{}u)\right].
\end{align*}
\end{enumerate}
\end{lemma}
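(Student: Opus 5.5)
The two Green identities are proved by approximation: establish them for smooth functions on $\overline{\Omega}$, then extend by density and continuity. The first identity is the basic one, and the second follows from it by antisymmetrisation. In fact, for the first identity there is almost nothing to prove beyond unwinding the definition of the Neumann trace. The defining formula for $\gammaNeu{-}$ stated above (with $f\in H^{3/2-s}(\Omega)$ and $u\in H^{s+1/2}_\Delta(\Omega)$) is exactly the first Green identity, and it is how $\gammaNeu{-}v$ is characterised as an element of $H^{s-1}(\partial\Omega)$. Taking $s=1/2$, $f=u\in H^1(\Omega)$ and $v\in H^1_\Delta(\Omega)$, the identity reads
\begin{align*}
\int_{\partial\Omega}\gammaDir{-}(u)\,\gammaNeu{-}(v)\,\mathrm{d}\sigma=\int_\Omega u\,\Delta v\,\mathrm{d}x+\int_\Omega\nabla u\cdot\nabla v\,\mathrm{d}x,
\end{align*}
which is (1) after rearranging.

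To justify this without relying only on the definition, I would make three points. First, the right-hand side depends only on $\gammaDir{-}u$, and not on the particular extension $u$. This is checked by showing that it vanishes for $u\in H^1_0(\Omega)$, which is the closure of $C_c^\infty(\Omega)$. For $u\in C_c^\infty(\Omega)$ the right-hand side vanishes by the distributional definition of $\Delta v$, and by continuity it then vanishes for all $u\in H^1_0(\Omega)$. Second, the right-hand side is bounded by $C\norm{u}_{H^1(\Omega)}\norm{v}_{H^1_\Delta(\Omega)}$. Combined with the continuous right inverse $\eta_-$ of $\gammaDir{-}$, this shows that $g\mapsto$ RHS$(\eta_-g,v)$ is a bounded functional on $H^{1/2}(\partial\Omega)$, so it defines $\gammaNeu{-}v\in H^{-1/2}(\partial\Omega)$. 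This is Lemma \ref{lem:continuity-neumann-trace}. Third, for $v\in C^\infty(\overline\Omega)$ this functional coincides with $\partial_\nu v|_{\partial\Omega}$. That follows from the classical divergence theorem on Lipschitz domains (Gauss--Green for Lipschitz vector fields, with the a.e.\ defined normal $\nu$), applied to the field $u\nabla v$ with $u$ smooth, followed by density of $C^\infty(\overline\Omega)$ in $H^1(\Omega)$.

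For the second identity, take $u,v\in H^1_\Delta(\Omega)$. Both lie in $H^1(\Omega)$, so (1) applies both to $(u,v)$ and to $(v,u)$. Subtracting the two instances, the symmetric term $\int_\Omega\nabla u\cdot\nabla v$ cancels, which gives (2) directly.

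The only genuinely delicate step is the identification $\gammaNeu{}v=\partial_\nu v$ for smooth $v$, which is needed if the identity is to be read classically. It needs the divergence theorem on Lipschitz domains, together with density of $C^\infty(\overline\Omega)$ in $H^1(\Omega)$, which holds for Lipschitz domains by the extension operator. Both are standard, and I would cite \cite{mcleanStronglyEllipticSystems2000}, Theorem 3.34 and Lemma 4.1, rather than reprove them.
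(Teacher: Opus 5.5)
Your proposal is correct and matches the paper, which gives no separate argument. There, (1) is exactly the integration-by-parts formula by which $\gammaNeu{-}$ is defined just before the lemma (the case $s=1/2$ of Lemma \ref{lem:continuity-neumann-trace}, following \cite{mcleanStronglyEllipticSystems2000}), and (2) follows by applying (1) to $(u,v)$ and $(v,u)$ and subtracting, as you do. One precision: your extension-independence step uses that $\ker\gammaDir{-}=H^1_0(\Omega)$ on Lipschitz domains, which is a standard but nontrivial theorem in \cite{mcleanStronglyEllipticSystems2000}, not merely the definition of $H^1_0(\Omega)$ as the closure of $C_c^\infty(\Omega)$.
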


\subsection{Free resolvent}
The Green's function $G_\lambda$ for the free Helmholtz equation $(-\Delta - \lambda^2)\phi=0$ in $\R^d$ is the distributional integral kernel of the resolvent $(-\Delta - \lambda^2)^{-1}$. This can be expressed explicitly as
\begin{align} \label{eqn:GreensFunction}
  G_{\lambda,0}(x,y) =  \frac{\rmi}{4} \left(  \frac{\lambda}{2 \pi |x-y|} \right)^{\frac{d-2}{2}} \mathrm{H}^{(1)}_{\frac{d-2}{2}}( \lambda |x-y|),
\end{align}
where $\mathrm{H}^{(1)}_{\alpha}$ denotes the Hankel function of the first kind of order $\alpha$.
In particular, in dimension three one has
$$
 G_{\lambda}(x,y) = \frac{1}{4 \pi} \frac{e^{\rmi \lambda |x-y|}}{|x-y|}.
$$
The above formula gives the resolvent kernel for $\Im(\lambda)>0$ but is defined on a larger subset of the complex plane.
This continuation of the resolvent defines a holomorphic family of continuous maps $H^{s}_\comp(\R^d) \to H^{s+2}_\loc(\R^d)$ on the logarithmic cover of $\C \setminus -\rmi [0,\infty)$.
Note that in case the dimension $d$ is even the above Hankel function fails to be analytic at zero. In that case
$$
 G_{\lambda} = \tilde G_{\lambda} + F_\lambda\, \lambda^{d-2} \log(\lambda),
$$
where $\tilde G_{\lambda}$ is an entire family of operators $H^s_{\compp}(\R^d) \to H^{s+2}_{\loc}(\R^d)$ and $F_\lambda$ is an entire family of operators with smooth integral kernel
$$
F_\lambda(x,y) = \frac{1}{2 \rmi} (2 \pi)^{-(d-1)}\int_{\mathbb{S}^{d-1}} e^{\rmi \lambda \theta \cdot (x-y)} d \theta,
$$
that is even in $\lambda$. Moreover we have that 
\begin{align*}
G_{-\lambda}(x,y)=\overline{G_{\lambda}(x,y)} \quad \text{for} \quad \lambda>0.
\end{align*}

\subsection{Boundary layer operators}
We now collect the standard properties of single and double layer potentials and their boundary traces on Lipschitz domains; the foundational references are \cite{costabel1988,MR769382}, and a textbook treatment is in \cite{mcleanStronglyEllipticSystems2000}.

For $f \in C^0(\partial \Omega)$, the single and double layer \emph{potentials} are defined as
\begin{align*}
 &\tilde S_\lambda f(x) = \int_{\partial \Omega} G_{\lambda}(x,y) f(y) d\sigma(y),\\
 &\tilde D_\lambda f(x) = \int_{\partial \Omega}  (\partial_{\nu,y} G_{\lambda}(x,y)) f(y) d\sigma(y),
\end{align*}
where $\partial_{\nu,y}$ denotes the outward normal derivative and $\sigma$ is the surface measure.

They extend to continuous maps
\begin{align*}
&\tilde S_\lambda: H^{-\frac{1}{2}}(\partial \Omega) \to H^{1}_\loc(\Omega_-) \oplus H^1_\loc(\Omega_+), \quad u \mapsto (-\Delta_0-\lambda^2)^{-1}(\gammaDir{})^* u\\
&\tilde D_\lambda: H^{\frac{1}{2}}(\partial \Omega) \to H^1_\loc(\Omega_-)\oplus H^1_\loc(\Omega_+),
\quad u \mapsto (-\Delta_0 - \lambda^2)^{-1} (\gammaNeu{})^* u.
\end{align*}

By taking boundary traces of the boundary layer potentials, we obtain the boundary layer operators.
The single layer operator is
$$S_\lambda: H^{-\frac{1}{2}}(\partial \Omega) \to H^{\frac{1}{2}}(\partial \Omega), u \mapsto \gammaDir{} \circ \tilde S_\lambda = \gammaDir{} \circ G_\lambda \circ \gammaDir{*},$$
and the double layer operator $D_\lambda$ is defined as
$$D_\lambda: H^{\frac{1}{2}}(\partial \Omega) \to H^{\frac{1}{2}}(\partial \Omega), u \mapsto 
\frac{1}{2}(\gammaDir{+} \circ \tilde D_\lambda + \gammaDir{-} \circ \tilde D_\lambda).$$
Its transpose is given by
$$D'_\lambda: H^{-\frac{1}{2}}(\partial \Omega) \to H^{-\frac{1}{2}}(\partial \Omega), u \mapsto  \frac{1}{2}(\gammaNeu{+} \tilde S_\lambda + \gammaNeu{-} \tilde S_\lambda),$$
which is well-defined as
$\tilde S_\lambda$ maps $H^{-\frac{1}{2}}(\partial \Omega)$ to $H^{s+\frac{1}{2}}_{\Delta,\loc}(\Omega_\pm)$.
Finally, the hypersingular operator $N_\lambda$ is defined as
$$
 N_\lambda: H^{\frac{1}{2}}(\partial \Omega) \to H^{-\frac{1}{2}}(\partial \Omega), \quad u \mapsto \gammaNeu{+} \tilde D_\lambda.
$$

One has the following ``jump relations'' for $u \in H^{-\frac{1}{2}}(\partial \Omega)$ and $v \in H^{\frac{1}{2}}(\partial \Omega)$:
\begin{align}
 &\gammaDir{\pm} \tilde S_\lambda u = S_\lambda u,\label{eq:jump-rel-1}\\
& \gammaDir{\pm} \tilde D_\lambda v = (\pm\frac{1}{2} + D_\lambda) v,\label{eq:jump-rel-2}\\
 &\gammaNeu{\pm} \tilde S_\lambda u =  (\mp\frac{1}{2} + D_\lambda') u,\label{eq:jump-rel-3}\\
 & \gammaNeu{\pm} \tilde D_\lambda v = N_\lambda v\label{eq:jump-rel-4}.
\end{align}

It is conventional to use a bracket notation for the jumps:
\begin{align*}
[\gamma_D u] = \gammaDir{+} u - \gammaDir{-} u, \quad
[\gamma_N u] = \gammaNeu{+} u - \gammaNeu{-} u.
\end{align*}
\begin{rem}
  In the definition of the boundary layer operators we follow the sign convention of \cite{HSW} and \cite{mcleanStronglyEllipticSystems2000}.
  In the numerical analysis literature one often finds different signs, such that the hypersingular operator in this literature corresponds to what we call $-N_\lambda$.
\end{rem}

We also record the following algebraic relations
\begin{align}
  S_\lambda N_\lambda &= -(\frac{1}{2} + D_\lambda) (\frac{1}{2} - D_\lambda),
  \label{eq:alg-rel-1}\\
  N_\lambda S_\lambda&= -(\frac{1}{2} + D_\lambda') (\frac{1}{2} - D_\lambda')
  \label{eq:alg-rel-2}.
\end{align}

We have the following result, sometimes called \emph{third Green identity}.
\begin{lemma}
    \label{lem:third-green}
Suppose that $u \in L^2(\mathbb{R}^d)$ is compactly supported, $u|_{\Omega^\pm} \in H^1(\Omega^\pm)$ and $f := (-\Delta - \lambda^2)u \in H^{-1}(\Omega^\pm)$.
Then
\begin{align*}
u = G_\lambda f + \widetilde{D}_\lambda [\gammaDir{} u] - \widetilde{S}_\lambda [\gammaNeu{} u].
\end{align*}
\end{lemma}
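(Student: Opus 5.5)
The plan is a distributional argument: apply the free resolvent $G_\lambda$ to $f=(-\Delta-\lambda^2)u$ computed in $\mathcal{D}'(\mathbb{R}^d)$, with the jumps across $\partial\Omega$ producing the two layer potentials. Since $u$ is compactly supported and $G_\lambda$ inverts $-\Delta-\lambda^2$ on compactly supported distributions, it suffices to identify $(-\Delta-\lambda^2)u$ on all of $\mathbb{R}^d$. We then apply $G_\lambda$; no decay or radiation condition is needed because of the compact support. Here $f$ is understood as the function on $\Omega^+\cup\Omega^-$ extended by zero, and $G_\lambda f$ is interpreted as $G_\lambda$ applied to that extension.

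The first step is to compute the distribution $(-\Delta-\lambda^2)u$ by testing against $\varphi\in C_0^\infty(\mathbb{R}^d)$. Write
\[
\int_{\mathbb{R}^d} u\,(-\Delta-\lambda^2)\varphi = \int_{\Omega^-} u\,(-\Delta-\lambda^2)\varphi + \int_{\Omega^+} u\,(-\Delta-\lambda^2)\varphi .
\]
On each side we apply the second Green identity. The hypotheses $u|_{\Omega^\pm}\in H^1$ and $\Delta u\in H^{-1}$ are exactly what the weak definition of $\gammaNeu{\pm}u$ requires, via the first Green identity with $\varphi$ smooth. We keep the sign conventions in mind: the interior trace carries a $+$ sign, and the exterior trace uses the normal $\nu$ pointing into $\Omega^+$, giving a $-$ sign.

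Summing the two sides should give
\[
(-\Delta-\lambda^2)u = f + (\gammaNeu{})^*[\gammaDir{}u] - (\gammaDir{})^*[\gammaNeu{}u],
\]
up to signs that must be checked carefully. The pairing $\int_{\partial\Omega}[\gammaDir{}u]\,\partial_\nu\varphi$ is exactly the action of $(\gammaNeu{})^*$, since $\gammaNeu{}\varphi=\partial_\nu\varphi$ for smooth $\varphi$ by Lemma \ref{lem:continuity-neumann-trace}. Similarly $\int_{\partial\Omega}[\gammaNeu{}u]\,\varphi$ is the action of $(\gammaDir{})^*$.

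The second step applies $G_\lambda=(-\Delta_0-\lambda^2)^{-1}$ to both sides. This is legitimate because every term is a compactly supported distribution in $H^{-1}_{\comp}$ or better: $[\gammaDir{}u]\in H^{1/2}(\partial\Omega)$ and $[\gammaNeu{}u]\in H^{-1/2}(\partial\Omega)$. By the mapping descriptions $\tilde S_\lambda=G_\lambda(\gammaDir{})^*$ and $\tilde D_\lambda=G_\lambda(\gammaNeu{})^*$, the two boundary terms become $\widetilde D_\lambda[\gammaDir{}u]-\widetilde S_\lambda[\gammaNeu{}u]$. Finally, $G_\lambda(-\Delta-\lambda^2)u=u$ for compactly supported $u$: for $\Im\lambda>0$ this is the resolvent identity on $L^2$, and it extends to the continued $G_\lambda$ by analytic continuation, since both sides are holomorphic in $\lambda$ as maps into $H_\loc$.

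The main obstacle is bookkeeping rather than analysis. We must track the orientation and sign conventions so that the result comes out as $+\widetilde D_\lambda[\gammaDir{}u]-\widetilde S_\lambda[\gammaNeu{}u]$, and not with the opposite signs. We must also justify the Green identity at the low regularity $\Delta u\in H^{-1}(\Omega^\pm)$ with a smooth test function, which holds because the Neumann trace is defined precisely by that formula.
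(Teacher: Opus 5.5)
The paper gives no proof of this lemma; it quotes it as the standard third Green identity (essentially Theorem 6.10 in McLean). Your distributional argument is the standard proof of it, and it is correct.

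The sign check you left open does come out right. Add the interior second Green identity to the exterior one; the exterior one acquires a minus sign because $\nu$ points into $\Omega^+$. For $\varphi\in C_0^\infty(\mathbb{R}^d)$ this gives
\[
\int_{\mathbb{R}^d} u\,(-\Delta-\lambda^2)\varphi
= \int_{\Omega^-\cup\Omega^+} f\,\varphi
+ \int_{\partial\Omega} [\gammaDir{} u]\,\partial_\nu\varphi
- \int_{\partial\Omega} [\gammaNeu{} u]\,\varphi .
\]
This is exactly your displayed formula. The identities $G_\lambda(\gammaNeu{})^*=\widetilde D_\lambda$ and $G_\lambda(\gammaDir{})^*=\widetilde S_\lambda$ then produce the stated signs.

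Two claims in your write-up are wrong as stated, although neither breaks the argument.

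\textbf{The double layer source is not in $H^{-1}_{\comp}$.} For $v=1$ one has $(\gammaNeu{})^*1=\Delta\chi_\Omega$. This is not in $H^{-1}$, since otherwise $\chi_\Omega$ would lie in $H^1(\mathbb{R}^d)$. In general $(\gammaNeu{})^*[\gammaDir{}u]$ only lies in $H^{-3/2-\epsilon}_{\comp}$. This is harmless, because $G_\lambda$ maps $H^s_{\comp}\to H^{s+2}_{\loc}$ for every $s$. More simply, $G_\lambda(-\Delta-\lambda^2)w=w$ holds for every compactly supported distribution $w$. The reason is that $G_\lambda$ is convolution with a fundamental solution, and convolution with a compactly supported distribution commutes with derivatives. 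This also makes your analytic continuation step unnecessary.

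\textbf{$f\in H^{-1}(\Omega^\pm)$ does not define $\gammaNeu{\pm}u$.} The weak Green formula with test functions that do not vanish on $\partial\Omega$ needs $f$ as an element of $H^1(\Omega^\pm)^*$. That means choosing an extension supported in $\overline{\Omega^\pm}$, and the Neumann trace depends on that choice. Your reading, $f\in L^2$ extended by zero, is the right repair; it matches the paper's definition of $\gammaNeu{}$ on $H^{s+1/2}_\Delta$. With any fixed extension used consistently in both $G_\lambda f$ and $\gammaNeu{\pm}u$, the computation goes through verbatim. Changing the extension by $(\gammaDir{})^*h$ shifts $G_\lambda f$ by $-\widetilde S_\lambda h$ and $-\widetilde S_\lambda[\gammaNeu{}u]$ by $+\widetilde S_\lambda h$, so the two changes cancel.

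Finally, for $\Im\lambda>0$ your $L^2$-resolvent argument never uses compact support of $u$. The boundary terms are compactly supported anyway. So the same proof covers the non-compactly supported exterior solutions to which the paper actually applies the lemma, for example in the Krein formula and the Calderon projector lemma.
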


We will also need the interior and exterior Dirichlet-to-Neumann operators.
Conventionally, these are defined with respect to the \emph{outward-directed} normal vector field of the domain.
Concretely this means the following.
\begin{definition}
Let $u \in H^{\frac{1}{2}}(\partial \Omega)$ and let $\lambda \in \mathbb{C}^+$.
We define
\begin{enumerate}
\item $Q_\lambda^- u := \gammaNeu{-} v$ where $v \in H^1(\Omega)$ is the unique solution to $(-\Delta - \lambda^2) v = 0$, $\gammaDir{-} v = u$.
\item $Q_\lambda^+ u := - \gammaNeu{+} w$ where $w \in H^1(\Omega^+)$ is the unique solution to $(-\Delta - \lambda^2) w = 0$, $\gammaDir{+} w = u$.
\end{enumerate}
\end{definition}

With these conventions, one finds that the interior and exterior Dirichlet-to-Neumann operators are
\begin{align}
  \label{eq:DtN-rel}
   \DN_\lambda^\pm = S_\lambda^{-1}\left(\frac{1}{2} \mp D_\lambda\right)
\end{align}
and satisfy
\begin{align}
   \DN_\lambda^- + \DN_\lambda^+ = S_\lambda^{-1}.
\end{align}




\subsection{Hahn holomorphic functions.}
\label{subsec:hahn-holomorphic-ops}
The holomorphic dependence of the boundary layer operators on the spectral parameter $\lambda$ will play an important role, in particular in the use of analytic Fredholm theory.
In even dimensions analyticity at $\lambda=0$ fails because of $\log\lambda$ singularities. To take this into account we work instead with the weaker notion of \emph{Hahn holomorphy}
introduced in~\cite{MR3227433}. We will use the results on the class of these functions as a black box, but state the definition for the sake of completeness. The class of functions we consider are the $z \log z$ Hahn holomorphic functions and we will refer to these simply as Hahn-holomorphic.
A function defined near zero in a fixed sector of the logarithmic cover of $\C$ is Hahn holomorphic if it admits a normally convergent expansion of the form
\[
        A_\lambda \;=\; \sum_{(j,k) \in E} a_{j,k}\,\lambda^{j}(-\log\lambda)^{-k},
\]
with operator-valued coefficients $a_{j,k}$ and the index set $E$, depending on $A$, being a well-ordered subset of $\mathbb{Z} \times \mathbb{Z}$ equipped with the lexicographical ordering. 
In addition there must exist an $N \in \mathbb{N}_0$ such that all coefficients $a_{j,k}$ with $-k > N j$ vanish, and 
we must have $a_{j,k} = 0$ for $j < 0$ and for $j = 0$ and $k > 0$. 

Such functions form an integral domain and the quotient field 
is isomorphic to the field of Hahn-meromorphic functions. These Hahn meromorphic functions have similarly convergent expansions near zero without the requirement that the $a_{j,k}$ vanish when $j < 0$ or for $j=0, k > 0$. The theory resembles that of meromorphic functions in that bounded Hahn-meromorphic functions are Hahn-holomorphic and in that there exists 
a Hahn holomorphic version of the analytic Fredholm theorem. We refer to \cite{MR3227433}
for details where the Fredholm theorem is stated in Theorem 4.1. This allows us to use Fredholm theory near $\lambda=0$ in the even dimensional case in the same way as ordinary Fredholm theory in odd dimensions.

The following summarises the expansions of the boundary layer operators and their behaviour near $\lambda=0$ as straightforward consequences of the explicit description of the resolvent in terms of Hankel functions.

\begin{lemma}
Let $\mathbb{C}^+ := \{z \in \mathbb{C} \mid \operatorname{Im}(z) > 0\}$ be the open upper half plane and $\overline{\mathbb{C}^+} = \{z \in \mathbb{C} \mid \operatorname{Im}(z) \geq 0\}$ its closure.

Then
\begin{enumerate}
\item The functions
\begin{align*}
&S_\lambda: \mathbb{C}^+ \to \mathcal{L}(H^{-\frac{1}{2}}(\partial \Omega), H^{\frac{1}{2}}(\partial \Omega))\\
&D_\lambda: \mathbb{C}^+ \to \mathcal{L}(H^{\frac{1}{2}}(\partial \Omega), H^{\frac{1}{2}}(\partial \Omega))\\
&D'_\lambda: \mathbb{C}^+ \to \mathcal{L}(H^{-\frac{1}{2}}(\partial \Omega), H^{-\frac{1}{2}}(\partial \Omega))\\
&N_\lambda: \mathbb{C}^+ \to \mathcal{L}(H^{\frac{1}{2}}(\partial \Omega), H^{-\frac{1}{2}}(\partial \Omega))
\end{align*}
are holomorphic Banach space valued functions.
\item In odd dimensions $d\geq 3$, these operator valued functions are holomorphic in a neighbourhood of $\lambda = 0$.
In even dimension $d\geq 4$, they are Hahn holomorphic near $0$.
\item In dimension $d\geq 3$, the derivatives of all of these operator families are bounded in a neighbourhood of $\lambda = 0$.
\item In $d=2$, the operators $D_\lambda, D_\lambda', N_\lambda$ are Hahn holomorphic with Hahn holomorphic derivatives at $0$, whereas $S_\lambda = m_\lambda + r_\lambda \log \lambda$ with holomorphic $m_\lambda, r_\lambda$.
\end{enumerate}
\end{lemma}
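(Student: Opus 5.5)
The plan is to reduce everything to properties of the integral kernel $G_\lambda(x,y)$ from \eqref{eqn:GreensFunction}, restricted to $\partial\Omega\times\partial\Omega$, together with the mapping properties of the layer potentials on Lipschitz boundaries cited from \cite{costabel1988,mcleanStronglyEllipticSystems2000}. Each boundary layer operator can be written as a composition of a trace operator, the free resolvent, and the adjoint of a trace operator:
\begin{align*}
S_\lambda&=\gammaDir{}G_\lambda(\gammaDir{})^*, &
D_\lambda&=\tfrac12(\gammaDir{+}+\gammaDir{-})G_\lambda(\gammaNeu{})^*,\\
D'_\lambda&=\tfrac12(\gammaNeu{+}+\gammaNeu{-})G_\lambda(\gammaDir{})^*, &
N_\lambda&=\gammaNeu{+}G_\lambda(\gammaNeu{})^*.
\end{align*}
The trace maps do not depend on $\lambda$. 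So every regularity statement in $\lambda$ for the layer operators follows from the corresponding statement for $\lambda\mapsto G_\lambda$ as a family of maps $H^{s}_{\comp}(\R^d)\to H^{s+2}_{\loc}(\R^d)$. One must also check that the Neumann traces remain continuous, in the sense of Lemma \ref{lem:continuity-neumann-trace}, uniformly along the family.

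For (1), I would use that $G_\lambda$ is a holomorphic family $H^s_{\comp}\to H^{s+2}_{\loc}$ for $\Im\lambda>0$. The adjoint traces $(\gammaDir{})^*:H^{-1/2}(\partial\Omega)\to H^{-1}_{\comp}$ and $(\gammaNeu{})^*:H^{1/2}(\partial\Omega)\to H^{-3/2-\epsilon}_{\comp}$ are bounded. The Neumann case needs care: the output must land in $H^{s+1/2}_{\Delta,\loc}(\Omega^\pm)$, and it does, because $(\Delta+\lambda^2)\tilde D_\lambda v=0$ away from $\partial\Omega$. Moreover, the $H^{s+1/2}_\Delta$-norm is controlled by the $H^{s+1/2}$-norm plus $|\lambda|^2$ times the $L^2$-norm. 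Holomorphy in operator norm then follows by composition.

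For (2)--(4), I would expand the Hankel function. For odd $d$, $G_\lambda(x,y)$ is $|x-y|^{2-d}$ times an entire function of $\lambda|x-y|$ (a finite sum times $e^{\rmi\lambda r}$). Hence $\lambda\mapsto G_\lambda$ is entire as a map $H^s_{\comp}\to H^{s+2}_{\loc}$, and holomorphy at $0$ follows as in (1). For even $d$, I would use the stated decomposition $G_\lambda=\tilde G_\lambda+F_\lambda\lambda^{d-2}\log\lambda$. Here $\tilde G_\lambda$ is entire and $F_\lambda$ is entire with a smooth kernel. Inserting this into the four compositions gives $A_\lambda=\tilde A_\lambda+\lambda^{d-2}\log\lambda\,B_\lambda$. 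The operators $B_\lambda$ are entire and, having smooth kernels, are bounded between any of the boundary Sobolev spaces. The function $\lambda^{d-2}\log\lambda=-\lambda^{d-2}(-\log\lambda)^{-(-1)}$ is Hahn holomorphic in the sense of \cite{MR3227433}: the index is $(j,k)=(d-2,-1)$ with $j\ge1$ when $d\ge3$, so the requirement $-k\le Nj$ holds. This proves (2).

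For (3), derivatives near $0$ are bounded in odd $d$ by holomorphy. In even $d\ge4$, the derivative of $\lambda^{d-2}\log\lambda$ is $O(\lambda^{d-3}|\log\lambda|)\to0$. For (4), with $d=2$, the single layer kernel is $-\frac{1}{2\pi}\log|x-y|$ plus $\log\lambda$ times a smooth entire kernel, plus an entire remainder. This gives $S_\lambda=m_\lambda+r_\lambda\log\lambda$. For $D_\lambda$, $D_\lambda'$ and $N_\lambda$, the logarithmic term carries a normal derivative of $F_\lambda(x,y)$. Since $F_\lambda$ is even in $\lambda$ and a function of $\lambda(x-y)$, its derivatives in $x$ or $y$ gain at least a factor $\lambda^2$ (the $\lambda^0$ term is constant). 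This turns $\log\lambda$ into $\lambda^2\log\lambda$, which is Hahn holomorphic with Hahn holomorphic derivative $O(\lambda\log\lambda)$.

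I expect the main obstacle to be the Lipschitz regularity for the Neumann-trace compositions $D_\lambda$ and $N_\lambda$. There the singular part $\tilde G_\lambda$ is not smooth, and the boundedness of $N_\lambda:H^{1/2}\to H^{-1/2}$ relies on \cite{costabel1988}. My approach is to separate the $\lambda$-independent singular part $G_0$, whose layer operators are bounded by the cited results, from $G_\lambda-G_0$. The difference kernel is $O(\lambda^2|x-y|^{4-d})$-type, one order smoother, so its compositions are handled by the smoothing estimates. What remains to verify is that this difference is holomorphic (respectively Hahn holomorphic) as a map into the stronger spaces.
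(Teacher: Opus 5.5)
Your proposal is correct and takes the same route as the paper. The paper gives no separate proof: it presents the lemma as a straightforward consequence of the Hankel-function form of $G_\lambda$ and the splitting $G_\lambda=\tilde G_\lambda+F_\lambda\lambda^{d-2}\log\lambda$, composed with the $\lambda$-independent trace maps, which is exactly what you do (including the point that, in $d=2$, $x$- or $y$-derivatives of the even kernel $F_\lambda$ carry a factor $\lambda^2$).

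The step that actually makes $D_\lambda$ and $N_\lambda$ rigorous on Lipschitz boundaries is your final-paragraph subtraction of the Laplace kernel $G_0$. The remark in (1) that $\tilde D_\lambda v$ solves the Helmholtz equation does not by itself give $H^1(\Omega^\pm)$ regularity. Also, the weakly singular difference kernels you obtain are controlled by a Schur-type bound on $L^2(\partial\Omega)$, not by the appendix lemma, which only covers smooth off-diagonal kernels.
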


The singularity structure of $S_\lambda$ in $d=2$ is investigated more thoroughly in Section \ref{sec:dimension-two}.

\subsection{Calderon projectors and Cauchy data.}
\begin{definition}
    Let $u_\pm \in H^1_{\Delta, \mathrm{loc}}(\overline{\Omega^\pm})$.
    Then the \emph{Cauchy data} of $u_\pm$ are defined to be the boundary traces
    $\gammaCau{\pm} u := \begin{pmatrix} \gammaDir{\pm} u\\ \gammaNeu{\pm}u\end{pmatrix} \in H^{1/2}(\partial \Omega) \oplus H^{-1/2}(\partial\Omega)$.

    We will write
    \begin{align*}
    \mathscr{C} := H^{1/2}(\partial \Omega) \oplus H^{-1/2}(\partial\Omega) 
    \end{align*}
    for the space of Cauchy data.
    It is a Banach space with the norm
    \begin{align*}
    \norm{\Phi}_{\mathscr{C}}^2 := \norm{\phi}_{H^{1/2}(\partial\Omega)}^2 + \norm{\psi}_{H^{-1/2}(\partial\Omega)}^2,\qquad \Phi = \begin{pmatrix} \phi \\ \psi\end{pmatrix} \in \mathscr{C}.
    \end{align*}
    We will generally use capital Greek letters for elements of $\mathscr{C}$.
\end{definition}

\begin{definition}
For $\operatorname{Im}(\lambda) > 0$ homogeneous solution spaces are defined as
\begin{align}
\mathcal{L}^\pm_\lambda := \{u \in H^1(\Omega^\pm) \mid (-\Delta - \lambda^2) u = 0\}.
\end{align}
\end{definition}

The following standard result is essentially a restatement of the second Green identity.

\begin{lemma}
The spaces $\mathcal{B}^\pm_\lambda := \gammaCau{\pm} \mathcal{L}^\pm_\lambda$
are isotropic subspaces for the symplectic form
\begin{align*}
\omega((\phi_1, \psi_1), (\phi_2, \psi_2)) = \int_{\partial\Omega} \left[\phi_1 \psi_2 - \phi_2 \psi_1\right].
\end{align*}
\end{lemma}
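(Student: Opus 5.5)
Take two elements $\Phi_1 = \gammaCau{\pm} u_1$ and $\Phi_2 = \gammaCau{\pm} u_2$ of $\mathcal{B}^\pm_\lambda$, with $u_1, u_2 \in \mathcal{L}^\pm_\lambda$, and show that $\omega(\Phi_1,\Phi_2)=0$ by applying the second Green identity to the pair $(u_1,u_2)$. Since $\Delta u_j = -\lambda^2 u_j$, we have $u_1\Delta u_2 - u_2 \Delta u_1 = -\lambda^2(u_1u_2 - u_2u_1) = 0$ pointwise. The whole proof therefore reduces to justifying Green's identity in each of the two settings.

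\emph{Interior case.} Here $\Omega^-=\Omega$ is bounded. The functions $u_j \in H^1(\Omega)$ satisfy $\Delta u_j = -\lambda^2 u_j \in L^2(\Omega)$, so $u_j \in H^1_\Delta(\Omega)$. The second Green identity as stated gives
\begin{align*}
0=\int_\Omega \left[u_1 \Delta u_2 - u_2 \Delta u_1\right] = \int_{\partial \Omega} \left[(\gammaDir{-}u_1) (\gammaNeu{-}u_2) - (\gammaDir{-}u_2) (\gammaNeu{-}u_1)\right] = \omega(\Phi_1,\Phi_2).
\end{align*}
The pairing on the right is the $H^{1/2}\times H^{-1/2}$ duality, which is the meaning of the integral sign adopted in the conventions.

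\emph{Exterior case.} This is the only point needing care, since $\Omega^+$ is unbounded. Cut off: pick a large ball $B_R \supset \overline{\Omega}$ and apply the exterior version of the second Green identity on the bounded Lipschitz domain $\Omega^+\cap B_R$. This can be derived from the exterior integration-by-parts formula with the sign convention for $\gammaNeu{+}$, or by applying the identity to $\chi u_j$ with a cutoff $\chi$ equal to $1$ near $\overline\Omega$. We get
\begin{align*}
0 = -\omega(\Phi_1,\Phi_2) + \int_{\partial B_R}\left[u_1\partial_r u_2 - u_2\partial_r u_1\right].
\end{align*}
The minus sign comes from $\nu$ pointing into $\Omega^+$; it is harmless because the conclusion is vanishing.

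For $\operatorname{Im}\lambda>0$, elliptic regularity shows that $u_j$ is smooth away from $\partial\Omega$. Representing $u_j$ by the third Green identity (Lemma~\ref{lem:third-green}, applied to $\chi u_j$) expresses it, outside a compact set, as layer potentials with kernel $G_\lambda$ plus $G_\lambda$ applied to a compactly supported function. These decay exponentially together with their derivatives. Hence the sphere term tends to $0$ as $R\to\infty$.

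Alternatively, one can avoid decay estimates entirely. Since $u_j\in H^1$ and $\Delta u_j\in L^2$, the integrand $u_1\partial_r u_2 - u_2\partial_r u_1$ is in $L^1(\Omega^+)$. Hence the sphere integrals tend to zero along some sequence $R_k\to\infty$, and that suffices.

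The main obstacle is this justification of the vanishing boundary term at infinity, together with keeping the sign conventions for $\gammaNeu{+}$ straight. The algebraic cancellation itself is immediate.
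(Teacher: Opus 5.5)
Your proof is correct and is essentially the paper's argument: the paper only remarks that the lemma is a restatement of the second Green identity, and your exterior-case discussion supplies the justification it leaves implicit, in particular the observation that $u_1\partial_r u_2 - u_2\partial_r u_1\in L^1(\Omega^+)$, which forces the $R$-independent sphere term to vanish. One small slip: Lemma~\ref{lem:third-green} applied to the compactly supported $\chi u_j$ says nothing about $u_j$ far out (you would instead write $(1-\chi)u_j = G_\lambda g$ with $g = (-\Delta-\lambda^2)\bigl((1-\chi)u_j\bigr)$ compactly supported, using $\lambda^2\notin[0,\infty)$), but your alternative $L^1$ argument makes this point moot.
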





    \begin{definition}
        The Calderon projectors are the operators on $\mathscr{C}$ given by
        \begin{align*}
            P^\pm_\lambda
                =
                \begin{pmatrix} \frac{1}{2} \pm D_\lambda & \mp S_{\lambda}\\
                \pm N_{\lambda} & \frac{1}{2} \mp D_{\lambda}'
                \end{pmatrix}
                = \frac{1}{2} \mp A_\lambda.             
        \end{align*}
        The operator
        \begin{align*}
            A_\lambda = 
            \begin{pmatrix}
            -D_\lambda & S_\lambda\\ -N_\lambda & D_\lambda'
            \end{pmatrix}
        \end{align*}
        is called the \emph{multitrace operator}.
        \end{definition}

We will introduce the notation
\begin{align*}
    &\CLP_\lambda := \begin{pmatrix}
    \widetilde{D}_{\lambda} & -\widetilde{S}_{\lambda}
    \end{pmatrix}: \mathscr{C} \to H^1_{\text{loc}}(\mathbb{R}^d)\\
    &\CLP'_\lambda := \CLP_\lambda^t = \begin{pmatrix}
    \widetilde{S}_{\lambda}'\\ \widetilde{D}_{\lambda}'\end{pmatrix}:
    H^{-1}_{\mathrm{cpt}}(\mathbb{R}^d) \to \mathscr{C}
\end{align*}
This is short hand notation for
\begin{align*}
\CLP_\lambda\Phi &:=\widetilde D_\lambda\phi-\widetilde S_\lambda\psi\\
\CLP_\lambda' f &:= \begin{pmatrix} \widetilde{S}_{\lambda}' f \\ \widetilde{D}_{\lambda}' f \end{pmatrix}.
\end{align*}

\begin{lemma}
    \label{lem:calderon-projectors}
        The Calderon projectors satisfy the following identities:
        \begin{align}
            \label{eq:calderon-potential-trace-plus}
            &\gammaCau{+}
            \CLP_\lambda 
            = P_\lambda^+\\
            \label{eq:calderon-potential-trace-minus}
            &\gammaCau{-}
            \CLP_\lambda 
            = -P_\lambda^-\\
            &\CLP'_\lambda \CLP_\lambda 
            =
            \begin{pmatrix} \widetilde{S}_\lambda' \widetilde{D}_\lambda
            & - \widetilde{S}_\lambda' \widetilde{S}_\lambda\\
        \widetilde{D}_\lambda' \widetilde{D}_\lambda &
        - \widetilde{D}_\lambda' \widetilde{S}_\lambda\end{pmatrix}
        = -\frac{1}{2\lambda} \frac{d}{d\lambda} A_\lambda
        = -\frac{1}{2\lambda}\frac{d}{d\lambda} P_\lambda^-
        = \frac{1}{2\lambda}\frac{d}{d\lambda} P_\lambda^+.\label{eq:cald-potential-derivative}
            \end{align}
\end{lemma}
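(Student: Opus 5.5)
The first two identities are restatements of the jump relations, and the third follows from differentiating the free resolvent in $\lambda$. Throughout I work with $\operatorname{Im}(\lambda)>0$, where $G_\lambda=(-\Delta_0-\lambda^2)^{-1}$ is a bounded operator $H^{-1}(\R^d)\to H^1(\R^d)$. Its kernel is exponentially decaying, so all compositions below are compositions of bounded operators between global Sobolev spaces. The identities then extend to the other values of $\lambda$ under consideration by holomorphic (resp. Hahn-holomorphic) continuation.

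For \eqref{eq:calderon-potential-trace-plus} and \eqref{eq:calderon-potential-trace-minus}, take $\Phi=(\phi,\psi)\in\mathscr C$, so that $\CLP_\lambda\Phi=\widetilde D_\lambda\phi-\widetilde S_\lambda\psi$. Applying $\gammaDir{\pm}$ and $\gammaNeu{\pm}$ and using \eqref{eq:jump-rel-1}--\eqref{eq:jump-rel-4} gives
\begin{align*}
\gammaDir{\pm}\CLP_\lambda\Phi=(\pm\tfrac12+D_\lambda)\phi-S_\lambda\psi,\qquad
\gammaNeu{\pm}\CLP_\lambda\Phi=N_\lambda\phi-(\mp\tfrac12+D_\lambda')\psi .
\end{align*}
In matrix form this is $P^+_\lambda$ for the $+$ sign and $-P^-_\lambda$ for the $-$ sign, by comparison with the definition of the Calderon projectors. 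This step is pure bookkeeping.

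For \eqref{eq:cald-potential-derivative}, the key input is the resolvent identity
\begin{align*}
\frac{d}{d\lambda}G_\lambda=2\lambda\,G_\lambda^2 ,
\end{align*}
valid in operator norm $H^{-1}\to H^1$ for $\operatorname{Im}\lambda>0$. I write the potentials as $\widetilde S_\lambda=G_\lambda(\gammaDir{})^*$ and $\widetilde D_\lambda=G_\lambda(\gammaNeu{})^*$. Since $G_\lambda^t=G_\lambda$, the transposes are $\widetilde S_\lambda'=\gammaDir{}G_\lambda$ and $\widetilde D_\lambda'=\gammaNeu{}G_\lambda$, where the traces may be taken from either side.

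Each block of $\CLP'_\lambda\CLP_\lambda$ then has the form $\gamma_a G_\lambda^2\gamma_b^*$ with $a,b\in\{D,N\}$, and therefore equals $\frac1{2\lambda}\frac{d}{d\lambda}\bigl(\gamma_aG_\lambda\gamma_b^*\bigr)$. The operators $\gamma_aG_\lambda\gamma_b^*$ are, up to the $\lambda$-independent jump terms $\pm\frac12$, exactly $S_\lambda$, $D_\lambda$, $D_\lambda'$ and $N_\lambda$. These jump terms vanish under $d/d\lambda$, so it does not matter from which side the traces are taken. This yields
\begin{align*}
\widetilde S_\lambda'\widetilde D_\lambda=\tfrac1{2\lambda}\tfrac{d}{d\lambda}D_\lambda,\quad
\widetilde S_\lambda'\widetilde S_\lambda=\tfrac1{2\lambda}\tfrac{d}{d\lambda}S_\lambda,\quad
\widetilde D_\lambda'\widetilde D_\lambda=\tfrac1{2\lambda}\tfrac{d}{d\lambda}N_\lambda,\quad
\widetilde D_\lambda'\widetilde S_\lambda=\tfrac1{2\lambda}\tfrac{d}{d\lambda}D'_\lambda .
\end{align*}
Comparing entrywise with $A_\lambda=\begin{pmatrix}-D_\lambda&S_\lambda\\-N_\lambda&D'_\lambda\end{pmatrix}$, the signs match to give $-\frac1{2\lambda}\frac{d}{d\lambda}A_\lambda$. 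The remaining equalities follow from $P^\pm_\lambda=\frac12\mp A_\lambda$.

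The main obstacle is the justification of the middle step rather than the algebra. One must check that $\gamma_aG_\lambda$ is well defined on the range of $G_\lambda\gamma_b^*$: that range lies in $H^1(\R^d)$ but not in $H^1_\Delta$ across $\partial\Omega$, so the Neumann trace must be taken one-sidedly on $H^1_\Delta(\Omega^\pm)$. One must also check that differentiating $\gamma_aG_\lambda\gamma_b^*$ commutes with taking traces. I would handle both points by working in $\mathcal L\bigl(H^{-1}(\R^d),H^1_\Delta(\Omega^\pm)\bigr)$, where $\lambda\mapsto G_\lambda$ is holomorphic in norm for $\operatorname{Im}\lambda>0$, and invoking the continuity of the traces from Lemma \ref{lem:continuity-neumann-trace}.
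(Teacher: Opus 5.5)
Your proof is correct and follows the paper's route: the first two identities are read off from the jump relations \eqref{eq:jump-rel-1}--\eqref{eq:jump-rel-4}, and the third comes from the resolvent identity $\frac{d}{d\lambda}G_\lambda=2\lambda G_\lambda^2$, which is the content of Lemma~\ref{lem:bd-op-derivative} (you also supply the entry $\widetilde D_\lambda'\widetilde S_\lambda=\frac{1}{2\lambda}\frac{d}{d\lambda}D_\lambda'$, which that lemma does not list). One caution about your justification paragraph: $(\gamma_N)^*$ does not map into $H^{-1}(\R^d)$, and $G_\lambda$ maps into $H^1_\Delta(\Omega^\pm)$ only on distributions supported on $\partial\Omega$, so it is cleaner to differentiate $G_\lambda-G_{\lambda_0}=(\lambda^2-\lambda_0^2)G_\lambda G_{\lambda_0}$ instead; this family is holomorphic from $H^{-2}_{\mathrm{comp}}(\R^d)$ to $H^{2}_{\mathrm{loc}}(\R^d)$, where all traces and their adjoints are bounded and the $\lambda$-independent jump terms drop out automatically.
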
  
\begin{proof}
   The identities (\ref{eq:calderon-potential-trace-plus}) and (\ref{eq:calderon-potential-trace-minus}) follow directly from the definition and the relations (\ref{eq:jump-rel-1} -- \ref{eq:jump-rel-4}).
    The identity (\ref{eq:cald-potential-derivative}) follows from boundary operator derivatives as for example stated in Lemma \ref{lem:bd-op-derivative}.
\end{proof}

The boundary operator $K_\lambda^t K_\lambda$ is formed by integrating over the whole space $\mathbb{R}^d$.
For the transmission problem, we also need to consider such integrals over the domains $\Omega^\pm$.

\begin{lemma}
Let $\chi_\pm = \chi_{\Omega_\pm}$ be the characteristic functions of the domains $\Omega_\pm$.
Then
\begin{align*}
&K_\lambda^t \chi_\pm K_\lambda
= -\frac{1}{2\lambda} \dot{A}_\lambda P^\pm_\lambda
\end{align*}
where $\dot{A}_\lambda = \frac{d}{d\lambda} A_\lambda$.
\end{lemma}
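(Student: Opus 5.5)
The plan is to compute the bilinear pairing $\int_{\Omega_\pm} (K_\lambda\Phi)(K_\lambda\Psi)\,dx$ for $\Phi,\Psi\in\mathscr{C}$, using the second Green identity on $\Omega^\pm$ applied to $u=K_\lambda\Phi$ and to its $\lambda$-derivative. This generalises the whole-space identity (\ref{eq:cald-potential-derivative}) of Lemma \ref{lem:calderon-projectors} to a single side of the boundary, and that identity will serve as the consistency check: summing the $+$ and $-$ formulas must recover it.

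\textbf{Key identity.} Set $u_\lambda=K_\lambda\Phi$, so that $(-\Delta-\lambda^2)u_\lambda=0$ in $\Omega^\pm$. Differentiating in $\lambda$ gives
\begin{align*}
(-\Delta-\lambda^2)\dot u_\lambda = 2\lambda u_\lambda \quad\text{in } \Omega^\pm,
\end{align*}
where $\dot u_\lambda=\dot K_\lambda\Phi$. For $v_\lambda=K_\lambda\Psi$ we therefore have
\begin{align*}
2\lambda\, v_\lambda u_\lambda = v_\lambda(-\Delta-\lambda^2)\dot u_\lambda - \dot u_\lambda(-\Delta-\lambda^2)v_\lambda
= \dot u_\lambda\Delta v_\lambda - v_\lambda\Delta\dot u_\lambda .
\end{align*}

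\textbf{Integration by parts.} For $\operatorname{Im}\lambda>0$ all functions decay exponentially, so the second Green identity applies on $\Omega^+$ (truncate to a large ball; the outer boundary term vanishes). On $\Omega^-$ it applies directly. The sign conventions are the interior formula with $\gamma^-$ and a minus sign for the exterior. Integrating, we obtain, up to these signs,
\begin{align*}
2\lambda\int_{\Omega^\pm} v_\lambda u_\lambda = \mp\,\omega\bigl(\gamma^\pm \dot u_\lambda,\gamma^\pm v_\lambda\bigr).
\end{align*}

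\textbf{Boundary data.} By Lemma \ref{lem:calderon-projectors}, $\gamma^\pm v_\lambda=\pm P^\pm_\lambda\Psi$. Differentiating the same relation gives $\gamma^\pm\dot u_\lambda=\pm\dot P^\pm_\lambda\Phi=-\dot A_\lambda\Phi$ in both cases. Substituting,
\begin{align*}
2\lambda\int_{\Omega^\pm}(K_\lambda\Psi)(K_\lambda\Phi)=\pm\text{sign}\cdot\omega\bigl(\dot A_\lambda\Phi,P^\pm_\lambda\Psi\bigr).
\end{align*}
Write $\omega(X,Y)=\langle X,JY\rangle$ with $J=\begin{pmatrix}0&1\\-1&0\end{pmatrix}$. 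The left side is $\langle\Psi,K^t\chi_\pm K\Phi\rangle$. Since $A_\lambda^t = -JA_\lambda J^{-1}$ (from $S^t=S$, $N^t=N$, $D'=D^t$), the symplectic form can be moved across and $J$ absorbed. This should give $K^t\chi_\pm K=-\tfrac1{2\lambda}(\ldots)$, and one must then identify the result as $\dot A_\lambda P^\pm_\lambda$ rather than $P^\pm_\lambda\dot A_\lambda$ or a transpose.

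\textbf{Main obstacle.} The delicate part is this bookkeeping: tracking the signs of the two Neumann traces, the ordering of the factors, and the transposes. It can be cross-checked in two ways. First, summing over $\pm$ gives $\dot A_\lambda(P^++P^-)=\dot A_\lambda$, matching (\ref{eq:cald-potential-derivative}). Second, $\dot A_\lambda P^\pm_\lambda$ must equal its own transpose, which follows from differentiating $P^\pm P^\pm=P^\pm$, i.e. $A^2=\tfrac14$. Differentiating that gives $\dot A A+A\dot A=0$, hence $\dot A P^\pm=P^\mp\dot A$. This identity lets one freely switch the order of factors, which should resolve the ordering issue.

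Finally, regularity: $\dot K_\lambda$ maps $\mathscr{C}$ into $H^1_{\loc}(\Omega^\pm)$, with $\Delta\dot u_\lambda\in L^2_{\loc}$, so the Green identity on Lipschitz domains is justified, first on a dense class and then by continuity. The identity then extends to $\lambda$ in the closure where the operators remain defined.
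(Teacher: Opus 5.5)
Your argument is correct, but it takes a different route from the paper. The paper never integrates by parts on $\Omega^\pm$ separately. It takes the whole-space identity $K_\lambda^t K_\lambda = -\frac{1}{2\lambda}\dot A_\lambda$ from Lemma \ref{lem:calderon-projectors} as given, and notes that $\chi_+ K_\lambda P^-_\lambda = 0$ and $\chi_- K_\lambda P^+_\lambda = 0$: a solution on $\Omega^\pm$ with zero Cauchy data vanishes, and $P^+_\lambda P^-_\lambda = 0$. Setting $\Gamma^\pm = K_\lambda^t \chi_\pm K_\lambda$, this gives $\Gamma^\pm = \Gamma^\pm P^\pm_\lambda = (\Gamma^+ + \Gamma^-) P^\pm_\lambda$. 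The ordering $\dot A_\lambda P^\pm_\lambda$ then comes out automatically, with no transposes or symplectic bookkeeping.

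Your computation instead derives each half directly from $(-\Delta-\lambda^2)\dot u_\lambda = 2\lambda u_\lambda$ and Green's identity. It is self-contained and recovers the whole-space identity by summing, rather than assuming it. The cost is that you must convert the bilinear identity $2\lambda\int_{\Omega^\pm}(K_\lambda\Psi)(K_\lambda\Phi) = \omega(\dot A_\lambda\Phi, P^\pm_\lambda\Psi)$ into operator form; your signs leading to this identity are right.

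That last step closes once the paper's convention is used correctly. Here $K_\lambda^t = K_\lambda'$ has components $\tilde S_\lambda'$ and $\tilde D_\lambda'$ and maps into $\mathscr{C}$. So the left side is $\omega(\Psi, K_\lambda^t \chi_\pm K_\lambda \Phi)$, not a componentwise pairing $\langle \Psi, K^t_\lambda\chi_\pm K_\lambda\Phi\rangle$, which would pair $H^{1/2}$ with $H^{1/2}$. The claim is then equivalent to $\omega(\dot A_\lambda P^\pm_\lambda \Phi, \Psi) = \omega(\dot A_\lambda \Phi, P^\pm_\lambda \Psi)$. This follows from two facts you already have:
\begin{itemize}
\item $\omega(X, P^\pm_\lambda Y) = \omega(P^\mp_\lambda X, Y)$, which is the $\omega$-skewness of $A_\lambda$ encoded in $A_\lambda^t = -JA_\lambda J^{-1}$;
\item $P^\mp_\lambda \dot A_\lambda = \dot A_\lambda P^\pm_\lambda$, which you derived from $A_\lambda^2 = \frac14$.
\end{itemize}
For the same reason, your symmetry cross-check should say that $J\dot A_\lambda P^\pm_\lambda$ is symmetric, not $\dot A_\lambda P^\pm_\lambda$ itself.
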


\begin{proof}
Let $\chi_\pm = \chi_{\Omega_\pm}$ be the characteristic functions of the domains $\Omega_\pm$.
Then
    \begin{align}
    \label{eq:calderon-vanishing-1}
    &\chi_+ K_\lambda P_\lambda^- = 0\\
    &\chi_- K_\lambda P_\lambda^+ = 0.
    \label{eq:calderon-vanishing-2}
    \end{align}
For, if $u_+ = \chi_+ K_\lambda P_\lambda^- \Phi$,
then $\gammaCau{+} u_+ = P_\lambda^+ P_\lambda^- \Phi = 0$.
Since $u_+$ solves the Helmholtz equation, it follows that $u_+ = 0$.

Now let $\Gamma^\pm_\lambda = K_\lambda^t \chi_\pm K_\lambda$.
Then by Lemma \ref{lem:calderon-projectors}, we have
\begin{align*}
  \Gamma^+_\lambda + \Gamma^-_\lambda = -\frac{1}{2\lambda} \dot{A}_\lambda.
\end{align*}
Moreover, by (\ref{eq:calderon-vanishing-1} -- \ref{eq:calderon-vanishing-2}), we have
$\Gamma^\pm_\lambda = \Gamma^\pm_\lambda P^\pm_\lambda$ and $\Gamma^\pm_\lambda P^\mp_\lambda = 0$.
Hence we have
\begin{align*}
    \Gamma^\pm_\lambda = -\frac{1}{2\lambda} \dot{A}_\lambda P^\pm_\lambda.
\end{align*}
\end{proof}

\begin{rem}
We call the operator $K_\lambda$ the \emph{Calderon potential operator}, because it plays a similar role to the single and double layer potential operators, namely of solving a given boundary value problem in terms of the boundary data.
Given boundary data $\Phi_{\pm} \in \mathcal{B}_{\pm}$, define $u_\pm := \pm K_\lambda \Phi_\pm$.
By (\ref{eq:calderon-potential-trace-plus}), (\ref{eq:calderon-potential-trace-minus}), $\gammaCau{\pm} u_\pm = P_\lambda^\pm \Phi_\pm = \Phi_\pm$.
\end{rem}

\begin{lemma}
The multitrace operator satisfies
\begin{align}
A_\lambda^2 = \frac14.
\end{align}
The Calderon projectors satisfy
\begin{align}
P_\lambda^+ P_\lambda^- = P_\lambda^- P_\lambda^+ = 0,
            \qquad (P_\lambda^\pm)^2 = P_\lambda^\pm.
\end{align}
Hence they are projections
\begin{align*}
P_{\lambda}^\pm: \mathscr{C} \to \mathcal{B}^\pm_\lambda
\end{align*}
and induce a decomposition of the Cauchy data space:
\begin{align}
\mathscr{C} = \mathcal{B}^+_\lambda \oplus \mathcal{B}^-_\lambda.
\end{align}
\end{lemma}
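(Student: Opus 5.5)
The plan is to prove $A_\lambda^2 = \frac14$ first and then read off the projection properties. The identity $P^\pm_\lambda = \frac12 \mp A_\lambda$ gives
\begin{align*}
P^+_\lambda + P^-_\lambda = 1, \qquad P^+_\lambda P^-_\lambda = P^-_\lambda P^+_\lambda = \tfrac14 - A_\lambda^2, \qquad (P^\pm_\lambda)^2 = \tfrac14 \mp A_\lambda + A_\lambda^2 = P^\pm_\lambda + \bigl(A_\lambda^2 - \tfrac14\bigr).
\end{align*}
So everything reduces to $A_\lambda^2 = \frac14$.

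To obtain this identity I would avoid multiplying the $2\times 2$ block matrix entry by entry. Instead I would use the Calderon potential operator $\CLP_\lambda$ together with the third Green identity (Lemma \ref{lem:third-green}).

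Fix $\Phi \in \mathscr{C}$ and set $u := \CLP_\lambda \Phi$. Away from $\partial\Omega$, $u$ solves $(-\Delta-\lambda^2)u=0$. The jump relations (\ref{eq:jump-rel-1})--(\ref{eq:jump-rel-4}) give $[\gamma_D u] = \phi$ and $[\gamma_N u] = \psi$, so $[\gammaCau{} u] = \Phi$. This is also the difference of (\ref{eq:calderon-potential-trace-plus}) and (\ref{eq:calderon-potential-trace-minus}): $P^+_\lambda + P^-_\lambda = 1$.

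Now let $\chi_+ u$ be the function equal to $u$ on $\Omega^+$ and $0$ on $\Omega^-$. It has jump data $[\gammaCau{}(\chi_+ u)] = \gammaCau{+} u = P^+_\lambda \Phi$. For $\Im\lambda>0$, $u$ decays exponentially, so after a cutoff argument the third Green identity applies to $\chi_+ u$ with $f=0$. This gives
\begin{align*}
\chi_+ u = \CLP_\lambda P^+_\lambda \Phi .
\end{align*}
Taking exterior Cauchy traces yields $P^+_\lambda \Phi = P^+_\lambda P^+_\lambda \Phi$, that is, $(P^+_\lambda)^2 = P^+_\lambda$. By the displayed relation above this is equivalent to $A_\lambda^2 = \frac14$. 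The same argument with $\chi_- u$ yields the relation for $P^-_\lambda$, and $P^+_\lambda P^-_\lambda = P^+_\lambda - (P^+_\lambda)^2 = 0$ follows.

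For the decomposition of $\mathscr{C}$, I need $\operatorname{ran} P^\pm_\lambda = \mathcal{B}^\pm_\lambda$. One inclusion follows from $P^+_\lambda\Phi = \gammaCau{+}\CLP_\lambda\Phi$ with $\CLP_\lambda\Phi|_{\Omega^+} \in \mathcal{L}^+_\lambda$, which holds by exponential decay for $\Im\lambda>0$. The other follows from the third Green identity applied to $\chi_+ v$ for $v \in \mathcal{L}^+_\lambda$: it gives $v = \CLP_\lambda \gammaCau{+} v$ on $\Omega^+$, hence $\gammaCau{+}v = P^+_\lambda \gammaCau{+} v$. The minus sign case is the same, using $\gammaCau{-}\CLP_\lambda = -P^-_\lambda$ and the data $[\gammaCau{}(\chi_- u)] = -\gammaCau{-}u$. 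Complementary projections summing to the identity then give $\mathscr{C} = \mathcal{B}^+_\lambda \oplus \mathcal{B}^-_\lambda$.

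The main obstacle is justifying the third Green identity for the non-compactly-supported $\chi_+ u$ on Lipschitz domains. I would first establish everything for $\Im\lambda>0$ using a cutoff $\rho u$ with $\rho=1$ near $\overline{\Omega}$. Since $u$ and $G_\lambda$ decay exponentially, the error term $G_\lambda[\Delta,\rho]u$ tends to zero as $\rho \to 1$. The statement then extends to the other admissible $\lambda$ by analytic (respectively Hahn-holomorphic) continuation of the identity $A_\lambda^2 = \frac14$.
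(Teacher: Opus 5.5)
Your proof is correct and follows essentially the paper's route. The paper also gets $\operatorname{ran}P^\pm_\lambda\subset\mathcal{B}^\pm_\lambda$ from $\gammaCau{\pm}\CLP_\lambda=\pm P^\pm_\lambda$, and applies the third Green identity to the zero-extension of an exterior solution to show $P^+_\lambda=1$ on $\mathcal{B}^+_\lambda$, from which idempotence and $A_\lambda^2=\frac14$ follow; the only differences are that the paper proves $\mathcal{B}^+_\lambda\cap\mathcal{B}^-_\lambda=\{0\}$ by a separate jump-free Green identity argument where you obtain it algebraically from $P^+_\lambda+P^-_\lambda=1$, and that your cutoff justification of the Green identity for non-compactly supported exterior solutions covers a point the paper leaves implicit.
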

\begin{proof}
We have
\begin{align*}
\operatorname{ran} P_\lambda^{+} \subset \mathcal{B}_\lambda^{+}, \quad \operatorname{ran} P_\lambda^{-} \subset \mathcal{B}_\lambda^{-}
\end{align*}
by Equation (\ref{eq:calderon-potential-trace-plus}), (\ref{eq:calderon-potential-trace-minus}).

Conversely, let $\Phi  = (\phi, \psi) \in \mathcal{B}_\lambda^+$.
Then there exists $u_+ \in \mathcal{L}^+_\lambda$ with $\gammaCau{+} u_+ = \Phi$.
Define
\begin{align*}
u = \begin{cases}
u_+ & \text{in } \Omega^+\\
0 & \text{in } \Omega^-
\end{cases}
\end{align*}
Then the jump of $u$ is just $\gammaCau{+}u - \gammaCau{-} u = \Phi$.
The 3rd Green identity therefore gives
$\Phi = \gammaCau{+} K_\lambda \Phi = P_\lambda^+ \Phi$.

Similarly one shows $\operatorname{ran} P_\lambda^- \supset \mathcal{B}_\lambda^-$.

Suppose that \((f,g) \in \mathcal{B}^-_\lambda \cap \mathcal{B}^+_\lambda\).
Then there exist \(\psi_{\pm} \in \ker(\Delta_{\Omega^{\pm}} + \lambda^2)\) with
\((f,g) = (\gamma_0^{\pm}\psi_{\pm}, \gamma_1^{\pm}\psi_{\pm})\).
Then \(\psi = \psi_- + \psi_+ \in L^2(\mathbb{R}^d)\) has no Dirichlet or Neumann jumps at \(\partial\Omega\),
\([\gamma_0\psi] = 0 = [\gamma_1\psi]\).
By the third Green identity
(Lemma \ref{lem:third-green}),
\(\psi = 0\) and therefore \(f = 0 = g\).
\end{proof}


\section{Transmission problem.}

In the transmission problem the wavenumber is allowed to jump at the boundary between exterior and interior domains.
The boundary data are specified not in terms of the Dirichlet/Neumann traces themselves, but in terms of the jumps of the traces at the boundary.
\begin{definition}
Let $\kappa_+, \kappa_-, \nu_0, \nu_1 \in \mathbb{C}^{\times}$.
Let $\Omega \subset \mathbb{R}^d$ be a bounded Lipschitz domain.
We will say that $u = u_+ + u_- \in H^1(\Omega_-) \oplus H^1_{\mathrm{loc}}(\Omega^+)$ solves the transmission problem with transmission data $(\phi, \psi) \in H^{1/2}(\partial \Omega) \oplus H^{-1/2}(\partial \Omega)$ if
\begin{align}
    \label{eq:transmission-problem-def}
    \begin{cases}
&(-\Delta - (\lambda/\kappa_+)^2) u_+ = f_+/\kappa_+^2 \in H^{-1}(\Omega^+)\\
&(-\Delta - (\lambda/\kappa_-)^2) u_- = f_-/\kappa_-^2 \in H^{-1}(\Omega^-)\\
&\gamma_D^+ u - \nu_0\gamma_D^- u = \phi_0 \in H^{1/2}(\partial\Omega)\\
& \gamma_N^+u - \nu_1\gamma_N^- u = \psi_0 \in H^{-1/2}({\partial\Omega}).
\end{cases}
\end{align}
\end{definition}
\begin{rem}
By rescaling $u$ and $f$, we can always obtain an equivalent problem in which one of $\nu_0, \nu_1$ is equal to 1, and also such that one of $\kappa_+, \kappa_-$ is equal to 1.
\end{rem}
The homogeneous problem with $f_\pm = 0$ is of great importance.
We will reformulate this problem as a boundary integral equation using Calderon projectors.

Since the Calderon operators are projections onto the Cauchy data of solutions, solving (\ref{eq:transmission-problem-def}) with $f_\pm = 0$ is equivalent to solving
\begin{align}
\label{eq:transmission-bd-system-all}
P_{\lambda/\kappa_-}^+ \Phi_- = 0,\qquad P_{\lambda/\kappa_+}^- \Phi_+ = 0,\qquad \Phi_+ - \M \Phi_- = \Phi_0,
\end{align}
where $\Phi_0 = \begin{pmatrix} \phi_0 \\ \psi_0\end{pmatrix}$
and the matrix
\(
\M = \begin{pmatrix}
   \nu_0 & 0 \\ 0 & \nu_1
\end{pmatrix}
\)
encodes the transmission conditions.

These are six equations for four unknowns.
In \cite{costabelDirectBoundaryIntegral1985} it is shown that one can instead solve the following system of two equations for two unknowns:
\begin{align}
    \label{eq:transmission-bie}
H_\lambda \Phi = \Phi_0,
\end{align}
where the operator $H_\lambda$ is defined as
\begin{align}
\label{eq:defn-Hlambda-operator}
H_\lambda :=  P_{\lambda/\kappa_+}^+ \M - \M P_{\lambda/\kappa_-}^-= \M P_{\lambda/\kappa_-}^+ - P_{\lambda/\kappa_+}^- \M = - A_{\lambda/\kappa_+} \M - \M A_{\lambda/\kappa_-}.
\end{align}
\begin{lemma}
    We have the intertwining relation
\begin{align}
P_{\lambda / \kappa_+}^{\pm} H_\lambda = H_\lambda P_{\lambda / \kappa_-}^{\pm} \label{eq:calderon-intertwiner}
\end{align}
and therefore $H_\lambda$ maps $\mathcal{B}_{\lambda/\kappa_-}^\pm \to \mathcal{B}_{\lambda/\kappa_+}^\pm$.
\end{lemma}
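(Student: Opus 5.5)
The intertwining relation should follow by a short algebraic computation from the two expressions for $H_\lambda$ in (\ref{eq:defn-Hlambda-operator}), using only that the Calderon projectors are complementary idempotents. Write $\mu_\pm := \lambda/\kappa_\pm$. By the lemma on the multitrace operator, $P^+_{\mu}+P^-_{\mu}=1$, $(P^\pm_\mu)^2 = P^\pm_\mu$ and $P^+_\mu P^-_\mu = P^-_\mu P^+_\mu = 0$ for every $\mu$ with $\operatorname{Im}\mu>0$. I would treat the $+$ case first and take the $-$ case by symmetry.

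For the $+$ sign, use the first form $H_\lambda = P^+_{\mu_+}\M - \M P^-_{\mu_-}$. Multiplying on the left by $P^+_{\mu_+}$ gives
\begin{align*}
P^+_{\mu_+} H_\lambda = P^+_{\mu_+}\M - P^+_{\mu_+}\M P^-_{\mu_-}.
\end{align*}
Multiplying on the right by $P^+_{\mu_-}$ gives
\begin{align*}
H_\lambda P^+_{\mu_-} = P^+_{\mu_+}\M P^+_{\mu_-} - \M P^-_{\mu_-}P^+_{\mu_-} = P^+_{\mu_+}\M P^+_{\mu_-}.
\end{align*}
Substituting $P^+_{\mu_-} = 1 - P^-_{\mu_-}$ into the last expression gives $P^+_{\mu_+}\M - P^+_{\mu_+}\M P^-_{\mu_-}$, which is exactly $P^+_{\mu_+}H_\lambda$.

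The $-$ case works the same way. Either redo the computation with the second form $H_\lambda = \M P^+_{\mu_-} - P^-_{\mu_+}\M$, or subtract the $+$ identity from $H_\lambda = (P^+_{\mu_+}+P^-_{\mu_+})H_\lambda = H_\lambda(P^+_{\mu_-}+P^-_{\mu_-})$.

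Before this, I would check that the two expressions in (\ref{eq:defn-Hlambda-operator}) agree. With $P^\pm_\mu = \tfrac12 \mp A_\mu$, both reduce to $-A_{\mu_+}\M - \M A_{\mu_-}$, since the $\pm\tfrac12\M$ terms cancel. This is the only bookkeeping that needs care.

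The mapping statement then follows from the previous lemma, which identifies $\mathcal{B}^\pm_\mu = \operatorname{ran} P^\pm_\mu$. If $\Phi \in \mathcal{B}^\pm_{\mu_-}$, then $\Phi = P^\pm_{\mu_-}\Phi$, so
\begin{align*}
H_\lambda\Phi = H_\lambda P^\pm_{\mu_-}\Phi = P^\pm_{\mu_+}H_\lambda\Phi \in \operatorname{ran}P^\pm_{\mu_+} = \mathcal{B}^\pm_{\mu_+}.
\end{align*}
There is no real obstacle here. The only points that need attention are the sign conventions and the fact that the projectors act at the two different spectral parameters $\mu_+$ and $\mu_-$: the identities $P^+_\mu P^-_\mu = 0$ and $P^+_\mu + P^-_\mu = 1$ must only ever be applied to pairs of projectors with the same parameter, and the computation above does this.
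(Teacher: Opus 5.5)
Your proof is correct and follows the paper's approach: the paper just says the relation follows immediately from \eqref{eq:defn-Hlambda-operator}, and your computation spells this out, using idempotence and complementarity of the Calderon projectors at equal spectral parameters together with $\operatorname{ran}P^\pm_\mu=\mathcal{B}^\pm_\mu$. An equivalent one-line check uses the form $H_\lambda=-A_{\lambda/\kappa_+}\M-\M A_{\lambda/\kappa_-}$ and $A_\mu^2=\tfrac14$ to get $A_{\lambda/\kappa_+}H_\lambda=H_\lambda A_{\lambda/\kappa_-}$, which gives both signs at once.
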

\begin{proof}
    Follows immediately from \eqref{eq:defn-Hlambda-operator}.
\end{proof}

To make the relationship between the systems (\ref{eq:transmission-bd-system-all}) and (\ref{eq:transmission-bie}) more precise, we need to introduce the following assumptions.
\begin{definition}
We will say that \emph{Assumption A} is satisfied if the system
\begin{align}
P_{\lambda/\kappa_-}^+ \Phi_- = 0,\qquad P_{\lambda/\kappa_+}^- \Phi_+ = 0,\qquad \M \Phi_- = \Phi_+
\end{align}
only has the trivial solution
\begin{align*}
\Phi_- = \Phi_+ = 0.
\end{align*}

We will say that \emph{Assumption $\widetilde{A}$} is satisfied if the ``adjoint system''
\begin{align}
P_{\lambda/\kappa_+}^+ \widetilde{\Phi}_- = 0,\qquad P_{\lambda/\kappa_-}^- \widetilde{\Phi}_+ = 0,\qquad \widetilde{\Phi}_- = \M \widetilde{\Phi}_+
\end{align}
only has the trivial solution
\begin{align*}
\widetilde{\Phi}_- = \widetilde{\Phi}_+ = 0.
\end{align*}
\end{definition}

\begin{theorem}
\label{thm:transmission-problem-equivalence}
    \begin{itemize}
        \item[(i)] The system (\ref{eq:transmission-bd-system-all}) is equivalent to (\ref{eq:transmission-bie}) if and only if Assumption A holds.
Equivalence means that a pair $\Phi_-, \Phi_+ \in \mathscr{C}$ solves  (\ref{eq:transmission-bd-system-all}) if and only if $\Phi := \Phi_-$ solves (\ref{eq:transmission-bie}).
\item[(ii)] The operator $H_\lambda$ is a Fredholm operator of index zero which is bijective if and only if Assumption $A$ and $\widetilde{A}$ hold.
\item[(iii)] Assumptions $A$ and $\widetilde{A}$ are satisfied if
\begin{itemize}
\item[a)] $\kappa_+/\kappa_-, \nu_0/\nu_1 > 0$ and
\item[b)] $\operatorname{Im}(\lambda) > 0$ or $\lambda = 0$.
\end{itemize} 
\end{itemize}
\end{theorem}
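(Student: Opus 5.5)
For (i), I will work directly with the algebra of the Calderon projectors. Write $P_\pm := P^\pm_{\lambda/\kappa_+}$, $Q_\pm := P^\pm_{\lambda/\kappa_-}$, so $H_\lambda = P_+\M - \M Q_-$.

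Suppose first that $(\Phi_-,\Phi_+)$ solves (\ref{eq:transmission-bd-system-all}). Then $Q_-\Phi_- = \Phi_-$ and $P_+\Phi_+ = \Phi_+$, so
\[
H_\lambda\Phi_- = P_+\M\Phi_- - \M\Phi_- = P_+(\Phi_+ - \Phi_0) - (\Phi_+ - \Phi_0) = P_-\Phi_0 - P_-\Phi_+ .
\]
The right-hand side is not $\Phi_0$ in general, so the candidate unknown has to be chosen with care. I will follow Costabel--Stephan. Given $\Phi$ with $H_\lambda\Phi = \Phi_0$, set $\Phi_- := Q_-\Phi$ and $\Phi_+ := P_+\M\Phi$, so that
\[
\Phi_+ - \M\Phi_- = P_+\M\Phi - \M Q_-\Phi = \Phi_0 .
\]
Both projector conditions then hold automatically, so every solution of (\ref{eq:transmission-bie}) produces a solution of (\ref{eq:transmission-bd-system-all}).

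Conversely, suppose $(\Phi_-,\Phi_+)$ solves (\ref{eq:transmission-bd-system-all}). Using the intertwining relation (\ref{eq:calderon-intertwiner}), I will check that $\Phi := \Phi_- - \M^{-1}P_-\M\Phi_-$ (or a similar correction) solves (\ref{eq:transmission-bie}). The correspondence is unique exactly when the homogeneous system has only the trivial solution, i.e.\ when Assumption A holds. So (i) reduces to identifying the kernel of the map $\Phi \mapsto (Q_-\Phi, P_+\M\Phi)$ together with the homogeneous system. Since the statement identifies $\Phi$ with $\Phi_-$, I will adopt the precise normalisation that makes this bookkeeping consistent.

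For (ii), Fredholmness of index zero comes from a homotopy or compact-perturbation argument. By (\ref{eq:defn-Hlambda-operator}),
\[
H_\lambda = -A_{\lambda/\kappa_+}\M - \M A_{\lambda/\kappa_-}.
\]
The difference $A_{\lambda/\kappa_+} - A_{\lambda/\kappa_-}$ is compact on $\mathscr{C}$, because the difference of Helmholtz kernels at two wavenumbers is one order smoother. Hence $H_\lambda$ is a compact perturbation of $-(A\M + \M A)$ with $A = A_{\lambda/\kappa_+}$. I will then show that this operator satisfies a G\aa rding inequality after composing with the rotation $J = \begin{pmatrix}0&1\\-1&0\end{pmatrix}$, using the coercivity of $S$ on $H^{-1/2}$ and of $-N$ on $H^{1/2}$ modulo compacts. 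This gives Fredholm of index $0$. Alternatively, I will deform $\M$ to the identity inside the admissible parameter set: for $\M = 1$ and $\kappa_+ = \kappa_-$ we get $H = -2A$, which is invertible since $A^2 = \frac14$. Index is constant under such deformations.

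Bijectivity in (ii) then splits as follows. Injectivity is equivalent to Assumption A via (i). Surjectivity is equivalent to injectivity of the transpose $H_\lambda^t$. Computing $A^t$ with respect to the symplectic pairing, $H_\lambda^t$ has the form of the transmission operator with the roles of $\kappa_\pm$ and $\M$ interchanged, and its kernel is exactly the adjoint system, i.e.\ Assumption $\widetilde{A}$. Because the index is zero, injectivity of $H_\lambda$ together with injectivity of $H_\lambda^t$ is equivalent to bijectivity.

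For (iii) I will use an energy argument. A solution of the homogeneous system in Assumption A gives $u_\pm \in \mathcal{L}^\pm_{\lambda/\kappa_\pm}$ with $\gamma_D^+u_+ = \nu_0\gamma_D^-u_-$ and $\gamma_N^+u_+ = \nu_1\gamma_N^-u_-$. Apply the first Green identity with $\bar u_\pm$ on each side and combine using $\nu_0/\nu_1 > 0$, so that the boundary terms cancel after weighting the interior by $\bar\nu_0\nu_1^{-1}$-type factors. This yields
\[
\int_{\Omega^+}\bigl(|\nabla u_+|^2 - (\lambda/\kappa_+)^2|u_+|^2\bigr) + c\int_{\Omega^-}\bigl(|\nabla u_-|^2 - (\lambda/\kappa_-)^2|u_-|^2\bigr) = 0 ,
\]
with $c > 0$. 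The exterior integral converges because $u_+ \in H^1$ when $\operatorname{Im}\lambda > 0$. Multiply by a suitable phase, and use $\kappa_+/\kappa_- > 0$ so that $\lambda^2/\kappa_\pm^2$ have the same argument. Taking the imaginary part (or the real part after rotating by $\bar\lambda$) forces $u_\pm = 0$ when $\operatorname{Im}\lambda > 0$.

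The main obstacle is $\lambda = 0$, where $u_+$ need not be in $L^2$. There I interpret $\mathcal{L}^+_0$ appropriately (bounded or decaying harmonic functions, with the $d = 2$ Hahn-meromorphic conventions), conclude $\nabla u_\pm = 0$, hence $u_\pm$ constant, and kill the constants using decay or the transmission condition. The same argument applies to Assumption $\widetilde{A}$.
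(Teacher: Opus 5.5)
Your computation $H_\lambda\Phi_-=P^-_{\lambda/\kappa_+}\Phi_0$ and your forward map $\Phi\mapsto(P^-_{\lambda/\kappa_-}\Phi,\,P^+_{\lambda/\kappa_+}\M\Phi)$ are correct; the former already shows that (i) cannot hold verbatim for arbitrary $\Phi_0$. The rest of (i) and the kernel bookkeeping in (ii), however, miss the key structural fact and attach the wrong assumption at each step.

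By \eqref{eq:calderon-intertwiner}, $H_\lambda$ is block diagonal from $\mathcal{B}^-_{\lambda/\kappa_-}\oplus\mathcal{B}^+_{\lambda/\kappa_-}$ to $\mathcal{B}^-_{\lambda/\kappa_+}\oplus\mathcal{B}^+_{\lambda/\kappa_+}$. It acts as $-P^-_{\lambda/\kappa_+}\M$ on the first summand and as $P^+_{\lambda/\kappa_+}\M$ on the second. The map $\Phi\mapsto(\Phi,\M\Phi)$ identifies the kernel of the first block with the solutions of the system in Assumption A. The map $\Psi\mapsto(\M\Psi,\Psi)$ identifies the kernel of the second block with the solutions of the system in Assumption $\widetilde A$. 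Hence $\ker H_\lambda=0$ already requires both assumptions, and three of your claims fail:
\begin{itemize}
\item ``Injective $\iff$ A'' is false: a nontrivial $\widetilde A$-solution gives $0\neq\widetilde\Phi_+\in\ker H_\lambda$.
\item $\ker H_\lambda^t$ likewise splits into copies of both systems, rather than being ``the adjoint system''. No transpose is needed anyway: once the index is zero, bijectivity $\iff\ker H_\lambda=0\iff$ A and $\widetilde A$.
\item Your converse in (i) cannot be an algebraic correction. For the system solution $(\Phi_-,\Phi_+)=(0,\Phi_0)$ with $0\neq\Phi_0\in\mathcal{B}^+_{\lambda/\kappa_+}$, your candidate is $\Phi=0$.
\end{itemize}
Recovering $\Phi$ from $(\Phi_-,\Phi_+)$ means solving $P^+_{\lambda/\kappa_+}\M\Psi=P^+_{\lambda/\kappa_+}\Phi_0$ for $\Psi\in\mathcal{B}^+_{\lambda/\kappa_-}$, i.e.\ inverting the second block. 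The non-uniqueness of your correspondence is exactly the kernel of that block, so it is governed by $\widetilde A$, not by A. This is also how the paper actually uses (i) in Proposition \ref{prop:cald-op-inv-bound}: there $\widetilde A$ forces a solution of $H_\lambda\Phi=P^-_{\lambda/\kappa_+}\Phi_0$ to lie in $\mathcal{B}^-_{\lambda/\kappa_-}$, so that $\Phi=\Phi_-$.

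Your energy argument in (iii) for $\operatorname{Im}\lambda>0$ is fine, but two further steps need repair.

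First, the Fredholm property. The reduction modulo compacts to $-(A_\rmi\M+\M A_\rmi)$ is correct by Theorem \ref{thm:boundary-ops-fredholm}(2). On a Lipschitz boundary, however, $D_\rmi$ is not compact, so coercivity of $S_\rmi$ and $-N_\rmi$ does not by itself give a G\aa rding inequality. The diagonal blocks of $A_\rmi\M+\M A_\rmi$, namely $-2\nu_0D_\rmi$ and $2\nu_1D_\rmi'$, contribute $2\nu_1\overline X-2\nu_0X$ with $X=\langle\overline\psi,D_\rmi\phi\rangle$. Its real part $2(\nu_1-\nu_0)\operatorname{Re}X$ comes from a bounded, non-compact operator and need not be small. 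The fix is to conjugate by $\M^{1/2}$ first, which equalises the $D$-coefficients; the off-diagonal blocks become $cS_\rmi$ and $-cN_\rmi$ with $c=\sqrt{\nu_0/\nu_1}+\sqrt{\nu_1/\nu_0}$. Since the kernels at $\lambda=\rmi$ are real, the cross terms are then purely imaginary. Pairing the first component with $\overline\psi$ and the second with $\overline\phi$, so that the $S_\rmi$- and $-N_\rmi$-terms add, then gives the inequality. Your homotopy to $-2A$ cannot replace this, since constancy of the index presupposes Fredholmness along the whole path. (The paper itself only cites Costabel--Stephan, together with the a priori bound of Proposition \ref{prop:cald-op-inv-bound} for $\operatorname{Im}\lambda>0$ and compactness of $H_0-H_\rmi$ at $\lambda=0$.)

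Second, (iii) at $\lambda=0$. The exterior class must be $\operatorname{ran}P^+_0$, not bounded harmonic functions. Exterior constants would give the nontrivial A-solution $u_+\equiv b$, $u_-\equiv b/\nu_0$, so the transmission condition cannot remove them. Decay removes them for $d\geq3$. For $d=2$ you must first use $\int_{\partial\Omega_j}\gamma_N^-u_-=\int_{\Omega_j}\Delta u_-=0$ and the Neumann transmission condition to remove the $\log|x|$ term of $\widetilde S_0$; otherwise the exterior energy diverges.
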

\begin{proof}
These facts are all from \cite{costabelDirectBoundaryIntegral1985}.
Item (i) is Lemma 4.2 and the subsequent discussion.
Since this is purely algebraic, it goes through without change on Lipschitz domains.
Item (ii) is Corollary 4.5.
Item (iii) is Proposition 4.7. This is an integration by parts computation, and therefore also goes through without change on Lipschitz domains.

The Fredholm property of item (ii) relies on an a priori estimate which is satisfied for $\operatorname{Im}(\lambda) > 0$ by virtue of Proposition \ref{prop:cald-op-inv-bound}.
Moreover, by item (2) of Theorem \ref{thm:boundary-ops-fredholm}, $H_0 - H_i$ is compact, and therefore (ii) remains true at $\lambda = 0$.
The computation of the kernel is again purely algebraic and therefore also goes through without change for Lipschitz domains.
\end{proof}

\begin{rem}
The interested reader will find a catalog of more general conditions under which Assumptions A and $\widetilde{A}$ hold in Remark 4.8 of \cite{costabelDirectBoundaryIntegral1985}.

We will however be satisfied with the conditions given in Theorem \ref{thm:transmission-problem-equivalence} (iii), as these are sufficient for the applications we have in mind.
From now on, we will always assume $\kappa_\pm, \nu_0, \nu_1 > 0$.
\end{rem}

\subsection{Transmission resolvent.}
\label{sec:transmission-resolvent}
In the specific case $\nu_0 = 1$, $\nu_1 = (\kappa_-/\kappa_+)^2$, the transmission problem can be interpreted as defining the resolvent of an unbounded self-adjoint operator representing the \emph{kinetic energy} of the problem.

Define the quadratic form
\begin{align*}
    T(u,v) = -\kappa_-^2 \int_{\Omega^-} \overline{\nabla u} \cdot \nabla v - \kappa_+^2 \int_{\Omega^+} \overline{\nabla u} \cdot \nabla v
\end{align*}
with form domain $D(T) = H^1(\mathbb{R}^d)$.

The quadratic form $-T$ is continuous and coercive for the Hilbert space triple
$H^1(\mathbb{R}^d) \hookrightarrow L^2(\mathbb{R}^d) \hookrightarrow H^{-1}(\mathbb{R}^d)$.
Hence $T$ induces a self-adjoint operator on $L^2(\mathbb{R}^d)$ which we will denote by $\tm$.

\begin{lemma}
    \begin{enumerate}
\item The operator $\tm$ has the domain
\begin{align*}
D(\tm) = \{u \in H^1(\mathbb{R}^d) \mid \Delta u_\pm \in L^2(\Omega^\pm),\  \kappa_-^2\gammaNeu{-} u = \kappa_+^2 \gammaNeu{+}u\}.
\end{align*}
\item Its action on $u \in D(\tm)$ is given by
\begin{align*}
\tm u = \kappa_-^2 \chi_{\Omega^-} \Delta u_- + \kappa_+^2 \chi_{\Omega^+} \Delta u_+.
\end{align*}
\end{enumerate}
\end{lemma}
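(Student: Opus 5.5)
The argument is the standard identification of the operator associated with a closed, coercive form. By construction, $\tm$ is defined through the form: $u \in D(\tm)$ with $\tm u = f \in L^2(\mathbb{R}^d)$ if and only if $u \in H^1(\mathbb{R}^d)$ and
\begin{align*}
T(v,u) = \int_{\mathbb{R}^d} \overline{v}\, f \qquad \text{for all } v \in H^1(\mathbb{R}^d).
\end{align*}
So both items reduce to characterising those $u \in H^1(\mathbb{R}^d)$ for which the antilinear functional $v \mapsto T(v,u)$ is $L^2$-continuous. I will compute this functional, splitting the integrals over $\Omega^-$ and $\Omega^+$ and applying the first Green identity on each side with the sign conventions fixed above.

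\emph{Step 1 (necessity).} First test with $v \in C_0^\infty(\Omega^\pm)$. This gives $\kappa_\pm^2 \Delta u_\pm = f|_{\Omega^\pm}$ in the distributional sense, so $\Delta u_\pm \in L^2(\Omega^\pm)$, hence $u_\pm \in H^1_\Delta$ locally. This also proves the formula in item (2).

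Next take an arbitrary $v \in H^1(\mathbb{R}^d)$; by density it suffices to take $v$ compactly supported. Since $u \in H^1$, $\Delta u_\pm \in L^2$ and $v \in H^1$, the Neumann traces exist and the first Green identity applies. For the interior,
\begin{align*}
-\int_{\Omega^-}\nabla \overline{v}\cdot\nabla u = \int_{\Omega^-}\overline{v}\,\Delta u - \int_{\partial\Omega}\gammaDir{-}\overline{v}\,\gammaNeu{-}u.
\end{align*}
For the exterior, the convention with $\nu$ pointing into $\Omega^+$ gives
\begin{align*}
-\int_{\Omega^+}\nabla \overline{v}\cdot\nabla u = \int_{\Omega^+}\overline{v}\,\Delta u + \int_{\partial\Omega}\gammaDir{+}\overline{v}\,\gammaNeu{+}u.
\end{align*}
Because $v \in H^1(\mathbb{R}^d)$, we have $\gammaDir{+}v = \gammaDir{-}v$. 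Summing with the weights $\kappa_\mp^2$ therefore yields
\begin{align*}
T(v,u) = \int \overline{v}\,\bigl(\kappa_-^2\chi_{\Omega^-}\Delta u_- + \kappa_+^2\chi_{\Omega^+}\Delta u_+\bigr) + \int_{\partial\Omega}\gammaDir{}\overline{v}\,\bigl(\kappa_+^2\gammaNeu{+}u - \kappa_-^2\gammaNeu{-}u\bigr).
\end{align*}
The volume term already equals $\int \overline{v} f$, so the boundary term must vanish for every $v$. The Dirichlet trace $H^1(\mathbb{R}^d) \to H^{1/2}(\partial\Omega)$ is surjective, via the right inverses $\eta_\pm$ cut off and glued. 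Hence
\begin{align*}
\kappa_+^2\gammaNeu{+}u - \kappa_-^2\gammaNeu{-}u = 0 \quad \text{in } H^{-1/2}(\partial\Omega).
\end{align*}

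\emph{Step 2 (sufficiency).} Conversely, let $u$ lie in the set described in item (1). The same identity shows that the boundary term vanishes, so $T(v,u) = \int \overline{v}\, g$ with $g = \kappa_-^2\chi_{\Omega^-}\Delta u_- + \kappa_+^2\chi_{\Omega^+}\Delta u_+ \in L^2(\mathbb{R}^d)$. Therefore $u \in D(\tm)$ and $\tm u = g$.

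The only delicate point is the validity of the first Green identity for $v \in H^1$ and $u \in H^1_\Delta$ on the unbounded domain $\Omega^+$. I will handle this by inserting a cutoff $\chi \in C_0^\infty$ with $\chi = 1$ near $\overline{\Omega}$ and using $\chi v$, together with density of compactly supported $H^1$ functions in $H^1(\mathbb{R}^d)$. The second ingredient is surjectivity of the trace, which lets us conclude that the jump of the weighted Neumann data vanishes.
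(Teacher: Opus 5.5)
Your proposal is correct and follows essentially the same route as the paper. The paper also identifies the graph of the form-induced operator, tests with $v$ supported away from $\partial\Omega$ to obtain item (2), and turns the remaining identity into the weak form of $\kappa_-^2\gammaNeu{-}u=\kappa_+^2\gammaNeu{+}u$, reading the converse off the same identity. Your explicit handling of the unbounded exterior domain (cutoff and density) and of the surjectivity of the Dirichlet trace fills in details that the paper leaves implicit.
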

\begin{proof}
Let $(A, D(A))$ be the operator induced by the quadratic form $T$.
Its graph is
\begin{align*}
\operatorname{Gr}(A) = \{ (u, f) \in D(T) \times L^2(\mathbb{R}^d) \mid \int_{\mathbb{R}^d} \overline{f} v = T(u, v) \text{ for all } v \in D(T)\}.
\end{align*}
We will show that $\operatorname{Gr}(A) = \operatorname{Gr}(\tm)$
and therefore $A = \tm$, where $\tm$ is as in the statement of the Lemma.
Let $(u, f) \in \operatorname{Gr}(A)$.
Then for all $v \in H^1(\mathbb{R}^d)$ we have
\begin{align}
\label{eq:transmission-graph-eq-1}
\int_{\mathbb{R}^d} \overline{f} v = T(u,v) = -\kappa_-^2 \int_{\Omega^-} \overline{\nabla u} \cdot \nabla v dx - \kappa_+^2 \int_{\Omega^+} \overline{\nabla u} \cdot \nabla v dx.
\end{align}
Now if $v \in C_c^\infty(\mathbb{R}^d \setminus \partial \Omega)$, then we can integrate by parts without a boundary term:
\begin{align*}
\int_{\mathbb{R}^d} \overline{f} v = \kappa_-^2 \int_{\Omega^-} \overline{\Delta u} v + \kappa_+^2 \int_{\Omega^+} \overline{\Delta u} v.
\end{align*}
We conclude that $f = \kappa_-^2 \chi_{\Omega^-} \Delta u + \kappa_+^2 \chi_{\Omega^+} \Delta u$ almost everywhere.

In particular, $f_- = \kappa_-^2 \Delta u_-$ and $f_+ = \kappa_+^2 \Delta u_+$ (where we have denoted restriction to $\Omega^\pm$ by the subscript $\pm$).
Thus $\Delta u_\pm \in L^2(\Omega^\pm)$.

We insert $f$ back into (\ref{eq:transmission-graph-eq-1}) and obtain
\begin{align*}
\kappa_-^2 \int_{\Omega^-} \overline{\Delta u} v + \kappa_+^2 \int_{\Omega^+} \overline{\Delta u} v = -\kappa_-^2 \int_{\Omega^-} \overline{\nabla u} \cdot \nabla v - \kappa_+^2 \int_{\Omega^+} \overline{\nabla u} \cdot \nabla v.
\end{align*}
Rearranging gives
\begin{align*}
\kappa_-^2 \left[ \int_{\Omega^-} \overline{\Delta u} v + \int_{\Omega^-} \overline{\nabla u} \cdot \nabla v\right] + \kappa_+^2 \left[ \int_{\Omega^+} \overline{\Delta u} v + \int_{\Omega^+} \overline{\nabla u} \cdot \nabla v\right] = 0.
\end{align*}
This is the weak formulation of 
\begin{align*}
    \int_{\partial \Omega} \left[\kappa_-^2 \overline{\gammaNeu{-} u}\gammaDir{-} v - \kappa_+^2 \overline{\gammaNeu{+}u}\gammaDir{+} v\right]  = 0.
\end{align*}
Since $v\in H^1(\mathbb{R}^d)$ is arbitrary, this implies
$ \kappa_-^2 \gammaNeu{-} u = \kappa_+^2 \gammaNeu{+}u$.
Hence $(u, f) \in \operatorname{Gr}(\tm)$.

Now suppose that $(u, f) \in \operatorname{Gr}(\tm)$.
Then $f = \kappa_-^2 \chi_{\Omega^-} \Delta u + \kappa_+^2 \chi_{\Omega^+} \Delta u$, hence
\begin{align*}
\int_{\mathbb{R}^d} \overline{f} v = \kappa_-^2 \int_{\Omega^-} \overline{\Delta u} v + \kappa_+^2 \int_{\Omega^+} \overline{\Delta u} v.
\end{align*}
Now the weak formulation of $\kappa_-^2 \gammaNeu{-} u = \kappa_+^2 \gammaNeu{+}u$
tells us that this is precisely $T(u,v)$.
Hence $(u, f) \in \operatorname{Gr}(A)$.
\end{proof}

As $\Delta_T$ is self-adjoint, the transmission resolvent $(-\Delta_T - \lambda^2)^{-1}$ is holomorphic for $\operatorname{Im}(\lambda) > 0$.


\section{Invertibility of boundary layer operators.}

In order for the function $\Xi$ to be well defined, we need to establish the invertibility of our boundary operators.
In the case of Neumann boundary conditions, the hypersingular operator $N_\lambda$ is not invertible at $\lambda = 0$.
However, we will be able to show that $N_\lambda N_{\mathrm{diag},\lambda}^{-1}$ is nevertheless regular at $\lambda = 0$ because the singular part of $N_{\mathrm{diag},\lambda}^{-1}$ lies in the kernel of $N_0$.

We will rely on the analytic Fredholm theorem.
Take the holomorphic family $N_\lambda$ and suppose that $d\geq 3$ is odd.
Once we have established that $N_\lambda$ is Fredholm for every $\lambda \in \mathbb{C}^+$, the Fredholm theorem tells us that $N_\lambda^{-1}$ is meromorphic with finite rank poles.
It follows that $N_\lambda^{-1}$ is either holomorphic at 0 or has a pole (say of order $m$) and finite rank at 0.
Let $\lambda^{-m} \Pi$ be the leading order singular term.
Since $\lambda^{-m} N_\lambda \Pi$ must be holomorphic, it follows that $N_\lambda \Pi$ vanishes to order $m$.
In particular, $N_0 \Pi = 0$.

In even dimension, we will argue in the same way, making use of the Hahn holomorphic Fredholm theorem, Theorem 4.1 in \cite{MR3227433}.

To establish Fredholmness at every spectral parameter $\lambda$, we need the following compactness lemma.

\begin{lemma}
\label{lem:bd-op-derivative}
For $\operatorname{Im}(\lambda) > 0$, the operators
\begin{align}
  \dot{S}_\lambda := \frac{\der}{\der \lambda} S_\lambda &= 2 \lambda \tilde S_\lambda^t \tilde S_\lambda: H^{-\frac{1}{2}}(\partial \Omega) \to H^{\frac{1}{2}}(\partial \Omega) \\
  \dot{N}_\lambda := \frac{\der}{\der \lambda} N_\lambda &= 2 \lambda \tilde D_\lambda^t \tilde D_\lambda: H^{\frac{1}{2}}(\partial \Omega) \to H^{-\frac{1}{2}}(\partial \Omega)\\
  \dot{D}_\lambda := \frac{\der}{\der \lambda} D_\lambda &= 2 \lambda \tilde S_\lambda^t \tilde D_\lambda: H^{\frac{1}{2}}(\partial \Omega) \to H^{\frac{1}{2}}(\partial \Omega)
\end{align}
 are compact.
\end{lemma}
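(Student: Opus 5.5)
Two things have to be shown: the derivative formulas and the compactness. The plan is to get the formulas from the resolvent identity $\frac{d}{d\lambda}(-\Delta_0-\lambda^2)^{-1} = 2\lambda(-\Delta_0-\lambda^2)^{-2}$. Compactness will then follow because each derivative factors through a compactly supported $L^2$ space, where Rellich's theorem applies.

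\textbf{Derivative formulas.} Write $G_\lambda=(-\Delta_0-\lambda^2)^{-1}$, so that $S_\lambda = \gammaDir{} G_\lambda (\gammaDir{})^*$. For $\operatorname{Im}\lambda>0$ the map $\lambda\mapsto G_\lambda$ is holomorphic as a bounded operator $H^{-1}(\mathbb{R}^d)\to H^1(\mathbb{R}^d)$, with derivative $2\lambda G_\lambda^2 = 2\lambda G_\lambda G_\lambda$. The trace $\gammaDir{}$ and its transpose are $\lambda$-independent, so
\[
\dot S_\lambda = 2\lambda\, \gammaDir{} G_\lambda G_\lambda (\gammaDir{})^* = 2\lambda\,(G_\lambda(\gammaDir{})^*)^t\,(G_\lambda(\gammaDir{})^*) = 2\lambda\,\tilde S_\lambda^t\tilde S_\lambda.
\]
Here I use $G_\lambda^t=G_\lambda$ and the fact that for $\operatorname{Im}\lambda>0$ the potential $\tilde S_\lambda f$ lies in $H^1(\mathbb{R}^d)\subset L^2$, so the composition is a genuine $L^2$ pairing. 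The same computation with $\gammaNeu{}$ in place of $\gammaDir{}$ gives the other two identities: $N_\lambda = \gammaNeu{} G_\lambda (\gammaNeu{})^*$, and $D_\lambda$ is the mixed term. There is a subtlety for $N_\lambda$ and $D_\lambda$, since $\gammaNeu{}\tilde D_\lambda$ involves one-sided traces and the potentials are only piecewise $H^1$. I would handle it weakly: pair $\dot N_\lambda u$ with $v$ for $u,v$ in a dense smooth class, differentiate the pairing $\langle \gammaNeu{} G_\lambda \gammaNeu{}^*u, v\rangle$ under the integral, and extend by continuity. An alternative is to differentiate the explicit kernels, $\partial_\lambda G_\lambda(x,y) = 2\lambda\int G_\lambda(x,z)G_\lambda(z,y)\,dz$.

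\textbf{Compactness.} Factor $\tilde S_\lambda$ as $\chi\tilde S_\lambda$ plus a remainder, with $\chi$ a cutoff equal to $1$ on a large ball. This is the step I expect to be the main obstacle: because $\tilde S_\lambda^t\tilde S_\lambda$ integrates over all of $\mathbb{R}^d$, the splitting must be done carefully.

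Near the boundary, $\chi\tilde S_\lambda: H^{-1/2}(\partial\Omega)\to H^1(\Omega^\pm\cap B)$ is bounded. By Rellich, $H^1(B\cap\Omega^\pm)\hookrightarrow L^2(B)$ is compact, so $\chi\tilde S_\lambda$ is compact into $L^2$. Its transpose $\tilde S_\lambda^t\chi$ is bounded $L^2\to H^{1/2}(\partial\Omega)$. Hence the near part of the product is compact.

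Away from the boundary, on $\operatorname{supp}(1-\chi)$ the kernel $G_\lambda(x,y)$ with $y\in\partial\Omega$ is smooth and decays exponentially in $|x|$ since $\operatorname{Im}\lambda>0$, uniformly with all $y$-derivatives. So $(1-\chi)\tilde S_\lambda$ has a smooth, exponentially decaying kernel and is Hilbert--Schmidt from $H^{-1/2}(\partial\Omega)$ to $L^2$. This part is therefore compact as well (even trace class after composition).

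The same argument applies verbatim to $\tilde D_\lambda$, using $\tilde D_\lambda: H^{1/2}(\partial\Omega)\to H^1_\loc(\Omega^\pm)$. Combining the two pieces gives compactness of $\dot S_\lambda$, $\dot N_\lambda$ and $\dot D_\lambda$.
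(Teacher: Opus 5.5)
Your proposal is correct and follows essentially the same route as the paper. The resolvent identity $\frac{d}{d\lambda}(-\Delta-\lambda^2)^{-1}=2\lambda(-\Delta-\lambda^2)^{-2}$ gives the derivative formulas, and compactness comes from a cutoff splitting into a near part handled by Rellich and a far part whose smooth, exponentially decaying kernel is Hilbert--Schmidt. Your choice of $L^2$ as the intermediate space, instead of the paper's $H^1(W)\hookrightarrow H^{-1}(W)$, is if anything better suited to the $\tilde D_\lambda$ cases: those potentials are only piecewise $H^1$, and $\tilde D_\lambda^t=\gamma_N G_\lambda$ is bounded into $H^{-1/2}(\partial\Omega)$ only on inputs $u$ with $\Delta u\in L^2$.
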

\begin{proof}
We will give the proof for $S_\lambda$, the proofs for
$N_\lambda$ and $D_\lambda$ are analogous.
By the jump relations, $S_\lambda$ can be defined by taking the exterior trace:
$S_\lambda = \gammaDir{+} G_\lambda (\gammaDir{+})^*$.
By the first resolvent identity,
\begin{align*}
\frac{d}{d\lambda} (-\Delta - \lambda^2)^{-1} = 2\lambda (-\Delta - \lambda^2)^{-2}
\end{align*}
and therefore
\begin{align*}
\frac{d}{d\lambda} \gammaDir{+} (-\Delta - \lambda^2)^{-1} (\gammaDir{+})^*
= 2\lambda \gammaDir{+} (-\Delta - \lambda^2)^{-1} (-\Delta - \lambda^2)^{-1}(\gammaDir{+})^* = 2\lambda \widetilde{S}_\lambda^t
\widetilde{S}_\lambda.
\end{align*}

Insert an arbitrary cutoff function $\psi \in C_c^\infty(W)$, where $W \subset \mathbb{R}^d$ is some bounded open set.
The composition
\begin{align*}
H^{-1/2}(\partial\Omega) \xrightarrow{\psi \widetilde{S}_\lambda} H^1(W)
\xhookrightarrow{i} H^{-1}(W) \xrightarrow{\widetilde{S}_\lambda^*} H^{1/2}(\partial\Omega)
\end{align*}
is compact since 
$i: H^1(W) \xhookrightarrow{} H^{-1}(W)$ is compact
and $\widetilde{S}_\lambda, \widetilde{S}_\lambda^*$ are bounded on the respective
Sobolev spaces.

Therefore it suffices to show that $\gammaDir{+} G_\lambda (1-\psi) G_\lambda (\gammaDir{+})^*$ is compact.

The kernel of $(1-\psi) G_\lambda (\gammaDir{+})^*$ is smooth and exponentially decaying.
By Lemma 12.3 in \cite{strohmaierRelativeTraceFormula2021}, it follows that $H^{-1/2}(\partial \Omega) \to H^1(\mathbb{R}^d)$ is Hilbert-Schmidt
and $\gammaDir{+} G_\lambda$ is bounded $H^1(\mathbb{R}^d) \to H^{1/2}(\partial \Omega)$, so the composition is compact.
\end{proof}

\begin{theorem}
    \label{thm:boundary-ops-fredholm}
    Let $\Omega\subset\mathbb{R}^d$ be a bounded Lipschitz domain.
    \begin{enumerate}
       \item At $\lambda = \rmi$, the single layer operator and the hypersingular operator
        are both symmetric and coercive in the following sense. There exists $c > 0$ such that
        \begin{align}
            &\langle \varphi, S_\rmi \varphi\rangle_{\partial\Omega} \geq c\norm{\varphi}_{H^{-1/2}(\partial\Omega)}^2 \label{eq:single-layer-coercive}\\
            - &\langle \varphi, N_\rmi \varphi \rangle_{\partial\Omega} \geq c \norm{\varphi}_{H^{1/2}(\partial\Omega)}^2. \label{eq:hypersingular-coercive}
        \end{align}
        \item For $\Im \lambda > 0$, the single layer, double layer and hypersingular operators are compact perturbations of the operators at $\lambda = \rmi$:
        \begin{align*}
            &S_\lambda - S_\rmi : H^{-1/2}(\partial\Omega) \to H^{1/2}(\partial\Omega)\qquad\text{is compact,}\\
            &D_\lambda - D_\rmi : H^{1/2}(\partial\Omega) \to H^{1/2}(\partial\Omega)\qquad\text{is compact,}\\
            &N_\lambda - N_\rmi : H^{1/2}(\partial\Omega) \to H^{-1/2}(\partial\Omega)\qquad\text{is compact.}
        \end{align*}
        \item The operators $S_\lambda, N_\lambda, (\frac12 \pm D_\lambda)$ are holomorphic families of Fredholm operators with index 0 for all $\lambda \in \mathbb{C}^+$.
    \end{enumerate}
 \end{theorem}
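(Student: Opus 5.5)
For item (1) I would go through the layer potentials and the first Green identity, rather than any kernel estimate. At $\lambda=\rmi$ the Green's function is real, positive and symmetric, so $S_\rmi$ and $N_\rmi$ are symmetric for the bilinear pairing and real operators. Take $\varphi\in H^{-1/2}(\partial\Omega)$ and set $u=\widetilde S_\rmi\varphi$. Then $u\in H^1(\mathbb{R}^d)$ decays exponentially and satisfies $(-\Delta+1)u=0$ in $\Omega^\pm$. By the jump relations \eqref{eq:jump-rel-1} and \eqref{eq:jump-rel-3} we have $[\gamma_D u]=0$ and $[\gamma_N u]=-\varphi$. Applying the first Green identity in $\Omega^-$ and in $\Omega^+$ (with the exterior sign convention) and adding gives
\begin{align*}
\langle \overline{\varphi}, S_\rmi\varphi\rangle_{\partial\Omega} = \int_{\mathbb{R}^d}\bigl(|\nabla u|^2+|u|^2\bigr)\,dx = \norm{u}_{H^1(\mathbb{R}^d)}^2 .
\end{align*}
It remains to bound $\norm{\varphi}_{H^{-1/2}}\lesssim\norm{u}_{H^1}$. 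Since $\varphi=-[\gamma_N u]$, testing against $\gamma_D w$ for $w=\eta_\pm g$, $g\in H^{1/2}(\partial\Omega)$, and using the weak definition of the Neumann trace gives $|\langle\varphi,g\rangle|\lesssim\norm{u}_{H^1}\norm{g}_{H^{1/2}}$. This works because $\Delta u=u$ away from $\partial\Omega$.

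The hypersingular case is analogous with $u=\widetilde D_\rmi\phi$. Here $[\gamma_D u]=\phi$ and $[\gamma_N u]=0$, and Green's identity gives $-\langle\overline\phi,N_\rmi\phi\rangle=\norm{u}_{H^1(\Omega^-)}^2+\norm{u}_{H^1(\Omega^+)}^2$. Continuity of the traces $\gamma_D^\pm$ on $H^1(\Omega^\pm)$ then bounds $\norm{\phi}_{H^{1/2}}$ by this quantity. The real-valued version of the inequality follows because the operators are real.

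For item (2) I would integrate the derivative. By Lemma \ref{lem:bd-op-derivative}, $\dot S_\mu,\dot D_\mu,\dot N_\mu$ are compact for every $\mu\in\mathbb{C}^+$ and depend continuously (indeed holomorphically) on $\mu$ in operator norm. So $S_\lambda-S_\rmi=\int_\rmi^\lambda\dot S_\mu\,d\mu$ along a segment in $\mathbb{C}^+$ is a norm-convergent Riemann integral of compact operators, hence compact. The same argument applies to $D$ and $N$.

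For item (3), coercivity plus Lax--Milgram makes $S_\rmi$ and $N_\rmi$ isomorphisms, and item (2) then gives Fredholm index $0$ for all $\lambda\in\mathbb{C}^+$. For $\tfrac12\pm D_\lambda$ I would reduce to $\lambda=\rmi$ using the Calderon identities \eqref{eq:alg-rel-1}, \eqref{eq:alg-rel-2}: $S_\rmi N_\rmi=-(\tfrac12+D_\rmi)(\tfrac12-D_\rmi)$ is an isomorphism, and the factors commute. Hence $\tfrac12+D_\rmi$ is surjective and $\tfrac12-D_\rmi$ is injective, and symmetrically in the other order; so both are bijective. Compact perturbation then gives index $0$ for all $\lambda\in\mathbb{C}^+$, and holomorphy is already known.

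I expect the main obstacle to be the lower bound $\norm{\varphi}_{H^{-1/2}}\lesssim\norm{u}_{H^1}$ via the weak Neumann trace on Lipschitz domains. This needs care with the exterior sign convention and with the unbounded exterior domain, which I would handle by using compactly supported extensions $\eta_\pm$ multiplied by a cutoff.
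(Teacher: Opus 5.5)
Your proposal is correct and follows the paper's proof. Items (2) and (3) are exactly the paper's argument: integrate the compact derivatives from Lemma \ref{lem:bd-op-derivative}, then use Lax--Milgram plus compact perturbation, with \eqref{eq:alg-rel-1} handling $\frac12\pm D_\rmi$ (your observation that the two factors commute supplies the step the paper leaves implicit). For item (1) the paper simply cites Kirsch's Theorems 5.44 and 5.47, whereas you carry out the underlying Green-identity computation with the layer potentials directly, and you do so correctly under the paper's sign conventions for $\gamma_N^\pm$.
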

 \begin{proof}
    \begin{enumerate}
    \item These properties are given in Theorems 5.44, 5.47 \cite{kirsch} for $d=3$ and the proofs go through without change for any $d\geq 2$.
    
    \item We can integrate Lemma \ref{lem:bd-op-derivative} to obtain compactness.
    The operators $S_\lambda, D_\lambda, N_\lambda$ are holomorphic on $\mathbb{C}^+$.

    Hence the integral
    \begin{align*}
    S_\lambda - S_\rmi = \int_{\gamma} \frac{\der}{\der z} S_z dz
    \end{align*}
    converges in operator norm for any contour $\gamma$ connecting $\rmi$ to $\lambda$.
    Since the integrand is compact, it follows that $S_\lambda - S_\rmi$ is compact; the same argument applies to $D_\lambda$ and $N_\lambda$.
    
    \item Item (1) implies that $S_\rmi$ and $N_\rmi$ are invertible and therefore Fredholm operators of index zero.
    Equation (\ref{eq:alg-rel-1}) then implies the same for $\frac{1}{2} \pm D_\rmi$.
    Compactness of the differences then implies the same for $S_\lambda, \frac{1}{2} \pm D_\lambda, N_\lambda$.
    \end{enumerate}
 \end{proof}

 \begin{proposition}\label{prop:compute-bd-op-kernels}
 \begin{enumerate}
 \item The kernel of the single layer operator $S_\lambda$ is the space of Neumann boundary data of interior Dirichlet eigenfunctions for eigenvalue $\lambda^2$.
 \item If $\lambda^2$ is not an interior Dirichlet eigenvalue, $\ker\left(\frac{1}{2} - D_\lambda\right) = \{0\}$.
 \item If $\lambda^2$ is not an interior Dirichlet eigenvalue, the kernel of
  $\left(\frac{1}{2} + D_\lambda\right)$ is the space of Dirichlet boundary data of interior Neumann eigenfunctions.
 \end{enumerate}
 \end{proposition}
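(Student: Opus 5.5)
The plan is to exploit the layer potentials together with the jump relations \eqref{eq:jump-rel-1}--\eqref{eq:jump-rel-4}, the third Green identity (Lemma \ref{lem:third-green}), and uniqueness for the exterior problem. Throughout, $\lambda\in\overline{\mathbb{C}^+}$, and we use that $\lambda^2$ is never an eigenvalue of the exterior Dirichlet or Neumann problem. For $\Im\lambda>0$ this holds because $\lambda^2\notin[0,\infty)$. For real $\lambda\neq 0$ it follows from Rellich's lemma applied to outgoing solutions.

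For (1), take $\varphi\in\ker S_\lambda$ and set $u=\tilde S_\lambda\varphi$. By \eqref{eq:jump-rel-1}, $\gammaDir{\pm}u=0$. In $\Omega^+$, $u$ is an outgoing solution with zero Dirichlet data, so $u|_{\Omega^+}=0$ and hence $\gammaNeu{+}u=0$. In $\Omega^-$, $u$ is a Dirichlet eigenfunction for $\lambda^2$, possibly zero. By \eqref{eq:jump-rel-3}, $[\gamma_N u]=-\varphi$, so $\varphi=\gammaNeu{-}u$ is the Neumann datum of an interior Dirichlet eigenfunction.

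Conversely, let $v$ be an interior Dirichlet eigenfunction. Extend $v$ by zero to $\mathbb{R}^d$ and apply Lemma \ref{lem:third-green} in the form $v=\widetilde D_\lambda[\gammaDir{}v]-\widetilde S_\lambda[\gammaNeu{}v]$. Since $[\gamma_D v]=0$ and $[\gamma_N v]=-\gammaNeu{-}v$, this gives $v=\tilde S_\lambda\gammaNeu{-}v$. Taking the exterior Dirichlet trace yields $S_\lambda\gammaNeu{-}v=\gammaDir{+}v=0$.

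For (2), take $\phi\in\ker(\frac12-D_\lambda)$ and set $w=\tilde D_\lambda\phi$. By \eqref{eq:jump-rel-2}, $\gammaDir{-}w=(-\frac12+D_\lambda)\phi=0$. Since $\lambda^2$ is not an interior Dirichlet eigenvalue, $w|_{\Omega^-}=0$, and so $\gammaNeu{-}w=0$. By \eqref{eq:jump-rel-4}, $\gammaNeu{+}w=\gammaNeu{-}w=0$. Then $w|_{\Omega^+}$ is an exterior Neumann solution, which vanishes by exterior uniqueness. It follows that $\phi=[\gamma_D w]=0$.

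For (3), take $\phi\in\ker(\frac12+D_\lambda)$. By \eqref{eq:jump-rel-2}, $w=\tilde D_\lambda\phi$ has $\gammaDir{+}w=0$, so exterior uniqueness gives $w|_{\Omega^+}=0$. Therefore $\gammaNeu{-}w=\gammaNeu{+}w=0$, and $w|_{\Omega^-}$ is an interior Neumann eigenfunction. Its Dirichlet trace is $\gammaDir{-}w=(-\frac12+D_\lambda)\phi=-\phi$, so $\phi$ is (minus) the Dirichlet datum of a Neumann eigenfunction.

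Conversely, for a Neumann eigenfunction $v$ extended by zero, Lemma \ref{lem:third-green} gives $v=\widetilde D_\lambda[\gamma_D v]=-\tilde D_\lambda\gammaDir{-}v$, since $[\gamma_N v]=0$. Taking the exterior trace gives $0=-(\frac12+D_\lambda)\gammaDir{-}v$. Note that the hypothesis on Dirichlet eigenvalues is not needed in this direction; it appears in the statement only to keep (3) parallel to (2).

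The main obstacle is the exterior uniqueness step. It is immediate for $\Im\lambda>0$, where $u\in H^1(\Omega^+)$, but requires Rellich's lemma and the radiation condition for real $\lambda\neq0$. At $\lambda=0$ it requires the appropriate decay or boundedness class, and in $d=2$ this is delicate because of the $\log$ term in $S_\lambda$. I would restrict the statement to $\Im\lambda>0$, plus real $\lambda\neq0$ via Rellich, and defer $\lambda=0$ to Section \ref{sec:dimension-two}.
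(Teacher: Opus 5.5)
Your forward inclusions match the paper's proof: build $\widetilde S_\lambda\varphi$ or $\widetilde D_\lambda\phi$, read off the traces from \eqref{eq:jump-rel-1}--\eqref{eq:jump-rel-4}, and remove the exterior part by exterior uniqueness.

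\textbf{Converse inclusions.} Here you take a different route. The paper quotes Theorems 7.5 and 7.7 of McLean to get $S_\lambda\gammaNeu{-}v=0$ and $N_\lambda\gammaDir{-}v=0$. In (3) it then has to go through $S_\lambda N_\lambda=-(\frac12+D_\lambda)(\frac12-D_\lambda)$ and the injectivity of $\frac12-D_\lambda$ from (2). That is the only place where the hypothesis on Dirichlet eigenvalues enters. You instead apply Lemma \ref{lem:third-green} directly to the zero extension of the eigenfunction. This is self-contained and gives $(\frac12+D_\lambda)\gammaDir{-}v=0$ at once. Your forward direction in (3) never uses the hypothesis either, so your argument shows it can be dropped from (3) altogether, not only from the converse.

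\textbf{Range of $\lambda$.} Your proposed scope needs one adjustment. The paper applies the proposition at $\lambda=0$: in the remark following it, and in Proposition \ref{prop:invert-hypersingular}, which needs injectivity of $\frac12-D_0$. Section \ref{sec:dimension-two} only concerns $d=2$, so deferring $\lambda=0$ there would leave $d\ge 3$ uncovered. No deferral is needed in that case. We have $\widetilde S_0\varphi=O(|x|^{2-d})$ and $\widetilde D_0\phi=O(|x|^{1-d})$. Exterior harmonic functions that vanish at infinity and have zero Dirichlet or zero Neumann data vanish identically. Hence your argument runs verbatim.

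In $d=2$ your caution is justified. With $G_0=-\frac{1}{2\pi}\log|x-y|$, part (1) is false at $\lambda=0$: on the unit circle $S_0 1=0$, although $0$ is not a Dirichlet eigenvalue. Parts (2) and (3) persist there, because the double layer potential still decays like $|x|^{-1}$.
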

 \begin{proof}
    Given $\phi \in L_{\mathrm{loc}}^2(\mathbb{R}^d)$ we will write $\phi_\pm$
    for the restrictions to $\Omega^\pm$.
 \begin{enumerate}
 \item Suppose that $S_\lambda u = 0$ and define $\phi := \widetilde{S}_\lambda u$. 
 Then $\gamma_D^\pm \phi = S_\lambda u = 0$, so $\phi$ solves the Dirichlet problem
 in the interior and exterior domains.
 By unique solvability of the exterior Dirichlet problem (Theorem 9.10 \cite{mcleanStronglyEllipticSystems2000}), $\phi_+ = 0$
 and therefore $\gamma_N^+ \phi = 0$.
 Now the jump conditions give us $-u = [\gamma_N \phi] = -\gamma_N^- \phi$
 and therefore $u$ is the Neumann data of an interior Dirichlet eigenfunction.

Conversely suppose that $\phi$ is an interior Dirichlet eigenfunction and $u = \gamma_N^- \phi$.
Then by Theorem 7.5 of \cite{mcleanStronglyEllipticSystems2000}, $u$ must satisfy
$S_\lambda u = 0$.
 \item Suppose that $\left(\frac 1 2 - D_\lambda\right)u=0$ and put $\phi := \widetilde{D}_\lambda u$. Then $\gamma_D^- \phi = -\left(\frac 1 2 - D_\lambda\right)u = 0$
 and so $\phi$ is an interior Dirichlet eigenfunction.
 Hence if $\lambda^2$ is not an interior Dirichlet eigenvalue, $\phi_- = 0$ and
 therefore $\gamma_N^+\phi = \gamma_N^- \phi = 0$.
 Therefore $\phi$ satisfies the exterior Neumann problem; by unique exterior solvability (again Theorem 9.10 \cite{mcleanStronglyEllipticSystems2000}) $\phi = 0$ everywhere.
Then the jump relations imply $u = [\gamma_D \phi] = 0$.
 \item Suppose that $\left(\frac{1}{2} + D_\lambda\right)u = 0.$
 Put $\phi := \widetilde{D}_\lambda u$, then the jump relations $(\ref{eq:jump-rel-2})$
 give
 \begin{align*}
 &\gamma_D^- \phi = \left(-\frac{1}{2} + D_\lambda\right)u = \left(\frac{1}{2} + D_\lambda\right)u - u = -u,\\
 &\gamma_D^+ \phi = \left(\frac{1}{2} + D_\lambda\right) u = 0.
 \end{align*}
Thus $\phi_+$ solves the exterior Dirichlet problem
and therefore $\phi_+ = 0$.
Hence we have $\gamma_N^-\phi = \gamma_N^+\phi = 0$, so $\phi_-$ is a Neumann eigenfunction
and $u$ is the Dirichlet boundary data of an interior Neumann eigenfunction.

Conversely, suppose that $\phi$ is an interior Neumann eigenfunction and $u = \gamma_D \phi$.
Then by Theorem 7.7 of \cite{mcleanStronglyEllipticSystems2000}, $N_\lambda u = 0$ and
therefore $\left(\frac{1}{2} - D_\lambda\right)\left(\frac{1}{2} + D_\lambda\right)u=0$.
If $\lambda^2$ is not an interior Dirichlet eigenvalue, the kernel of $\frac{1}{2} - D_\lambda$ is trivial and therefore we must have $\left(\frac{1}{2} + D_\lambda\right)u=0$.
\end{enumerate}
\end{proof}

\begin{rem}
    By Corollary 8.3 of \cite{mcleanStronglyEllipticSystems2000}, $0$ is not an interior Dirichlet eigenvalue, and therefore $\ker S_0 = \{0\}$.
    The kernel of $\left(\frac{1}{2} + D_0\right)$ consists of functions that are constant on each component of $\partial \Omega$.
\end{rem}

\begin{proposition}
\label{prop:expand-d-to-n-inv}
Let $e_i: \partial\Omega \to \mathbb{R}$ be $e_i = \chi_{\partial\Omega_i}$ and define the finite rank operator $\Pi: H^{-1/2}(\partial\Omega) \to H^{1/2}(\partial\Omega)$
\begin{align*}
\Pi f = \sum_{i=1}^M |\Omega_i|^{-1} \langle f, e_i\rangle_{\partial\Omega} e_i.
\end{align*}
Then the operator $(Q_\lambda^-)^{-1}: H^{-1/2}(\partial\Omega) \to H^{1/2}(\partial \Omega)$ (inverse of interior Dirichlet-to-Neumann operator)
has an expansion
\begin{align}
    \label{eq:expand-d-to-n-inv}
    (Q_\lambda^-)^{-1} &= -\lambda^{-2} \Pi + \bigoplus_{i=1}^M M_{\lambda, i}\\
    M_{\lambda, i} f &= \sum_{\lambda_k \neq 0} \frac{1}{\lambda_{k,i}^2 - \lambda^2} \langle f, \gammaDir{-} \psi_{k,i}\rangle \gammaDir{-} \psi_{k,i}
\end{align}
which converges in $H^{1/2}(\partial\Omega)$ for all $f \in H^{-1/2}(\partial \Omega)$.

Here the $\psi_{k,i}$ are an orthonormal basis of $L^2(\Omega_i^-)$ given by interior Neumann eigenfunctions and $\lambda_{k,i}$ is the Neumann eigenvalue corresponding to $\psi_{k,i}$.
\end{proposition}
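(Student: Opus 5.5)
The plan is to build $(Q_\lambda^-)^{-1}$ explicitly as the interior Neumann-to-Dirichlet map, using the spectral resolution of the interior Neumann Laplacian on each component. Since $\Omega = \bigsqcup_i \Omega_i$ and the interior Dirichlet problem decouples over components, $Q_\lambda^-$ is block diagonal with respect to $H^{s}(\partial\Omega) = \bigoplus_i H^s(\partial\Omega_i)$. It therefore suffices to treat a single connected Lipschitz domain $\Omega_i$. For $\Im\lambda>0$, $\lambda^2$ is not a real eigenvalue, so the interior Neumann problem
\[
(-\Delta-\lambda^2)v=0 \text{ in }\Omega_i,\qquad \gammaNeu{-}v=f
\]
is uniquely solvable in $H^1(\Omega_i)$ for $f\in H^{-1/2}(\partial\Omega_i)$. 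This follows from Lax--Milgram applied to the form $\int\nabla\bar u\cdot\nabla v-\lambda^2\int \bar u v$, which is coercive after rotation by a phase. Its solution $v$ satisfies $\gammaDir{-}v=(Q^-_\lambda)^{-1}f$. Equivalently, $v=(-\Delta_{N,i}-\lambda^2)^{-1}(\gammaDir{-})^*f$, where $(\gammaDir{-})^*f\in (H^1(\Omega_i))^*$ is the functional $w\mapsto\langle f,\gammaDir{-}w\rangle$. This identification is exactly the weak Neumann formulation via the first Green identity, with the sign convention fixed there. Hence
\[
(Q_\lambda^-)^{-1}=\gammaDir{-}(-\Delta_{N,i}-\lambda^2)^{-1}(\gammaDir{-})^*,
\]
where the Neumann resolvent is extended to $(H^1)^*\to H^1$ through the form.

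Next I expand in the Neumann eigenbasis $\psi_{k,i}$. Since $\Omega_i$ is bounded Lipschitz, $H^1(\Omega_i)\hookrightarrow L^2$ is compact, so $-\Delta_{N,i}$ has discrete spectrum $\lambda_{k,i}^2\to\infty$ with an orthonormal basis of real eigenfunctions. On the scale $H^1\subset L^2\subset (H^1)^*$ the resolvent acts by
\[
g\mapsto \sum_k (\lambda_{k,i}^2-\lambda^2)^{-1}\langle g,\psi_{k,i}\rangle\psi_{k,i}.
\]
This series converges in $H^1(\Omega_i)$ for $g\in (H^1)^*$, because $\{(1+\lambda_{k,i}^2)^{-1/2}\psi_{k,i}\}$ is an orthonormal basis of $H^1$ for the form inner product, and $(\lambda_{k,i}^2-\lambda^2)^{-1}\sim(1+\lambda_{k,i}^2)^{-1}$ uniformly for fixed $\lambda$. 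Inserting $g=(\gammaDir{-})^*f$ gives $\langle g,\psi_{k,i}\rangle=\langle f,\gammaDir{-}\psi_{k,i}\rangle$. Applying the continuous trace $\gammaDir{-}:H^1\to H^{1/2}$ term by term then yields convergence of the series in $H^{1/2}(\partial\Omega_i)$.

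Finally I separate the zero mode. Connectedness of $\Omega_i$ gives that the kernel of $-\Delta_{N,i}$ is spanned by the constant $\psi_{0,i}=|\Omega_i|^{-1/2}$, with $\gammaDir{-}\psi_{0,i}=|\Omega_i|^{-1/2}e_i$. Its term is
\[
\frac{1}{0-\lambda^2}|\Omega_i|^{-1}\langle f,e_i\rangle e_i,
\]
which summed over $i$ is exactly $-\lambda^{-2}\Pi f$. The remaining terms form $M_{\lambda,i}$. This produces \eqref{eq:expand-d-to-n-inv}, and the expression also shows that the series for $M_{\lambda,i}$ stays holomorphic near $\lambda=0$.

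The main obstacle is the rigorous justification that $(Q^-_\lambda)^{-1}$ equals the sandwiched Neumann resolvent with the correct sign, and that the eigenfunction series converges in $H^1$ for data in the dual space $(H^1)^*$ rather than only in $L^2$. Both points are handled by working with the form domain rather than the operator domain. The sign must be checked carefully against the paper's first Green identity: it is this check that makes the zero-mode coefficient come out as $-\lambda^{-2}$ and not $+\lambda^{-2}$.
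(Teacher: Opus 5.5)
Your proposal is correct and follows essentially the paper's proof. Both reduce to one connected component and expand in the Neumann eigenbasis, with $H^1$-convergence from the norm equivalence $\norm{u}^2\simeq\sum_k(1+\lambda_k^2)|\langle\psi_k,u\rangle|^2$; then both apply the continuous trace $\gammaDir{-}$ and split off the constant mode to produce $-\lambda^{-2}\Pi$. The paper gets the coefficients $\langle E,\psi_k\rangle=(\lambda_k^2-\lambda^2)^{-1}\langle\gammaNeu{-}E,\gammaDir{-}\psi_k\rangle$ from the second Green identity applied to the Helmholtz solution $E$, whereas you get them from the weak (form-domain) Neumann resolvent; the only added value of your route is that it makes explicit the solvability of the interior Neumann problem, i.e.\ that every $f\in H^{-1/2}(\partial\Omega)$ is some $\gammaNeu{-}E$, which the paper uses tacitly.
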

\begin{proof}
Since the interior Dirichlet problem on $\Omega$ decouples into the individual problems for the $\Omega_i$,
$Q_\lambda^-$ is a block-diagonal operator on $\bigoplus_{i=1}^N H^{1/2}(\partial\Omega_i)$.
Hence it suffices to establish the expansion for each $\Omega_i$ separately.
We will therefore assume from now on that $\Omega$ is connected.

    By Theorem 4.12 \cite{mcleanStronglyEllipticSystems2000}, there is an orthonormal basis $\{\psi_k\mid k=0,1,\ldots\}$ of $L^2(\Omega)$ consisting of interior Neumann eigenfunctions, i.e. the $\psi_k$ are in $H^1_\Delta(\Omega)$ and satisfy
    \begin{align*}
\begin{cases}
\Delta \psi_k = -\lambda_k^2 \psi_k\\
\gammaNeu{-} \psi_k = 0.
\end{cases}
\end{align*}

Moreover, the norm on $H^1(\Omega)$ is equivalent to the norm
\begin{align*}
\norm{u}^2 := \sum_{k=0}^\infty (1 + \lambda_k^2) |\langle \psi_k, u\rangle_{L^2(\Omega)}|^2
\end{align*}
and therefore if $E \in H^1(\Omega)$, the expansion $E = \sum_{k=0}^\infty \langle E, \psi_k\rangle \psi_k$ converges in $H^1(\Omega)$.
Now suppose that $E$ satisfies $(\Delta + \lambda^2) E = 0$.
Observe that
\begin{align*}
\langle \Delta E , \psi_k\rangle - \langle E, \Delta \psi_k\rangle = (\lambda_k^2 - \lambda^2) \langle E, \psi_k\rangle.
\end{align*}
Then by using the second Green identity we obtain
\begin{align*}
&\langle E, \psi_k\rangle = \frac{1}{\lambda_k^2 - \lambda^2} \left(\langle \Delta E , \psi_k\rangle - \langle E, \Delta \psi_k\rangle \right)\\
=\ &\frac{1}{\lambda_k^2 - \lambda^2} \left(\langle \gammaNeu{-} E, \gammaDir{-} \psi_k\rangle_{\partial\Omega} - \langle\gammaDir{-} E, \gammaNeu{-} \psi_k\rangle_{\partial\Omega}\right)\\
=\ &\frac{1}{\lambda_k^2 - \lambda^2} \langle \gammaNeu{-} E, \gammaDir{-} \psi_k\rangle.
\end{align*}
By the continuity of the Dirichlet trace $\gamma_D: H^1(\Omega) \to H^{1/2}(\partial \Omega)$, we obtain
\begin{align*}
(Q_\lambda^-)^{-1} \gammaNeu{-}E = \gammaDir{-} E = \sum_{k\geq 0} \frac{1}{\lambda_k^2 - \lambda^2} \langle \gammaNeu{-} E, \gammaDir{-} \psi_k\rangle \gamma_D \psi_k.
\end{align*}
The normalized Neumann eigenfunction for the first eigenvalue, $\lambda_0 = 0$ is the constant function $\psi_0 = |\Omega|^{-1/2}$,
which yields the expansion (\ref{eq:expand-d-to-n-inv}).
\end{proof}

\subsection{Invertibility of hypersingular operator.}
By Theorem \ref{thm:boundary-ops-fredholm} $N_\lambda$ has index 0 and is therefore invertible if and only if $\ker N_\lambda = \{0\}$.

 From the relations $(\ref{eq:alg-rel-1}), (\ref{eq:alg-rel-2}), (\ref{eq:DtN-rel})$ we obtain
 \begin{align*}
 N_\lambda = -Q_\lambda^-(\frac12 - D_\lambda) = -(\frac12 - D_\lambda') Q_{\lambda}^-.
 \end{align*}
Since $(\frac12 - D_\lambda)$ is invertible for $\lambda \in \mathfrak{D}_\epsilon \cup \{0\}$, so is its transpose $(\frac12 - D_\lambda')$.
Since $Q_\lambda^-$ is block diagonal,
\begin{align*}
N_\lambda N_{\mathrm{diag},\lambda}^{-1} = (\frac12 - D_\lambda') Q_\lambda^- (Q_{\mathrm{diag},\lambda}^-)^{-1} (\frac12 - D_{\mathrm{diag},\lambda}')^{-1}.
= (\frac12 - D_\lambda')(\frac12 - D_{\mathrm{diag},\lambda}')^{-1}.
\end{align*}
The behaviour near $\lambda =0$ of $N_\lambda N_{\mathrm{diag},\lambda}^{-1}$ is summarized in the following proposition.

\begin{proposition}
    \label{prop:invert-hypersingular}
The operator family $\lambda \mapsto N_\lambda N_{\mathrm{diag},\lambda}^{-1} \in \mathcal{L}(H^{-1/2}(\partial\Omega))$ is holomorphic in the upper half plane and has a continuous boundary value near $\lambda=0$ on the real line.
\end{proposition}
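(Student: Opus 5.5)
The plan is to use the factorisation derived just before the statement. From \eqref{eq:alg-rel-2} and \eqref{eq:DtN-rel} we have $N_\lambda = -(\frac12 - D_\lambda')Q_\lambda^-$. Since $Q_\lambda^-$ is block diagonal with respect to $\bigoplus_i H^{1/2}(\partial\Omega_i)$, we get $Q_\lambda^- = Q^-_{\mathrm{diag},\lambda}$ and hence
\begin{align*}
N_\lambda N_{\mathrm{diag},\lambda}^{-1} = \left(\tfrac12 - D_\lambda'\right)\left(\tfrac12 - D_{\mathrm{diag},\lambda}'\right)^{-1}.
\end{align*}
This identity is first justified on the open set of $\lambda\in\mathbb{C}^+$ where $N_{\mathrm{diag},\lambda}$ is invertible, and then taken as the definition of the family elsewhere. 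The problem therefore reduces to two facts about $\frac12 - D'_{\mathrm{diag},\lambda}$: it is invertible for $\lambda\in\mathbb{C}^+$, and its inverse extends continuously up to $\lambda=0$.

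\emph{Holomorphy in $\mathbb{C}^+$.} By Theorem \ref{thm:boundary-ops-fredholm}(3), each block $\frac12 - D_{\lambda,i}'$ is a holomorphic Fredholm family of index zero. For $\operatorname{Im}\lambda>0$, $\lambda^2$ is not a Dirichlet eigenvalue of $-\Delta$ on $\Omega_i$, since that operator is self-adjoint and positive. Proposition \ref{prop:compute-bd-op-kernels}(2) then gives $\ker(\frac12 - D_{\lambda,i})=\{0\}$, so $\frac12 - D_{\lambda,i}$ is invertible, and so is its transpose $\frac12 - D_{\lambda,i}'$. Consequently the inverse of the block diagonal operator is holomorphic on $\mathbb{C}^+$, and composing with the holomorphic family $\frac12 - D'_\lambda$ gives holomorphy of $N_\lambda N_{\mathrm{diag},\lambda}^{-1}$. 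Along the way one checks that $N_{\mathrm{diag},\lambda}$ is invertible on $\mathbb{C}^+$, because $Q^-_{\lambda}$ is invertible there: $\lambda^2$ is not a Neumann eigenvalue either. This confirms the factorisation on all of $\mathbb{C}^+$.

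\emph{Behaviour at $\lambda=0$.} This is the main step, but it reduces to invertibility at the single point $\lambda=0$. By the Remark after Proposition \ref{prop:compute-bd-op-kernels}, $0$ is not an interior Dirichlet eigenvalue, so Proposition \ref{prop:compute-bd-op-kernels}(2) applies at $\lambda=0$ and $\frac12 - D_{0,i}$ is injective. I would need to check that the Fredholm property with index zero persists at $\lambda=0$. For $d\ge3$, Lemma (expansions at $0$) gives $D_\lambda - D_\rmi$ as a norm limit of compact operators, since the derivatives are compact and bounded near $0$. For $d=2$, $D_\lambda$ is Hahn holomorphic with a continuous limit at $0$, so $D_0-D_\rmi$ is again a norm limit of compacts. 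Hence $\frac12 - D'_{0,i}$ is invertible. Since $\lambda\mapsto D'_\lambda$ is continuous up to $0$ from $\overline{\mathbb{C}^+}$ (holomorphic for odd $d\ge3$, Hahn holomorphic otherwise), and inversion is continuous on the open set of invertible operators, $(\frac12 - D'_{\mathrm{diag},\lambda})^{-1}$ is continuous in a neighbourhood of $0$ in $\overline{\mathbb{C}^+}$. So, therefore, is the product.

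\emph{Where the singularity goes.} The singular term $-\lambda^{-2}\Pi$ of $(Q^-_{\mathrm{diag},\lambda})^{-1}$ from Proposition \ref{prop:expand-d-to-n-inv} is cancelled exactly by $Q^-_\lambda$ in the factorisation. This is the mechanism by which the singular part of $N_{\mathrm{diag},\lambda}^{-1}$ lies in the kernel of $N_0$, which is spanned by the locally constant functions $e_i$.

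The delicate point is the $d=2$ case. There one must make sure that the Hahn-holomorphic $\log\lambda$ terms in $D_\lambda$ do not spoil continuity, and that the factorisation identity, valid on $\mathbb{C}^+$, passes to the boundary value. I would handle this using continuity of $D_\lambda$ at $0$ from the lemma on expansions together with the invertibility of $\frac12-D'_0$.
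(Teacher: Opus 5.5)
Your proof is correct and follows the paper's route. You use the same factorisation $N_\lambda N_{\mathrm{diag},\lambda}^{-1} = (\frac12 - D_\lambda')(\frac12 - D_{\mathrm{diag},\lambda}')^{-1}$ and the same key input, namely that $\frac12 - D_0$ is injective because $0$ is not an interior Dirichlet eigenvalue. The paper concludes through the (Hahn-)meromorphic Fredholm theorem: the inverse is meromorphic with finite-rank poles and $0$ is not a pole, which gives (Hahn) holomorphy at $0$. You instead check the index-zero Fredholm property at $\lambda=0$ explicitly (a step the paper leaves implicit) and use openness of the invertible operators, which is more elementary and yields exactly the continuity the statement claims.
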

\begin{proof}
The operator $(\frac{1}{2} - D_{\lambda}')$ is a family of Fredholm operators of index $0$, holomorphic in odd dimensions and Hahn holomorphic in even dimensions.
Since the operator $(\frac{1}{2} - D_{\lambda, \mathrm{diag}}')$ is invertible for $\Im(\lambda)>0$ it follows that $(\frac{1}{2} - D_{\lambda, \mathrm{diag}}')^{-1}$ is a (Hahn) meromorphic family of operators with Laurent coefficients of finite rank at every pole.
By Prop. \ref{prop:compute-bd-op-kernels} zero is not a pole, and therefore 
$(\frac{1}{2} - D_{\lambda, \mathrm{diag}}')^{-1}$ is (Hahn) holomorphic.
Hence the same is true for $N_\lambda N_{\mathrm{diag},\lambda}^{-1} = (\frac12 - D_\lambda')(\frac12 - D_{\mathrm{diag},\lambda}')^{-1}.$

 \end{proof}

\section{Polynomial bounds on boundary layer operators.}
\label{sec:boundary-op-poly-growth}

In order to prove the trace formulas, we need to establish polynomial bounds in $\lambda$ on the growth of certain boundary operators.
First we need to establish quantitative versions of the unique solvability of the boundary problems we consider.

\begin{definition}
    \label{def:sector}
Let $0 < \epsilon < \pi/8$. The sector $\mathfrak{D}_\epsilon$ is defined as
\begin{align}
\mathfrak{D}_\epsilon := \{z \in \mathbb{C}\setminus 0 \mid \epsilon<\arg (z)<\pi-\epsilon\}.
\end{align}
\end{definition}

\begin{theorem}
\label{thm:helmholtz-lambda-estimate}
Let $\Omega$ be a Lipschitz domain.
Then for all $R > 0$ there exists a constant $C>0$ such that the following estimate holds for all $\lambda \in \mathfrak{D}_\epsilon\setminus B_R(0)$.

If $u$ solves $(-\Delta - \lambda^2) u = f$ in $\Omega$, $f \in H^{-1}(\Omega)$, then $u$ satisfies
\begin{align}
\label{eq:helmholtz-lambda-estimate}
\norm{u}_{H^1(\Omega)}^2 \leq C\left( \norm{u}_{H^1(\Omega)}\norm{f}_{H^{-1}(\Omega)} + \norm{\gamma_0 u}_{H^{1/2}(\partial \Omega)}\cdot\norm{\gamma_1 u}_{H^{-1/2}(\partial \Omega)}\right).
\end{align}
\end{theorem}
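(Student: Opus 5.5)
Multiplying the equation by $\bar u$ and applying the first Green identity gives an energy identity, and the sector condition turns that identity into coercivity. Concretely, pair $(-\Delta-\lambda^2)u=f$ with $\bar u$. Since $u\in H^1(\Omega)$ and $\Delta u = -\lambda^2 u - f\in H^{-1}(\Omega)$, the first Green identity (in its weak form, which is how $\gamma_N$ is defined) gives
\begin{align*}
\int_\Omega |\nabla u|^2 - \lambda^2\int_\Omega |u|^2 = \langle f,\bar u\rangle_\Omega + \int_{\partial\Omega} \overline{\gamma_0 u}\,\gamma_1 u .
\end{align*}
The right-hand side is bounded in modulus by $\norm{f}_{H^{-1}(\Omega)}\norm{u}_{H^1(\Omega)}+\norm{\gamma_0 u}_{H^{1/2}(\partial\Omega)}\norm{\gamma_1 u}_{H^{-1/2}(\partial\Omega)}$. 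The duality pairing between $H^{-1}(\Omega)$ and $H^1(\Omega)$ should be read as the one compatible with the definition of $H^{-1}(\Omega)$ used for the Neumann trace. I expect this to be harmless, but it is a point to check.

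The key step is to bound the left-hand side from below by a multiple of $\norm{u}_{H^1}^2$, uniformly for $\lambda\in\mathfrak{D}_\epsilon\setminus B_R(0)$. Write $a=\norm{\nabla u}_{L^2}^2\ge0$ and $b=\norm{u}_{L^2}^2\ge 0$, so the left-hand side is $a-\lambda^2 b$. If $\arg\lambda\in(\epsilon,\pi-\epsilon)$, then $\arg(-\lambda^2)\in(-\pi+2\epsilon,\pi-2\epsilon)$. Hence $-\lambda^2 b$ lies in a closed sector of half-opening $\pi-2\epsilon<\pi$ around the positive real axis, and so does $a$. Rotating by a suitable unimodular $e^{i\theta}$ then places both $a$ and $-\lambda^2 b$ in the half-plane where the real part is at least $\sin(\epsilon)$ times the modulus. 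One can take $\theta=0$ when $|\arg(-\lambda^2)|\le\pi/2$, and otherwise rotate toward the axis bisecting the two rays. This yields
\begin{align*}
|a-\lambda^2 b| \ge c_\epsilon\,(a+|\lambda|^2 b),
\end{align*}
with $c_\epsilon>0$ depending only on $\epsilon$. The elementary fact used is that for complex numbers $z_1,z_2$ in a sector of opening angle less than $\pi-2\epsilon$, one has $|z_1+z_2|\ge \sin(\epsilon)(|z_1|+|z_2|)$.

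Since $|\lambda|\ge R$, we get $a+|\lambda|^2b\ge \min(1,R^2)\norm{u}_{H^1(\Omega)}^2$. Combining this with the upper bound from the first paragraph gives \eqref{eq:helmholtz-lambda-estimate} with $C=(c_\epsilon\min(1,R^2))^{-1}$. The main point of care is the sector inequality: the angle bookkeeping must be uniform in $\lambda$. I expect this to be the only genuine step. Lipschitz regularity enters only through the validity of Green's identity and of the trace maps, which were recorded earlier in the paper.
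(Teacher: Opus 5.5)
Your proof is correct and uses the same two ingredients as the paper's. First, pair with $\bar u$ and apply the first Green identity to get the energy identity. Second, bound $\int_\Omega|\nabla u|^2-\lambda^2\int_\Omega|u|^2$ from below using the sector condition.

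You differ in how that lower bound is extracted. The paper splits $\mathfrak{D}_\epsilon\setminus B_R(0)$ into subsectors $\Gamma_1,\Gamma_2,\Gamma_3$. On $\Gamma_2$ it takes real parts, using $-\Re\lambda^2\ge|\lambda|^2\sin 2\epsilon$. On $\Gamma_3$ it takes imaginary parts, using $-\Im\lambda^2\ge 2|\lambda|^2\sin^2\epsilon$. It reaches $\Gamma_1$ through the symmetry $\lambda\mapsto-\bar\lambda$, $u\mapsto\bar u$.

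Your rotation onto the bisector of $a$ and $-\lambda^2 b$ treats the whole sector at once and needs no reflection. It also gives the explicit constant $(\sin\epsilon\,\min(1,R^2))^{-1}$. It handles a point the paper's ``mutatis mutandis'' on $\Gamma_3$ passes over: the imaginary part alone only controls $\|u\|_{L^2}^2$, so the gradient must still be recovered from the real part. Your rotated real part controls both terms at once.

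One small correction: drop the clause ``take $\theta=0$ when $|\arg(-\lambda^2)|\le\pi/2$''. At $\arg\lambda=3\pi/4$ one has $\arg(-\lambda^2)=\pi/2$, so $\Re(a-\lambda^2 b)=a$ gives no control of $b$. Always rotating to the bisector works uniformly, because both numbers then lie within angle $\pi/2-\epsilon$ of the positive real axis. The caveat you raise about reading $f$ as a functional on $H^1(\Omega)$ applies equally to the paper's proof and is harmless here.
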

\begin{corollary}
  \label{cor:helmholtz-uniqueness-estimate}
This estimate together with continuity of the trace operators provides the estimates
\begin{align*}
&\norm{u}_{H^1(\Omega)} \leq C_1 \left( \norm{f}_{H^{-1}(\Omega)} + |\lambda|^2\norm{\gamma_0 u}_{H^{1/2}(\partial \Omega)}\right)\\
&\norm{u}_{H^1(\Omega)} \leq C_2\left( \norm{f}_{H^{-1}(\Omega)} + \norm{\gamma_1 u}_{H^{-1/2}(\partial \Omega)}\right)
\end{align*}
for some constants $C_1, C_2 > 0$ and for all $\lambda$ in the same region as above.
\end{corollary}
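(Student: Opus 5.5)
The plan is to combine the quadratic estimate \eqref{eq:helmholtz-lambda-estimate} of Theorem \ref{thm:helmholtz-lambda-estimate} with trace bounds. In each case one factor of the boundary product is controlled linearly by $\norm{u}_{H^1(\Omega)}$ and $\norm{f}_{H^{-1}(\Omega)}$. Every term on the right then carries at least one factor $\norm{u}_{H^1(\Omega)}$, and the final step divides by it. If $u=0$ there is nothing to prove, so assume $u\neq 0$ throughout.

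\emph{Second estimate.} The Dirichlet trace is bounded, so $\norm{\gamma_0 u}_{H^{1/2}(\partial\Omega)}\le C\norm{u}_{H^1(\Omega)}$ with a constant independent of $\lambda$. Inserting this into \eqref{eq:helmholtz-lambda-estimate} gives
\begin{align*}
\norm{u}_{H^1(\Omega)}^2\le C\norm{u}_{H^1(\Omega)}\bigl(\norm{f}_{H^{-1}(\Omega)}+\norm{\gamma_1 u}_{H^{-1/2}(\partial\Omega)}\bigr).
\end{align*}
Dividing by $\norm{u}_{H^1(\Omega)}$ yields the bound with constant $C_2$.

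\emph{First estimate.} Here we need a $\lambda$-explicit bound on the Neumann trace. Write $\Delta u=-\lambda^2u-f$ and use the weak definition of $\gamma_1$ (the first Green identity). For $\varphi\in H^{1/2}(\partial\Omega)$, set $g=\eta_-\varphi$, the bounded right inverse of the trace. Then
\begin{align*}
\langle \gamma_0 g,\gamma_1 u\rangle=\int_\Omega g\,\Delta u+\int_\Omega\nabla g\cdot\nabla u
=-\lambda^2\int_\Omega g u-\langle f,g\rangle+\int_\Omega \nabla g\cdot\nabla u .
\end{align*}
This gives $|\langle\varphi,\gamma_1u\rangle|\le C\norm{\varphi}_{H^{1/2}}\bigl((1+|\lambda|^2)\norm{u}_{H^1}+\norm{f}_{H^{-1}}\bigr)$. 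Since $|\lambda|\ge R$, this implies
\begin{align*}
\norm{\gamma_1u}_{H^{-1/2}(\partial\Omega)}\le C\bigl(|\lambda|^2\norm{u}_{H^1(\Omega)}+\norm{f}_{H^{-1}(\Omega)}\bigr).
\end{align*}
Now substitute this into \eqref{eq:helmholtz-lambda-estimate}. The cross term $\norm{f}\,\norm{\gamma_0u}$ that appears is bounded by $C\norm{f}\,\norm{u}_{H^1}$, again by continuity of $\gamma_0$. Every term on the right is then of the form $\norm{u}_{H^1}\cdot(\ldots)$, and dividing by $\norm{u}_{H^1}$ gives the estimate with constant $C_1$.

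The only point needing care is the Neumann trace bound. The pairing $\langle f,g\rangle$ requires $f$ to act on $H^1(\Omega)$, so $H^{-1}(\Omega)$ must be read as the dual of $H^1(\Omega)$. This is the convention under which $\gamma_1$ is well defined for such $f$, and presumably the one under which the theorem is stated. Beyond that, the main thing to check is that all constants are uniform in $\lambda\in\mathfrak{D}_\epsilon\setminus B_R(0)$. This holds because the $\lambda$-dependence enters only through the explicit factor $|\lambda|^2$.
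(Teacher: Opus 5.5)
Your proof is correct and is the paper's one-line argument made explicit. Continuity of $\gamma_0$ gives the second bound. For the first, you bound $\norm{\gamma_1 u}_{H^{-1/2}(\partial\Omega)} \lesssim |\lambda|^2\norm{u}_{H^1(\Omega)} + \norm{f}_{H^{-1}(\Omega)}$ via $\Delta u = -\lambda^2 u - f$ (this is where the paper's factor $|\lambda|^2$ comes from), substitute into \eqref{eq:helmholtz-lambda-estimate}, and divide by $\norm{u}_{H^1(\Omega)}$. Your remark that $H^{-1}(\Omega)$ must be read as the dual of $H^1(\Omega)$ is apt: the paper's $H^1_\Delta$-continuity lemma for the Neumann trace on its own only covers $f \in L^2(\Omega)$, and the same reading is already implicit in the pairing $\int_\Omega \overline{u} f$ used in the proof of Theorem \ref{thm:helmholtz-lambda-estimate}.
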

\begin{proof}[Proof of Theorem \ref{thm:helmholtz-lambda-estimate}]
The proof is a simple integration by parts argument, complicated only by the fact that we need to vary the computation very slightly for different sectors of the complex plane.

\begin{figure}
\centering
\begin{tikzpicture}[scale=1.5]

\pgfmathsetmacro{\aone}  {\tikzeps}
\pgfmathsetmacro{\atwo}  {45 + \tikzeps}
\pgfmathsetmacro{\athree}{90 + \tikzeps}
\pgfmathsetmacro{\afour} {135 - \tikzeps}
\pgfmathsetmacro{\afive} {180 - \tikzeps}

\pgfmathsetmacro{\Rone}   {\tikzR + 2*\tikzdelta}
\pgfmathsetmacro{\Rtwo}   {\tikzR + \tikzdelta}
\pgfmathsetmacro{\Rthree} {\tikzR + 2*\tikzdelta}

\fillwedge{\gSect}    {\aone}  {\atwo}          
\fillwedge{\gSect}    {\atwo}  {\athree}        
\fillwedge{\gOverlap} {\athree}{\afour}         
\fillwedge{\gSect}    {\afour} {\afive}         

\ifdim\tikzrmin pt>0pt
  \path[fill=white,draw=none] (0,0) circle (\tikzrmin);
  \draw[dotted] (\aone:\tikzrmin) arc (\aone:\afive:\tikzrmin);
\fi

\draw[\arcstyle] (\aone:\Rone)   arc (\aone:\atwo:\Rone);     
\draw[\arcstyle] (\atwo:\Rtwo)   arc (\atwo:\afour:\Rtwo);    
\draw[\arcstyle] (\athree:\Rthree) arc (\athree:\afive:\Rthree); 

\draw[very thick,->] (-\tikzR,0) -- (\tikzR,0)
      node[below right=2pt] {$\mathrm{Re}$};
\draw[very thick,->] (0,-0.2) -- (0,\tikzR)
      node[right=2pt] {$\mathrm{Im}$};

\draw[dashed, thick] (0,0) -- (\aone:\tikzR);
\draw[dashed, thick] (0,0) -- (\afive:\tikzR);

\draw[thick] (0,0) -- (\atwo:\tikzR);
\draw[thick] (0,0) -- (\athree:\tikzR);
\draw[thick] (0,0) -- (\afour:\tikzR);

\pgfmathsetmacro{\midone}  {0.5*(\aone+\atwo)}
\pgfmathsetmacro{\midtwo}  {0.5*(\atwo+\athree)}
\pgfmathsetmacro{\midthree}{0.5*(\athree+\afive)}

\node at (\midone:1.15)   {\Large $\Gamma_{1}$};
\node at (\midtwo:1.55)   {\Large $\Gamma_{2}$};
\node at (\midthree:1.55) {\Large $\Gamma_{3}$};

\pgfmathsetmacro{\dlabelangle}{0.96*\afive}
\node at (\dlabelangle:\tikzR*0.92) {\Large $\mathfrak{D}_{\epsilon}$};

\end{tikzpicture}
\caption{The sector $\mathfrak{D}_\epsilon$ is bounded by the dashed lines. The arcs outside the shaded region represent the angles covered by the sectors $\Gamma_1, \Gamma_2, \Gamma_3$. The darker shaded region shows the overlap of the sectors $\Gamma_2$ and $\Gamma_3$.}
\end{figure}

We subdivide the sector $\mathfrak{D}_\epsilon$ into three smaller sectors (truncated near 0):
\begin{align*}
&\Gamma_1(\epsilon, R) := \{ r e^{\rmi \varphi} \mid \varphi \in [\epsilon, \pi/4 + \epsilon], r > R\}\\
&\Gamma_2(\epsilon, R) := \{ r e^{\rmi \varphi} \mid \varphi \in [\pi/4 + \epsilon, 3\pi/4 - \epsilon], r > R\}\\
&\Gamma_3(\epsilon, R) := \{ r e^{\rmi  \varphi} \mid \varphi \in [\pi/2 + \epsilon, \pi - \epsilon], r > R\}.
\end{align*}
We will prove our desired estimate on the sets $\Gamma_2, \Gamma_3$ and use invariance of the estimate under the reflection $\lambda \mapsto -\overline{\lambda}$ to obtain the estimate on $\Gamma_1$.

The following auxiliary estimates allow us to assume a lower bound for one of $-\operatorname{Re}(\lambda^2)$, $-\operatorname{Im}(\lambda^2)$.

If $\lambda = r e^{\rmi \varphi}$ then $\operatorname{Re}(\lambda^2) = r^2 \cos(2\varphi)$.
On $\Gamma_2$ we have $\cos(2\varphi) \leq - \sin(2\epsilon)$, hence
$-\operatorname{Re}(\lambda^2) \geq R^2 \sin(2\epsilon)$.

On $\Gamma_3$ we have $-\cos \varphi \geq \sin \epsilon$, $\sin \varphi \geq \sin \epsilon$ and therefore
\begin{align*}
-\operatorname{Im}\left(\lambda^2\right)=-2(\operatorname{Re} \lambda)(\operatorname{Im} \lambda)=2 r^2(-\cos \varphi) \sin \varphi
\geq 2 r^2 \sin ^2 \epsilon>2 R^2 \sin ^2 \epsilon .
\end{align*}

To prove (\ref{eq:helmholtz-lambda-estimate}) on $\Gamma_2(\epsilon, R)$,
let $u \in H^1(\Omega)$ be a solution of $(-\Delta - \lambda^2) u = f$ in $\Omega$ with $\lambda \in \Gamma_2$.
In particular, $-\Re(\lambda^2) > R^2$.
Choosing $\delta := \min \{1, R^2 \sin(2\epsilon)\}$ ensures
\begin{align*}
\delta \norm{u}_{H^1(\Omega)}^2 \leq \operatorname{Re} \int_{\Omega} \left( |\nabla u|^2 - \lambda^2 |u|^2 \right) \mathrm{d} x.
\end{align*}
Applying the first Green identity to the right hand side gives
\begin{align*}
&\operatorname{Re} \int_{\Omega} \left( |\nabla u|^2 - \lambda^2 |u|^2 \right) \mathrm{d} x = 
\operatorname{Re} \int_{\Omega} \overline{u} f + \operatorname{Re} \int_{\partial\Omega} \overline{\gamma_0 u} \gamma_1 u\\
\leq\ 
 &C_1 \norm{u}_{H^1(\Omega)} \cdot \norm{f}_{H^{-1}(\Omega)} + C_2 \norm{\gamma_0 u}_{H^{1/2}(\partial \Omega)} \cdot \norm{\gamma_1 u}_{H^{-1/2}(\partial \Omega)}
\end{align*}
where we have simply used the continuity of the Sobolev space pairings.

The same proof (mutatis mutandis) on $\Gamma_3$ goes through, with $\delta := \min \{1, 2R^2 \sin^2\epsilon\}$ and replacing the real part with the imaginary part.

The map $\lambda\mapsto -\overline{\lambda}$ maps $\Gamma_1$ bijectively to a subset of $\Gamma_3$.
If $u$ solves the Helmholtz equation for $\lambda$ and source term $f$, then $\overline{u}$ solves the Helmholtz equation for $-\overline{\lambda}$ and source term $\overline{f}$, and so the same estimate applies for $-\overline{\lambda}$.

Thus we cover the entire truncated sector $\mathfrak{D}_\epsilon \setminus B_R(0)$.
\end{proof}

In addition we need a transmission type variant of the previous result.
\begin{theorem}
\label{thm:helmholtz-transmission-lambda-estimate}
Let $\Omega$ be a bounded Lipschitz domain, $\Omega^+$ the exterior domain.
Then for all $R > 0$ there exists a constant $C>0$ such that the following estimate holds for all $\lambda \in \mathfrak{D}_\epsilon\setminus B_R(0)$.

Let $u = u_+ + u_-$ where $u_\pm \in H^1(\Omega^\pm)$ solve the transmission problem
(\ref{eq:transmission-problem-def}) with $\kappa_\pm, \nu_0, \nu_1>0$.
Then $u$ satisfies the estimate
\begin{align}
  \label{eq:helmholtz-transmission-lambda-estimate}
\norm{u}_{H^1 \oplus H^1} \leq C\left(\norm{f}_{H^{-1} \oplus H^{-1}}
+ |\lambda|^2\norm{\phi_0}_{H^{1/2}} + \norm{\psi_0}_{H^{-1/2}}\right).
\end{align}
\end{theorem}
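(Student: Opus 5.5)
The plan is to repeat the integration-by-parts argument of Theorem~\ref{thm:helmholtz-lambda-estimate}, applied simultaneously in $\Omega^-$ and $\Omega^+$. The two energy identities are weighted so that the boundary terms coupling $u_+$ and $u_-$ cancel exactly, and only terms involving the transmission data $(\phi_0,\psi_0)$ remain.

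\emph{Step 1: energy identities.} By the first Green identity in $\Omega^-$, and in $\Omega^+$ with the paper's sign convention for the exterior Neumann trace, I expect
\begin{align*}
\int_{\Omega^-}\bigl(|\nabla u_-|^2-(\lambda/\kappa_-)^2|u_-|^2\bigr)&=\kappa_-^{-2}\int_{\Omega^-}\overline{u_-}f_-+\int_{\partial\Omega}\overline{\gamma_D^-u}\,\gamma_N^-u,\\
\int_{\Omega^+}\bigl(|\nabla u_+|^2-(\lambda/\kappa_+)^2|u_+|^2\bigr)&=\kappa_+^{-2}\int_{\Omega^+}\overline{u_+}f_+-\int_{\partial\Omega}\overline{\gamma_D^+u}\,\gamma_N^+u.
\end{align*}
The exterior identity is legitimate because $u_+\in H^1(\Omega^+)$; this holds since $\operatorname{Im}\lambda>0$. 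One can justify it by cutting off at a large radius and letting the radius tend to infinity.

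\emph{Step 2: choice of weights.} I multiply the interior identity by $c:=\nu_0\nu_1>0$ and add the exterior identity. Substituting $\gamma_D^+u=\nu_0\gamma_D^-u+\phi_0$ and $\gamma_N^+u=\nu_1\gamma_N^-u+\psi_0$ (with $\nu_0,\nu_1$ real), the boundary term becomes
\[
-\nu_0\int_{\partial\Omega}\overline{\gamma_D^-u}\,\psi_0-\int_{\partial\Omega}\overline{\phi_0}\,\gamma_N^+u .
\]
The coupling term $\nu_0\nu_1\,\overline{\gamma_D^-u}\,\gamma_N^-u$ cancels with this choice of $c$.

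\emph{Step 3: coercivity.} Since $\kappa_\pm>0$, the numbers $\lambda/\kappa_+$ and $\lambda/\kappa_-$ have the same argument as $\lambda$. On each subsector $\Gamma_2$ and $\Gamma_3$ of $\mathfrak{D}_\epsilon\setminus B_R(0)$, taking the real part (respectively the imaginary part, with an appropriate sign) of the left-hand side therefore gives a lower bound $\delta\bigl(\|u_-\|_{H^1}^2+\|u_+\|_{H^1}^2\bigr)$. The constant $\delta$ depends only on $\epsilon$, $R$, $\kappa_\pm$ and $\nu_0\nu_1$. The subsector $\Gamma_1$ is handled by the reflection $\lambda\mapsto-\overline{\lambda}$ and complex conjugation of $u$, exactly as in Theorem~\ref{thm:helmholtz-lambda-estimate}.

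\emph{Step 4: bounding the right-hand side.} The source terms are bounded by $C\|u\|_{H^1\oplus H^1}\|f\|_{H^{-1}\oplus H^{-1}}$. The term with $\psi_0$ is bounded by $C\|u_-\|_{H^1}\|\psi_0\|_{H^{-1/2}}$, using continuity of $\gamma_D^-$.

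The main obstacle is the term $\overline{\phi_0}\,\gamma_N^+u$, because the exterior Neumann trace is not controlled by $\|u_+\|_{H^1}$ alone. I would use the weak definition of the trace. For $g\in H^{1/2}(\partial\Omega)$, put $v=\eta_+g$, and the Green identity gives
\[
\langle\gamma_N^+u,g\rangle=-\int_{\Omega^+}\nabla u_+\cdot\nabla v-\int_{\Omega^+}v\,\Delta u_+ ,\qquad \Delta u_+=-(\lambda/\kappa_+)^2u_+-f_+/\kappa_+^2 .
\]
From this I expect
\[
\|\gamma_N^+u\|_{H^{-1/2}}\le C\bigl(\|u_+\|_{H^1}+|\lambda|^2\|u_+\|_{L^2}+\|f_+\|_{H^{-1}}\bigr)\le C|\lambda|^2\bigl(\|u_+\|_{H^1}+\|f_+\|_{H^{-1}}\bigr),
\]
using $|\lambda|\ge R$. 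This is the source of the factor $|\lambda|^2$ multiplying $\|\phi_0\|$ in the statement.

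\emph{Step 5: conclusion.} Combining the steps gives
\[
\|u\|^2\le C\|u\|\bigl(\|f\|+|\lambda|^2\|\phi_0\|+\|\psi_0\|\bigr)+C|\lambda|^2\|\phi_0\|\,\|f\| .
\]
The last term is absorbed by the elementary inequality $ab\le\frac12(a^2+b^2)$ applied to $a=|\lambda|^2\|\phi_0\|$ and $b=\|f\|$, after which the quadratic inequality in $\|u\|$ yields \eqref{eq:helmholtz-transmission-lambda-estimate}.
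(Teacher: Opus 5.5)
Your argument is correct. It uses the same two ingredients as the paper: a weighted energy identity from Green's formula, and a bound of $\gamma_N^+u$ in $H^{-1/2}$ costing a factor $|\lambda|^2$. You combine them differently, and the difference has real payoff.

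\textbf{Organisation.} The paper splits the data. It treats $\phi_0=0$ and $\psi_0=0$ separately, then writes $u=u_1+u_2+u_3$, where the $u_i$ solve the problems with data $(\phi_0,0)$, $(0,\psi_0)$ and source $f$ alone, and appeals to uniqueness for the homogeneous problem. You run a single energy identity with all data present. The only new cross term, $|\lambda|^2\|\phi_0\|\,\|f\|$, is absorbed by Young's inequality. This gives a genuine a priori estimate for an arbitrary $H^1$ solution. The paper's superposition step tacitly needs existence of $H^1$ solutions of the three auxiliary problems, which you never use.

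\textbf{Weights.} Your weight $\nu_0\nu_1$ is the right one in the stated generality. The paper weights the energy forms by $\kappa_-^2$ and $\kappa_+^2$. With those weights the boundary terms leave the residual $(\kappa_-^2-\nu_0\nu_1\kappa_+^2)\,\overline{\gamma_D^-u}\,\gamma_N^-u$. This vanishes only when $\kappa_-^2=\nu_0\nu_1\kappa_+^2$, for instance in the resolvent case $\nu_0=1$, $\nu_1=(\kappa_-/\kappa_+)^2$. Your choice cancels the coupling for all $\kappa_\pm,\nu_0,\nu_1>0$.

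\textbf{One small repair on $\Gamma_3$.} The paper's proof of Theorem~\ref{thm:helmholtz-lambda-estimate} needs the same fix. On $\Gamma_3$ the imaginary part of your left-hand side equals $-\operatorname{Im}(\lambda^2)\bigl(\nu_0\nu_1\kappa_-^{-2}\|u_-\|_{L^2}^2+\kappa_+^{-2}\|u_+\|_{L^2}^2\bigr)$. It contains no gradient terms, so by itself it does not give $H^1$ coercivity. Since $-\operatorname{Im}(\lambda^2)\geq 2|\lambda|^2\sin^2\epsilon$ on $\Gamma_3$, it does control $|\lambda|^2\|u\|_{L^2}^2$. The real part of the identity then controls the gradient terms, because $|\operatorname{Re}(\lambda^2)|\leq|\lambda|^2$.
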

\begin{rem}
Using the Cauchy data space $\mathscr{C}$ we also have the following estimate, which is clearly slightly less precise, but just as good for our purposes:
\begin{align*}
\norm{u}_{H^1 \oplus H^1} \leq C\left(\norm{f}_{H^{-1} \oplus H^{-1}} + |\lambda|^2\norm{\Phi_0}_{\mathscr{C}} \right),\qquad \Phi_0 = (\phi_0, \psi_0).
\end{align*}
\end{rem}
\begin{proof}
Let $\Phi_0 = (\phi, \psi) \in \mathscr{C}$ be the transmission data.
We treat the cases $\phi = 0$ and $\psi = 0$ separately and then combine the results.
Just as in the proof of Theorem \ref{thm:helmholtz-lambda-estimate}, to achieve the estimate in $\mathfrak{D}_\epsilon \setminus B_R(0)$ we need to consider different sectors where one of $-\operatorname{Re}(\lambda^2)$, $-\operatorname{Im}(\lambda^2)$ is positive.
As the treatment of these sectors is identical to that in the proof of Theorem \ref{thm:helmholtz-lambda-estimate}, we will only give the computation in the sector $\Gamma_2(\epsilon, R)$, where $-\operatorname{Re}(\lambda^2) > R^2$.

Consider first the case $\phi = 0$.
Let $\delta := \min \{1, \kappa_-^2, \kappa_+^2, R^2\}$.
This choice of $\delta$ ensures
\begin{align*}
\delta \norm{u}_{H^1(\Omega^-) \oplus H^1(\Omega^+)}^2 \leq \operatorname{Re} \int_{\Omega_-} \left(\kappa_-^2 |\nabla u|^2 - \lambda^2 |u|^2 \right) \mathrm{d} x
+ \operatorname{Re} \int_{\Omega_+} \left( \kappa_+^2 |\nabla u|^2 - \lambda^2 |u|^2 \right) \mathrm{d} x.
\end{align*}

Now when we apply the first Green identity to the right hand side, the contributions from the interior and the exterior domains combine to give a term containing $\psi$:
\begin{align*}
  &\operatorname{Re} \int_{\Omega_-} \left(\kappa_-^2 |\nabla u|^2 - \lambda^2 |u|^2 \right) \mathrm{d} x
+ \operatorname{Re} \int_{\Omega_+} \left( \kappa_+^2 |\nabla u|^2 - \lambda^2 |u|^2 \right) \mathrm{d} x\\
=\ &\operatorname{Re} \int_{\Omega_-} \overline{u} f + \operatorname{Re} \int_{\Omega_+} \overline{u} f + \operatorname{Re} \int_{\partial\Omega} \overline{\gamma_0 u} \psi\\
\leq\ &\norm{u}_{H^1 \oplus H^1} \cdot \norm{f}_{H^{-1}\oplus H^{-1}}
+ \norm{\gamma_0 u}_{H^{1/2}(\partial \Omega)} \norm{\psi}_{H^{-1/2}(\partial \Omega)}.
\end{align*}
Since $\norm{\gamma_0 u} \lesssim \norm{u}_{H^1(\Omega^{\pm})}$, we certainly
have $\norm{\gamma_0 u} \lesssim \norm{u}_{H^1 \oplus H^1}$.
Dividing by $\norm{u}_{H^1 \oplus H^1}$ gives
\begin{align*}
\delta \norm{u}_{H^1 \oplus H^1} \lesssim  \norm{f}_{H^{-1}\oplus H^{-1}} + \norm{\psi}_{H^{-1/2}(\partial \Omega)}.
\end{align*}

If $\psi = 0$, then we obtain by an analogous computation
\begin{align*}
\delta \norm{u}_{H^1 \oplus H^1} \lesssim  \norm{f}_{H^{-1}\oplus H^{-1}} + |\lambda|^2\norm{\phi}_{H^{1/2}(\partial \Omega)}
\end{align*}
where the $|\lambda|^2$ comes from Lemma \ref{lem:continuity-neumann-trace}.

Now let $u$ be a solution of the transmission problem for source $f$ and boundary data $(\phi, \psi)$. Let $u_1$ be the solution of the transmission problem for $f=0$ with boundary data $(\phi, 0)$, and let $u_2$ be the solution of the transmission problem for $f=0$ with boundary data $(0, \psi)$.
Let $u_3$ be the solution of the transmission problem for a source $f$ and boundary data $(0,0)$.
Then $u_0 := u - u_1 - u_2 - u_3$ solves the homogeneous transmission problem.
But the unique solution of the homogeneous transmission problem is $u_0 = 0$,
so $u = u_1 + u_2 + u_3$.

This gives the estimate
\begin{align*}
\norm{u}_{H^1 \oplus H^1} \lesssim \norm{u_1} + \norm{u_2} + \norm{u_3}
\lesssim \norm{f}_{H^{-1} \oplus H^{-1}} + |\lambda|^2 \norm{\phi}_{H^{1/2}(\partial \Omega)} + \norm{\psi}_{H^{-1/2}(\partial \Omega)}.
\end{align*}
\end{proof}

\begin{proposition}
  \label{prop:boundary-op-bounds}
For any $R > 0$, the operators
\begin{align*}
&|\lambda|^{-2} S_\lambda^{-1}: H^{1/2} \to H^{-1/2}, \qquad
|\lambda|^{-2} Q_\lambda: H^{1/2} \to H^{-1/2},\\
&Q_\lambda^{-1}: H^{-1/2} \to H^{1/2}, \qquad
N_\lambda^{-1}: H^{-1/2} \to H^{1/2}
\end{align*}
are uniformly bounded for $\lambda \in \mathfrak{D}_\epsilon \setminus B_R(0)$.
\end{proposition}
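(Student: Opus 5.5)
We will derive every bound from Corollary \ref{cor:helmholtz-uniqueness-estimate}, applied to the interior domain $\Omega$ and to the exterior domain $\Omega^+$. In the exterior case we use the same integration by parts on $\Omega^+$; for $\operatorname{Im}\lambda>0$ the solutions decay exponentially, so the boundary term at infinity drops out. Combining the Corollary with the bounded trace maps $\gamma_D^\pm: H^1\to H^{1/2}$ and $\gamma_N^\pm: H^1_\Delta\to H^{-1/2}$ produces each operator bound. The trace bounds need care when $\Delta u=-\lambda^2 u$. The weak definition of the Neumann trace gives $\norm{\gamma_N u}_{H^{-1/2}}\lesssim \norm{u}_{H^1}+\norm{\Delta u}_{H^{-1}}$ in its $H^1$-version, the pairing against an $H^1$-extension of the test function. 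This yields $\norm{\gamma_N u}_{H^{-1/2}}\lesssim (1+|\lambda|^2)\norm{u}_{H^1}$, which is the origin of the $|\lambda|^2$ factors; this is the same mechanism invoked via Lemma \ref{lem:continuity-neumann-trace} in the proof of Theorem \ref{thm:helmholtz-transmission-lambda-estimate}.

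\emph{Dirichlet-to-Neumann operators.} Fix $\phi\in H^{1/2}(\partial\Omega)$ and let $u$ solve the homogeneous Helmholtz problem in $\Omega^\pm$ with $\gamma_D u=\phi$. By the first estimate of Corollary \ref{cor:helmholtz-uniqueness-estimate} (with $f=0$), $\norm{u}_{H^1}\lesssim|\lambda|^2\norm{\phi}$. Hence $\norm{Q^\pm_\lambda\phi}=\norm{\gamma_N u}\lesssim|\lambda|^2\norm{u}_{H^1}$, which is polynomial in $|\lambda|$. To get exactly the stated power $|\lambda|^{-2}Q_\lambda$, we would sharpen this using Theorem \ref{thm:helmholtz-lambda-estimate} directly. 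That theorem gives $\norm{u}^2\lesssim\norm{\phi}\norm{\gamma_N u}$. Combining it with $\norm{\gamma_N u}\lesssim|\lambda|^2\norm{u}$, and alternatively with $\norm{\gamma_N u}\lesssim\norm{u}+|\lambda|\norm{u}_{L^2}$, we expect to close the bound at $|\lambda|^2$. For the inverse, let $\psi\in H^{-1/2}$ be given. The Neumann problem in $\Omega^\pm$ is uniquely solvable for $\operatorname{Im}\lambda>0$, and the second estimate of the Corollary gives $\norm{u}_{H^1}\lesssim\norm{\psi}$. Therefore $\norm{(Q^\pm_\lambda)^{-1}\psi}=\norm{\gamma_D u}\lesssim\norm{\psi}$ uniformly, with no loss in $\lambda$.

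\emph{Inverses of $S_\lambda$ and $N_\lambda$.} We use the identity $S_\lambda^{-1}=Q^-_\lambda+Q^+_\lambda$ from \eqref{eq:DtN-rel}. The bound on $|\lambda|^{-2}Q^\pm_\lambda$ then immediately gives the bound on $|\lambda|^{-2}S_\lambda^{-1}$. For $N_\lambda$ we avoid the algebraic relations and argue directly. Given $\psi\in H^{-1/2}$, look for $u=\widetilde D_\lambda\phi$ with $N_\lambda\phi=\psi$. Then $u_\pm$ solves the interior and exterior Neumann problems with the common Neumann datum $\gamma_N^\pm u=\psi$, by \eqref{eq:jump-rel-4}. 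Both problems are uniquely solvable in $\mathfrak D_\epsilon$, so $u_\pm$ are determined by $\psi$, and the second estimate of the Corollary gives $\norm{u_\pm}_{H^1}\lesssim\norm{\psi}$. By the jump relation \eqref{eq:jump-rel-2}, $\phi=[\gamma_D u]=\gamma_D^+u_+-\gamma_D^-u_-$, so $\norm{N_\lambda^{-1}\psi}\lesssim\norm{\psi}$. It remains to justify the representation: starting from the two Neumann solutions and setting $\phi$ equal to the Dirichlet jump, the third Green identity (Lemma \ref{lem:third-green}) recovers $u=\widetilde D_\lambda\phi$. Applying $\gamma_N$ then gives $N_\lambda\phi=\psi$, which also re-proves invertibility of $N_\lambda$.

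The main obstacle is the exterior estimate. Theorem \ref{thm:helmholtz-lambda-estimate} is stated for a bounded $\Omega$, so for $\Omega^+$ we must check that the constants stay uniform in $\lambda\in\mathfrak D_\epsilon\setminus B_R(0)$. For $\lambda$ in the sector the solutions lie in $H^1(\Omega^+)$ with exponential decay, so the Green identity over $\Omega^+$ has no contribution from infinity. The coercivity argument on the three subsectors $\Gamma_1,\Gamma_2,\Gamma_3$ is then unchanged, and the exterior version of the estimate should follow verbatim.
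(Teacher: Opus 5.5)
Your bounds for $Q_\lambda^{-1}$, the reduction $S_\lambda^{-1}=Q_\lambda^-+Q_\lambda^+$, and your $N_\lambda^{-1}$ argument agree with the paper. Your $N_\lambda^{-1}$ step is the paper's a priori estimate run constructively, and together with index zero it also gives invertibility.

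The gap is the exponent in $|\lambda|^{-2}Q_\lambda$, and therefore in $|\lambda|^{-2}S_\lambda^{-1}$. Your chain only yields $\norm{Q_\lambda^\pm\phi}\lesssim|\lambda|^2\norm{u}_{H^1}\lesssim|\lambda|^4\norm{\phi}$. The sharpening you hope for, $\norm{\gamma_N u}_{H^{-1/2}}\lesssim\norm{u}_{H^1}+|\lambda|\norm{u}_{L^2}$, is false. Take $\Omega$ the unit ball in $\mathbb{R}^3$, $\lambda=\rmi t$ and $u(x)=\sinh(t|x|)/(|x|\sinh t)$. Then $\gamma_D u=1$ and $\gamma_N u=t\coth t-1\sim t$, whereas $\norm{u}_{H^1}\sim t^{1/2}$ and $t\norm{u}_{L^2}\sim t^{1/2}$. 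The same example shows that the shortcut $\norm{\gamma_1 u}\lesssim\norm{u}$ in the paper's own proof is not uniform in $\lambda$ either, so this exponent genuinely needs an argument.

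The missing idea is to keep the weight $|\lambda|^2$ that the coercivity argument produces, instead of absorbing it into $\norm{u}_{H^1}$. For a solution of $(-\Delta-\lambda^2)u=0$ in $\Omega^\pm$, set $X(u)^2:=\norm{\nabla u}_{L^2}^2+|\lambda|^2\norm{u}_{L^2}^2$. The first Green identity gives $\norm{\nabla u}_{L^2}^2-\lambda^2\norm{u}_{L^2}^2=\pm\int_{\partial\Omega}\overline{\gamma_D u}\,\gamma_N u$, with the sign depending on interior or exterior. Multiplying by $\overline{\lambda}$ and taking imaginary parts gives
\[
\operatorname{Im}(\lambda)\,X(u)^2\le|\lambda|\,\Bigl|\int_{\partial\Omega}\overline{\gamma_D u}\,\gamma_N u\Bigr|,
\qquad\text{hence}\qquad
X(u)^2\le\frac{C}{\sin\epsilon}\norm{\gamma_D u}_{H^{1/2}}\norm{\gamma_N u}_{H^{-1/2}}
\]
on all of $\mathfrak{D}_\epsilon$, with no splitting into $\Gamma_1,\Gamma_2,\Gamma_3$ needed.

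Next, test the weak definition of $\gamma_N u$ against a bounded $H^1$-extension of $g\in H^{1/2}(\partial\Omega)$. This gives $\norm{\gamma_N u}_{H^{-1/2}}\lesssim\norm{\nabla u}_{L^2}+|\lambda|^2\norm{u}_{L^2}\le(1+|\lambda|)X(u)$. Combining the two estimates, $X(u)\lesssim(1+|\lambda|)\norm{\phi}_{H^{1/2}}$, and so $\norm{Q_\lambda^\pm\phi}_{H^{-1/2}}\lesssim(1+|\lambda|)^2\norm{\phi}_{H^{1/2}}$. This is the claimed bound for $|\lambda|\ge R$, and the $|\lambda|^{-2}S_\lambda^{-1}$ bound then follows as you say.

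Since $\norm{\gamma_D u}_{H^{1/2}}\lesssim\norm{u}_{H^1}\le\max(1,R^{-1})X(u)$, the same weighted inequality also reproduces your $Q_\lambda^{-1}$ and $N_\lambda^{-1}$ bounds, for interior and exterior solutions alike.
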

\begin{proof}
Corollary \ref{cor:helmholtz-uniqueness-estimate} shows that if $(\Delta + \lambda^2) u = 0$, then
\begin{align*}
&\norm{\gamma_0 u} \lesssim \norm{u} \lesssim  \norm{\gamma_1 u}\\
&\norm{\gamma_1 u} \lesssim \norm{u} \lesssim |\lambda|^2 \norm{\gamma_0 u}
\end{align*}
which proves the boundedness of $Q_\lambda, Q_\lambda^{-1}$.

Since $S_\lambda^{-1} = Q_\lambda^- + Q_\lambda^+$ this implies boundedness for $S_\lambda^{-1}$.

Let us consider the case of $N_\lambda^{-1}$.
If we have $N_\lambda \phi = \psi$, we need to estimate $\norm{\phi}_{H^{1/2}}$ by $\norm{\psi}_{H^{-1/2}}$.
The idea is to put $u := \widetilde{D}_\lambda \phi$, and use the jump relations to recover $\phi$.
In $\mathfrak{D}_\epsilon \setminus B_R(0)$, the boundary layer potential $\widetilde{D}_\lambda$ decays exponentially, so that $u$ is certainly in $H^1(\Omega_+)\oplus H^1(\Omega_-)$.

We now have $\gamma_1 u = N_\lambda \phi$ and $[\gamma_0 u] = \phi$.
Thus by Corollary \ref{cor:helmholtz-uniqueness-estimate}:
\begin{align*}
\norm{\phi}_{H^{1/2}} = \norm{\gammaDir{+} u - \gammaDir{-}u} \leq \norm{\gammaDir{+} {u}}_{H^{1/2}} + \norm{\gammaDir{-} {u}}_{H^{1/2}}
\lesssim \norm{{u}_+}_{H^{1}} + \norm{{u}_-}_{H^1} \lesssim \norm{\gammaNeu{} u}_{H^{-1/2}} =\norm{N_\lambda \phi}.
\end{align*}
This proves boundedness of $N_\lambda^{-1}$.
\end{proof}

\begin{proposition}
  \label{prop:cald-op-inv-bound}
For all $R> 0$, $\epsilon > 0$, there exists a constant $C_{R, \epsilon} > 0$ such that
\begin{align*}
\norm{\Phi}_{\mathscr{C}} \leq C_{R, \epsilon} |\lambda|^2 \norm{H_\lambda \Phi}_{\mathscr{C}}
\end{align*}
for all $\lambda\in \mathfrak{D}_\epsilon\setminus B_R(0)$.
\end{proposition}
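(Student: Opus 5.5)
The plan is to turn $\Phi$ into the Cauchy data of a solution of a transmission problem with transmission data $H_\lambda\Phi$, and then apply Theorem \ref{thm:helmholtz-transmission-lambda-estimate}. Put $\Psi := H_\lambda\Phi$. Using the decompositions $\mathscr{C} = \mathcal{B}^+_{\lambda/\kappa_\pm}\oplus\mathcal{B}^-_{\lambda/\kappa_\pm}$, define
\[
\Phi_- := P^-_{\lambda/\kappa_-}\Phi,\qquad \Phi_+ := P^+_{\lambda/\kappa_+}\M\Phi .
\]
Then $\Psi = \Phi_+ - \M\Phi_-$ by the first expression for $H_\lambda$ in \eqref{eq:defn-Hlambda-operator}. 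Let $u_- := -\CLP_{\lambda/\kappa_-}\Phi_-$ restricted to $\Omega^-$ and $u_+ := \CLP_{\lambda/\kappa_+}\Phi_+$ restricted to $\Omega^+$. By Lemma \ref{lem:calderon-projectors}, these satisfy $\gammaCau{-}u_- = \Phi_-$ and $\gammaCau{+}u_+ = \Phi_+$. Since $\operatorname{Im}\lambda>0$, both are $H^1$ solutions of the homogeneous Helmholtz equations with wavenumbers $\lambda/\kappa_\mp$, and $u_+$ decays exponentially. So $u=u_++u_-$ solves \eqref{eq:transmission-problem-def} with $f=0$ and transmission data $\Psi$. (The choice $\kappa_\pm,\nu_i>0$ keeps $\lambda/\kappa_\pm$ inside $\mathfrak{D}_\epsilon\setminus B_{R'}(0)$ for a suitable $R'$.) Theorem \ref{thm:helmholtz-transmission-lambda-estimate}, in the form of the following remark, then gives
\[
\norm{u_+}_{H^1}+\norm{u_-}_{H^1}\le C|\lambda|^2\norm{\Psi}_{\mathscr{C}} .
\]
The estimate was written for $\kappa_\pm$-scaled operators, so I will check that the rescaling of $f$ and the $|\lambda|^2$ factor carry over. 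Only constants change.

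Next, bound the Cauchy data by the $H^1$ norms. The Dirichlet traces are controlled by continuity of $\gammaDir{\pm}$. The Neumann traces are the delicate part, since Lemma \ref{lem:continuity-neumann-trace} controls $\gammaNeu{\pm}u$ in the $H^1_\Delta$ norm. Because $\Delta u_\pm = -(\lambda/\kappa_\pm)^2u_\pm$, this costs another factor $|\lambda|^2$. To avoid that loss I would use instead the weak definition of the Neumann trace via the first Green identity with the extension operator $\eta_\pm$:
\[
\Big|\int_{\partial\Omega}\phi\,\gammaNeu{}u\Big|\lesssim \norm{u}_{H^1}\norm{\eta\phi}_{H^1}+|\lambda|^2\norm{u}_{L^2}\norm{\eta\phi}_{L^2}.
\]
This still appears to lose $|\lambda|^2$. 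So the main obstacle is obtaining exactly the power $|\lambda|^2$ rather than $|\lambda|^4$. My approach is to mimic the proof of Proposition \ref{prop:boundary-op-bounds}: use Corollary \ref{cor:helmholtz-uniqueness-estimate} in the direction $\norm{\gamma_1 u}\lesssim\norm{u}\lesssim|\lambda|^2\norm{\gamma_0u}$. The loss would be absorbed by treating Dirichlet and Neumann components separately, as in the proof of Theorem \ref{thm:helmholtz-transmission-lambda-estimate}, where the $\psi$-part carries no $|\lambda|^2$. If that fails, any fixed polynomial power suffices for the later applications, and I would accept a worse power and adjust the statement's exponent.

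Finally, recover $\Phi$ from $\Phi_-$. Write $\Phi = \Phi_- + P^+_{\lambda/\kappa_-}\Phi$ and apply the intertwining relation \eqref{eq:calderon-intertwiner}:
\[
H_\lambda P^+_{\lambda/\kappa_-}\Phi = P^+_{\lambda/\kappa_+}\Psi = \M P^+_{\lambda/\kappa_-}\Phi .
\]
The last equality holds because $H_\lambda=\M P^+_{\lambda/\kappa_-}-P^-_{\lambda/\kappa_+}\M$, and on $\mathcal{B}^+_{\lambda/\kappa_-}$ the second term maps into $\mathcal B^-$ while the left-hand side lies in $\mathcal B^+$. This gives $P^+_{\lambda/\kappa_-}\Phi = \M^{-1}P^+_{\lambda/\kappa_+}\Psi$, but that would again require bounding a Calderon projector polynomially. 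A cleaner route is to note directly that $\M\Phi_- = \Phi_+ - \Psi$, and that $\Phi_+$ and $\Phi_-$ are both bounded by the Cauchy data estimate. The same argument with $\Phi$ replaced by its $\mathcal B^+$ component handles the other piece through the analogous data $u$ built from $P^+_{\lambda/\kappa_-}\Phi$ and $P^-_{\lambda/\kappa_+}\M\Phi$ (the "adjoint" transmission problem). Summing the two pieces gives $\norm{\Phi}_{\mathscr{C}}\le C_{R,\epsilon}|\lambda|^2\norm{H_\lambda\Phi}_{\mathscr{C}}$.
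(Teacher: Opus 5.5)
Your argument is correct in substance. It is essentially the paper's argument with leaner bookkeeping.

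The paper first splits $\Phi$ by $P^\pm_{\lambda/\kappa_-}$ and uses \eqref{eq:calderon-intertwiner} to reduce to the cases $H_\lambda\Phi\in\mathcal{B}^\mp_{\lambda/\kappa_+}$. In each case it invokes Assumption~$\widetilde{A}$, resp.\ Assumption~A, through an auxiliary homogeneous system. This shows that $\Phi$ \emph{is} the interior data of a direct transmission problem, resp.\ the exterior data of a swapped one. Only then does it apply Theorem~\ref{thm:helmholtz-transmission-lambda-estimate}.

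You instead read both components directly off the two expressions in \eqref{eq:defn-Hlambda-operator}, applied to the full $\Phi$:
\begin{align*}
H_\lambda\Phi = P^+_{\lambda/\kappa_+}\M\Phi - \M P^-_{\lambda/\kappa_-}\Phi = \M P^+_{\lambda/\kappa_-}\Phi - P^-_{\lambda/\kappa_+}\M\Phi .
\end{align*}
The first identity makes $(P^-_{\lambda/\kappa_-}\Phi,\,P^+_{\lambda/\kappa_+}\M\Phi)$ the interior/exterior data of the direct problem with transmission data $H_\lambda\Phi$. The second makes $(P^-_{\lambda/\kappa_+}\M\Phi,\,P^+_{\lambda/\kappa_-}\Phi)$ the data of a swapped problem: wavenumber $\lambda/\kappa_+$ inside, $\lambda/\kappa_-$ outside, matrix $\M^{-1}$, data $\M^{-1}H_\lambda\Phi$. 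Theorem~\ref{thm:helmholtz-transmission-lambda-estimate} applies to it because all parameters stay positive. Your route needs neither an explicit appeal to Assumptions A, $\widetilde{A}$ nor any bound on Calder\'on projectors.

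State the second problem in exactly this form. The phrase ``the same argument with $\Phi$ replaced by its $\mathcal{B}^+$ component'', read literally, gives zero interior data and transmission data $P^+_{\lambda/\kappa_+}H_\lambda\Phi$, which you cannot control.

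Two corrections.

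\textbf{The intertwining detour is wrong; drop it.} The identity $P^+_{\lambda/\kappa_+}H_\lambda\Phi=\M P^+_{\lambda/\kappa_-}\Phi$ is false. In fact
\begin{align*}
H_\lambda P^+_{\lambda/\kappa_-}\Phi=\M P^+_{\lambda/\kappa_-}\Phi-P^-_{\lambda/\kappa_+}\M P^+_{\lambda/\kappa_-}\Phi ,
\end{align*}
and the last term does not vanish, because $\M$ does not map $\mathcal{B}^+_{\lambda/\kappa_-}$ into $\mathcal{B}^+_{\lambda/\kappa_+}$. Your range argument only yields $H_\lambda P^+_{\lambda/\kappa_-}\Phi=P^+_{\lambda/\kappa_+}\M P^+_{\lambda/\kappa_-}\Phi$.

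\textbf{The exponent.} You are right that $\norm{\gammaCau{\pm}u}_{\mathscr{C}}\lesssim\norm{u}_{H^1}$ is not uniform in $\lambda$. The paper's proof simply asserts $\norm{\gammaCau{-}u}_{\mathscr{C}}\lesssim\norm{u}_{H^1\oplus H^1}$ at this point, so you are not behind it. But your proposed repair does not address the problem. Splitting the data into Dirichlet and Neumann parts improves the bound on $\norm{u}_{H^1}$. The loss, however, comes from the Neumann trace of the solution itself, via $\Delta u=-(\lambda/\kappa_\pm)^2u$. The inequality $\norm{\gamma_1 u}\lesssim\norm{u}$ you would borrow from the proof of Proposition~\ref{prop:boundary-op-bounds} is the same unjustified step. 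Unweighted bookkeeping gives $|\lambda|^4$.

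A clean fix is to run the energy argument in the norm $\norm{u}_{1,\lambda}^2:=\norm{\nabla u}^2+|\lambda|^2\norm{u}^2$:
\begin{itemize}
\item the suitably weighted and rotated form is uniformly coercive in this norm on $\mathfrak{D}_\epsilon$;
\item cutting off a fixed extension to a boundary layer of width $|\lambda|^{-1}$ shows that the induced trace norms differ from the $H^{\pm1/2}$ norms by at most a factor $|\lambda|^{1/2}$;
\item Green's identity gives $\norm{\gammaNeu{\pm}u}_{H^{-1/2}}\lesssim|\lambda|^{1/2}\norm{u}_{1,\lambda}$.
\end{itemize}
Combined, these give $\norm{\Phi}_{\mathscr{C}}\lesssim|\lambda|\,\norm{H_\lambda\Phi}_{\mathscr{C}}$. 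This yields the stated bound for $|\lambda|\ge R$, so there is no need to weaken the exponent.
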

\begin{proof}
We estimate the norm of a solution $\Phi$ of
$H_\lambda \Phi = P_{\lambda/\kappa_+}^\pm \Phi_0$, treating the plus and minus cases separately (note that the spectral parameter is the \emph{exterior spectral parameter} $\lambda/\kappa_+$ in both cases!).
The Proposition is essentially a quantitative version of Theorem \ref{thm:transmission-problem-equivalence}, (ii); therefore our strategy will closely follow the proof of Proposition 4.2 in \cite{costabelDirectBoundaryIntegral1985} (the proof is mostly contained in the discussion following Lemma 4.2 loc.~cit.).

Specifically, we will construct an auxiliary problem which has $\Phi$ as the boundary data and $\Phi_0$ as the transmission data.
We can then apply Theorem \ref{thm:helmholtz-transmission-lambda-estimate} to estimate $\Phi$ in terms of $\Phi_0$.

Suppose first that $\Phi$ is a solution of
\(
H_\lambda \Phi = P_{\lambda/\kappa_+}^- \Phi_0.
\) 
If we put
\begin{align*}
\Phi_- := P_{\lambda/\kappa_-}^- \Phi,\qquad \Phi_+ := P_{\lambda/\kappa_+}^+ \left(\M \Phi + \Phi_0\right)
\end{align*}
then
\(
\Phi_+ - \M \Phi_- = \Phi_0.
\)
$\Phi_\pm$ are therefore the boundary data of a transmission problem with transmission data $\Phi_0$.

We now construct the data of a transmission problem with zero transmission data.
This involves exchanging the roles of interior and exterior domains.
Define
\begin{align*}
&\widetilde{\Phi}_- :=
\M \Phi + \Phi_0 - \Phi_+ = P_{\lambda/\kappa_+}^- \left(\M \Phi + \Phi_0 \right)\\
&\widetilde{\Phi}_+ :=
\Phi - \Phi_-
= P_{\lambda/\kappa_-}^+ \Phi.
\end{align*}
These data now satisfy
\begin{align*}
P_{\lambda/\kappa_+}^+ \widetilde{\Phi}_- = 0,\qquad
P_{\lambda/\kappa_-}^- \widetilde{\Phi}_+ = 0,\qquad
\M \widetilde{\Phi}_+ - \widetilde{\Phi}_- = 0.
\end{align*}

By Assumption $\widetilde{\mathrm{A}}$, this system has only the zero solution.
Hence
\begin{align*}
&\Phi_+ = \M \Phi + \Phi_0\\
&\Phi_- =\Phi.
\end{align*}

Now let $u_+$ be the exterior solution with boundary data $\Phi_+$ and $u_-$ the interior solution with boundary data $\Phi_-$.
Then using Theorem \ref{thm:helmholtz-transmission-lambda-estimate}, we have
\begin{align*}
\norm{ \Phi}_{\mathscr{C}}
= \norm{\gamma_- u}_{\mathscr{C}}
\lesssim \norm{u}_{H^1 \oplus H^1}
\lesssim |\lambda|^2 \norm{\Phi_0}_{\mathscr{C}}.
\end{align*}

Now suppose that $\Phi$ solves $H_\lambda \Phi = P_{\lambda/\kappa_+}^+ \Phi_0$.
We put $\Phi_+ := P_{\lambda/\kappa_-}^+ \Phi$ and $\Phi_- := P_{\lambda/\kappa_+}^- (\M\Phi - \Phi_0)$.

We then have
\begin{align*}
  \Phi_- = P_{\lambda/\kappa_+}^- \M \Phi - (1-P_{\lambda/\kappa_+}^+)\Phi_0
  = (P_{\lambda/\kappa_+}^- \M + H_\lambda)\Phi - \Phi_0
  = \M P_{\lambda/\kappa_-}^+ \Phi - \Phi_0 = \M \Phi_+ - \Phi_0.
\end{align*}

Now we solve an auxiliary homogeneous problem.
Put 
\begin{align*}
  &\widetilde{\Phi}_+ := P_{\lambda/\kappa_+}^+(\M\Phi - \Phi_0) = \M\Phi - \Phi_0 - \Phi_-\\
  &\widetilde{\Phi}_- := P_{\lambda/\kappa_-}^- \Phi = \Phi - \Phi_+.
\end{align*}
These data now solve $\M \widetilde{\Phi}_- - \widetilde{\Phi}_+ = 0$.
By Assumption A, we must have $\widetilde{\Phi}_- = 0$, $\widetilde{\Phi}_+ = 0$.

Now let $u$ be the unique solution with interior/exterior boundary data $\Phi_\pm$.
Again using Theorem \ref{thm:helmholtz-transmission-lambda-estimate}
we obtain
\begin{align*}
\norm{\Phi}_{\mathscr{C}} = \norm{\gammaCau{+} u}_{\mathscr{C}} \lesssim \norm{u}_{H^1 \oplus H^1}
\lesssim |\lambda|^2 \norm{\Phi_0}_{\mathscr{C}}.
\end{align*}

Putting everything together, suppose that $H_\lambda \Phi =\Phi_0$.
Put $\Phi_1 := P_{\lambda/\kappa_-}^+ \Phi$,
$\Phi_2 := P_{\lambda/\kappa_-}^{-} \Phi$.
By the intertwining relation (\ref{eq:calderon-intertwiner}), $\Phi_{1,2}$ solve $H_\lambda \Phi_1 = P_{\lambda/\kappa_+}^+ \Phi_0$ and $H_\lambda \Phi_2 = P_{\lambda/\kappa_+}^- \Phi_0$.

Thus we can apply estimates already established to $\Phi_1$ and $\Phi_2$:
\begin{align*}
\norm{\Phi}_{\mathscr{C}} = \norm{\Phi_1 + \Phi_2}_{\mathscr{C}}
\leq \norm{\Phi_1}_{\mathscr{C}} + \norm{\Phi_2}_{\mathscr{C}}
\lesssim |\lambda|^2 \norm{\Phi_0}_{\mathscr{C}} = |\lambda|^2 \norm{H_\lambda \Phi}_{\mathscr{C}}.
\end{align*}
\end{proof}

As a corollary, we obtain for $H_\lambda$ a similar estimate as for the other boundary layer operators in Proposition \ref{prop:boundary-op-bounds}.

\begin{corollary}
The operator $H_\lambda: \mathscr{C} \to \mathscr{C}$ is invertible, and for all $R> 0$, the operator
\begin{align*}
  &\lambda^{-2} H_\lambda^{-1}: \mathscr{C} \to \mathscr{C}
\end{align*}
is uniformly bounded for $\lambda\in \mathfrak{D}_\epsilon\setminus B_R(0)$.
\end{corollary}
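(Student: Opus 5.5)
The corollary has two parts: invertibility of $H_\lambda$ on $\mathscr{C}$, and a uniform bound on $\lambda^{-2}H_\lambda^{-1}$ away from the origin. I would get both directly from Proposition \ref{prop:cald-op-inv-bound} combined with the Fredholm statement in Theorem \ref{thm:transmission-problem-equivalence}.

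\emph{Invertibility.} Fix $\lambda$ with $\operatorname{Im}(\lambda)>0$. By Theorem \ref{thm:transmission-problem-equivalence} (ii), $H_\lambda$ is Fredholm of index zero, so it suffices to show $\ker H_\lambda=\{0\}$. This follows in either of two ways.
\begin{itemize}
\item By Theorem \ref{thm:transmission-problem-equivalence} (iii), with $\kappa_\pm,\nu_0,\nu_1>0$ and $\operatorname{Im}\lambda>0$, Assumptions A and $\widetilde{A}$ both hold, so (ii) gives bijectivity directly.
\item Alternatively, choose $\epsilon$ small and $R<|\lambda|$ so that $\lambda\in\mathfrak{D}_\epsilon\setminus B_R(0)$. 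The a priori estimate $\norm{\Phi}_{\mathscr{C}}\le C_{R,\epsilon}|\lambda|^2\norm{H_\lambda\Phi}_{\mathscr{C}}$ forces the kernel to be trivial.
\end{itemize}
Either way, index zero gives surjectivity, and the bounded inverse theorem gives $H_\lambda^{-1}\in\mathcal{L}(\mathscr{C})$.

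\emph{Uniform bound.} Given $\Psi\in\mathscr{C}$, apply Proposition \ref{prop:cald-op-inv-bound} to $\Phi=H_\lambda^{-1}\Psi$. This gives
\[
\norm{H_\lambda^{-1}\Psi}_{\mathscr{C}}\le C_{R,\epsilon}|\lambda|^2\norm{\Psi}_{\mathscr{C}}.
\]
Hence $\norm{\lambda^{-2}H_\lambda^{-1}}_{\mathcal{L}(\mathscr{C})}\le C_{R,\epsilon}$ for all $\lambda\in\mathfrak{D}_\epsilon\setminus B_R(0)$. The constant is independent of $\lambda$, which is exactly the claimed uniform boundedness.

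\emph{Main point to check.} The only delicate step is the invertibility claim, and specifically that the Fredholm property in Theorem \ref{thm:transmission-problem-equivalence} (ii) really holds on Lipschitz domains. That proof itself invokes the a priori estimate of Proposition \ref{prop:cald-op-inv-bound}, so there is a risk of circularity. The circularity is avoided because the proof of Proposition \ref{prop:cald-op-inv-bound} uses only Assumptions A and $\widetilde{A}$ (Theorem \ref{thm:transmission-problem-equivalence} (iii)), which are integration-by-parts facts, together with Theorem \ref{thm:helmholtz-transmission-lambda-estimate}. It never uses invertibility of $H_\lambda$.

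With injectivity and closed range in hand from the a priori estimate, surjectivity follows from index zero. The index-zero property in turn follows by continuity of the index along the holomorphic family $H_\lambda$: take a path from $\lambda=\rmi$, where $H_\rmi$ is the standard coercive Costabel–Stephan operator, and use that $H_\lambda-H_\rmi$ is compact by Theorem \ref{thm:boundary-ops-fredholm} (2).
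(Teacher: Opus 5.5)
Your proposal is correct and follows the route the paper intends (the corollary is stated without a separate proof): invertibility comes from Theorem~\ref{thm:transmission-problem-equivalence}~(ii)--(iii), and the bound comes from applying Proposition~\ref{prop:cald-op-inv-bound} to $\Phi = H_\lambda^{-1}\Psi$. Your extra index-zero argument via the compact perturbation $H_\lambda - H_{\rmi}$ is a sound, more self-contained alternative to the paper's citation of Costabel--Stephan, with two caveats: $H_{\rmi}$ is coercive only in a twisted sense (pairing the two components of $H_{\rmi}\Phi$ against weighted conjugates of $\psi$ and $\phi$ respectively), and this uses coercivity of $S$ and $-N$ at $\rmi/\kappa_\pm$, not only at $\rmi$.
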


\section{Relative trace formula.}

\subsection{Trace of the Neumann resolvent.}
\begin{proposition}[Krein type resolvent formula]
    Let $\Omega \subset \mathbb{R}^d$ be a Lipschitz domain (not necessarily connected).
    The resolvent with Neumann boundary conditions at $\partial\Omega$ is given by
    \begin{align}
        (-\Delta_N - \lambda^2)^{-1} - (-\Delta_0-\lambda^2)^{-1} = -\widetilde{D}_\lambda N_\lambda^{-1} \widetilde{D}_\lambda' : H^{-1}(\Omega^-) \oplus H^{-1}(\Omega^+) \to H^1(\Omega^-) \oplus H^1(\Omega^+).
    \end{align}
\end{proposition}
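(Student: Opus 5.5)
Fix $\lambda \in \mathbb{C}^+$. For $f \in H^{-1}(\Omega^-) \oplus H^{-1}(\Omega^+)$ (in practice $f\in L^2(\mathbb{R}^d)$, with the general case following by density and continuity), set
\[
u_0 := (-\Delta_0-\lambda^2)^{-1} f, \qquad w := u_0 - \widetilde{D}_\lambda N_\lambda^{-1}\widetilde{D}_\lambda' f .
\]
The plan is to show that $w$ is the unique $H^1(\Omega^-)\oplus H^1(\Omega^+)$ solution of the Neumann problem $(-\Delta-\lambda^2)w=f$ on $\Omega^\pm$ with $\gammaNeu{\pm} w = 0$. Since $(-\Delta_N-\lambda^2)^{-1}f$ is also such a solution, the two must coincide.

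First, $N_\lambda^{-1}$ exists. Theorem \ref{thm:boundary-ops-fredholm} shows that $N_\lambda$ is Fredholm of index $0$, and its injectivity for $\Im\lambda>0$ follows as in Proposition \ref{prop:boundary-op-bounds}. Suppose $N_\lambda\phi=0$ and put $u=\widetilde{D}_\lambda\phi$. Then $\gammaNeu{\pm}u=0$, and $u$ decays exponentially. Uniqueness for the interior and exterior Neumann problems at non-real $\lambda^2$ (Green's identity with $\Im\lambda^2\neq0$, or $\Re$ on the imaginary axis) gives $u=0$, hence $\phi=[\gammaDir{}u]=0$.

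The key step is the identification of $\widetilde{D}_\lambda' f$. Here $\widetilde{D}_\lambda'$ is the transpose of $\widetilde{D}_\lambda = G_\lambda (\gammaNeu{})^*$, so
\[
\widetilde{D}_\lambda' f = \gammaNeu{} G_\lambda f = \gammaNeu{} u_0 .
\]
Since $u_0\in H^1_{\mathrm{loc}}$ solves the Helmholtz equation with $H^{-1}$ source on each side, its Neumann traces make sense. One must be careful with $f\in H^{-1}(\Omega^\pm)$, where the Neumann trace depends on how $f$ is extended across $\partial\Omega$. I would first prove the identity for $f\in L^2$, where $u_0\in H^2_{\mathrm{loc}}$, $\gammaNeu{+}u_0=\gammaNeu{-}u_0$, and the transpose identity is literally the bilinear pairing computation with $\gammaNeu{}$. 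The statement on $H^{-1}$ then follows by interpreting $\widetilde{D}_\lambda'$ through this transpose formula and extending by continuity.

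With this in hand, $\gammaNeu{\pm} w = \gammaNeu{}u_0 - N_\lambda N_\lambda^{-1}\gammaNeu{}u_0 = 0$, using the jump relation (\ref{eq:jump-rel-4}) $\gammaNeu{\pm}\widetilde{D}_\lambda = N_\lambda$. The double layer potential solves the homogeneous Helmholtz equation off $\partial\Omega$, so $(-\Delta-\lambda^2)w=f$ in $\Omega^\pm$. The remaining conditions are also satisfied: $w|_{\Omega^\pm}\in H^1$, because $\widetilde{D}_\lambda$ maps $H^{1/2}(\partial\Omega)$ into exponentially decaying $H^1$ functions for $\Im\lambda>0$ and $u_0\in H^1$.

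It remains to check that $w$ is exactly the resolvent of the Neumann Laplacian. The Neumann Laplacian $\Delta_N$ is the form operator on $H^1(\Omega^-)\oplus H^1(\Omega^+)$, so its resolvent applied to $f$ is characterized by the weak formulation
\[
\int \nabla w\cdot\nabla v - \lambda^2 \int w v = \langle f, v\rangle \qquad \text{for all } v\in H^1(\Omega^-)\oplus H^1(\Omega^+).
\]
By the first Green identity on each side, this is equivalent to the equation in the bulk together with vanishing Neumann traces. Uniqueness (injectivity of $-\Delta_N-\lambda^2$ for $\Im\lambda>0$) then finishes the proof. I expect the main obstacle to be the trace bookkeeping of the key step on Lipschitz domains, namely making $\widetilde{D}_\lambda' = \gammaNeu{}G_\lambda$ rigorous as a bounded map $H^{-1}(\Omega^-)\oplus H^{-1}(\Omega^+)\to H^{-1/2}(\partial\Omega)$. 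I would handle this via the pairing definition of the Neumann trace and density of $L^2$.
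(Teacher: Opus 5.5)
Your proof is correct, but it runs in the opposite direction from the paper's. The paper starts from $u=(-\Delta_N-\lambda^2)^{-1}f$ and applies the third Green identity $u=(-\Delta_0-\lambda^2)^{-1}f+\widetilde D_\lambda[\gamma_D u]-\widetilde S_\lambda[\gamma_N u]$. It drops the single layer term because $\gammaNeu{\pm}u=0$, takes Neumann traces to get $0=\widetilde D_\lambda' f+N_\lambda[\gamma_D u]$, and solves for the Dirichlet jump. So the formula is \emph{derived} from a representation of the unknown solution. You instead \emph{verify} an explicit candidate: the jump relation $\gammaNeu{\pm}\widetilde D_\lambda=N_\lambda$ together with $\widetilde D_\lambda' f=\gammaNeu{}(-\Delta_0-\lambda^2)^{-1}f$ makes the Neumann traces vanish, the first Green identity puts $w$ in the domain of the form operator, and injectivity of $-\Delta_N-\lambda^2$ for $\Im\lambda>0$ finishes the argument. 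Both routes use the same identification of $\widetilde D_\lambda' f$ and the invertibility of $N_\lambda$; you prove the latter on the spot by the same Fredholm-plus-kernel argument the paper uses in Proposition \ref{prop:boundary-op-bounds}. The paper's route is shorter and shows where the formula comes from. However, it tacitly applies the third Green identity, which is stated only for compactly supported $u$, to the exponentially decaying Neumann resolvent, and it never says for which $f$ the traces are taken. Your route needs only uniqueness for a self-adjoint operator and makes the passage from $L^2$ data to rougher data explicit. For that step, state that the continuity extension is in the dual of $H^1(\Omega^-)\oplus H^1(\Omega^+)$, where $L^2$ is dense. There $\widetilde D_\lambda'$ must be read as the transpose of $\widetilde D_\lambda:H^{1/2}(\partial\Omega)\to H^1(\Omega^-)\oplus H^1(\Omega^+)$. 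Such data are not determined by their restriction to $H^1(\mathbb{R}^d)$; for example, $v\mapsto\int_{\partial\Omega}g\,[\gamma_D v]\,\mathrm{d}\sigma$ vanishes on $H^1(\mathbb{R}^d)$. So $\gammaNeu{}(-\Delta_0-\lambda^2)^{-1}f$ by itself would not be the right object on that space.
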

\begin{proof}
The 3rd Green identity (Lemma \ref{lem:third-green})
gives us that $u = (-\Delta_N - \lambda^2)^{-1} f$ is given by
\begin{align*}
    u = R_\lambda f + \widetilde{D}_\lambda [\gamma_D u] - \widetilde{S}_\lambda [\gamma_N u].
\end{align*}
Since $\gamma_N^\pm u = 0$, also the jump of the Neumann trace vanishes.
Hence we have
\begin{align*}
    0 = \gamma_N u = \widetilde{D}_\lambda' f + N_\lambda [\gamma_D u]
\end{align*}
which implies $[\gamma_D u] = - N_\lambda^{-1} \widetilde{D}_\lambda' f$
and therefore
\begin{align*}
    u = R_\lambda f - \widetilde{D}_\lambda N_\lambda^{-1} \widetilde{D}_\lambda' f.
\end{align*}
\end{proof}

Now consider the Neumann relative resolvent (we will drop the subscripts which only encumber the notation):
\begin{align*}
R_{\mathrm{rel}, \lambda}=\left(\left(-\Delta-\lambda^2\right)^{-1}-\left(-\Delta_0-\lambda^2\right)^{-1}\right)-\sum_{j=1}^M\left(\left(-\Delta_j-\lambda^2\right)^{-1}-\left(-\Delta_0-\lambda^2\right)^{-1}\right).
\end{align*}
Let us write
\begin{align*}
    N_\lambda = N_{\text{diag},\lambda} + T_\lambda,
\end{align*}
then by multiplying with $N_\lambda^{-1}$ from the left and $N_{\text{diag},\lambda}^{-1}$ from the right we obtain
\begin{align*}
    N_\lambda^{-1} - N_{\text{diag},\lambda}^{-1} = - N_{\text{diag},\lambda}^{-1} T_\lambda N_\lambda^{-1}.
\end{align*}
Thus we can estimate the trace norm of the left hand side by
\begin{align}
    \label{eq:hypersingular-trace-estimate}
\norm{T_\lambda}_{\Nuc} \leq C_1 e^{-C_2 \operatorname{Im}(\lambda)}
\end{align}
by the off diagonal estimate Lemma \ref{lem:trace-class-smoothing-ops} and the kernel estimate Corollary \ref{cor:kernel-pointwise-CN}.

In order to understand the pole structure at $\lambda = 0$, we need to investigate the action of $T_\lambda$ on constant boundary functions.
\begin{lemma}
\label{lem:pole-contribution-hypersingular}
Define $e_i: \partial\Omega_i \to \mathbb{R}$ by $e_i(x) = 1$.
Then
\begin{align*}
(T_\lambda e_j)(x) = \begin{cases} 0 & x \in \partial \Omega_j \\ -\lambda^2 \int_{\Omega_j} \partial_{\nu_x} G_\lambda(x,y) \mathrm{d} y & x \in \partial \Omega_k, k \neq j\end{cases}
\end{align*}
and
\begin{align*}
\langle e_i, T_\lambda e_j\rangle_{\partial\Omega} = \begin{cases} 0 & i=j\\ \lambda^4 \int_{\Omega_i \times \Omega_j} G_\lambda(x, y) \mathrm{d} x \mathrm{d} y & i \neq j.\end{cases}
\end{align*}
\end{lemma}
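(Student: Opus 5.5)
We work from $T_\lambda = N_\lambda - N_{\mathrm{diag},\lambda}$: it is the off-diagonal part of $N_\lambda$, so $q_i T_\lambda q_j = q_i N_\lambda q_j$ for $i\neq j$ and $q_j T_\lambda q_j = 0$. This immediately gives $(T_\lambda e_j)(x)=0$ for $x\in\partial\Omega_j$, and $\langle e_i, T_\lambda e_j\rangle = 0$ for $i=j$. For $x\in\partial\Omega_k$ with $k\neq j$, the kernel of $q_k N_\lambda q_j$ is smooth, because the supports are separated by $\operatorname{dist}\ge\delta$. So $(T_\lambda e_j)(x) = \partial_{\nu_x}\int_{\partial\Omega_j}\partial_{\nu_y}G_\lambda(x,y)\,\der\sigma(y)$, where the normal derivative at $x$ is taken off $\partial\Omega_j$. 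By the jump relation \eqref{eq:jump-rel-4} there is no jump, so this is exactly $\gammaNeu{\pm}\widetilde D_\lambda e_j$ evaluated on $\partial\Omega_k$.

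The key step is to compute $w:=\widetilde D_\lambda e_j$ on $\Omega_j^+$ using the divergence theorem in $y$ over $\Omega_j$. For fixed $x\notin\overline{\Omega_j}$, the function $y\mapsto G_\lambda(x,y)$ is smooth on $\overline{\Omega_j}$ and satisfies $\Delta_y G_\lambda(x,y) = -\lambda^2 G_\lambda(x,y)$ there. Hence
\begin{align*}
\int_{\partial\Omega_j}\partial_{\nu_y}G_\lambda(x,y)\,\der\sigma(y) = \int_{\Omega_j}\Delta_y G_\lambda(x,y)\,\der y = -\lambda^2\int_{\Omega_j}G_\lambda(x,y)\,\der y .
\end{align*}
Since $\partial\Omega_j$ is only Lipschitz, the divergence theorem has to be applied with a smooth integrand on a Lipschitz domain; this is valid. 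Differentiating under the integral in $x$ along $\nu_x$ for $x\in\partial\Omega_k$ is legitimate because $G_\lambda(\cdot,y)$ is smooth in $x$ away from $\overline{\Omega_j}$, uniformly in $y\in\Omega_j$. This yields the first formula.

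For the pairing, we integrate the first formula over $\partial\Omega_i$ with $i\neq j$ and apply the divergence theorem again, now in $x$ over $\Omega_i$:
\begin{align*}
\int_{\partial\Omega_i}\partial_{\nu_x}\int_{\Omega_j}G_\lambda(x,y)\,\der y\,\der\sigma(x) = \int_{\Omega_i}\Delta_x\int_{\Omega_j}G_\lambda(x,y)\,\der y\,\der x = -\lambda^2\int_{\Omega_i\times\Omega_j}G_\lambda(x,y)\,\der x\,\der y .
\end{align*}
This uses that $v(x)=\int_{\Omega_j}G_\lambda(x,y)\,\der y$ is smooth near $\overline{\Omega_i}$ and solves $(\Delta+\lambda^2)v=0$ there, since $\overline{\Omega_i}\cap\overline{\Omega_j}=\emptyset$. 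Multiplying by the prefactor $-\lambda^2$ gives $\lambda^4\int_{\Omega_i\times\Omega_j}G_\lambda$.

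The main obstacle is purely a matter of bookkeeping: keeping the sign conventions straight. We must confirm that $\partial_{\nu_y}$ in $\widetilde D_\lambda$ is the outward normal of $\Omega_j$, and that the sign of $N_\lambda$ (the paper uses McLean's convention, not the numerical-analysis one) matches $\gammaNeu{}\widetilde D_\lambda$ with no extra minus. Everything else is smooth-kernel calculus justified by the positive separation of the obstacles.
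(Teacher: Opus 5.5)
Your proposal is correct and is essentially the paper's argument. The paper obtains $\widetilde D_\lambda e_j(x)=-\lambda^2\int_{\Omega_j}G_\lambda(x,y)\,\der y$ for $x\notin\overline{\Omega_j}$ by applying the third Green identity to $u_j=\chi_{\Omega_j}$; this is the same divergence-theorem computation you do directly on the kernel, and the pairing step (divergence theorem in $x$ over $\Omega_i$) is identical. Your handling of the diagonal block via $q_jT_\lambda q_j=0$ is in fact cleaner than the paper's remark that $N_{\mathrm{diag},\lambda}e_j=0$: that holds only on $\partial\Omega\setminus\partial\Omega_j$, and it cannot vanish identically because $N_{\mathrm{diag},\lambda}$ is invertible for $\operatorname{Im}\lambda>0$.
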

\begin{proof}
Let $u_j := \chi_{\Omega_j}$. Then $[\gamma_D u_j] = - e_j$ and $[\gamma_N u_j] = 0$.
Moreover $(-\Delta - \lambda^2) u_j = -\lambda^2 u_j$ on $\Omega_j$ and on $\Omega_j^+$.
By the 3rd Green identity we have
\begin{align*}
u_j = G_\lambda (-\lambda^2 u_j) - \widetilde{D}_\lambda e_j.
\end{align*}
If $x \in \Omega_i$ with $i \neq j$, then $u_j(x) = 0$ and therefore
\begin{align*}
\widetilde{D}_\lambda e_j(x) = -\lambda^2 \int_{\Omega_j} G_\lambda(x,y) \mathrm{d} y.
\end{align*}
Taking the Neumann trace we have $N_\lambda e_j(x) = -\lambda^2 \int_{\Omega_j} \partial_{\nu_x} G_\lambda(x,y) \mathrm{d} y$ for $x \in \partial \Omega_i$ with $i \neq j$.
Clearly $N_{\mathrm{diag}, \lambda} e_j = 0$ so the formula for $T_\lambda e_j$ follows.

We use this formula to compute
\begin{align*}
\langle e_i, T_\lambda e_j \rangle_{\partial \Omega} = -\lambda^2 \int_{\partial \Omega_i} \int_{\Omega_j} \partial_{\nu_x} G_\lambda(x,y) \mathrm{d} \sigma(x) \mathrm{d} y.
\end{align*}
Then by the divergence theorem we have
\begin{align*}
\langle e_i, T_\lambda e_j \rangle_{\partial \Omega} = -\lambda^2 \int_{\Omega_i \times \Omega_j} \Delta_x G_\lambda(x,y) \mathrm{d} x \mathrm{d} y = \lambda^4 \int_{\Omega_i \times \Omega_j} G_\lambda(x,y) \mathrm{d} x \mathrm{d} y.
\end{align*}
\end{proof}

We are now ready to show that $N_\lambda^{-1} - N_{\mathrm{diag}, \lambda}^{-1}$ is trace class even up to $\lambda = 0$.

\begin{proposition}
    \label{prop:hypersingular-inverse-trace-class}
The family of operators $\lambda \mapsto N_\lambda^{-1} - N_{\mathrm{diag}, \lambda}^{-1}$
is a holomorphic map $\mathfrak{D}_\epsilon \to \Nuc(H^{-1/2}(\partial \Omega), H^{1/2}(\partial \Omega))$
with trace-norm bound
\begin{align*}
\norm{N_\lambda^{-1} - N_{\mathrm{diag}, \lambda}^{-1}}_{\Nuc} \leq C_{\delta', \epsilon} e^{-\delta' \operatorname{Im}(\lambda)}.
\end{align*}
\end{proposition}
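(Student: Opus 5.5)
The plan is to start from the resolvent-type identity already derived,
\begin{align*}
N_\lambda^{-1} - N_{\mathrm{diag},\lambda}^{-1} = - N_{\mathrm{diag},\lambda}^{-1} T_\lambda N_\lambda^{-1},
\end{align*}
and to treat two regimes separately: $\lambda \in \mathfrak{D}_\epsilon$ away from the origin, and $\lambda$ near $0$.

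\textbf{Away from the origin.} Fix $R>0$. For $|\lambda| \geq R$ in $\mathfrak{D}_\epsilon$, Proposition \ref{prop:boundary-op-bounds} gives uniform bounds on $N_\lambda^{-1}$ and on $N_{\mathrm{diag},\lambda}^{-1}$ from $H^{-1/2}$ to $H^{1/2}$. The latter holds because $N_{\mathrm{diag},\lambda}$ is the hypersingular operator of the individual obstacles, so the proposition applies blockwise. Since $\Nuc$ is an ideal, the trace norm of the difference is then bounded by $C\norm{T_\lambda}_{\Nuc}$. By (\ref{eq:hypersingular-trace-estimate}) this is at most $C e^{-C_2\operatorname{Im}\lambda}$. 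In the sector one has $\operatorname{Im}\lambda \geq |\lambda|\sin\epsilon$, so this is exponentially decaying, and we may take $\delta' \le C_2$.

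Here $T_\lambda$ has a kernel that is smooth and exponentially decaying on $\partial\Omega_i \times \partial\Omega_j$ for $i\neq j$, because $\operatorname{dist}(\Omega_i,\Omega_j)\geq\delta$. So (\ref{eq:hypersingular-trace-estimate}) holds as a map $H^{1/2}\to H^{-1/2}$ with constant $C_2$ related to $\delta$. Holomorphy in $\mathfrak{D}_\epsilon$ follows because $T_\lambda$ is holomorphic into $\Nuc$ and $N_\lambda^{-1}$, $N_{\mathrm{diag},\lambda}^{-1}$ are holomorphic in norm on $\mathbb{C}^+$ by Theorem \ref{thm:boundary-ops-fredholm} and invertibility. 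The same holds for $|\lambda|\le R$ with $\lambda\neq 0$.

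\textbf{Near the origin.} For $|\lambda|\le R$ in $\mathfrak{D}_\epsilon$, $e^{-\delta'\operatorname{Im}\lambda}$ is bounded below, so it suffices to show the trace norm stays bounded as $\lambda\to0$. This is the main obstacle, since $N_{\mathrm{diag},\lambda}^{-1}$ and $N_\lambda^{-1}$ both blow up at $0$. I would use the factorisation $N_\lambda = -(\frac12 - D_\lambda')Q_\lambda^-$, which gives
\begin{align*}
N_\lambda^{-1} = -(Q_\lambda^-)^{-1}\bigl(\tfrac12-D_\lambda'\bigr)^{-1},
\end{align*}
and analogously for the diagonal operator. Proposition \ref{prop:expand-d-to-n-inv} then gives
\begin{align*}
(Q_\lambda^-)^{-1} = -\lambda^{-2}\Pi + \bigoplus_i M_{\lambda,i},
\end{align*}
where $M_{\lambda,i}$ is bounded near $0$. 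Also $(\frac12 - D_\lambda')^{-1}$ and its diagonal version are (Hahn) holomorphic at $0$, by the argument in Proposition \ref{prop:invert-hypersingular}. Hence $N_\lambda^{-1} = \lambda^{-2}\Pi(\frac12-D'_\lambda)^{-1} + O(1)$, and similarly for the diagonal version, with the rank-$M$ singular part having range spanned by the $e_i$.

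Insert this into $N_{\mathrm{diag},\lambda}^{-1} T_\lambda N_\lambda^{-1}$. The singular factors hit $T_\lambda$ only through $T_\lambda e_j$ and $\langle e_i, T_\lambda\,\cdot\,\rangle$. By Lemma \ref{lem:pole-contribution-hypersingular}, $T_\lambda e_j = O(\lambda^2)$ and $\langle e_i,T_\lambda e_j\rangle = O(\lambda^4)$. This holds with Hahn-type corrections $\lambda^{2}\log\lambda$ terms when $d=2$, which are still $o(1)$ after the appropriate division. Also $T_\lambda^t = T_\lambda$, so $\langle e_i, T_\lambda u\rangle = \langle T_\lambda e_i, u\rangle = O(\lambda^2)$.

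Expanding the product into four terms gives:
\begin{itemize}
\item singular$\times$singular: $\lambda^{-4}\cdot O(\lambda^4)$;
\item singular$\times$regular and regular$\times$singular: $\lambda^{-2}\cdot O(\lambda^2)$;
\item regular$\times$regular: bounded.
\end{itemize}
In each case $T_\lambda$ remains as a trace-class factor, or in the finite-rank pieces the trace norm is controlled by finite rank. I expect the bookkeeping of the $\log\lambda$ terms in $d=2$, where $G_\lambda$ carries $\log\lambda$ and the smooth-kernel trace-norm bound for $T_\lambda$ must be uniform near $0$, to be the delicate point. I would handle it by writing $T_\lambda = \tilde T_\lambda + \lambda^{d-2}\log\lambda\,F$-type terms with smooth kernels, as in the decomposition of $G_\lambda$.

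Combining both regimes yields boundedness of the trace norm near $0$ and exponential decay away from $0$, which together give the stated estimate $C_{\delta',\epsilon}e^{-\delta'\operatorname{Im}\lambda}$ uniformly on $\mathfrak{D}_\epsilon$.
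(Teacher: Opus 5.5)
For $d\geq 3$ your argument is correct and is essentially the paper's proof. It uses the same identity $N_{\mathrm{diag},\lambda}^{-1}-N_\lambda^{-1}=N_{\mathrm{diag},\lambda}^{-1}T_\lambda N_\lambda^{-1}$, the same factorisation through $Q_\lambda^-$ with the $\lambda^{-2}\Pi$ pole from Proposition~\ref{prop:expand-d-to-n-inv}, and the same cancellation via Lemma~\ref{lem:pole-contribution-hypersingular}. Your explicit use of Proposition~\ref{prop:boundary-op-bounds} for $|\lambda|\geq R$ spells out what the paper leaves implicit.

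One bookkeeping point concerns the left factor. The paper factors it the other way, $N_{\mathrm{diag},\lambda}^{-1}=-(\frac{1}{2}-D_{\mathrm{diag},\lambda})^{-1}(Q_\lambda^-)^{-1}$, so that $\Pi$ sits directly next to $T_\lambda$. With the ``analogous'' factorisation you wrote, the pole appears as $\lambda^{-2}\Pi(\frac{1}{2}-D'_{\mathrm{diag},\lambda})^{-1}T_\lambda$. It then meets $T_\lambda$ through $(\frac{1}{2}-D_{\mathrm{diag},\lambda})^{-1}e_i$, not through $e_i$. This still works in $d\geq 3$, because $(\frac{1}{2}-D_{\mathrm{diag},\lambda})e_i=e_i+\lambda^2 q_i\gamma_D G_\lambda\chi_{\Omega_i}$ (from the third Green identity applied to $\chi_{\Omega_i}$). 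You should either switch factorisations or state this step.

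The genuine error is your treatment of $d=2$. There the logarithm is not a lower-order correction. For $i\neq j$,
\[
\lambda^{-4}\langle e_i,T_\lambda e_j\rangle_{\partial\Omega}=-\frac{|\Omega_i||\Omega_j|}{2\pi}\log\lambda+O(1),
\]
so after dividing by $\lambda^4$ it diverges rather than being $o(1)$.

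All other contributions stay bounded. In particular $T_\lambda e_j=O(\lambda^2)$, since the normal derivative kills the constant $\log\lambda$ in $G_\lambda$. The conjugating factors $(\frac{1}{2}-D_{\mathrm{diag},\lambda})^{-1}$ and $(\frac{1}{2}-D'_\lambda)^{-1}$ tend to invertible limits that act trivially on the $e_i$. Hence for $M\geq 2$ the singular$\times$singular term equals, up to sign and bounded terms, $\frac{\log\lambda}{2\pi}$ times the nonzero finite-rank operator $\sum_{i\neq j}e_i\langle\,\cdot\,,e_j\rangle$. The trace norm of $N_\lambda^{-1}-N_{\mathrm{diag},\lambda}^{-1}$ therefore really grows like $|\log\lambda|$.

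So in $d=2$ the uniform bound in the statement fails near $0$, and no bookkeeping of Hahn terms can rescue it. The paper's main-text argument is really the $d\geq 3$ argument, since it bounds $\sup|G_\lambda|$ on $\Omega_i\times\Omega_j$ uniformly near $0$. Section~\ref{sec:dimension-two} records the correct two-dimensional bound $C|\log\lambda|$. You should restrict your claim to $d\geq 3$ and state the case $d=2$ with this logarithmic loss.
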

\begin{proof}
The formula
\begin{align*}
    N_{\text{diag},\lambda}^{-1} - N_\lambda^{-1} = N_{\text{diag},\lambda}^{-1} T_\lambda N_\lambda^{-1}.
\end{align*}
shows that for $\operatorname{Im}(\lambda) > 0$ (where $N_\lambda$ and $N_{\text{diag},\lambda}$ are invertible) the operator $N_\lambda^{-1} - N_{\text{diag},\lambda}^{-1}$ is trace class.

To expose the pole structure, we use the expansion of $(Q_\lambda^-)^{-1}$ from Proposition \ref{prop:expand-d-to-n-inv}.
We can write $N_\lambda$ both as $N_\lambda = -Q_\lambda^- (\frac 12 - D_\lambda)$ and as $N_\lambda = -(\frac12 - D_\lambda')Q_\lambda^-$.
Expanding $(Q_\lambda^-)^{-1} = \lambda^{-2} \Pi + M_\lambda$ (and remembering that $\Pi = \Pi_{\mathrm{diag}}$ since $\Pi$ is diagonal) therefore gives us
\begin{align*}
N_\lambda^{-1} &= -(\lambda^{-2} \Pi + M_\lambda)(\frac12 - D_\lambda')^{-1}\\
N_{\mathrm{diag}, \lambda}^{-1} &= - (\frac12 - D_{\mathrm{diag},\lambda})^{-1}(\lambda^{-2} \Pi + M_{\mathrm{diag},\lambda}).
\end{align*}

We therefore have
\begin{align*}
N_{\text{diag},\lambda}^{-1} - N_\lambda^{-1} = (\frac12 - D_{\mathrm{diag},\lambda})^{-1}\left[\lambda^{-4} \Pi T_\lambda \Pi + M_{\mathrm{diag}, \lambda} T_\lambda \Pi + \lambda^{-2} \Pi T_\lambda M_\lambda\right](\frac12 - D_\lambda')^{-1} + \mathrm{holomorphic}.
\end{align*}

We need to estimate the trace norm of the terms $\Pi T_\lambda \Pi$ and $T_\lambda \Pi$.
Since $T_\lambda^t = T_\lambda$ and $\Pi^t = \Pi$, we have $\norm{\Pi T_\lambda}_1 = \norm{T_\lambda \Pi}_1$,
so the estimate for $\Pi T_\lambda$ follows from the one for $T_\lambda \Pi$.

Let $P$ be the orthogonal projection onto the span of $\{e_1, \ldots, e_N\}$.
Then $P\Pi = \Pi P = \Pi$.
Then we can estimate 
\begin{align*}
    \norm{\Pi T_\lambda \Pi}_1 \leq \norm{\Pi}_1^2 \norm{P T_\lambda P}_{\mathrm{op}} \leq \norm{\Pi}_1^2 \sup_{i,j} |\langle e_i, T_\lambda e_j\rangle|.
\end{align*}

By Lemma \ref{lem:pole-contribution-hypersingular} and Corollary \ref{cor:kernel-pointwise-CN}, 
\begin{align*}
    \norm{\Pi T_\lambda \Pi}_1 \leq C \lambda^4 \sup_{i\neq j} \sup_{(x,y) \in \Omega_i\times\Omega_j} |G_\lambda(x,y)| \leq
    C \lambda^4 e^{-\delta' \operatorname{Im}(\lambda)}
\end{align*}
where $\delta' < \delta := \min_{i\neq j}\operatorname{dist}(\partial \Omega_i, \partial \Omega_j)$.

To estimate $\norm{T_\lambda \Pi}_1$ we use $\norm{T_\lambda \Pi}_1 \leq \norm{T_\lambda P}_{\mathrm{op}} \norm{\Pi}_1$,
together with
\begin{align*}
\norm{T_\lambda e_j}_{H^s(\partial \Omega_i)} \leq \lambda^2 \sup_{y \in \Omega_j} \norm{\gammaNeu{} G_\lambda(\cdot, y)}_{H^s(\partial \Omega_i)}
\end{align*}
We then use continuity of $\gammaNeu{}$ and the separated Sobolev estimates for $G_\lambda$ in Corollary \ref{cor:kernel-pointwise-sobolev}:
\begin{align*}
\norm{\gammaNeu{} G_\lambda(\cdot, y)}_{H^{-1/2}(\partial \Omega_i)} \leq \norm{G_\lambda(\cdot, y)}_{H^{1}(\Omega_i)}
+ \norm{\Delta G_\lambda(\cdot, y)}_{L^2(\Omega_i)}
\leq C_{\delta'} \lambda^2 e^{-\delta' \operatorname{Im}(\lambda)}, \quad i \neq j.
\end{align*}

Therefore every term in $N_{\mathrm{diag},\lambda}^{-1} - N_\lambda^{-1}$ has bounded trace norm near $\lambda = 0$ and hence $\lambda \mapsto N_\lambda^{-1} - N_{\mathrm{diag},\lambda}^{-1}$ extends to a holomorphic map $\mathfrak{D}_\epsilon \to \Nuc(H^{-1/2}(\partial \Omega), H^{1/2}(\partial \Omega))$.
\end{proof}

We can now rigorously define $\Xi_N$ and establish its basic properties.

\begin{lemma}
    \label{lem:xi-n-bounds}
The map $\mathfrak{D}_\epsilon \to \Nuc(H^{-1/2}(\partial \Omega))$, $\lambda \mapsto 1 - N_\lambda N_{\mathrm{diag},\lambda}^{-1}$ is holomorphic.

Thus the Fredholm determinant
\begin{align*}
\Xi_N(\lambda) = \log \det_{H^{-1/2}(\partial\Omega)} (N_\lambda N_{\mathrm{diag},\lambda}^{-1})
\end{align*}
is well defined, holomorphic on $\mathfrak{D}_\epsilon$ and
\begin{align*}
|\Xi_N(\lambda)| \leq C_{\delta', \epsilon} e^{-\delta' \operatorname{Im}(\lambda)} \qquad |\Xi_N'(\lambda)| \leq C_{\delta', \epsilon} e^{-\delta' \operatorname{Im}(\lambda)}.
\end{align*}
\end{lemma}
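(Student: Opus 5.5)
The plan is to reduce everything to Proposition \ref{prop:hypersingular-inverse-trace-class}. I would write
\begin{align*}
1 - N_\lambda N_{\mathrm{diag},\lambda}^{-1} = -T_\lambda N_{\mathrm{diag},\lambda}^{-1} = N_\lambda\bigl(N_\lambda^{-1} - N_{\mathrm{diag},\lambda}^{-1}\bigr),
\end{align*}
using $N_\lambda = N_{\mathrm{diag},\lambda} + T_\lambda$. The second form is the useful one. Here $N_\lambda: H^{1/2}(\partial\Omega)\to H^{-1/2}(\partial\Omega)$ is holomorphic and bounded on $\mathfrak{D}_\epsilon$, and for $d\ge 3$ or in the Hahn sense, also near $0$. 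The difference $N_\lambda^{-1}-N_{\mathrm{diag},\lambda}^{-1}$ is holomorphic into $\Nuc(H^{-1/2},H^{1/2})$ with bound $C e^{-\delta'\Im\lambda}$. Since the trace class is an ideal, the product is holomorphic into $\Nuc(H^{-1/2}(\partial\Omega))$. For the exponential bound I need a polynomial bound $\norm{N_\lambda}\lesssim (1+|\lambda|)^k$ on $\mathfrak{D}_\epsilon$. Slightly decreasing $\delta'$ then absorbs the polynomial, because on the sector $\Im\lambda \ge |\lambda|\sin\epsilon$.

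To get the polynomial bound on $N_\lambda$, I would use the layer-potential representation $N_\lambda = \gammaNeu{+}\widetilde D_\lambda$. Continuity of $\gammaNeu{}$ from $H^1_\Delta$ gives
\begin{align*}
\norm{\gammaNeu{}u}\lesssim \norm{u}_{H^1}+\norm{\Delta u}_{L^2}=\norm{u}_{H^1}+|\lambda|^2\norm{u}_{L^2}.
\end{align*}
The resolvent bound $\norm{(-\Delta-\lambda^2)^{-1}}_{H^{-1}\to H^1}\lesssim 1+|\lambda|^2/\mathrm{dist}(\lambda^2,[0,\infty))$ is uniform polynomial on the sector, and it controls $\widetilde D_\lambda = G_\lambda(\gammaNeu{})^*$. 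Alternatively, one could write $N_\lambda = -Q^-_\lambda(\tfrac12-D_\lambda)$ and combine Proposition \ref{prop:boundary-op-bounds} for $Q_\lambda^-$ with a similar bound for $D_\lambda$.

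With holomorphy and $\norm{1-N_\lambda N_{\mathrm{diag},\lambda}^{-1}}_{\Nuc}\le C e^{-\delta'\Im\lambda}$ in hand, the Fredholm determinant $\det(N_\lambda N_{\mathrm{diag},\lambda}^{-1})$ is holomorphic. It satisfies $|\det(1+K)-1|\le \norm{K}_1 e^{\norm{K}_1+1}$. For large $\Im\lambda$, $\norm{K}_1$ is small, so $\log\det$ is defined by the principal branch and obeys $|\Xi_N|\lesssim e^{-\delta'\Im\lambda}$. On the remaining bounded part of the sector, I would define $\log$ by analytic continuation along the simply connected domain $\mathfrak{D}_\epsilon$. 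This requires nonvanishing of the determinant, i.e. invertibility of $N_\lambda N_{\mathrm{diag},\lambda}^{-1}$, which holds because both $N_\lambda$ and $N_{\mathrm{diag},\lambda}$ are invertible for $\Im\lambda>0$ (index zero plus trivial kernel by Proposition \ref{prop:compute-bd-op-kernels}). Boundedness there is automatic by continuity on compacta, up to the region near $0$, where Proposition \ref{prop:invert-hypersingular} gives a continuous boundary value.

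The derivative bound follows from Cauchy estimates. A disc of radius $c|\lambda|$ around $\lambda\in\mathfrak{D}_{\epsilon}$ stays in $\mathfrak{D}_{\epsilon/2}$ for small $c$, so applying the previous bound on the larger sector gives $|\Xi_N'(\lambda)|\lesssim e^{-\delta''\Im\lambda}/|\lambda|$. Near $\lambda=0$, one could instead use $\Xi_N'=\tr\bigl((N_\lambda N_{\mathrm{diag},\lambda}^{-1})^{-1}\tfrac{d}{d\lambda}(N_\lambda N_{\mathrm{diag},\lambda}^{-1})\bigr)$ with the derivative trace-class bounds. I expect the main obstacle to be the behaviour near $\lambda=0$, where $N^{-1}_{\mathrm{diag},\lambda}$ has a $\lambda^{-2}\Pi$ pole. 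This is handled by Proposition \ref{prop:hypersingular-inverse-trace-class}, although derivative control there may need the Hahn-holomorphic structure. The other obstacle is the uniform polynomial bound on $N_\lambda$ in the sector.
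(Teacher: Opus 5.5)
Your proposal is correct and follows essentially the same route as the paper: the factorization $1 - N_\lambda N_{\mathrm{diag},\lambda}^{-1} = N_\lambda(N_\lambda^{-1} - N_{\mathrm{diag},\lambda}^{-1})$ combined with Proposition \ref{prop:hypersingular-inverse-trace-class}, the determinant bound and $|\log(1+z)|\le 2|z|$ for large $|\lambda|$, nonvanishing of the determinant by invertibility, and a Cauchy estimate on a slightly larger sector for $\Xi_N'$. You are more explicit than the paper about the polynomial bound on $\norm{N_\lambda}$ needed to keep the exponential trace-norm decay; use your second route $N_\lambda = -Q_\lambda^-(\tfrac12 - D_\lambda)$ for it, because the first one fails as written: $(\gammaNeu{})^*$ maps $H^{1/2}(\partial\Omega)$ into the dual of $H^1_\Delta$, not into $H^{-1}$, so the $H^{-1}\to H^1$ resolvent bound does not control $\widetilde D_\lambda$.
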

\begin{proof}
By Proposition \ref{prop:invert-hypersingular}, $N_\lambda N_{\mathrm{diag},\lambda}^{-1}$ is holomorphic as a map with values in bounded operators.
By Proposition \ref{prop:hypersingular-inverse-trace-class}, $1 - N_\lambda N_{\mathrm{diag},\lambda}^{-1} = N_\lambda (N_\lambda^{-1} -  N_{\mathrm{diag},\lambda}^{-1})$ is trace class with uniformly bounded trace norm on $\mathfrak{D}_\epsilon$.
By Lemma \ref{lem:trace-class-holo}, it therefore follows that $\lambda \mapsto 1 - N_\lambda N_{\mathrm{diag},\lambda}^{-1}$ is holomorphic as a trace-class valued map with
\begin{align*}
\norm{1 - N_\lambda N_{\mathrm{diag},\lambda}^{-1}}_{\Nuc{H^{-1/2}(\partial \Omega)}} \leq C_{\delta', \epsilon} e^{-\delta' \operatorname{Im}(\lambda)}.
\end{align*}

Thus the Fredholm determinant $\det(N_\lambda N_{\mathrm{diag},\lambda}^{-1})$ is well defined and holomorphic.
By invertibility of the operator in $\overline{\mathfrak{D}_\epsilon}$ the
determinant never vanishes 
\cite[Theorem 3.9]{simonNotesInfiniteDeterminants1977}
and therefore log det is analytic in the union of $\mathfrak{D}_\epsilon$ and a neighborhood of zero.

To estimate $\Xi_N$, note that since $\Xi_N$ is holomorphic we only need to prove the bound for $|\lambda| > R$ for some $R>0$.
Then $|1 - \det(N_\lambda N_{\mathrm{diag},\lambda}^{-1})| < \frac{1}{2}$ and therefore we can estimate $|\log(1+z)| \leq 2 |z|$:
\begin{align*}
|\Xi_N(\lambda)|=|\log (1+(\operatorname{det}(N_\lambda N_{\mathrm{diag},\lambda}^{-1})-1))| \leq 2|\operatorname{det}(N_\lambda N_{\mathrm{diag},\lambda}^{-1})-1| \leq 2 C e^{-\delta^{\prime} \operatorname{Im} \lambda} .
\end{align*}

To bound $\Xi_N'$, again we only need to consider $|\lambda| > R$.
That means that we can choose $r>0$ such that $B_r(\lambda) \subset \mathfrak{D}_{\epsilon/2}$ for all $\lambda \in \mathfrak{D}_\epsilon$ with $|\lambda| > R$.
By the maximum modulus principle,
\begin{align*}
|\Xi^{\prime}(\lambda)| \leq \frac{1}{r} \sup _{|z-\lambda|=r}|\Xi(z)| \leq \frac{C}{r} \sup _{|z-\lambda|=r} e^{-\delta^{\prime} \operatorname{Im} z}
\leq \frac{C e^{\delta^{\prime} r}}{r} e^{-\delta^{\prime} \operatorname{Im} \lambda}=C_{\delta^{\prime}, \varepsilon} e^{-\delta^{\prime} \operatorname{Im} \lambda}.
\end{align*}
\end{proof}

To conclude this section we establish the fundamental relation between $\Xi_N$ and the trace of the relative resolvent.
We first need to establish a trace bound on the relative resolvent.

\begin{proposition}
Let $\epsilon>0$, $R > 0$, let $\delta=\min_{j\neq k}\operatorname{dist}\left(\partial \Omega_j, \partial \Omega_k\right)$ and let $0 < \delta' < \delta$.
Then the operator $R_{N, \mathrm{rel}, \lambda}: L^2\left(\mathbb{R}^d\right) \rightarrow L^2\left(\mathbb{R}^d\right)$ is trace-class for all $\lambda \in \mathfrak{D}_\epsilon \setminus B_R(0)$ and the trace norm can be estimated by

\begin{align*}
\left\|R_{N, \mathrm{rel}, \lambda}\right\|_1\leq C_{\delta^{\prime}, \epsilon, R} e^{-\delta^{\prime} \operatorname{Im}(\lambda)}, \quad \lambda \in \mathfrak{D}_\epsilon.
\end{align*}
\end{proposition}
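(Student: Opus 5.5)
We express the relative resolvent through the Krein-type resolvent formula and reduce the trace-class estimate to the boundary estimates already proved. Applying the Krein formula both to the full configuration and to each single obstacle $\Omega_j$ gives
\begin{align*}
R_{N,\mathrm{rel},\lambda} = -\widetilde{D}_\lambda N_\lambda^{-1}\widetilde{D}_\lambda' + \sum_{j=1}^M \widetilde{D}_{\lambda,j} N_{\lambda,j}^{-1} \widetilde{D}_{\lambda,j}',
\end{align*}
where $\widetilde{D}_{\lambda,j} = \widetilde{D}_\lambda q_j^t$ is the double layer potential of $\partial\Omega_j$ alone, and $N_{\lambda,j}$ is the hypersingular operator of $\partial\Omega_j$. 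The sum over $j$ is exactly $\widetilde{D}_\lambda N_{\mathrm{diag},\lambda}^{-1}\widetilde{D}_\lambda'$, since $N_{\mathrm{diag},\lambda}^{-1}=\bigoplus_j N_{\lambda,j}^{-1}$. Hence
\begin{align*}
R_{N,\mathrm{rel},\lambda} = -\widetilde{D}_\lambda\bigl(N_\lambda^{-1}-N_{\mathrm{diag},\lambda}^{-1}\bigr)\widetilde{D}_\lambda',
\end{align*}
understood first as a map $L^2(\mathbb{R}^d)\to L^2(\mathbb{R}^d)$ on a dense set (e.g.\ compactly supported functions).

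The middle factor is trace class from $H^{-1/2}(\partial\Omega)$ to $H^{1/2}(\partial\Omega)$, with norm at most $C e^{-\delta'\operatorname{Im}\lambda}$, by Proposition \ref{prop:hypersingular-inverse-trace-class}. It then suffices to show that $\widetilde{D}_\lambda: H^{1/2}(\partial\Omega)\to L^2(\mathbb{R}^d)$ and its transpose $\widetilde{D}_\lambda': L^2(\mathbb{R}^d)\to H^{-1/2}(\partial\Omega)$ are bounded, uniformly for $\lambda\in\mathfrak{D}_\epsilon\setminus B_R(0)$. Once this holds, the ideal property of $\Nuc$ gives the claimed estimate immediately.

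For the uniform bound on $\widetilde{D}_\lambda$ I would proceed as in the proof of Proposition \ref{prop:boundary-op-bounds}. For $\phi\in H^{1/2}$, the function $u=\widetilde{D}_\lambda\phi$ solves the homogeneous Helmholtz equation on $\Omega^\pm$, has $[\gamma_D u]=\phi$, and has $\gamma_N^\pm u = N_\lambda\phi$.
\begin{itemize}
\item[(a)] In the interior, Corollary \ref{cor:helmholtz-uniqueness-estimate} (the Neumann-data estimate) gives $\|u_-\|_{H^1}\lesssim \|N_\lambda\phi\|_{H^{-1/2}}$.
\item[(b)] In the exterior, the same integration-by-parts argument works on $\Omega^+$: $u_+$ decays exponentially because $\operatorname{Im}\lambda>0$, so it lies in $H^1(\Omega^+)$, and Theorem \ref{thm:helmholtz-lambda-estimate} adapts verbatim to the exterior.
\end{itemize}
This yields $\|u\|_{L^2}\lesssim \|N_\lambda\phi\|_{H^{-1/2}}$. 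We need a bound by $\|\phi\|_{H^{1/2}}$, which requires $N_\lambda$ to be polynomially bounded. For this I would instead use the jump relation $\gamma_D^\pm u = (\pm\tfrac12 + D_\lambda)\phi$ together with the Dirichlet-data estimate of the corollary. That gives $\|u\|_{H^1}\lesssim |\lambda|^2\|\phi\|_{H^{1/2}}(1+\|D_\lambda\|)$, i.e.\ polynomial growth rather than uniform boundedness.

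Polynomial growth is harmless, since it can be absorbed by shrinking $\delta'$ slightly. Concretely, $|\lambda|^k e^{-\delta''\operatorname{Im}\lambda}\le C e^{-\delta'\operatorname{Im}\lambda}$ for $\delta'<\delta''<\delta$, because $\operatorname{Im}\lambda\ge |\lambda|\sin\epsilon$ on $\mathfrak{D}_\epsilon$. By transposition, $\widetilde{D}_\lambda'$ has the same norm as $\widetilde{D}_\lambda$ (with $\lambda$ replaced appropriately).

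I expect the main obstacle to be establishing polynomial bounds for $D_\lambda$ on $H^{1/2}$, and for $\widetilde{D}_\lambda$ as a map into $L^2$ on the unbounded exterior domain, without smoothness of the boundary. If the direct approach stalls, I would use the kernel estimates cited in the paper (Corollary \ref{cor:kernel-pointwise-sobolev}): split $\widetilde{D}_\lambda$ with a cutoff into a part near $\partial\Omega$, bounded via the Helmholtz estimates, and a far part whose kernel is smooth and exponentially decaying. Finally, the extension from the dense set to a trace-class operator on all of $L^2(\mathbb{R}^d)$ follows from boundedness of the factored expression.
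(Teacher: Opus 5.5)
Your reduction is exactly the paper's proof. You apply the Krein formula to $\Omega$ and to each $\Omega_j$ to get $R_{N,\mathrm{rel},\lambda}=\widetilde D_\lambda\bigl(N_{\mathrm{diag},\lambda}^{-1}-N_\lambda^{-1}\bigr)\widetilde D_\lambda'$, estimate the middle factor by Proposition \ref{prop:hypersingular-inverse-trace-class}, and absorb polynomial growth of $\widetilde D_\lambda$ into the exponential. The paper does nothing more; it takes the bound $\norm{\widetilde D_\lambda v}_{H^1(\Omega^-)\oplus H^1(\Omega^+)}\le C_\epsilon\bigl(1+|\lambda|^2+\ell_d(|\lambda|)^{1/2}\bigr)\norm{v}_{H^{1/2}(\partial\Omega)}$ from Lemma \ref{lem:bd-potential-bounds}.

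The gap is in your derivation of that bound, which you flag as the main obstacle but do not close. As written it is circular. Corollary \ref{cor:helmholtz-uniqueness-estimate} controls $u=\widetilde D_\lambda\phi$ only through its boundary data $(\pm\tfrac12+D_\lambda)\phi$ or $N_\lambda\phi$. None of the results you invoke bound $D_\lambda$ or $N_\lambda$ themselves: Proposition \ref{prop:boundary-op-bounds} only controls $Q_\lambda^\pm$ and the inverses $S_\lambda^{-1}$, $N_\lambda^{-1}$. The near-field half of your fallback, ``bounded via the Helmholtz estimates'', hits the same wall. The far-field half via kernel bounds is fine and matches the paper.

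The missing idea is to trade the double layer for the single layer. The single layer's near-field bound follows from the near-field resolvent estimate of Appendix A, $\norm{\chi G_\lambda\chi u}_{H^1}\lesssim(1+|\log|\lambda||)\norm{u}_{H^{-1}}$, together with boundedness of $\gamma_D^*: H^{-1/2}(\partial\Omega)\to H^{-1}(\mathbb R^d)$. The paper does this in Lemma \ref{lem:near-field-layer-estimate} via the Green representation $\widetilde D_\lambda\psi=\widetilde S_\lambda Q_\lambda^-\psi-\mathcal U\psi$, where $\mathcal U\psi$ is the interior Dirichlet solution extended by zero.

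The same idea also closes your direct route at the boundary level. By \eqref{eq:DtN-rel}, $S_\lambda Q_\lambda^-=\tfrac12+D_\lambda$, so $\norm{D_\lambda}_{H^{1/2}\to H^{1/2}}\le\tfrac12+\norm{S_\lambda}\,\norm{Q_\lambda^-}\lesssim(1+|\lambda|^2)(1+|\log|\lambda||)$ on $\mathfrak D_\epsilon\setminus B_R(0)$. The Dirichlet-data estimate applied on $\Omega^-$ and, as you noted, on $\Omega^+$ then gives a polynomial bound on $\widetilde D_\lambda: H^{1/2}(\partial\Omega)\to L^2(\mathbb R^d)$ directly. This suffices and needs no separate far-field argument.
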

\begin{proof}
By the Krein relative resolvent formula we have
\begin{align}
    \label{eq:rel-neumann-resolvent}
    R_{\mathrm{rel}, \lambda} = \widetilde{D}_\lambda (N_{\text{diag}, \lambda}^{-1} - N_\lambda^{-1}) \widetilde{D}_\lambda'
\end{align}
and so we can estimate
\begin{align*}
\norm{R_{\mathrm{rel}, \lambda}}_1 \leq \norm{\widetilde{D}_\lambda}_{H^{1/2}(\partial \Omega) \to L^2(\mathbb{R}^d)}^2 \norm{N_{\text{diag}, \lambda}^{-1} - N_\lambda^{-1}}_1.
\end{align*}
The operator norm of $\widetilde{D}_\lambda$ is bounded by Lemma \ref{lem:bd-potential-bounds} and the trace of $N_{\text{diag}, \lambda}^{-1} - N_\lambda^{-1}$ is estimated by Proposition \ref{prop:hypersingular-inverse-trace-class}.
\end{proof}

\begin{lemma}
    The relative resolvent is trace class on $L^2(\mathbb{R}^d)$ and
    \begin{align}
        \tr_{L^2(\mathbb{R}^d)} (R_{\text{rel},\lambda}) = -\frac{1}{2\lambda}\Xi_N'(\lambda).
    \end{align}
\end{lemma}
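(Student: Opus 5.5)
Trace class is already established by the preceding proposition, so what remains is the trace identity. The plan is to start from the Krein-type formula \eqref{eq:rel-neumann-resolvent},
\begin{align*}
R_{\mathrm{rel},\lambda} = \widetilde{D}_\lambda \bigl(N_{\mathrm{diag},\lambda}^{-1} - N_\lambda^{-1}\bigr) \widetilde{D}_\lambda'.
\end{align*}
Here I use that the individual-obstacle terms combine into $\widetilde{D}_\lambda N_{\mathrm{diag},\lambda}^{-1}\widetilde{D}_\lambda'$. This is because $\widetilde{D}_\lambda$ restricted to $H^{1/2}(\partial\Omega_j)$ is the double layer potential of $\Omega_j$ alone, so $\sum_j \widetilde D_{\lambda}q_j N_{j,\lambda}^{-1} q_j\widetilde D'_\lambda=\widetilde D_\lambda N_{\mathrm{diag},\lambda}^{-1}\widetilde D_\lambda'$.

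The middle factor is trace class from $H^{-1/2}(\partial\Omega)$ to $H^{1/2}(\partial\Omega)$ by Proposition \ref{prop:hypersingular-inverse-trace-class}, and the outer factors are bounded. Cyclicity of the trace then gives
\begin{align*}
\tr_{L^2(\mathbb{R}^d)} R_{\mathrm{rel},\lambda} = \tr_{H^{1/2}(\partial\Omega)}\Bigl( \bigl(N_{\mathrm{diag},\lambda}^{-1} - N_\lambda^{-1}\bigr)\widetilde{D}_\lambda^t\widetilde{D}_\lambda\Bigr) = \frac{1}{2\lambda}\tr\Bigl(\bigl(N_{\mathrm{diag},\lambda}^{-1} - N_\lambda^{-1}\bigr)\dot N_\lambda\Bigr),
\end{align*}
where the last step is Lemma \ref{lem:bd-op-derivative}, $\dot N_\lambda = 2\lambda\widetilde D_\lambda^t\widetilde D_\lambda$. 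Since $\widetilde D_\lambda'=\widetilde D_\lambda^t$, one should check that the composition $\widetilde D_\lambda^t \widetilde D_\lambda$ makes sense as a map $H^{1/2}\to H^{-1/2}$ for $\Im\lambda>0$; this holds by exponential decay of $\widetilde D_\lambda\phi$, i.e.\ Lemma \ref{lem:bd-potential-bounds}.

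Next I differentiate $\Xi_N$. For $\Im\lambda>0$ the operator $N_\lambda N_{\mathrm{diag},\lambda}^{-1}$ is invertible and $1-N_\lambda N_{\mathrm{diag},\lambda}^{-1}$ is holomorphic in trace class (Lemma \ref{lem:xi-n-bounds}). The standard formula $\frac{d}{d\lambda}\log\det F = \tr(F^{-1}F')$ therefore applies with $F=N_\lambda N_{\mathrm{diag},\lambda}^{-1}$, and gives
\begin{align*}
\Xi_N'(\lambda) = \tr\bigl( N_{\mathrm{diag},\lambda} N_\lambda^{-1}(\dot N_\lambda N_{\mathrm{diag},\lambda}^{-1} - N_\lambda N_{\mathrm{diag},\lambda}^{-1}\dot N_{\mathrm{diag},\lambda}N_{\mathrm{diag},\lambda}^{-1})\bigr) = \tr\bigl(N_\lambda^{-1}\dot N_\lambda - N_{\mathrm{diag},\lambda}^{-1}\dot N_{\mathrm{diag},\lambda}\bigr).
\end{align*}
The last equality is formal, since the individual terms are not trace class. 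To make it rigorous I would keep the expression $F^{-1}F'$, which is a trace-class operator, and conjugate by $N_{\mathrm{diag},\lambda}$. Conjugation preserves the trace, and it yields $N_\lambda^{-1}\dot N_\lambda - N_{\mathrm{diag},\lambda}^{-1}\dot N_{\mathrm{diag},\lambda}$ as a trace-class operator on $H^{1/2}$.

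Now write $\dot N_\lambda = \dot N_{\mathrm{diag},\lambda} + \dot T_\lambda$. Because $N_{\mathrm{diag},\lambda}^{-1}$ is block diagonal, $\tr(N_{\mathrm{diag},\lambda}^{-1}\dot T_\lambda)=0$: the product has vanishing diagonal blocks and is trace class, since $\dot T_\lambda$ has smooth kernel. Hence
\begin{align*}
\Xi_N'(\lambda) = \tr\bigl((N_\lambda^{-1}-N_{\mathrm{diag},\lambda}^{-1})\dot N_\lambda\bigr).
\end{align*}
Comparing with the first display gives $\tr R_{\mathrm{rel},\lambda} = -\frac{1}{2\lambda}\Xi_N'(\lambda)$ for $\Im\lambda>0$, and hence on $\mathfrak{D}_\epsilon$.

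The main obstacle is the bookkeeping of trace-class membership in this splitting. $N_\lambda^{-1}\dot N_\lambda$ alone is not trace class, since $\dot N_\lambda$ is only compact, so each regrouping must pair the difference $N_\lambda^{-1}-N_{\mathrm{diag},\lambda}^{-1}$ (trace class by Proposition \ref{prop:hypersingular-inverse-trace-class}) or the smoothing off-diagonal $\dot T_\lambda$ with a bounded operator. Cyclicity must also be justified between different spaces, $H^{\pm1/2}(\partial\Omega)$ and $L^2(\mathbb{R}^d)$, using that for $AB$, $BA$ with one factor trace class and the other bounded one has $\tr AB=\tr BA$.
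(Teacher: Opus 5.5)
Your proposal is correct and follows essentially the same route as the paper: the Krein-type formula \eqref{eq:rel-neumann-resolvent}, cyclicity of the trace, the identity $\widetilde D_\lambda^t\widetilde D_\lambda=\frac{1}{2\lambda}\dot N_\lambda$ from Lemma \ref{lem:bd-op-derivative}, Jacobi's formula (Lemma \ref{lem:fredholm-det-deriv}), and the vanishing trace of the off-diagonal term $\dot T_\lambda N_{\mathrm{diag},\lambda}^{-1}$. The differences are only cosmetic: you cycle so that the trace is taken on $H^{1/2}(\partial\Omega)$ rather than $H^{-1/2}(\partial\Omega)$, and you are more explicit than the paper about realising $N_\lambda^{-1}\dot N_\lambda-N_{\mathrm{diag},\lambda}^{-1}\dot N_{\mathrm{diag},\lambda}$ as a single trace-class operator, obtained by conjugating $F^{-1}F'$ with $N_{\mathrm{diag},\lambda}$.
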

\begin{proof}
By Lemma \ref{lem:fredholm-det-deriv}, $\Xi_N(\lambda)$ is differentiable for $\operatorname{Im}(\lambda) > 0$ and
\begin{align*}
\Xi_N'(\lambda) = \tr\left(\left(\frac{d}{d \lambda} N_\lambda\right)
     N_\lambda^{-1}-\left(\frac{d}{d \lambda} N_{\text{diag},\lambda}\right)
    N_{\text{diag},\lambda}^{-1}\right).
\end{align*}

Since $N_\lambda^{-1} - N_{\text{diag},\lambda}^{-1}$ is trace-class and $\widetilde{D}_\lambda$ is
bounded, we can take the trace of (\ref{eq:rel-neumann-resolvent}) and cyclically permute $\widetilde{D}_\lambda'$ to obtain
\begin{align*}
    \tr(R_{\text{rel},\lambda})
    =\tr \left(\widetilde{D}_\lambda' \widetilde{D}_\lambda  N_{\text{diag},\lambda}^{-1} - \widetilde{D}_\lambda'\widetilde{D}_\lambda N_{\lambda}^{-1} \right).
\end{align*}
Now $\widetilde{D}_\lambda^t\widetilde{D}_\lambda = \frac{1}{2\lambda}\frac{d}{d\lambda}N_\lambda$ (by Lemma \ref{lem:bd-op-derivative}), so we have:
\begin{align*}
    &\Tr(R_{\text{rel},\lambda})
    =\frac{1}{2 \lambda} \Tr\left(\left(\frac{d}{d \lambda} N_\lambda\right)
     N_{\text{diag},\lambda}^{-1}-\left(\frac{d}{d \lambda} N_{\lambda}\right)
    N_{\lambda}^{-1}\right)\\
    =\ &\frac{1}{2 \lambda} \Tr\left(\left(\frac{d}{d \lambda} N_{\mathrm{diag},\lambda}\right)
     N_{\text{diag},\lambda}^{-1}-\left(\frac{d}{d \lambda} N_{\lambda}\right)
    N_{\lambda}^{-1}\right)
    + \frac{1}{2 \lambda} \Tr\left(\left(\frac{d}{d\lambda}T_\lambda\right) N_{\mathrm{diag},\lambda}^{-1}\right)
\end{align*}
Since $T_\lambda$ is off-diagonal, i.e. $q_i T_\lambda q_i = 0$ for all $i$, we also have $q_i \dot{T}_\lambda q_i = 0$ where $\dot{T}_\lambda = \frac{d}{d\lambda} T_\lambda$.
Since $N_{\mathrm{diag},\lambda}^{-1}$ is diagonal, $\dot{T}_\lambda N_{\mathrm{diag},\lambda}^{-1}$ is off-diagonal and therefore the trace vanishes (cf. Remark \ref{rem:off-diag-trace-zero}).
We are left with
\begin{align*}
    \Tr(R_{\text{rel},\lambda})
    = -\frac{1}{2 \lambda} \Tr\left(\left(\frac{d}{d \lambda} N_\lambda\right)
     N_{\lambda}^{-1}-\left(\frac{d}{d \lambda} N_{\mathrm{diag}, \lambda}\right)
    N_{\text{diag},\lambda}^{-1}\right)
=-\frac{1}{2 \lambda} \Xi_N'(\lambda).
\end{align*}
\end{proof}


\section{Relative trace for the transmission problem.}

\begin{proposition}
    \label{prop:transmission-resolvent-krein-formula}
    The decomposition $L^2(\mathbb{R}^d) \simeq L^2(\Omega_-) \oplus L^2(\Omega_+)$ allows us to write the transmission resolvent as a matrix of operators
    \begin{align}
    (-\tm - \lambda^2)^{-1} 
    - \begin{pmatrix}
    \kappa_-^{-2}G_{\lambda/\kappa_-} & 0\\ 0 & \kappa_+^{-2} G_{\lambda/\kappa_+}
    \end{pmatrix}
    =
    \begin{pmatrix}
    B_{--} & B_{-+}\\ B_{+-} & B_{++}
    \end{pmatrix}
    \end{align}
    with
    \begin{align*}
    &B_{++} = -\kappa_+^{-2} \CLP_{\lambda/\kappa_+} \M H_\lambda^{-1} \CLP'_{\lambda/\kappa_+}\\
    &B_{+-} = \kappa_-^{-2} \CLP_{\lambda/\kappa_+} \M H_\lambda^{-1} \M\CLP'_{\lambda/\kappa_-}\\
    &B_{-+} = \kappa_+^{-2} \CLP_{\lambda/\kappa_-} H_\lambda^{-1} \CLP'_{\lambda/\kappa_+}\\
    &B_{- -} = -\kappa_-^{-2} \CLP_{\lambda/\kappa_-} H_{\lambda}^{-1} \M\CLP'_{\lambda/\kappa_-}.
    \end{align*}
    \end{proposition}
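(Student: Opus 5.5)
We prove the formula by solving the transmission problem for $u = (-\tm-\lambda^2)^{-1}f$ with layer potentials. The boundary unknown is obtained from $H_\lambda$ via Theorem \ref{thm:transmission-problem-equivalence}, and the four blocks are then read off. Fix $\lambda\in\mathbb{C}^+$ and $f=(f_-,f_+)\in L^2(\Omega_-)\oplus L^2(\Omega_+)$, and let $u=(-\tm-\lambda^2)^{-1}f$. By the description of $D(\tm)$, $u_\pm$ satisfies
\begin{align*}
(-\Delta-(\lambda/\kappa_\pm)^2)u_\pm=\kappa_\pm^{-2}f_\pm
\end{align*}
on $\Omega^\pm$, together with $\gammaDir{+}u=\gammaDir{-}u$ and $\gammaNeu{+}u=\nu_1\gammaNeu{-}u$, where $\nu_1=(\kappa_-/\kappa_+)^2$. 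Thus $\gammaCau{+}u=\M\gammaCau{-}u$ with $\nu_0=1$. Write $w_\pm:=\kappa_\pm^{-2}G_{\lambda/\kappa_\pm}(\chi_\pm f_\pm)$, extending $f_\pm$ by zero, and set $v_\pm:=u_\pm-w_\pm|_{\Omega^\pm}$. Then $v_\pm$ solves the homogeneous Helmholtz equation with wavenumber $\lambda/\kappa_\pm$ in $\Omega^\pm$, so $\gammaCau{\pm}v_\pm\in\mathcal{B}^\pm_{\lambda/\kappa_\pm}$. Since $w_\pm\in H^2_{\loc}(\mathbb{R}^d)$ has no jumps, $\gammaCau{\pm}w_\pm=\CLP'_{\lambda/\kappa_\pm}(\kappa_\pm^{-2}\chi_\pm f_\pm)$. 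This uses the identification $\CLP'_\mu g=\gammaCau{}G_\mu g$, which follows by transposing $\CLP_\mu=(\widetilde D_\mu,-\widetilde S_\mu)$; the sign of the second component needs care with the conventions. Write $g_\pm:=\kappa_\pm^{-2}\CLP'_{\lambda/\kappa_\pm}f_\pm$.

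The transmission condition now becomes
\begin{align*}
\Phi_+-\M\Phi_-=\M g_--g_+,\qquad \Phi_\pm:=\gammaCau{\pm}v_\pm,
\end{align*}
with $P^+_{\lambda/\kappa_-}\Phi_-=0$ and $P^-_{\lambda/\kappa_+}\Phi_+=0$. This is exactly system \eqref{eq:transmission-bd-system-all} with $\Phi_0=\M g_--g_+$. By Theorem \ref{thm:transmission-problem-equivalence}(i),(iii), which holds since all parameters are positive and $\Im\lambda>0$, together with the invertibility of $H_\lambda$, we get $\Phi_-=H_\lambda^{-1}\Phi_0$. Moreover $\Phi_+=\M\Phi_-+\Phi_0=P^+_{\lambda/\kappa_+}(\M\Phi_-+\Phi_0)$. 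The solutions are reconstructed with the Calderon potential, using the remark after Lemma \ref{lem:calderon-projectors}: $v_-=-\CLP_{\lambda/\kappa_-}\Phi_-$ and $v_+=\CLP_{\lambda/\kappa_+}\Phi_+$.

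For the interior block we substitute directly, which gives
\begin{align*}
v_-=-\CLP_{\lambda/\kappa_-}H_\lambda^{-1}(\M g_--g_+).
\end{align*}
This yields $B_{--}=-\kappa_-^{-2}\CLP_{\lambda/\kappa_-}H_\lambda^{-1}\M\CLP'_{\lambda/\kappa_-}$ and $B_{-+}=\kappa_+^{-2}\CLP_{\lambda/\kappa_-}H_\lambda^{-1}\CLP'_{\lambda/\kappa_+}$, as claimed.

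The exterior block is the step requiring care, because the naive expression contains an extra term $\CLP_{\lambda/\kappa_+}\Phi_0$ that does not appear in the statement. Restricted to $\Omega^+$, we have $\chi_+\CLP_{\lambda/\kappa_+}\Phi_0=\chi_+\CLP_{\lambda/\kappa_+}P^+_{\lambda/\kappa_+}\Phi_0$ by \eqref{eq:calderon-vanishing-1}. For the same reason $\chi_+\CLP_{\lambda/\kappa_+}(\M g_--g_+)$ only sees $P^+(\M g_-)-P^+g_+$. Two observations resolve this. First, $P^+_{\lambda/\kappa_+}g_+=0$ restricted to the exterior: this holds because $\CLP_{\mu}\CLP'_\mu f$ on $\Omega^+$ is the jump representation of $G_\mu f$ for $f$ supported in $\Omega^+$, and $\CLP_\mu\gammaCau{}(G_\mu f)$ vanishes outside the support by the third Green identity, up to the interior-versus-exterior correction. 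Second, we rewrite using $H_\lambda$. Since $\M\Phi_-+\Phi_0=\M H_\lambda^{-1}\Phi_0+\Phi_0$, we have
\begin{align*}
\Phi_0=H_\lambda H_\lambda^{-1}\Phi_0=(P^+_{\lambda/\kappa_+}\M-\M P^-_{\lambda/\kappa_-})H_\lambda^{-1}\Phi_0,
\end{align*}
and $P^-_{\lambda/\kappa_-}H_\lambda^{-1}\Phi_0=\Phi_-$ because $\Phi_-\in\mathcal{B}^-_{\lambda/\kappa_-}$. Combining these, the $\Phi_0$-terms in $\chi_+\CLP_{\lambda/\kappa_+}\Phi_+$ should collapse to $\chi_+\CLP_{\lambda/\kappa_+}\M H_\lambda^{-1}\Phi_0$ plus a remainder that exactly cancels the free exterior term. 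This leaves $B_{++}=-\kappa_+^{-2}\CLP_{\lambda/\kappa_+}\M H_\lambda^{-1}\CLP'_{\lambda/\kappa_+}$ and $B_{+-}=\kappa_-^{-2}\CLP_{\lambda/\kappa_+}\M H_\lambda^{-1}\M\CLP'_{\lambda/\kappa_-}$. Verifying this sign and cancellation bookkeeping is the main obstacle. If it does not close directly, I will fall back on checking the candidate formula: it satisfies the PDEs, which is clear since $\CLP$ produces homogeneous solutions, and its Cauchy data satisfy the transmission relations via \eqref{eq:calderon-intertwiner} and Lemma \ref{lem:calderon-projectors}. Uniqueness of the transmission problem for $\Im\lambda>0$ then identifies it with the resolvent.
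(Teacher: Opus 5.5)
The step that fails is $\Phi_- = H_\lambda^{-1}\Phi_0$. From \eqref{eq:transmission-bd-system-all} alone, using $P^-_{\lambda/\kappa_-}\Phi_-=\Phi_-$ and $P^+_{\lambda/\kappa_+}\Phi_+=\Phi_+$, you only get
\[
H_\lambda\Phi_- = P^+_{\lambda/\kappa_+}(\Phi_+-\Phi_0)-(\Phi_+-\Phi_0) = P^-_{\lambda/\kappa_+}\Phi_0 .
\]
So Theorem \ref{thm:transmission-problem-equivalence}(i) cannot be invoked with the unprojected right-hand side; the proof of Proposition \ref{prop:cald-op-inv-bound} in fact works with the projected form. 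In your setting $g_+\in\mathcal{B}^-_{\lambda/\kappa_+}$, because $G_{\lambda/\kappa_+}f_+$ solves the homogeneous equation in $\Omega^-$; this is also the clean reason for your claim $P^+_{\lambda/\kappa_+}g_+=0$. Likewise $g_-\in\mathcal{B}^+_{\lambda/\kappa_-}$, so $H_\lambda g_- = P^+_{\lambda/\kappa_+}\M g_-$ and therefore $\Phi_- = H_\lambda^{-1}\Phi_0 - g_-$. In other words, $H_\lambda^{-1}\Phi_0$ is the Cauchy data $\gammaCau{-}u$ of the full interior field, not of $v_-$.

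In the interior this error is invisible, because $\chi_-K_{\lambda/\kappa_-}g_-=0$ by \eqref{eq:calderon-vanishing-2}; that is why your $B_{--}$ and $B_{-+}$ come out right. For $f_-=0$ one has $\Phi_0=-g_+\in\mathcal{B}^-_{\lambda/\kappa_+}$, so your $B_{++}$ is right too. But in $B_{+-}$ you are left with the spurious term $\chi_+K_{\lambda/\kappa_+}P^+_{\lambda/\kappa_+}\M g_-$, and nothing cancels it, since the subtracted free term $\kappa_+^{-2}G_{\lambda/\kappa_+}$ acts only on $f_+$. Take $\kappa_+=\kappa_-=\kappa$, so that $\M=1$ and $H_\lambda^{-1}=P^+_{\lambda/\kappa}-P^-_{\lambda/\kappa}$, and $f_+=0$. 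Your formula then gives $v_+=2\kappa^{-2}G_{\lambda/\kappa}f_-$ on $\Omega^+$ instead of $\kappa^{-2}G_{\lambda/\kappa}f_-$. So the ``collapse'' you anticipate for the exterior block does not happen.

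The repair is to take $\Psi:=\gammaCau{-}u=\Phi_-+g_-$ as the unknown, which is what the paper does. It treats exterior and interior sources separately and makes the ansatz $u_+=\kappa_+^{-2}G_{\lambda/\kappa_+}f+K_{\lambda/\kappa_+}\M\Phi$, $u_-=-K_{\lambda/\kappa_-}\Phi$ (with the analogue for interior sources). Taking Cauchy traces, it reads off $H_\lambda\Phi=-\kappa_+^{-2}K'_{\lambda/\kappa_+}f$, resp.\ $H_\lambda\Phi=\kappa_-^{-2}\M K'_{\lambda/\kappa_-}f$, directly from the transmission condition, never going through Theorem \ref{thm:transmission-problem-equivalence}. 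In your notation, $H_\lambda\Psi=P^+_{\lambda/\kappa_+}\gammaCau{+}u-\M P^-_{\lambda/\kappa_-}\Psi=\Phi_+-\M\Phi_-=\Phi_0$. Then $\Phi_+=\M\Psi-g_+$ together with $\chi_+K_{\lambda/\kappa_+}g_+=0$ gives $B_{++}$ and $B_{+-}$ immediately.

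Your fallback (verify the candidate, then use uniqueness) is also sound and short. With $\Psi=H_\lambda^{-1}\Phi_0$, the candidate has Cauchy data $g_-+P^-_{\lambda/\kappa_-}\Psi$ and $g_++P^+_{\lambda/\kappa_+}\M\Psi$, so the transmission relation is literally $H_\lambda\Psi=\M g_--g_+$. Adding $H^1$ decay for $\operatorname{Im}\lambda>0$ and injectivity of $-\tm-\lambda^2$ proves the proposition. As written, however, the fallback is only announced, and the derivation you actually carry out is incorrect in the $B_{+-}$ block.
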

\begin{proof}
We solve (\ref{eq:transmission-problem-def}) for a source $f$ which is supported either in $\Omega^-$ or $\Omega^+$.
\\
\textbf{Exterior source.}\ 
First we solve the problem for an exterior source
\begin{align*}
    &\begin{cases}
        (-\Delta  - \kappa_+^{-2}\lambda^2) u_{+} = f/\kappa_+^2 \in \Omega^+\\
        \gamma^+ u_{+} = \M \Phi.
    \end{cases}
    &\begin{cases}
    (-\Delta - \kappa_-^{-2}\lambda^2)u_{-} = 0\qquad \text{in } \Omega^-\\
    \gamma^- u_{-} = \Phi
    \end{cases}
    \end{align*}

    We represent the solution in terms of the boundary data:
    \begin{align}
        \label{eq:trans-ext-1}
    &u_{+} = \kappa_+^{-2} G_{\lambda/\kappa_+} f + \CLP_{\lambda/\kappa_+} \M \Phi\\
    \label{eq:trans-ext-2}
    &u_- = -\CLP_{\lambda/\kappa_-} \Phi.
    \end{align}

By taking traces of Equations (\ref{eq:trans-ext-1}), (\ref{eq:trans-ext-2}), we obtain the boundary integral equations
\begin{align*}
    &\gammaCau{+}
    u_+
    =
    \kappa_+^{-2}\CLP'_{\lambda/\kappa_+} f
    + P_{\lambda/\kappa_+}^+ \M \Phi\\
    &\gammaCau{-}
    u_- = P_{\lambda/\kappa_-}^- \Phi
    \end{align*}
    The transmission boundary conditions 
    $\gammaCau{+}u_+=\M\gammaCau{-}u_-$
    give
\begin{align*}
\kappa_+^{-2} \CLP'_{\lambda/\kappa_+} f
= (\M P_{\lambda/\kappa_-}^-  - P_{\lambda/\kappa_+}^+ \M)\Phi
= - H_\lambda \Phi
\end{align*}
Therefore we have
\begin{align*}
\Phi = -\kappa_+^{-2} H_\lambda^{-1} \CLP'_{\lambda/\kappa_+} f.
\end{align*}

We conclude $u_+ = \kappa_+^{-2} G_{\lambda/\kappa_+} f + B_{++} f$, $u_- = B_{-+} f$ where
\begin{align*}
B_{++} = -\kappa_+^{-2}\CLP_{\lambda/\kappa_+} \M  H_\lambda^{-1} \CLP'_{\lambda/\kappa_+}, \qquad
B_{-+} = \kappa_+^{-2}\CLP_{\lambda/\kappa_-} H_\lambda^{-1} \CLP'_{\lambda/\kappa_+}.
\end{align*}\\
\textbf{Interior source.}\
Next we consider a source on the interior domain.
\begin{align*}
&\begin{cases}
    (-\Delta  - \kappa_+^{-2}\lambda^2) u_{+} = 0\qquad \text{in } \Omega^+\\
    \gamma^+ u_{+} = \M \Phi.
\end{cases}
&\begin{cases}
(-\Delta - \kappa_-^{-2}\lambda^2)u_{-} = \kappa_-^{-2}f\qquad \text{in } \Omega^-\\
\gamma^- u_- = \Phi.
\end{cases}
\end{align*}
We represent the solution as
\begin{align*}
&u_{+} =
\CLP_{\lambda/\kappa_+}\M\Phi\\
&u_- = \kappa_-^{-2} G_{\lambda/\kappa_-} f 
- \CLP_{\lambda/\kappa_-}\Phi.
\end{align*}

Now the boundary data satisfy
\begin{align*}
&\gammaCau{+}
u_+
=
P_{\lambda/\kappa_+}^+ \M \Phi\\
&\gammaCau{-}
u_- = \kappa_-^{-2} \CLP'_{\lambda/\kappa_-} f
+ P_{\lambda/\kappa_-}^- \Phi
\end{align*}
and therefore
\begin{align*}
    H_\lambda \Phi = P_{\lambda/\kappa_+}^+ \M \Phi - \M P_{\lambda/\kappa_-}^- \Phi
    = \kappa_-^{-2} \M \CLP'_{\lambda/\kappa_-} f.
\end{align*}
We conclude $u_+ = B_{+-} f$, $u_- = \kappa_-^{-2}G_{\lambda/\kappa_-} f + B_{--} f$ where
\begin{align*}
B_{- -} = -\kappa_-^{-2}\CLP_{\lambda/\kappa_-} H_\lambda^{-1} \M \CLP'_{\lambda/\kappa_-}, \qquad B_{+-} = \kappa_-^{-2} \CLP_{\lambda/\kappa_+} \M H_\lambda^{-1} \M \CLP'_{\lambda/\kappa_-}.
\end{align*}
\end{proof}

\begin{proposition}
    \label{prop:transmission-rel-resolvent-trace-estimate}
Let $\epsilon>0$, $R>0$, let $\delta=\min_{j\neq k}\operatorname{dist}\left(\partial \Omega_j, \partial \Omega_k\right)$ and let $0 < \delta' < \delta$.
Then the operator $R_{T, \mathrm{rel}, \lambda}: L^2\left(\mathbb{R}^d\right) \rightarrow L^2\left(\mathbb{R}^d\right)$ is trace-class for all $\lambda \in \mathfrak{D}_\epsilon \setminus B_R(0)$ and the trace norm can be estimated by

\begin{align*}
\left\|R_{T, \mathrm{rel}, \lambda}\right\|_1\leq C_{\delta^{\prime}, \epsilon, R} e^{-\delta^{\prime} \operatorname{Im}(\lambda)/\kappa}, \quad \lambda \in \mathfrak{D}_\epsilon \setminus B_R(0), \quad \kappa = \max(\kappa_-, \kappa_+).
\end{align*}
\end{proposition}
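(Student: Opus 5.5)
The approach mirrors the Neumann case. By Proposition \ref{prop:transmission-resolvent-krein-formula}, applied both to the full configuration $\Omega$ and to each single obstacle $\Omega_i$, the free parts cancel in $R_{T,\mathrm{rel},\lambda}$. Note that the ``free'' block diagonal operator for $\Omega$ uses $\kappa_-$ on $\Omega$ and $\kappa_+$ on $\Omega^+$, while for $\Omega_i$ it uses $\kappa_-$ only on $\Omega_i$. So I first rewrite the relative resolvent as
\[
R_{T,\mathrm{rel},\lambda} = \bigl(B^{\Omega} - \textstyle\sum_i B^{\Omega_i}\bigr) + \bigl(F^\Omega - \kappa_+^{-2}G_{\lambda/\kappa_+} - \textstyle\sum_i (F^{\Omega_i} - \kappa_+^{-2}G_{\lambda/\kappa_+})\bigr),
\]
where $F$ denotes the block-diagonal free parts. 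The second bracket vanishes identically, since $p_i G p_i$ contributions are additive over the disjoint $\Omega_i$ and the $\kappa_+$ exterior pieces telescope; I would check this by writing $\chi_{\Omega^+} = 1 - \sum_i \chi_{\Omega_i}$. What remains is a sum of terms of the form
\[
\CLP_{\lambda/\kappa_\alpha}\,\M^{a}\bigl(H_\lambda^{-1} - H_{\mathrm{diag},\lambda}^{-1}\bigr)\M^{b}\,\CLP'_{\lambda/\kappa_\beta},
\]
restricted appropriately by $p_\pm$. For this I use that $\CLP_{\lambda}$ for $\Omega$ applied to data on $\partial\Omega_i$ coincides with $\CLP_\lambda$ for $\Omega_i$. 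The one subtlety is that $p_+$ for $\Omega_i$ differs from $p_+$ for $\Omega$ by $\sum_{j\neq i}p_j$. Those mismatch terms are compositions of $\CLP_{\lambda/\kappa}$ from $\partial\Omega_i$, restricted to $\Omega_j$, $j\neq i$, and have smooth kernels with exponential decay $e^{-\delta'\operatorname{Im}\lambda/\kappa}$. They are trace class by Lemma \ref{lem:trace-class-smoothing-ops} together with Corollary \ref{cor:kernel-pointwise-CN}, and I would estimate them directly.

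For the main term, write $H_\lambda = H_{\mathrm{diag},\lambda} + T_\lambda$, with $T_\lambda$ the off-diagonal part (with respect to $\bigoplus_i \mathscr{C}(\partial\Omega_i)$). This gives
\[
H_\lambda^{-1} - H_{\mathrm{diag},\lambda}^{-1} = -H_{\mathrm{diag},\lambda}^{-1}T_\lambda H_\lambda^{-1}.
\]
The off-diagonal blocks of $A_{\lambda/\kappa_\pm}$ have smooth kernels given by $G_{\lambda/\kappa_\pm}$ and its derivatives evaluated at separated points $x\in\partial\Omega_i$, $y\in\partial\Omega_j$. By the off-diagonal trace-class lemma and the kernel bounds, this yields
\[
\|T_\lambda\|_{\Nuc(\mathscr{C})}\le C e^{-\delta'\operatorname{Im}(\lambda)/\kappa},
\]
with $\kappa=\max(\kappa_\pm)$, since the slower decay rate $\operatorname{Im}\lambda/\kappa_\pm$ dominates. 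By the Corollary following Proposition \ref{prop:cald-op-inv-bound}, applied to $\Omega$ and to each $\Omega_i$, we have $\|H_\lambda^{-1}\|,\|H_{\mathrm{diag},\lambda}^{-1}\|\lesssim|\lambda|^2$ on $\mathfrak{D}_\epsilon\setminus B_R(0)$.

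Finally, I bound $\CLP_{\lambda/\kappa_\pm}:\mathscr{C}\to L^2$ and $\CLP'_{\lambda/\kappa_\pm}:L^2\to\mathscr{C}$ polynomially in $|\lambda|$ via Lemma \ref{lem:bd-potential-bounds}, applied to the single and double layer potentials. Combining, $\|R_{T,\mathrm{rel},\lambda}\|_1\le C|\lambda|^{N}e^{-\delta''\operatorname{Im}(\lambda)/\kappa}$ for any $\delta''<\delta$. On $\mathfrak{D}_\epsilon$ we have $\operatorname{Im}\lambda\ge \sin\epsilon\,|\lambda|$, so the polynomial factor is absorbed by decreasing $\delta''$ to $\delta'$.

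The main obstacle I expect is the bookkeeping in the first step: verifying that the free block-diagonal parts and the $p_\pm$-restricted layer potential terms really reorganize into $H^{-1}-H_{\mathrm{diag}}^{-1}$ plus off-diagonal smoothing remainders. The transmission operator for a single obstacle differs from the full one in the bulk, not only on the boundary. I would try first to write everything via $\chi_\pm$ for each configuration and collect the cross terms $p_j\,(\cdot)\,q_i$ with $j\neq i$.
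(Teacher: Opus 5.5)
Your route is the paper's. You split off the free parts using Proposition \ref{prop:transmission-resolvent-krein-formula}. You reduce the boundary terms to $H_\lambda^{-1}-H_{\mathrm{diag},\lambda}^{-1}=-H_{\mathrm{diag},\lambda}^{-1}T_\lambda H_\lambda^{-1}$ plus separated remainders of the form $p_j\CLP q_i$, $q_i\CLP' p_j$ ($i\neq j$). You then absorb the polynomial bounds on $H_\lambda^{-1}$ and $\CLP_{\lambda/\kappa_\pm}$ into the exponential. Your description of how $B^\Omega-\sum_iB^{\Omega_i}$ reorganizes is correct.

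The step that fails is the claim that the free bracket vanishes identically. Write $G_\pm:=\kappa_\pm^{-2}G_{\lambda/\kappa_\pm}$.
\begin{itemize}
\item The interior free block of $F^\Omega$ is $p_-G_-p_-=\sum_{i,j}p_iG_-p_j$. It contains the cross terms $p_iG_-p_j$ with $i\neq j$, whereas $F^{\Omega_i}$ contributes only $p_iG_-p_i$. So the interior parts are not additive over components.
\item The exterior block $(1-p_i)G_+(1-p_i)$ of $F^{\Omega_i}$ contains $p_kG_+p_l$ for all $k,l\neq i$. Against $(M-1)G_+$ these cancel only when $k=l$.
\end{itemize}
Counting blocks gives
\[
F^\Omega-G_+-\sum_{i=1}^M\bigl(F^{\Omega_i}-G_+\bigr)=\sum_{i\neq j}p_i\,(G_-+G_+)\,p_j ,
\]
which is nonzero as soon as $M\geq 2$. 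For $M=2$ it is $p_1(G_-+G_+)p_2+p_2(G_-+G_+)p_1$. The paper keeps precisely this term, as $R_\Omega-\sum_iR_{\Omega_i}$; its display has $G_--G_+$, but the sign plays no role below. The paper also remarks explicitly that the free terms cancel only on the diagonal blocks.

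The repair uses only tools you already invoke. For $i\neq j$, the kernel of $p_iG_\pm p_j$ is smooth on the separated compact set $\overline{\Omega}_i\times\overline{\Omega}_j$. Lemma \ref{lem:trace-class-smoothing-ops} (case \eqref{eq:off-block-nuclear-list-1} with $s=t=0$) together with the pointwise bounds of Lemma \ref{lem:green-function-deriv-bounds} then gives
\[
\|p_iG_\pm p_j\|_1\leq Ce^{-\delta'\operatorname{Im}\lambda/\kappa_\pm}\leq Ce^{-\delta'\operatorname{Im}\lambda/\kappa}.
\]
Add these terms to your list of separated remainders and the estimate goes through as you describe. That they are off-diagonal is also what makes them drop out of the subsequent trace computation.
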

\begin{proof}
Write $G_\pm := \kappa_\pm^{-2} G_{\lambda/\kappa_\pm}$.
Recall from the introduction that we have the projections $p_i: H^s(\mathbb{R}^d) \to H^s(\Omega_i)$ and $q_i: H^s(\partial \Omega) \to H^s(\partial \Omega_i)$, and we define $p_+ := 1 - \sum_i p_i$.

Applying Proposition \ref{prop:transmission-resolvent-krein-formula} to the operator $-\Delta_{T, \Omega_i}$ yields
\begin{align*}
    (-\Delta_{T, \Omega_i} - \lambda^2)^{-1}
    = B_{\Omega_i} + p_iG_-p_i + p_+G_+p_+ + \sum_{k \neq i, l \neq i}p_k G_+p_l.
\end{align*}
Hence we can write the relative resolvent as
\begin{align*}
R_{T, \mathrm{rel}, \lambda} = R_\Omega - \sum_i R_i + B_\Omega - \sum_i B_{\Omega_i}.
\end{align*}
with
\begin{align*}
R_{\Omega_i} := (-\Delta_{T, \Omega_i} - \lambda^2)^{-1} - B_{\Omega_i} - G_+
= p_i (G_- - G_+) p_i
- p_+ G_+ p_i
- p_i G_+ p_+
\end{align*}
and
\begin{align*}
&R_\Omega :=(-\Delta_{T, \Omega} - \lambda^2)^{-1} - B_\Omega - G_+
= p_- (G_- - G_+) p_-
- p_+ G_+ p_- - p_- G_+ p_+
\end{align*}

From these formulas we get
\begin{align*}
R_\Omega - \sum_i R_{\Omega_i} =  \sum_{i\neq j}p_i(G_- - G_+)p_j
\end{align*}
By Lemma \ref{lem:trace-class-smoothing-ops}, $\norm{p_i G_\pm p_j}_1 \leq C_{\delta', \epsilon} e^{-\delta' \operatorname{Im} \lambda/ \kappa_\pm}$,
and therefore, defining $\kappa := \max(\kappa_-, \kappa_+)$, we have
\begin{align*}
    \norm{R_\Omega - \sum_i R_{\Omega_i}}_1 \leq C_{\delta', \epsilon} e^{-\delta' \operatorname{Im} \lambda / \kappa}.
\end{align*}
To estimate the trace norm of the relative resolvent, it remains to estimate $B_\Omega - \sum_i B_{\Omega_i}$.
Again Lemma \ref{lem:trace-class-smoothing-ops} gives us
\begin{align*}
\norm{p_i (B_\Omega - B_{\Omega_i}) p_j}_1 \leq C_{\delta', \epsilon} e^{-\delta' \operatorname{Im} \lambda / \kappa}
\end{align*}
so that we can concentrate on the diagonal elements.
We consider first the $++$-block:
\begin{align}
    \label{eq:pp-block-b-ops}
p_+ (B_\Omega - \sum_i B_{\Omega_i}) p_+ = -\kappa_+^{-2} p_+ \CLP_{\lambda/\kappa_+} \M (H_\lambda^{-1} - H_{\mathrm{diag},\lambda}^{-1}) \CLP'_{\lambda/\kappa_+} p_+.
\end{align}
By Lemma \ref{lem:bd-potential-bounds} $\norm{\CLP_{\lambda/\kappa_+}} \lesssim (1+ |\lambda/\kappa_+|^2)$.
As usual we can absorb this polynomial growth into the exponential decay.
By Lemma \ref{lem:trace-class-smoothing-ops},
\begin{align}
    \label{eq:hlambda-inv-diff-trace-estimate}
\norm{H_{\lambda}^{-1} - H_{\mathrm{diag},\lambda}^{-1}}_{\Nuc} \leq \norm{H_\lambda^{-1}}_{\mathrm{op}} \norm{H_{\mathrm{diag},\lambda} - H_\lambda}_{\Nuc} \norm{H_{\mathrm{diag},\lambda}^{-1}}_{\mathrm{op}}
\leq C_{\delta', \epsilon} e^{-\delta' \operatorname{Im} \lambda / \kappa}
\end{align}
and therefore
\begin{align*}
\norm{p_+ (B_\Omega - \sum_i B_{\Omega_i}) p_+}_1 \leq C_{\delta', \epsilon} e^{-\delta' \operatorname{Im} \lambda / \kappa}.
\end{align*}
Next consider the $k-k$-block:
\begin{align}
    \label{eq:kk-block-b-ops}
&p_k (B_\Omega - \sum_i B_{\Omega_i}) p_k = p_k (B_\Omega - B_{\Omega_k}) p_k - \sum_{i \neq k} p_k B_{\Omega_i} p_k\\
=\ &\kappa_-^{-2} p_k \CLP_{\lambda/\kappa_-} (H_\lambda^{-1} - (q_k H_\lambda q_k)^{-1}) \M \CLP'_{\lambda/\kappa_-} p_k
- \sum_{i \neq k} \kappa_+^{-2} p_k \CLP_{\lambda/\kappa_+} q_i \M (q_i H_\lambda q_i)^{-1} q_i\CLP'_{\lambda/\kappa_+} p_k.
\end{align}
Again by Lemma \ref{lem:trace-class-smoothing-ops},
\begin{align*}
\norm{p_k K_{\lambda/\kappa_+} q_i}_{\Nuc(\mathscr{C}, L^2(\mathbb{R}^d))} \leq C_{\delta', \epsilon} e^{-\delta' \operatorname{Im} \lambda / \kappa},
\end{align*}
which bounds the trace norm of the sum over $i\neq k$.

In the first term we can insert $\operatorname{id} = \sum_i q_i$:
\begin{align*}
\kappa_-^{-2} p_k \CLP_{\lambda/\kappa_-} (H_\lambda^{-1} - (q_k H_\lambda q_k)^{-1}) \M \CLP'_{\lambda/\kappa_-} p_k
= \kappa_-^{-2}  \sum_{i,j} p_k \CLP_{\lambda/\kappa_-} q_i (H_\lambda^{-1} - (q_k H_\lambda q_k)^{-1}) \M q_j \CLP'_{\lambda/\kappa_-} p_k
\end{align*}
The terms with $i \neq k$ or $j\neq k$ are again trace class because $p_k \CLP_{\lambda/\kappa_-} q_i$ and $q_j \CLP'_{\lambda/\kappa_-} p_k$ are trace class.

This leaves the $i=k$, $j=k$ term. Note that $\M$ commutes with $q_k$, so that we can write
\begin{align*}
\kappa_-^{-2} p_k \CLP_{\lambda/\kappa_-} q_k (H_\lambda^{-1} - (q_k H_\lambda q_k)^{-1}) q_k \M \CLP'_{\lambda/\kappa_-} p_k.
\end{align*}
The middle term is nothing other than $q_k (H_\lambda^{-1} - H_{\mathrm{diag},\lambda}^{-1}) q_k$ and by Equation (\ref{eq:hlambda-inv-diff-trace-estimate}) this finishes the estimate.
\end{proof}

\begin{proposition}
    \label{prop:transmission-rel-resolvent-trace}
For $\lambda \in \mathfrak{D}_\epsilon$, the transmission resolvent is trace class and
\begin{align}
\tr_{L^2(\mathbb{R}^d)} R_{\mathrm{rel},\lambda} = - \frac{1}{2\lambda}
\tr_{\mathcal{B}_{\lambda/\kappa_+}^-}  \left[\left(\frac{d}{d\lambda} H_\lambda\right) H_\lambda^{-1}-  \left(\frac{d}{d\lambda} H_{\mathrm{diag},\lambda}\right) H_{\mathrm{diag},\lambda}^{-1}\right].
\end{align}
where $\mathcal{B}_{\lambda/\kappa_+}^- = \operatorname{ran} P_{\lambda/\kappa_+}^-$ (cf. Lemma \ref{lem:calderon-projectors}).
\end{proposition}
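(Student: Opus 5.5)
The plan is to follow the Neumann argument: write the relative transmission resolvent through the Krein-type formula of Proposition \ref{prop:transmission-resolvent-krein-formula}, move the potential operators $\CLP$, $\CLP'$ past the trace-class boundary factor by cyclicity, and use the identity $\CLP_\lambda^t\chi_\pm\CLP_\lambda=-\frac{1}{2\lambda}\dot A_\lambda P^\pm_\lambda$ to produce derivatives of $H_\lambda$.

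\emph{Step 1: decompose the trace.} As in the proof of Proposition \ref{prop:transmission-rel-resolvent-trace-estimate}, write
\begin{align*}
R_{\mathrm{rel},\lambda}=\Bigl(R_\Omega-\sum_i R_{\Omega_i}\Bigr)+\Bigl(B_\Omega-\sum_i B_{\Omega_i}\Bigr).
\end{align*}
\begin{itemize}
\item The first bracket equals $\sum_{i\neq j}p_i(G_--G_+)p_j$. Each summand is trace class and off-diagonal with respect to the bulk decomposition, so its trace vanishes (cf.\ Remark \ref{rem:off-diag-trace-zero}).
\item In the second bracket, the off-diagonal bulk blocks $p_aBp_b$ with $a\neq b$ are trace class by Proposition \ref{prop:transmission-rel-resolvent-trace-estimate}, and they also have zero trace.
\item So only the $p_+\cdot p_+$ block and the $p_k\cdot p_k$ blocks contribute.
\end{itemize}

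\emph{Step 2: the diagonal blocks.} For the $++$ block, formula \eqref{eq:pp-block-b-ops} and cyclicity give
\begin{align*}
\tr\bigl(p_+(\cdots)p_+\bigr)=-\kappa_+^{-2}\tr\Bigl(\CLP'_{\lambda/\kappa_+}\chi_+\CLP_{\lambda/\kappa_+}\,\M\,\bigl(H_\lambda^{-1}-H_{\mathrm{diag},\lambda}^{-1}\bigr)\Bigr).
\end{align*}
Cyclicity is legitimate because $H_\lambda^{-1}-H_{\mathrm{diag},\lambda}^{-1}$ is trace class on $\mathscr{C}$ by \eqref{eq:hlambda-inv-diff-trace-estimate}, and the potential operators are bounded. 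I then apply $\CLP^t_\mu\chi_\pm\CLP_\mu=-\frac{1}{2\mu}\dot A_\mu P^\pm_\mu$ with $\mu=\lambda/\kappa_+$. The $k$-th interior blocks are summed in the same way, after discarding the pieces already shown to be off-diagonal in the boundary decomposition $\sum_i q_i$. They give
\begin{align*}
-\kappa_-^{-2}\tr\Bigl(\CLP'_{\lambda/\kappa_-}\chi_-\CLP_{\lambda/\kappa_-}\bigl(H_\lambda^{-1}-H_{\mathrm{diag},\lambda}^{-1}\bigr)\M\Bigr).
\end{align*}
The cross terms involving $B_{\Omega_i}$ on $p_k$ with $i\neq k$ are off-diagonal in $q_i$ and drop out, exactly as $\dot T_\lambda N_{\mathrm{diag},\lambda}^{-1}$ did in the Neumann case. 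Since $\frac{d}{d\lambda}A_{\lambda/\kappa}=\kappa^{-1}\dot A_{\lambda/\kappa}$, the total becomes
\begin{align*}
\frac{1}{2\lambda}\tr\Bigl[\Bigl(\tfrac{d}{d\lambda}A_{\lambda/\kappa_+}\Bigr)P^+_{\lambda/\kappa_+}\M\,X_\lambda+\M\Bigl(\tfrac{d}{d\lambda}A_{\lambda/\kappa_-}\Bigr)P^-_{\lambda/\kappa_-}X_\lambda\Bigr],\qquad X_\lambda:=H_\lambda^{-1}-H^{-1}_{\mathrm{diag},\lambda}.
\end{align*}

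\emph{Step 3: algebra and main obstacle.} The main obstacle is to recognise the last expression as $-\frac{1}{2\lambda}\tr_{\mathcal{B}^-_{\lambda/\kappa_+}}[\dot H_\lambda H_\lambda^{-1}-\dot H_{\mathrm{diag},\lambda}H_{\mathrm{diag},\lambda}^{-1}]$. The ingredients are:
\begin{itemize}
\item $\dot H_\lambda=-\frac{d}{d\lambda}A_{\lambda/\kappa_+}\M-\M\frac{d}{d\lambda}A_{\lambda/\kappa_-}$;
\item the intertwining relation \eqref{eq:calderon-intertwiner}, in the form $H_\lambda^{-1}P^\pm_{\lambda/\kappa_+}=P^\pm_{\lambda/\kappa_-}H_\lambda^{-1}$, and likewise for the diagonal operators;
\item the identities obtained by differentiating $P^\pm P^\pm=P^\pm$ and $A^2=\frac14$: $\dot A A+A\dot A=0$, hence $P^+\dot AP^+=P^-\dot AP^-=0$ and $\dot A P^\pm=P^\mp\dot A P^\pm$.
\end{itemize}
The last identity shows that $\dot A_{\lambda/\kappa_+}P^+_{\lambda/\kappa_+}$ maps into $\mathcal{B}^-_{\lambda/\kappa_+}$. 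I would first use the intertwining relation to move the projectors $P^\pm_{\lambda/\kappa_-}$ past $H^{-1}$, turning them into $P^\pm_{\lambda/\kappa_+}$ on the left, and then use cyclicity to bring everything into the form $\tr(P^-_{\lambda/\kappa_+}\,\cdot\,P^-_{\lambda/\kappa_+})$, which is the trace over $\mathcal{B}^-_{\lambda/\kappa_+}$. In the diagonal terms the projectors need care: $H_{\mathrm{diag},\lambda}$ intertwines the diagonal Calderon projectors, not the full ones, so the difference $P-P_{\mathrm{diag}}$ must be shown to contribute only off-diagonal, hence traceless, terms. I expect this bookkeeping, rather than any analytic estimate, to be the delicate point, and I would check it carefully blockwise in the $q_i$ decomposition.
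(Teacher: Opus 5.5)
\textbf{There is a genuine gap in Step~2, and it is the reason Step~3 cannot be closed.}

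Write $\CLP_{\pm,i}:=\CLP_{\lambda/\kappa_\pm}q_i$, $H_i:=q_iH_\lambda q_i$ (inverted on $\mathscr C_i$, extended by zero), and $X_\lambda:=H_\lambda^{-1}-H_{\diag,\lambda}^{-1}$. You discard two families of terms:
\begin{itemize}
\item the exterior cross terms $p_kB_{\Omega_i}p_k=-\kappa_+^{-2}p_k\CLP_{+,i}\M H_i^{-1}\CLP_{+,i}^tp_k$ for $i\neq k$;
\item the interior remainder $-\kappa_-^{-2}\sum_{i\neq k}p_k\CLP_{-,i}H_i^{-1}\M\CLP_{-,i}^tp_k$. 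It appears when you write $H_\lambda^{-1}-H_k^{-1}=X_\lambda+\sum_{i\neq k}H_i^{-1}$ inside $p_k(B_\Omega-B_{\Omega_k})p_k$.
\end{itemize}

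Neither family is off-diagonal in any decomposition. After cyclic permutation their traces are $\tr_{\mathscr C_i}[\CLP_{+,i}^t\chi_{\Omega_k}\CLP_{+,i}\M H_i^{-1}]$ and $\tr_{\mathscr C_i}[\CLP_{-,i}^t\chi_{\Omega_k}\CLP_{-,i}H_i^{-1}\M]$. These are traces of diagonal blocks, each a product of two off-diagonal pieces $q_i\CLP^t_\pm p_k$ and $p_k\CLP_\pm q_i$. In fact $\tr(p_kB_{\Omega_i}p_k)$ is the trace over $\Omega_k\subset\Omega_i^+$ of the scattering correction of the single obstacle $\Omega_i$. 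For $\lambda=\rmi\mu$ and $\kappa_-\neq\kappa_+$ this operator is nonzero and sign-definite by form monotonicity, so its trace does not vanish.

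Your analogy with $\dot T_\lambda N_{\diag,\lambda}^{-1}$ breaks down exactly here: there a single off-diagonal factor meets a diagonal one. Likewise, the pieces with $i\neq k$ or $j\neq k$ in Proposition~\ref{prop:transmission-rel-resolvent-trace-estimate} were shown to be trace class, not traceless.

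\textbf{What the correct bookkeeping gives.} Keeping all terms, the diagonal blocks sum to
\[
-\kappa_+^{-2}\tr\bigl[L_+\M H_\lambda^{-1}-L_{+,d}\M H_{\diag,\lambda}^{-1}\bigr]-\kappa_-^{-2}\tr\bigl[L_-H_\lambda^{-1}\M-L_{-,d}H_{\diag,\lambda}^{-1}\M\bigr].
\]
Here $L_\pm=\CLP_{\lambda/\kappa_\pm}^t\chi_{\Omega^\pm}\CLP_{\lambda/\kappa_\pm}$. The operators $L_{+,d}=\bigoplus_i\CLP_{+,i}^t(1-p_i)\CLP_{+,i}$ and $L_{-,d}=\bigoplus_i\CLP_{-,i}^tp_i\CLP_{-,i}$ integrate over the exterior and interior of each \emph{single} obstacle.

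Your total misses $-\kappa_+^{-2}\tr[(L_+-L_{+,d})\M H_{\diag,\lambda}^{-1}]-\kappa_-^{-2}\tr[(L_--L_{-,d})H_{\diag,\lambda}^{-1}\M]$. Since only diagonal blocks pair with $H_{\diag,\lambda}^{-1}$, this is exactly the sum of the dropped traces.

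The role of the cross terms is to replace the full $L_\pm$ in front of $H_{\diag,\lambda}^{-1}$ by $L_{\pm,d}$. Applying the Calderon identity obstacle by obstacle then gives $\kappa_+^{-2}L_{+,d}=-\frac{1}{2\lambda}\dot A_{+,\diag}P^+_{+,\diag}$ and $\kappa_-^{-2}L_{-,d}=-\frac{1}{2\lambda}\dot A_{-,\diag}P^-_{-,\diag}$. Here $\dot A_\pm=\frac{d}{d\lambda}A_{\lambda/\kappa_\pm}$, $P^\sigma_\pm=P^\sigma_{\lambda/\kappa_\pm}$, and the subscript $\diag$ denotes block-diagonal parts. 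These projectors do intertwine with $H_{\diag,\lambda}$, so your Step~3 algebra then produces $\dot H_{\diag,\lambda}H_{\diag,\lambda}^{-1}P^-_{+,\diag}$, up to the traceless $\tr(\dot A_+-\dot A_{+,\diag})$.

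Starting from your expression this cannot work. The diagonal block
\[
q_i(\dot A_+P^+_+-\dot A_{+,\diag}P^+_{+,\diag})q_i=-\sum_{j\neq i}q_i\dot A_+q_jA_{\lambda/\kappa_+}q_i
\]
is nonzero. So your Step~3 claim that $P-P_{\diag}$ contributes only traceless terms is false at that point. The claim becomes true only at the very end, for $\tr[\dot H_{\diag,\lambda}H_{\diag,\lambda}^{-1}(P^-_{+,\diag}-P^-_+)]$, where a single off-diagonal factor meets a diagonal one.
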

\begin{proof}
The first step of the proof is to replace the relative resolvent by a sum of boundary correction terms $B_{\Omega_i}$.
This is not as straightforward as in the Dirichlet and Neumann cases, because the free resolvent terms do not all cancel.
It turns out that they cancel on the diagonal blocks, and the off-diagonal blocks do not contribute to the trace. 

Write $G_\pm := \kappa_\pm^{-2} G_{\lambda/\kappa_\pm}$.
The proof of Proposition \ref{prop:transmission-rel-resolvent-trace-estimate} shows that
\begin{align*}
(-\Delta_{T,\Omega} - \lambda^2)^{-1} - G_+ - \sum_i\left[(-\Delta_{T,\Omega_i} - \lambda^2)^{-1} - G_+ \right]
= B_{\Omega} - \sum_{i=1}^N B_{\Omega_i} + R_\Omega - \sum_i R_{\Omega_i}.
\end{align*}

To compute the trace, we will use an abbreviated notation.
The spectral parameter $\lambda$ will be suppressed entirely.
We will write $H_i = q_i H q_i$ for the transmission operator associated to $\Omega_i$.
Moreover we define
\begin{align*}
&K_{\pm} := K_{\lambda/\kappa_\pm}, \qquad K_{\pm, i} := K_{\pm} q_i\\
&L_+ := K_+^t p_+ K_+, \qquad
L_{+, d} := \bigoplus_i K_{+, i}^t (1 - p_i) K_{+, i}\\
&L_- := K_-^t p_- K_-, \qquad
L_{-, d} := \bigoplus_i K_{-, i}^t p_i K_{-, i}.
\end{align*}
Note that $L_{+, d}$ is \emph{not} the diagonal part of $L_+$.

It was shown in the proof of Proposition \ref{prop:transmission-rel-resolvent-trace-estimate} that $R_\Omega - \sum_i R_{\Omega_i}$ and $B_\Omega - \sum_i B_{\Omega_i}$ are trace class.
Since $R_\Omega - \sum_i R_{\Omega_i}$ consists of off-diagonal blocks, its trace vanishes.

More specifically, the proof also showed that the block terms in $B_\Omega - \sum_i B_{\Omega_i}$, Equations (\ref{eq:pp-block-b-ops}) and (\ref{eq:kk-block-b-ops}), are trace class.

The $++$-block contributes
\begin{align*}
&\tr_{L^2(\Omega^+)} \left[ p_+(B_\Omega - \sum_i B_{\Omega_i}) p_+\right] = -\kappa_+^{-2} \tr \left[p_+ K_+ \M (H^{-1} - H_{\diag}^{-1}) \CLP'_{\lambda/\kappa_+} p_+\right]\\
= &-\kappa_+^{-2} \tr_{L^2(\Omega^+)} \left[K_+^t \chi_{\Omega^+} K_+ \M (H^{-1} - H_{\diag}^{-1}) \right].
\end{align*}

For the $k-k$-block, we compute the individual terms
\begin{align*}
&p_k B_{\Omega} p_k=-\kappa_{-}^{-2} p_k K_{-} H^{-1} \M K_{-}^t p_k\\
&p_k B_{\Omega_k} p_k=-\kappa_{-}^{-2} p_k K_{-, k} H_k^{-1} \M K_{-, k}^t p_k\\
&p_k B_{\Omega_i} p_k=-\kappa_{+}^{-2} p_k K_{+, i} \M H_i^{-1} K_{+, i}^t p_k, \qquad i \neq k.
\end{align*}

Hence
\begin{align*}
&p_k\left(B_{\Omega}- B_{\Omega_k}\right) p_k
= -\kappa_{-}^{-2} p_k K_{-} H^{-1} \M K_{-}^t p_k+\kappa_{-}^{-2} p_k K_{-, k} H_k^{-1} \M K_{-, k}^t p_k\\
= &-\kappa_{-}^{-2} p_k K_{-} (H^{-1} - H_\diag^{-1}) \M K_{-}^t p_k
- \kappa_-^{-2} \sum_{i \neq k} p_k K_{-, i} H_i^{-1} \M K_{-, i}^t p_k.
\end{align*}
Summing over $k$ we get
\begin{align*}
\begin{aligned}
p_k\left(B_{\Omega}-\sum_i B_{\Omega_i}\right) p_k=-  \kappa_{-}^{-2} p_k K_{-}\left(H^{-1} - H_{\diag}^{-1}\right) \M K_{-}^t p_k &-\kappa_{-}^{-2} \sum_{i \neq k} p_k K_{-, i} H_i^{-1} \M K_{-, i}^t p_k \\
& +\kappa_{+}^{-2} \sum_{i \neq k} p_k K_{+, i} \M H_i^{-1} K_{+, i}^t p_k
\end{aligned}
\end{align*}

Both the first term on the right-hand side and the individual summands in the $i\neq k$ sums are trace class.
We can therefore compute the trace separately and permute cyclically under the trace to obtain
\begin{align*}
\begin{aligned}
 &\operatorname{Tr}_{L^2(\Omega_k)}\left[p_k\left(B_{\Omega}-\sum_i B_{\Omega_i}\right) p_k\right]
= -\kappa_{-}^{-2} \operatorname{Tr}_{\mathscr{C}}\left[K_{-}^t \chi_{\Omega_k} K_{-}\left(H^{-1} - H_{\diag}^{-1}\right) \M \right] \\
& -\kappa_{-}^{-2} \sum_{i \neq k} \operatorname{Tr}_{\mathscr{C}_i}\left[K_{-, i}^t \chi_{\Omega_k} K_{-, i} H_i^{-1} \M_i\right]
+\kappa_{+}^{-2} \sum_{i \neq k} \operatorname{Tr}_{\mathscr{C}_i}\left[K_{+, i}^t \chi_{\Omega_k} K_{+, i} \M H_i^{-1}\right] .
\end{aligned}
\end{align*}

Summing over $k$ gives
\begin{align*}
\begin{aligned}
& \sum_{k=1}^N \operatorname{Tr}_{L^2(\Omega_k)}\left[p_k\left(B_{\Omega}-\sum_i B_{\Omega_i}\right) p_k\right]
=  -\kappa_{-}^{-2} \operatorname{Tr}_{\mathscr{C}}\left[L_{-}\left(H^{-1} - H_{\diag}^{-1}\right) \M\right] \\
& -\kappa_{-}^{-2} \sum_i \operatorname{Tr}_{\mathscr{C}_i}\left[K_{-, i}^t (\chi_\Omega - \chi_{\Omega_i}) K_{-, i} H_i^{-1} \M\right]
 +\kappa_{+}^{-2} \sum_i \operatorname{Tr}_{\mathscr{C}_i}\left[K_{+, i}^t  (\chi_\Omega - \chi_{\Omega_i}) K_{+, i} \M H_i^{-1}\right] .
\end{aligned}
\end{align*}
We next represent the sums in terms of the operators $L_\pm, L_{\pm, d}$.

Because $H_\diag^{-1}$ is diagonal (and off-diagonal times diagonal is off-diagonal), only the diagonal boundary blocks of $L_{+}-L_{+, d}$ contribute to $\tr_{\mathscr{C}}\left(\left(L_{+}-L_{+,d}\right) H_\diag^{-1} \M\right)$.
The diagonal terms are precisely
\begin{align*}
q_i\left(L_{+}-L_{+, d}\right) q_i=-K_{+, i}^t (\chi_\Omega- \chi_{\Omega_i}) K_{+, i}.
\end{align*}
Therefore
\begin{align*}
    -\sum_i \tr_{\mathscr{C}_i}\left[K_{+, i}^t (\chi_\Omega- \chi_{\Omega_i}) K_{+, i} \M H_i^{-1}\right]
    = \tr_{\mathscr{C}}\left(\left(L_{+}-L_{+, d}\right) \M H_\diag^{-1}\right).
\end{align*}
Similarly, we find
\begin{align*}
\sum_i \tr_{\mathscr{C}_i}\left[K_{-, i}^t (\chi_\Omega - \chi_{\Omega_i}) K_{-, i} H_i^{-1} \M_i\right]
=\tr_{\mathscr{C}}\left(\left(L_{-}-L_{-, d}\right) H_\diag^{-1} \M\right).
\end{align*}
Summing up all contributions therefore gives
\begin{align*}
\begin{gathered}
\operatorname{Tr}_{\mathcal{H}}\left(B_{\Omega}-\sum_i B_{\Omega_i}\right)=-\kappa_{+}^{-2} \operatorname{Tr}_{\mathscr{C}}\left[L_{+} \M\left(H^{-1}-H_\diag^{-1}\right)+\left(L_{+}-L_{+, d}\right) \M H_\diag^{-1}\right] \\
-\kappa_{-}^{-2} \operatorname{Tr}_{\mathscr{C}}\left[L_{-}\left(H^{-1}-H_\diag^{-1}\right)\M+\left(L_{-}-L_{-, d}\right) H_\diag^{-1}\M\right] .
\end{gathered}
\end{align*}

By (\ref{eq:cald-potential-derivative}),
\begin{align*}
&\kappa_{+}^{-2} L_{+}=-\frac{1}{2 \lambda} \dot{A}_{+} P_{+}^{+}, \quad \kappa_{+}^{-2} L_{+, d}=-\frac{1}{2 \lambda} \dot{A}_{+, \diag} P_{+, \diag}^{+}\\
&\kappa_{-}^{-2} L_{-}=-\frac{1}{2 \lambda} \dot{A}_{-} P_{-}^{-}, \quad \kappa_{-}^{-2} L_{-, d}=-\frac{1}{2 \lambda} \dot{A}_{-, \diag} P_{-, \diag}^{-}.
\end{align*}
Therefore,
\begin{align*}
\operatorname{Tr}_{L^2\left(\mathbb{R}^d\right)}\left(B_{\Omega}-\sum_i B_{\Omega_i}\right)
=  \frac{1}{2 \lambda} \operatorname{Tr}_{\mathscr{C}}\big[\dot{A}_{+} P_{+}^{+} \M H^{-1}&-\dot{A}_{+, \diag} P_{+, \diag}^{+} \M H_\diag^{-1}\\
 &+ \dot{A}_{-} P_{-}^{-} H^{-1} \M-\dot{A}_{-, \diag} P_{-, \diag}^{-} H_\diag^{-1} \M\big] .
\end{align*}

By the definition of $H$ we have $P_+^+ \M = \M P_-^- + H$, and by
(\ref{eq:calderon-intertwiner}) we have $P_{+}^{\pm} H=H P_{-}^{\pm}$.
This allows us to move the Calderon projectors all the way to the right, giving
\begin{align*}
\tr_{\mathscr{C}}\left[\dot{A}_{+} \M H^{-1} P_{+}^{-}-\dot{A}_{+, \diag} \M H_\diag^{-1} P_{+, \diag}^{-}
+\M \dot{A}_{-} H^{-1} P_{+}^{-}-\M \dot{A}_{-, \diag} H_\diag^{-1} P_{-, \diag}^{-}\right]
+ \tr_{\mathscr{C}}\left(\dot{A}_{+}-\dot{A}_{+, \diag}\right).
\end{align*}
The last term is trace class by Lemma \ref{lem:trace-class-smoothing-ops} and the trace vanishes by Remark \ref{rem:off-diag-trace-zero}.
The remaining term is evidently
\begin{align*}
\tr_{L^2\left(\mathbb{R}^d\right)}\left(B_{\Omega}-\sum_i B_{\Omega_i}\right)=-\frac{1}{2 \lambda} \tr_{\mathscr{C}}\left[\dot{H} H^{-1} P_{+}^{-}-\dot{H}_\diag H_\diag^{-1} P_{+, \diag}^{-}\right]\\
=-\frac{1}{2 \lambda} \tr_{\mathscr{C}}\left[ \left(\dot{H} H^{-1}-\dot{H}_\diag H_\diag^{-1}\right) P_{+}^{-}\right]
= -\frac{1}{2 \lambda} \tr_{\mathcal{B}_{\lambda/\kappa_+}^-}\left[\dot{H} H^{-1}-\dot{H}_\diag H_\diag^{-1}\right] .
\end{align*}
Here we have used again that off-diagonal terms do not contribute to the trace, and therefore
\begin{align*}
\tr_{\mathscr{C} } \left[\dot{H}_\diag H_\diag^{-1} (P_{+, \diag}^- - P_{+}^{-})\right] = 0.
\end{align*}
\end{proof}



The Dirichlet and Neumann problems for $\Omega$ decouple into the problems for the individual $\Omega_i$.
Hence $P_{\lambda}^- = \sum_i P_{\lambda, i}^-$ and therefore the intertwining relation (\ref{eq:calderon-intertwiner}) holds also for $H_{\mathrm{diag},\lambda}$.
Therefore,
in the direct sum decomposition $\mathscr{C} = \mathcal{B}_{\lambda/\kappa_+}^- \oplus \mathcal{B}_{\lambda/\kappa_+}^+$, the operator $H_\lambda H_{\mathrm{diag},\lambda}^{-1}$ is block diagonal:
\begin{align*}
H_\lambda H_{\mathrm{diag},\lambda}^{-1} = \begin{pmatrix}
    H_\lambda^- (H_{\mathrm{diag},\lambda}^-)^{-1} & 0\\
    0 & H_\lambda^+ (H_{\mathrm{diag},\lambda}^+)^{-1}
\end{pmatrix}
\end{align*}
where $H_\lambda^- := P_{\lambda/\kappa_+}^- H_\lambda P_{\lambda/\kappa_-}^-$ and $H_{\mathrm{diag},\lambda}^- := P_{\lambda/\kappa_+}^- H_{\mathrm{diag},\lambda} P_{\lambda/\kappa_-}^-$.
The $+$ block is defined analogously.

\begin{proposition}
    \label{prop:transmission-xi-holo-decay}
\begin{enumerate}
\item The operator family $\lambda \mapsto \operatorname{id}_{\mathscr{C}} - H_{\lambda} H_{\mathrm{diag}, \lambda}^{-1}$ is a holomorphic map $\mathfrak{D}_\epsilon \to \Nuc(\mathscr{C})$.
\item The Fredholm determinant 
\begin{align*}
    \Xi_T(\lambda):=\log \det_{\mathcal{B}_{\lambda / \kappa_{+}}^{-}}\left(H_\lambda^{-}\left(H_{\mathrm {diag }, \lambda}^{-}\right)^{-1}\right)
\end{align*}
is holomorphic on $\mathfrak{D}_\epsilon$, and for $d\geq 3$ continuous on $\overline{\mathfrak{D}_\epsilon}$.
\item For $\lambda \in \mathfrak{D}_\epsilon$, $\Xi_T$ satisfies
\begin{align*}
\Xi_T'(\lambda) = -2\lambda \tr{} R_{\mathrm{rel},\lambda}.
\end{align*}
\item For $|\lambda| \geq R > 0$ we have the decay estimates
\begin{align}
|\Xi_T(\lambda)| \leq C_{\delta', \epsilon} e^{-\delta' \operatorname{Im}(\lambda)/\kappa}  \qquad |\Xi_T'(\lambda)| \leq C_{\delta', \epsilon} e^{-\delta' \operatorname{Im}(\lambda)/\kappa}.
\end{align}
If $d\geq 3$ then $\Xi_T, \Xi'_T$ are bounded near $\lambda=0$.
\end{enumerate}
\end{proposition}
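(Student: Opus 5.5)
We follow the pattern of the Neumann case (Lemma \ref{lem:xi-n-bounds}), and the work splits along the four items.

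For (1), we write
\begin{align*}
\operatorname{id}_{\mathscr{C}} - H_\lambda H_{\mathrm{diag},\lambda}^{-1} = (H_{\mathrm{diag},\lambda} - H_\lambda) H_{\mathrm{diag},\lambda}^{-1}.
\end{align*}
The difference $H_{\mathrm{diag},\lambda} - H_\lambda = \M(A_{\lambda/\kappa_-}-A_{\mathrm{diag},\lambda/\kappa_-}) + (A_{\lambda/\kappa_+}-A_{\mathrm{diag},\lambda/\kappa_+})\M$ consists of off-diagonal blocks $q_iA q_j$, $i\neq j$. Their kernels are smooth because $\operatorname{dist}(\partial\Omega_i,\partial\Omega_j)\ge\delta$, and by Lemma \ref{lem:trace-class-smoothing-ops} they are trace class with norm $\lesssim e^{-\delta'\operatorname{Im}\lambda/\kappa}$, holomorphically in $\lambda$. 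Invertibility of $H_{\mathrm{diag},\lambda}$ on $\mathfrak{D}_\epsilon$ comes from Theorem \ref{thm:transmission-problem-equivalence}(iii) applied to each $\Omega_i$. Together with the polynomial bound $\|H_{\mathrm{diag},\lambda}^{-1}\|\lesssim|\lambda|^2$ from the corollary after Proposition \ref{prop:cald-op-inv-bound}, this gives holomorphy in $\Nuc(\mathscr{C})$ (via Lemma \ref{lem:trace-class-holo}). It also gives the bound $\|\operatorname{id}-H_\lambda H_{\mathrm{diag},\lambda}^{-1}\|_{\Nuc}\le C e^{-\delta'\operatorname{Im}\lambda/\kappa}$, where the factor $|\lambda|^2$ is absorbed into the exponential by shrinking $\delta'$.

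For (2), we must make sense of a determinant on the moving subspace $\mathcal{B}^-_{\lambda/\kappa_+}$. We trivialise the bundle locally: fix $\lambda_0$ and use
\begin{align*}
U_\lambda := P^-_{\lambda/\kappa_+}P^-_{\lambda_0/\kappa_+} + P^+_{\lambda/\kappa_+}P^+_{\lambda_0/\kappa_+},
\end{align*}
which is holomorphic, invertible near $\lambda_0$, and maps $\mathcal{B}^\pm_{\lambda_0/\kappa_+}$ onto $\mathcal{B}^\pm_{\lambda/\kappa_+}$. Conjugating gives a holomorphic trace-class perturbation of the identity on the fixed space $\mathcal{B}^-_{\lambda_0/\kappa_+}$. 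Since conjugation does not change the determinant, the local definitions agree on overlaps.

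Because of the block-diagonal structure noted above, $\det_{\mathscr{C}}(H_\lambda H_{\mathrm{diag},\lambda}^{-1})$ factors as the product of the determinants of the $-$ and $+$ blocks. Each block is therefore a trace-class perturbation of the identity, with trace norm bounded by $\|P^\pm\|$ times the bound from (1). Nonvanishing of the determinant follows from invertibility of $H_\lambda$, so $\log\det$ is holomorphic on the simply connected domain $\mathfrak{D}_\epsilon$.

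Continuity up to $\overline{\mathfrak{D}_\epsilon}$ for $d\ge3$ is where I expect the main obstacle. My plan is as follows. The layer operators are (Hahn) holomorphic near $0$ with bounded derivatives, and $H_0$, $H_{\mathrm{diag},0}$ are invertible by Theorem \ref{thm:transmission-problem-equivalence}(ii),(iii). Hence $H_{\mathrm{diag},\lambda}^{-1}$ is continuous near $0$, and the trace-class off-diagonal part is continuous there too. The remaining difficulty is uniformity of the trivialisation at $\lambda=0$; I would handle it by taking $\lambda_0=0$ in the construction of $U_\lambda$.

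For (3), Lemma \ref{lem:fredholm-det-deriv}, applied in the trivialisation, gives
\begin{align*}
\Xi_T' = \tr_{\mathcal{B}^-}\bigl[\dot H H^{-1}-\dot H_{\mathrm{diag}}H_{\mathrm{diag}}^{-1}\bigr].
\end{align*}
The terms involving $\dot U$ cancel, since they appear through a commutator whose trace vanishes; I would check this with $P^-\dot P^-P^-=0$. Comparing with Proposition \ref{prop:transmission-rel-resolvent-trace} then yields $\Xi_T'=-2\lambda\tr R_{\mathrm{rel},\lambda}$.

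For (4), for $|\lambda|\ge R$ the trace norm from (1) is small. We then use $|\log(1+z)|\le2|z|$ and $|\det(1+T)-1|\le\|T\|_1e^{\|T\|_1+1}$. The derivative bound follows from the Cauchy estimate on discs $B_r(\lambda)\subset\mathfrak{D}_{\epsilon/2}$, exactly as in Lemma \ref{lem:xi-n-bounds}. Finally, boundedness near $0$ for $d\ge3$ follows from the continuity established in (2) together with the bounded derivatives of the layer operators.
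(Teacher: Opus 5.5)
Your proposal follows the paper's proof essentially step for step. Item (1) uses the trace-class off-diagonal part $H_{\mathrm{diag},\lambda}-H_\lambda$ composed with bounded inverses. Item (2) reduces the determinant on the moving space $\mathcal{B}^-_{\lambda/\kappa_+}$ to one on a fixed space; the paper simply extends by the identity on $\mathcal{B}^+_{\lambda/\kappa_+}$ and works on $\mathscr{C}$, which is a bit quicker than your local trivialisation. Item (3) uses the same fibre isomorphisms $U(\lambda,\lambda_0)$ and Jacobi's formula, and (4) is the Neumann argument.

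One caution concerns the check you postpone in (3). The paper also glosses over this point by calling $H_\lambda H_{\mathrm{diag},\lambda}^{-1}$ block diagonal. In fact $H_{\mathrm{diag},\lambda}$ intertwines the diagonal projectors $P^\pm_{\mathrm{diag}}$, and $P^-_{\mathrm{diag}}$ has the same range as $P^-$ but a different kernel. Hence $F=H_\lambda H_{\mathrm{diag},\lambda}^{-1}$ preserves $\mathcal{B}^-_{\lambda/\kappa_+}$ but in general not $\mathcal{B}^+_{\lambda/\kappa_+}$. The compressed commutator of $F$ with $\dot U$ therefore leaves the residual $\tr\bigl(F^{-1}P^-FP^+\dot P^-\bigr)$, with $P^\pm=P^\pm_{\lambda/\kappa_+}$. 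This residual does vanish, but the proof needs more than $P^-\dot P^-P^-=0$. It uses that $E=P^--P^-_{\mathrm{diag}}$ is off-diagonal with $EP^-=0=E^2$, whence $\tr(E\dot P^-P^-)=-\tr(\dot E P^-)=0$ at both spectral parameters $\lambda/\kappa_\pm$.
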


\begin{proof}
\begin{enumerate}
\item $\operatorname{id} - H_\lambda H_{\mathrm{diag},\lambda}^{-1} = H_{\lambda} (H_{\lambda}^{-1} - H_{\mathrm{diag},\lambda}^{-1})$.
Since $\lambda \mapsto H_\lambda^{-1}$ is holomorphic $\mathfrak{D}_\epsilon \to \mathcal{L}(\mathscr{C})$, by (\ref{eq:hlambda-inv-diff-trace-estimate}) this is holomorphic with values in $\Nuc(\mathscr{C})$.
\item The operator on $\mathscr{C} = \mathcal{B}_{\lambda/\kappa_+}^- \oplus \mathcal{B}_{\lambda/\kappa_+}^+$ given by
\begin{align*}
\begin{pmatrix}
    H_\lambda^- (H_{\mathrm{diag},\lambda}^-)^{-1} & 0\\
    0 & \operatorname{id}
\end{pmatrix}
\end{align*}
is a trace-class perturbation of the identity:
\begin{align*}
\operatorname{id}_{\mathscr{C}} - \begin{pmatrix}
    H_\lambda^- (H_{\mathrm{diag},\lambda}^-)^{-1} & 0\\
    0 & \operatorname{id}
\end{pmatrix}
= 
\begin{pmatrix}
    \operatorname{id} - H_\lambda^- (H_{\mathrm{diag},\lambda}^-)^{-1} & 0\\
    0 & 0
\end{pmatrix}.
\end{align*}
Since this operator has trace $\tr(1 -  H_\lambda^- (H_{\mathrm{diag},\lambda}^-)^{-1})$ it follows that
$\det_{\mathcal{B}_{\lambda/\kappa_+}^-} (H_\lambda^- (H_{\mathrm{diag},\lambda}^-)^{-1}) = \det_{\mathscr{C}} (H_\lambda^- (H_{\mathrm{diag},\lambda}^-)^{-1} \oplus \operatorname{id})$.
Since the latter operator is holomorphic on $\mathfrak{D}_\epsilon$ and continuous on $\mathfrak{D}_\epsilon$ it follows by the same arguments as the Neumann case that $\Xi_T$ is holomorphic on $\mathfrak{D}_\epsilon$ and continuous on $\overline{\mathfrak{D}_\epsilon}$.
\item The determinant is now over the subspace $\mathcal{B}^-_{\lambda/\kappa_+}$, which depends on $\lambda$.
However, this does not impede the application of Jacobi's formula (\ref{eq:fredholm-jacobi-formula}).

We have holomorphic fibre isomorphisms
\begin{align*}
U(\lambda, \lambda_0): \mathcal{B}^-_{\lambda_0/\kappa_+} \to \mathcal{B}^-_{\lambda/\kappa_+},\quad
U(\lambda,\lambda_0) = P_{\lambda/\kappa_+}^+ P_{\lambda_0/\kappa_+}^+ + P_{\lambda/\kappa_+}^- P_{\lambda_0/\kappa_+}^-.
\end{align*}

Let $g(\lambda) = H_\lambda^- (H_{\mathrm{diag},\lambda}^-)^{-1}$.
Then
$\widetilde{g}(\lambda) := U(\lambda, \lambda_0)^{-1} g(\lambda) U(\lambda, \lambda_0)$ is a holomorphic function $\mathfrak{D}_\epsilon \to \mathcal{L}(\mathcal{B}^-_{\lambda_0/\kappa_+})$ and
\begin{align*}
\frac{\mathrm{d}}{\mathrm{d} \lambda} \det g(\lambda) = \frac{\mathrm{d}}{\mathrm{d} \lambda} \det \widetilde{g}(\lambda) = \tr(\widetilde{g}(\lambda)^{-1} \widetilde{g}'(\lambda)) = \tr (g(\lambda)^{-1} g'(\lambda)).
\end{align*}

Hence the formula follows from Proposition \ref{prop:transmission-rel-resolvent-trace}.
\item Using Proposition \ref{prop:transmission-rel-resolvent-trace-estimate} the argument is identical to the Neumann case.
\end{enumerate}
\end{proof}

\section{Proof of main theorems.}
\label{sec:proof-main-thm}

In this section we will put together all the results of the previous sections.
In order to obtain the trace formula, i.e. to prove Theorems \ref{thm:dirichlet-trace-formula}, \ref{thm:neumann-trace-formula}, \ref{thm:transmission-trace-formula}, all that is missing is a functional calculus argument which is the same for all boundary conditions;
it depends only on the formal properties of the relative resolvent and the function $\Xi$.
The proof is essentially the same as Theorem 1.3 \cite{strohmaierRelativeTraceFormula2021} and Theorem 1.4 \cite{HSW}, but we include it here for completeness.

\begin{proof}
In principle the proof is a simple calculation with the holomorphic functional calculus, integrating along the boundary of the sector $\mathfrak{D}_\epsilon$.
The only challenge is that the function $f(\lambda) = (\lambda^2 + m^2)^s$ does not decay at infinity, and therefore we cannot apply it directly to $-\Delta$.
Instead, we have to introduce a regulator, which we can remove as soon as we have made the subtractions that define the relative operator.

Let $A \in \{-\Delta_D, -\Delta_N, -\Delta_T\}$ and let $A_0$ be the corresponding comparison operator; for the Dirichlet and Neumann cases $A_0 = -\Delta_0$ and for the transmission case $A_0 = -\kappa_+^2 \Delta_0$.
Write $A_i$ for the operator with boundary conditions only at $\partial\Omega_i$ and let $R_{\mathrm{rel},\lambda}$ be the relative resolvent corresponding to $A$.

Let $s \in \mathbb{C}$ with $\operatorname{Re}(s) > 0$, $m\geq 0$, and set $f(z) = (z + m^2)^s - m^{2s}$.
Then $f(z) = O(|z|)$ as $z \to 0$ if $m > 0$ and $f(z) = O(|z|^s)$ as $z \to 0$ if $m=0$.
Define the relative operator
\begin{align*}
D_f := f(A) - f(A_0) - \sum_{i=1}^M \left[f(A_i) - f(A_0)\right].
\end{align*}
Note that the subtracted $m^{2s}$ cancels in the relative operator, so $D_f$ is indeed the correct operator to consider.

Let $\gamma_\epsilon$ be the boundary of the sector $\mathfrak{S}_\epsilon := \{z \in \mathbb{C} \mid |\arg z | < \epsilon\}$
and $\Gamma_\epsilon$ the boundary of the $\mathfrak{D}_\epsilon$ sector (both oriented counterclockwise).
Then $\gamma_\epsilon$ is the image of $\Gamma_\epsilon$ under the map $\lambda \mapsto \lambda^2$.
Hence for any operator $A$ with $\sigma(A) \subset [0, \infty)$ and any admissible function $f$, we have
\begin{align*}
f(A) = \frac{1}{2\pi i}\int_{\gamma_\epsilon} \frac{f(z)}{A - z} \mathrm{d} z = \frac{1}{i \pi} \int_{-\Gamma_\epsilon} \frac{\lambda f(\lambda^2)}{A - \lambda^2} \mathrm{d} \lambda = \frac{\rmi}{\pi} \int_{\Gamma_\epsilon} \frac{\lambda f(\lambda^2)}{A - \lambda^2} \mathrm{d} \lambda.
\end{align*}

The function $f_n(z) = f(z) e^{-\frac{1}{n} z}$ is holomorphic in the sector $\mathfrak{S}_\epsilon$ and decays exponentially.
It is therefore an admissible function for the Riesz-Dunford functional calculus.
Thus
$$
D_{f_n}=\frac{\mathrm{i}}{\pi} \int_{\tilde{\Gamma}_\epsilon} \lambda f_n(\lambda^2) R_{\mathrm{rel}, \lambda} \,\mathrm{d} \lambda.
$$
By the trace norm bound $\norm{R_{\mathrm{rel}, \lambda}}_1 \leq C_{\delta', \epsilon} e^{-\delta' \operatorname{Im}(\lambda)}$ the integral converges as a Bochner integral in trace class operators.
By the dominated convergence theorem for the Bochner integral, $\lim_{n \to \infty} D_{f_n} = D_f$ in trace norm, i.e.
$$
D_f=\frac{\mathrm{i}}{\pi} \int_{{\Gamma}_\epsilon} \lambda f(\lambda^2) R_{\mathrm{rel}, \lambda} \,\mathrm{d} \lambda
$$
Therefore $D_f$ is trace-class and the trace commutes with the integral.
Since $\tr R_{\mathrm{rel}, \lambda}=-\frac{1}{2\lambda}\Xi'(\lambda)$ we have
\begin{align*}
\tr\left(D_f\right)=\frac{\mathrm{i}}{\pi} \int_{{\Gamma}_\epsilon} \lambda f(\lambda^2) \tr\left(R_{\mathrm{rel}, \lambda}\right) \,\mathrm{d} \lambda=-\frac{\mathrm{i}}{2 \pi} \int_{{\Gamma}_\epsilon} f(\lambda^2) \Xi^{\prime}(\lambda) \,\mathrm{d} \lambda=\frac{\mathrm{i}}{\pi} \int_{{\Gamma}_\epsilon} \lambda f'(\lambda^2) \Xi(\lambda) \,\mathrm{d} \lambda.
\end{align*}

Now $f'(z) = s(z + m^2)^{s-1}$, so $\lambda f'(\lambda^2) = s\lambda(\lambda^2+m^2)^{s-1} =: g(\lambda)$.
Since we have fixed the principal branch of the logarithm the branch cut of $g$ is the ray $i[m, \infty)$.
At the branch cut, $g$ has left and right sided limits
\begin{align*}
    g_-(\rmi  t) &=  \rmi  st (t^2 - m^2)^{s-1} e^{\rmi \pi(s-1)}\\
    g_+(\rmi t) &=  \rmi  st (t^2 - m^2)^{s-1} e^{-\rmi \pi(s-1)}, \qquad t^2 > m^2.
\end{align*}
Hence the jump is $g_+(\rmi  t) - g_-(\rmi  t) = 2 s \sin(\pi s) t (t^2 - m^2)^{s-1}$.
Deforming the contour $\Gamma_\epsilon$ to hug the branch cut we arrive at the result
\begin{align*}
\tr D_f = \frac{2 s}{\pi} \sin (\pi s) \int_m^{\infty} t\left(t^2-m^2\right)^{s-1} \Xi(\rmi  t) \,\mathrm{d} t.
\end{align*}
\end{proof}

\section{The case \(d=2\)}
\label{sec:dimension-two}

The case of dimension $d=2$ requires special care because $S_\lambda$ is no longer Hahn holomorphic at $\lambda=0$.
Instead, there is a $\log \lambda$ singularity.
This singularity vanishes on the solution space $\mathcal{B}_\lambda^-$ and therefore the Fredholm determinant $\Xi_T$ is still continuous in a neighbourhood of $\lambda=0$.

Let us first investigate the structure of the logarithmic singularity.
Suppose that we have $M$ connected components, $\Omega = \bigsqcup_{i=1}^M \Omega_i$.
Write $e_i = 1_{\partial \Omega_i}$ for the characteristic function of the $i$-th boundary component.
We define a charge map
\begin{align*}
V_j: H^{-1 / 2}(\partial \Omega) \to \mathbb{C},\qquad
V_j(\varphi) =\int_{\partial \Omega_j} \varphi \,\mathrm{d} \sigma
\end{align*}
and an operator
\begin{align*}
R: H^{-1 / 2}(\partial \Omega) \to H^{1 / 2}(\partial \Omega),\qquad
R(\varphi) = 1_{\partial\Omega} \int_{\partial\Omega} \varphi\, \mathrm{d} \sigma = \left(\sum_{i=1}^M e_i\right)\left(\sum_{j=1}^M V_j(\varphi)\right) =: \sum_{i,j=1}^M R_{ij}(\varphi).
\end{align*}

Let us summarize the expansion of the Hankel functions in $d=2$, in the form that we need.

\begin{lemma}
Let $d=2$.
Define $G_0(x,y) := -\frac{1}{2\pi} \log |x-y|$.
For $\lambda \in \mathbb{C}^+$ near $0$, we have, as operators $H^{s}_{\comp}(\mathbb{R}^d) \to H^{s+2}_{\loc}(\mathbb{R}^d)$,
\begin{align*}
G_\lambda = G_0 - \frac{1}{2\pi} \log \lambda + O(\lambda^2 \log \lambda).
\end{align*}

Define $S_0 := \gammaDir{} G_0 \gammaDir{*}$.
As operators $H^{-1/2}(\partial \Omega) \to H^{1/2}(\partial \Omega)$ we have
\begin{align*}
S_\lambda = S_0 - \frac{1}{2\pi} \log \lambda R + O(\lambda^2 \log \lambda).
\end{align*}
\end{lemma}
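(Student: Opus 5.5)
The plan is to start from the explicit formula \eqref{eqn:GreensFunction} for $d=2$,
\[
G_\lambda(x,y)=\tfrac{\rmi}{4}\mathrm{H}^{(1)}_0(\lambda|x-y|),
\]
and insert the classical small-argument expansion
\[
\tfrac{\rmi}{4}\mathrm{H}^{(1)}_0(z)=-\tfrac{1}{2\pi}\log z+c_0+\sum_{k\ge1}z^{2k}\bigl(a_k\log z+b_k\bigr),\qquad c_0=\tfrac{\rmi}{4}-\tfrac{1}{2\pi}(\gamma-\log 2).
\]
With $z=\lambda|x-y|$ we split $\log z=\log\lambda+\log|x-y|$. This gives
\[
G_\lambda(x,y)=G_0(x,y)-\tfrac{1}{2\pi}\log\lambda+c_0+E_\lambda(x,y).
\]
The remainder $E_\lambda$ is a sum of terms $\lambda^{2k}|x-y|^{2k}\bigl(a_k(\log\lambda+\log|x-y|)+b_k\bigr)$. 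The constant $c_0$ is harmless for the statement: it is a rank-one operator of the same type as $R$ and can be absorbed into $G_0$ by adjusting the normalization. Equivalently, the stated formula should be read with $G_0$ including this constant, and I would note this explicitly.

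Next I would show that $E_\lambda=O(\lambda^2\log\lambda)$ as an operator $H^s_\comp\to H^{s+2}_\loc$. Fix a compact set $K$ and cutoffs $\chi_1,\chi_2$ supported in it, and consider $\chi_1E_\lambda\chi_2$.

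For $k\ge1$ the kernels $|x-y|^{2k}$ and $|x-y|^{2k}\log|x-y|$ are, on $K\times K$, at least $C^{1}$, and they become smoother as $k$ grows. For $k=1$ the kernel $|x-y|^2\log|x-y|$ is a fundamental solution of $\Delta^2$ up to a polynomial, so its convolution maps $H^s_\comp\to H^{s+4}_\loc$, which is more than enough.

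I would bound the operator norm of each term by $C^k|\lambda|^{2k}(1+|\log\lambda|)$ times factorial-decaying coefficients $a_k,b_k$. This makes the series converge in operator norm for $|\lambda|<1$ and gives the total bound $O(|\lambda|^2|\log\lambda|)$. The only mildly delicate point is the mapping property of the log-type kernels for large $k$. Since $\Delta_x$ applied to $|x-y|^{2k}\log|x-y|$ lowers $k$ by one, I would verify it by induction from the $k=1$ case, or by the homogeneous-distribution Fourier-transform argument.

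The boundary statement then follows by composing with the traces, using $S_\lambda=\gammaDir{}G_\lambda\gammaDir{*}$. Since $\gammaDir{*}:H^{-1/2}(\partial\Omega)\to H^{-1}_\comp(\R^2)$ and $\gammaDir{}:H^1_\loc\to H^{1/2}(\partial\Omega)$ are bounded, the remainder estimate from the bulk with $s=-1$ transfers to give $O(\lambda^2\log\lambda)$ in $\mathcal{L}(H^{-1/2},H^{1/2})$. The $G_0$ term gives $S_0$. The constant kernel gives
\[
\gammaDir{}\,1\,\gammaDir{*}\varphi=1_{\partial\Omega}\int_{\partial\Omega}\varphi\,d\sigma=R\varphi .
\]
The main obstacle is therefore only the uniform bookkeeping of the Hankel series remainder in the Sobolev norms; everything else is direct composition.
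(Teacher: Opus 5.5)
Your argument is correct and is essentially the paper's own route: the paper gives no separate proof and simply reads the lemma off the small-argument expansion of $\mathrm{H}^{(1)}_0$, then composes with $\gammaDir{}$ and $\gammaDir{*}$ exactly as you do. You are also right that the stated formula omits the constant $c_0=\tfrac{\rmi}{4}+\tfrac{\log 2-\gamma}{2\pi}$, so taken literally one gets $S_\lambda=S_0+c_0R-\tfrac{1}{2\pi}\log\lambda\,R+O(\lambda^2\log\lambda)$; this omission is harmless for the later use in Section~\ref{sec:dimension-two}, because $RQ_\lambda^-=O(\lambda^2)$ by Lemma~\ref{lem:holomorphic-charge}.
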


\begin{lemma}
    \label{lem:holomorphic-charge}
In a neighbourhood of $\lambda=0$, there is a holomorphic operator family $A_j(\lambda): H^{1/2}(\partial\Omega) \to \mathbb{C}$, $j=1,\ldots, M$ such that 
\begin{align*}
V_j Q_\lambda^- = \lambda^2 A_j(\lambda).
\end{align*}
\end{lemma}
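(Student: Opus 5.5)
The identity comes from Green's second identity (divergence theorem) applied to the interior Helmholtz solution. For $\phi \in H^{1/2}(\partial\Omega)$, let $u \in H^1(\Omega)$ be the unique solution of $(-\Delta-\lambda^2)u = 0$ in $\Omega$ with $\gammaDir{-}u = \phi$; we work for $\lambda$ near $0$, where $\lambda^2$ is not an interior Dirichlet eigenvalue, since $0$ is not one by Corollary 8.3 of \cite{mcleanStronglyEllipticSystems2000}. Then $Q_\lambda^- \phi = \gammaNeu{-}u$. We apply the first Green identity on the component $\Omega_j$ with test function $\chi_{\Omega_j}$; its trace is $e_j$ on $\partial\Omega_j$ and its gradient vanishes. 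This gives
\begin{align*}
V_j(Q_\lambda^-\phi) = \int_{\partial\Omega_j} \gammaNeu{-} u \,\mathrm{d}\sigma = \int_{\Omega_j} \Delta u \,\mathrm{d}x = -\lambda^2 \int_{\Omega_j} u\,\mathrm{d}x .
\end{align*}
So the natural candidate is
\begin{align*}
A_j(\lambda)\phi := -\int_{\Omega_j} (E_\lambda \phi)(x)\,\mathrm{d}x ,
\end{align*}
where $E_\lambda : H^{1/2}(\partial\Omega)\to H^1(\Omega)$ is the interior Dirichlet solution operator.

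It remains to show that $\lambda \mapsto E_\lambda$ is holomorphic near $0$ as a map into $\mathcal{L}(H^{1/2}(\partial\Omega), H^1(\Omega))$. The interior problem involves only $\lambda^2$, so no logarithms appear even in $d=2$. Write $E_\lambda = E_0 + w_\lambda$, with $E_0$ the harmonic extension. Then $w_\lambda \in H^1_0(\Omega)$ solves
\begin{align*}
(-\Delta_{D,\Omega}-\lambda^2) w_\lambda = \lambda^2 E_0\phi ,
\end{align*}
so $w_\lambda = \lambda^2 (-\Delta_{D,\Omega} - \lambda^2)^{-1}E_0\phi$. Here $-\Delta_{D,\Omega}$ is the interior Dirichlet Laplacian, viewed as an isomorphism $H^1_0(\Omega)\to H^{-1}(\Omega)$. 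Because $0$ lies in its resolvent set, $(-\Delta_{D,\Omega}-\mu)^{-1}$ is holomorphic in $\mu$ near $0$ as a map $H^{-1}(\Omega)\to H^1_0(\Omega)$, hence in particular on $L^2(\Omega)$. Composing with the holomorphic function $\mu=\lambda^2$ gives holomorphy of $E_\lambda$.

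The functional $u \mapsto \int_{\Omega_j} u$ is continuous on $H^1(\Omega)$, so $A_j(\lambda)$ is a holomorphic family of bounded functionals $H^{1/2}(\partial\Omega)\to\mathbb{C}$.

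The only point needing care is the Green identity step with $u \in H^1_\Delta(\Omega)$ and Lipschitz boundary. This is exactly the weak definition of $\gammaNeu{-}$ given earlier: take $f=\chi_{\Omega_j}$, which is smooth up to the boundary on each component. The result therefore holds directly, and I expect no real obstacle.
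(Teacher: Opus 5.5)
Your proof is correct and follows essentially the same route as the paper. Both apply the Green identity (divergence theorem) on $\Omega_j$ to get $V_j(Q_\lambda^-\phi)=-\lambda^2\int_{\Omega_j}u$, and both obtain holomorphy from the decomposition $u = P_0\phi + \lambda^2(-\Delta_D-\lambda^2)^{-1}P_0\phi$ with $P_0$ the harmonic extension, using that $0$ is not a Dirichlet eigenvalue.
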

\begin{proof}
Let $f \in H^{1/2}(\partial\Omega)$ and let $u_j$ be the solution of the Dirichlet problem on $\Omega_j$, i.e. the unique solution of
\begin{align*}
\left(-\Delta-\lambda^2\right) u_j=0 \quad \text { in } \Omega_j, \quad \gammaDir{-} u_j=f.
\end{align*}
By the divergence theorem,
\begin{align*}
V_j(Q_\lambda^{-} f)=\int_{\partial \Omega_j} \gammaNeu{-} u_j \,\mathrm{d} \sigma=\int_{\Omega_j} \Delta u_j \,\mathrm{d} x=-\lambda^2 \int_{\Omega_j} u_j \,\mathrm{d} x.
\end{align*}
The solution $u_j$ still depends on $\lambda$.
To see that this dependence is holomorphic, let $P_0f$ be the harmonic extension of $f$ to $\Omega_j$.
Then we can write $u_j$ as
\begin{align*}
u_j = P_0 f + \lambda^2 (-\Delta_D - \lambda^2)^{-1} P_0 f
\end{align*}
which is holomorphic in $\lambda$.
\end{proof}

\begin{lemma}
    \label{lem:holomorphic-single-layer}
The operator families on $H^{1/2}(\partial \Omega)$,
\begin{align}
S_{\lambda/\kappa_+} Q_{\lambda/\kappa_-}^-, \qquad
\left( \frac{d}{d\lambda} S_{\lambda/\kappa_+}\right) Q_{\lambda/\kappa_-}^-, \qquad
S_{\lambda/\kappa_+} \frac{d}{d\lambda} Q_{\lambda/\kappa_-}^-,
\end{align}
are Hahn holomorphic in a neighbourhood of $\lambda=0$.
\end{lemma}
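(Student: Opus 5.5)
The plan is to split $S_{\lambda/\kappa_+}$ via the $d=2$ expansion lemma and kill the logarithm with Lemma \ref{lem:holomorphic-charge}. Write
\begin{align*}
S_{\lambda/\kappa_+} = S_0 - \frac{1}{2\pi}\log(\lambda/\kappa_+)\, R + E_\lambda ,
\end{align*}
where $E_\lambda$ collects the $O(\lambda^2\log\lambda)$ remainder. More precisely, $E_\lambda = \gammaDir{}(G_{\lambda/\kappa_+} - G_0 + \frac{1}{2\pi}\log(\lambda/\kappa_+))\gammaDir{*}$, whose kernel is an entire function of $\lambda^2$ and $\lambda^2\log\lambda$ coming from the Bessel series. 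Hence $E_\lambda$ is Hahn holomorphic with $E_0=0$, and its $\lambda$-derivative is $O(\lambda\log\lambda)$, which is again Hahn holomorphic.

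Next I would treat the interior Dirichlet-to-Neumann factor. Near $0$, $Q^-_{\lambda/\kappa_-}$ is holomorphic in $\lambda$, because $0$ is not an interior Dirichlet eigenvalue (Corollary 8.3 of \cite{mcleanStronglyEllipticSystems2000}) and the Dirichlet solution operator $u = P_0 f + \mu^2(-\Delta_D-\mu^2)^{-1}P_0 f$ is analytic in $\mu^2$. The only dangerous term is $\log\lambda\, R\, Q^-_{\lambda/\kappa_-}$. Since $R = (\sum_i e_i)(\sum_j V_j)$, Lemma \ref{lem:holomorphic-charge} gives
\begin{align*}
R\,Q^-_{\lambda/\kappa_-} = \frac{\lambda^2}{\kappa_-^2}\Big(\sum_i e_i\Big)\sum_j A_j(\lambda/\kappa_-),
\end{align*}
with $A_j$ holomorphic. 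Multiplying by $\log\lambda$ yields $\lambda^2\log\lambda$ times a holomorphic family, which is Hahn holomorphic. Together with the holomorphic term $S_0 Q^-$ and the Hahn holomorphic $E_\lambda Q^-$, this handles the first family.

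For the derivative families I would proceed as follows. The term $S_{\lambda/\kappa_+}\frac{d}{d\lambda}Q^-_{\lambda/\kappa_-}$ is treated the same way, since differentiating the identity gives $R\,\frac{d}{d\lambda}Q^- = (\sum_i e_i)\sum_j \frac{d}{d\lambda}\big(\lambda^2\kappa_-^{-2}A_j(\lambda/\kappa_-)\big)$. The right-hand side is $\lambda$ times a holomorphic family, so the log term becomes $\lambda\log\lambda$ times holomorphic. That is still a Hahn holomorphic function, because $j=1>0$ in the index set. For $(\frac{d}{d\lambda}S_{\lambda/\kappa_+})Q^-$ the derivative of the log term is $-\frac{1}{2\pi\lambda}R$. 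Composed with $Q^-$ this gives $-\frac{1}{2\pi}\lambda\kappa_-^{-2}(\sum e_i)\sum_j A_j$, which is holomorphic. The remaining terms are $S_0$-independent, namely $\dot E_\lambda Q^-$, and are Hahn holomorphic.

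The main obstacle is making the remainder claims rigorous in the operator topology on Lipschitz boundaries. One has to verify that $E_\lambda$ and $\dot E_\lambda$ really are normally convergent Hahn series as maps $H^{-1/2}\to H^{1/2}$. I would do this by writing the Hankel kernel as $G_0 - \frac{1}{2\pi}\log\lambda + \sum_{k\ge1}\lambda^{2k}\big(a_k(x,y)+b_k(x,y)\log\lambda\big)$ with $a_k$ and $b_k$ polynomial-times-log kernels. Each term then maps $H^{-1/2}\to H^{1/2}$ with norms growing at most geometrically, which gives normal convergence for small $|\lambda|$.
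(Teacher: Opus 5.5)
Your proposal is correct and follows the paper's own argument. You isolate the $\log\lambda\,R$ part of $S_{\lambda/\kappa_+}$ and use Lemma~\ref{lem:holomorphic-charge} to extract a factor $\lambda^2$ from $R\,Q^-_{\lambda/\kappa_-}$, so the $\lambda^{-1}R$ term in the derivative becomes $O(\lambda)$ and everything else is at worst $\lambda\log\lambda$. You also spell out the $S_{\lambda/\kappa_+}\frac{d}{d\lambda}Q^-_{\lambda/\kappa_-}$ case, which the paper leaves implicit.

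One harmless slip: the small-argument Hankel expansion has an additive constant, so $E_0$ is a constant multiple of $R$ rather than $0$. Since $E_0Q^-_{\lambda/\kappa_-}$ is holomorphic anyway, nothing changes.
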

\begin{proof}
The only potential singular part in $S_{\lambda/\kappa_+} Q_{\lambda/\kappa_-}^-$ is $\log \lambda R Q_{\lambda/\kappa_-}^-$.
By Lemma \ref{lem:holomorphic-charge} this is actually $O(\lambda^2 \log \lambda)$ and therefore Hahn holomorphic.

Note that the higher order terms in the expansion of $S_\lambda$ are at least $O(\lambda^2 \log \lambda)$.
After differentiating, the leading order is $\lambda\log\lambda$, which is still Hahn holomorphic.
Hence the only possibly singular term is $\lambda^{-1} R Q_{\lambda/\kappa_-}^-$ which by the same lemma is $O(\lambda)$ and therefore Hahn holomorphic.
\end{proof}

Whenever $\lambda$ is not a Dirichlet eigenvalue, we have an isomorphism
\begin{align*}
J_\lambda^{-}: H^{1 / 2}(\partial \Omega) \to \mathcal{B}_\lambda^{-}, \quad J_\lambda^{-} f=\begin{pmatrix}f \\ Q_\lambda^{-} f\end{pmatrix}.
\end{align*}
The inverse is simply the projection onto the first component.
This allows us to transport $H_\lambda$ to $H^{1/2}(\partial \Omega)$.
Define
\begin{align*}
\widetilde{H}_\lambda := J_{\lambda/\kappa_+}^{-1} H_\lambda J_{\lambda/\kappa_-}, 
\qquad
\widetilde{H}_{\mathrm{diag}, \lambda} := J_{\lambda/\kappa_+}^{-1} H_{\mathrm{diag}, \lambda} J_{\lambda/\kappa_-}.
\end{align*}
We may write these operators explicitly as
\begin{align*}
\widetilde{H}_\lambda&=-\nu_0\left(\frac{1}{2}-D_{\lambda / \kappa_{+}}\right)-\nu_1 S_{\lambda / \kappa_{+}} Q_{\lambda / \kappa_{-}}^{-}, \\
\widetilde{H}_{\mathrm {diag }, \lambda}&=-\nu_0\left(\frac{1}{2}-D_{\mathrm {diag }, \lambda / \kappa_{+}}\right)-\nu_1 S_{\mathrm {diag }, \lambda / \kappa_{+}} Q_{\lambda / \kappa_{-}}^{-} .
\end{align*}
From Lemma \ref{lem:holomorphic-single-layer} it follows that these are Hahn holomorphic operators with Hahn holomorphic derivatives near $\lambda=0$.

\begin{lemma}
The operator $\widetilde{H}_\lambda^{-1}$ is holomorphic on $\mathbb{C}^+$ and Hahn holomorphic in a neighbourhood of $\lambda=0$.
\end{lemma}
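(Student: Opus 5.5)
We will prove the two assertions separately: invertibility and holomorphy on $\mathbb{C}^+$, and then Hahn holomorphy of the inverse near $0$ via the Hahn holomorphic Fredholm theorem.

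\emph{Step 1: the upper half plane.} For $\lambda\in\mathbb{C}^+$, neither $\lambda/\kappa_-$ nor $\lambda/\kappa_+$ has $\lambda^2/\kappa_\pm^2$ as a Dirichlet eigenvalue, since these are nonreal. Hence $J_{\lambda/\kappa_\pm}$ are holomorphic isomorphisms $H^{1/2}(\partial\Omega)\to\mathcal{B}^-_{\lambda/\kappa_\pm}$ with inverse given by projection to the first component. By the intertwining relation (\ref{eq:calderon-intertwiner}), $H_\lambda$ maps $\mathcal{B}^-_{\lambda/\kappa_-}$ to $\mathcal{B}^-_{\lambda/\kappa_+}$. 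By Theorem \ref{thm:transmission-problem-equivalence}(ii),(iii), $H_\lambda$ is bijective on $\mathscr{C}$, so its restriction between these subspaces is bijective as well. Therefore $\widetilde H_\lambda$ is invertible with
\[
\widetilde H_\lambda^{-1}=J_{\lambda/\kappa_-}^{-1}H_\lambda^{-1}J_{\lambda/\kappa_+}.
\]
Holomorphy follows because $H_\lambda^{-1}$, $J_{\lambda/\kappa_+}$ (via $Q^-$, which is holomorphic off the Dirichlet spectrum) and the first-component projection are all holomorphic. Alternatively, one can use the explicit formula for $\widetilde H_\lambda$ together with holomorphy of inverses of holomorphic invertible families.

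\emph{Step 2: Fredholm property near $0$.} By Lemma \ref{lem:holomorphic-single-layer}, $\widetilde H_\lambda$ is Hahn holomorphic near $0$. I would show that it is Fredholm of index zero there. From the explicit formula, $\widetilde H_\lambda=-\nu_0(\tfrac12-D_{\lambda/\kappa_+})-\nu_1S_{\lambda/\kappa_+}Q^-_{\lambda/\kappa_-}$. At $\lambda=0$ the logarithmic term drops out, so $\widetilde H_0=-\nu_0(\tfrac12-D_0)-\nu_1S_0Q_0^-$, where $S_0$ is the log-kernel operator and $Q_0^-$ is the harmonic Dirichlet-to-Neumann map. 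Since $\lambda=0$ is not a Dirichlet eigenvalue and the transmission problem at $\lambda=0$ is uniquely solvable by Theorem \ref{thm:transmission-problem-equivalence}(iii), $H_0$ is bijective, at least on the appropriate space of Cauchy data in $d=2$. The same transport argument as in Step 1 then gives that $\widetilde H_0$ is bijective.

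The main obstacle is that $S_0$ in $d=2$ is the log-kernel operator, which may fail to be injective (logarithmic capacity one). So I must work directly with $H_0$ defined through $S_0Q_0^-$ rather than through $S_0$ alone. The point is that the kernel of $\widetilde H_0$ corresponds, via $u_-=$ harmonic extension of $f$ and $u_+=$ the exterior representation $\widetilde D_0(\cdot)-\widetilde S_0(\cdot)$, to a solution of the homogeneous transmission problem at $\lambda=0$. The integration-by-parts argument of Costabel--Stephan then forces $f=0$. The log growth at infinity is harmless because $V_jQ_0^-f=0$ by Lemma \ref{lem:holomorphic-charge}, so the exterior potential has zero total charge and is bounded. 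Fredholm index zero follows since $\widetilde H_0$ differs from $-\nu_0(\tfrac12-D_0)-\nu_1S_0Q_0^-$ only by its own definition. More usefully, $\widetilde H_\lambda-\widetilde H_0$ is small and Hahn holomorphic, so near $0$ it is a small perturbation of an invertible operator.

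\emph{Step 3: conclusion.} With $\widetilde H_0$ invertible and $\widetilde H_\lambda\to\widetilde H_0$ in operator norm, a Neumann series gives invertibility for small $|\lambda|$, and $\widetilde H_\lambda^{-1}$ is bounded there. Inverses of Hahn holomorphic families that are invertible at $0$ are Hahn holomorphic, because Hahn holomorphic functions form an integral domain and the Neumann series $\widetilde H_0^{-1}\sum_k\bigl((\widetilde H_0-\widetilde H_\lambda)\widetilde H_0^{-1}\bigr)^k$ converges normally in the Hahn expansion. As a fallback, the Hahn holomorphic Fredholm theorem (Theorem 4.1 of \cite{MR3227433}) gives a Hahn meromorphic inverse, which is bounded by the above and therefore Hahn holomorphic.
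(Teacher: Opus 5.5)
Your architecture is the same as the paper's: show Fredholm of index zero near $0$, show injectivity of the operator at $\lambda=0$, then use Hahn meromorphic Fredholm theory plus boundedness.

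Some details differ, and they are fine:
\begin{itemize}
\item The explicit conjugation $\widetilde H_\lambda^{-1}=J_{\lambda/\kappa_-}^{-1}H_\lambda^{-1}J_{\lambda/\kappa_+}$ on $\mathbb{C}^+$ is correct. One small slip: $\lambda^2$ need not be nonreal there (take $\lambda=\rmi t$). It is simply never a Dirichlet eigenvalue, since those are positive.
\item Your energy argument for $\ker\widetilde H_0=\{0\}$ is a self-contained substitute for the paper's citation of Proposition~4.3 in \cite{costabelDirectBoundaryIntegral1985}. Using $V_jQ_0^-=0$ from Lemma~\ref{lem:holomorphic-charge} to remove the logarithmic growth of $u_+$ is the right idea. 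Make one point explicit: $\widetilde H_0f=0$ only says that the first component of $P_0^-\M J_0f$ vanishes. This forces $P_0^-\M J_0f=0$, hence $\gamma^+u_+=\M\gamma^-u_-$, because $P_0^-\M J_0f\in\mathcal{B}_0^-$ and $0$ is not a Dirichlet eigenvalue.
\item In Step~3, the remark that Hahn holomorphic functions form an integral domain does not by itself justify the Hahn expansion of the Neumann series. Your fallback, Theorem~4.1 of \cite{MR3227433} plus boundedness of the inverse, is exactly the paper's last step and suffices.
\end{itemize}

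The step you have not actually proved is that $\widetilde H_0$ is Fredholm of index zero, i.e.\ surjective. Without it, injectivity does not give invertibility, and your Neumann series around $\widetilde H_0$ is not available. None of the three justifications you offer works:
\begin{itemize}
\item The sentence claiming index zero because $\widetilde H_0$ ``differs from $-\nu_0(\tfrac12-D_0)-\nu_1S_0Q_0^-$ only by its own definition'' is circular.
\item The appeal to Theorem~\ref{thm:transmission-problem-equivalence}(ii),(iii) at $\lambda=0$ is precisely what you yourself distrust in $d=2$.
\item Norm convergence $\widetilde H_\lambda\to\widetilde H_0$ does not help, since a norm limit of invertible operators need not be Fredholm.
\end{itemize}

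The paper gets index zero by splitting $H_\lambda=H_\lambda^{\mathrm{reg}}+W_\lambda$ with $W_\lambda$ of finite rank. A direct fix in your setting is to show that, for a fixed $\lambda\in\mathbb{C}^+$ near $0$, the difference $\widetilde H_\lambda-\widetilde H_0$ is compact:
\begin{itemize}
\item $D_\mu-D_0$ and $S_\mu-S_0$ are, up to a multiple of the finite-rank operator $R$, integral operators with kernels smoother than the original ones. The Hankel expansion gives $G_\mu-G_0=a(\mu)+O(\mu^2|x-y|^2\log|x-y|)$.
\item $Q^-_\mu-Q^-_0=\mu^2\gamma_N^-(-\Delta_D-\mu^2)^{-1}P_0$, where $P_0$ is the harmonic extension from the proof of Lemma~\ref{lem:holomorphic-charge}. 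This factors through the compact embedding $H^1(\Omega)\hookrightarrow L^2(\Omega)$.
\end{itemize}
Since $\widetilde H_\lambda$ is invertible, $\widetilde H_0$ is then Fredholm of index zero. Your injectivity argument then makes it invertible, and the rest of your proof goes through.
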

\begin{proof}

Write $H_\lambda = H_{\lambda}^{\mathrm{reg}} + W_\lambda$ where $W_\lambda$ is Hahn meromorphic and finite rank.
Since $H_\lambda$ is Fredholm of index $0$ for $\lambda \in \mathbb{C}^+$, the same is true for $H_\lambda^{\mathrm{reg}}$.
Since $H_\lambda^{\mathrm{reg}}$ is Hahn holomorphic near $\lambda=0$, it follows that $H_\lambda^{\mathrm{reg}}$ is also Fredholm of index $0$ at $\lambda=0$.
The same is true for $\widetilde{H}_\lambda$.

Now, $H_0:= \lim_{\lambda\to 0} H_\lambda^{\mathrm{reg}}$ is the transmission operator at $\lambda = 0$ defined by taking $G_0(x,y) = \frac{1}{2\pi} \log|x-y|$ as the Green function.
This is the definition of the $\lambda=0$ operators in \cite{costabelDirectBoundaryIntegral1985}.
Therefore we conclude from Proposition 4.3 of \cite{costabelDirectBoundaryIntegral1985} that $H_0$ is injective.

Now observe that $\widetilde{H}_0 = H_0$.
Since $\widetilde{H}_\lambda$ is Hahn holomorphic, $\widetilde{H}_\lambda^{-1}$ is Hahn meromorphic of finite type.
Since $H_0$ is injective, comparison of coefficients shows that $\widetilde{H}_\lambda^{-1}$ is Hahn holomorphic at $\lambda=0$.
\end{proof}

\begin{proposition}
In $d=2$, the functions $\Xi_T, \Xi_T'$ are continuous in a neighbourhood of $\lambda = 0$.
\end{proposition}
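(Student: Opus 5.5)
The strategy is to transport the determinant defining $\Xi_T$ from the $\lambda$-dependent subspace $\mathcal{B}^-_{\lambda/\kappa_+}$ to the fixed space $H^{1/2}(\partial\Omega)$ by means of the isomorphisms $J^-$, and then to use the Hahn holomorphy of $\widetilde H_\lambda$, $\widetilde H_{\mathrm{diag},\lambda}$ and their inverses near $\lambda=0$.

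First I would observe that, since $J^-_{\lambda/\kappa_\pm}$ are isomorphisms onto $\mathcal{B}^-_{\lambda/\kappa_\pm}$ and $H^\pm_\lambda$ intertwines the Calderon ranges,
\begin{align*}
J^{-1}_{\lambda/\kappa_+}\, H^-_\lambda (H^-_{\mathrm{diag},\lambda})^{-1}\, J_{\lambda/\kappa_+} = \widetilde H_\lambda \widetilde H_{\mathrm{diag},\lambda}^{-1}.
\end{align*}
The $J_{\lambda/\kappa_-}$ factors cancel in the middle. Fredholm determinants are invariant under such similarity, so
\begin{align*}
\Xi_T(\lambda)=\log\det_{H^{1/2}(\partial\Omega)}\bigl(\widetilde H_\lambda\widetilde H_{\mathrm{diag},\lambda}^{-1}\bigr),
\qquad
\Xi_T'(\lambda)=\tr\bigl(\dot{\widetilde H}_\lambda\widetilde H_\lambda^{-1}-\dot{\widetilde H}_{\mathrm{diag},\lambda}\widetilde H_{\mathrm{diag},\lambda}^{-1}\bigr),
\end{align*}
for $\lambda\in\mathfrak{D}_\epsilon$ near $0$, using the Jacobi formula as in Proposition~\ref{prop:transmission-xi-holo-decay}. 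This requires that $0$ is not an interior Dirichlet eigenvalue, so that $J^-$ is defined near $0$; this follows from the Remark after Proposition~\ref{prop:compute-bd-op-kernels}.

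Next I would show that $1-\widetilde H_\lambda\widetilde H_{\mathrm{diag},\lambda}^{-1}=(\widetilde H_{\mathrm{diag},\lambda}-\widetilde H_\lambda)\widetilde H_{\mathrm{diag},\lambda}^{-1}$ is continuous in trace norm up to $\lambda=0$. The difference is
\begin{align*}
\widetilde H_{\mathrm{diag},\lambda}-\widetilde H_\lambda=-\nu_0 (D_{\lambda/\kappa_+}-D_{\mathrm{diag},\lambda/\kappa_+})+\nu_1 (S_{\lambda/\kappa_+}-S_{\mathrm{diag},\lambda/\kappa_+})Q^-_{\lambda/\kappa_-}.
\end{align*}
The off-diagonal blocks of $D$ and $S$ have smooth kernels, because the components are separated by $\delta>0$. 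They are therefore trace class by Lemma~\ref{lem:trace-class-smoothing-ops}. The $\log\lambda$ part of the off-diagonal $S$ is $-\frac{1}{2\pi}\log\lambda\sum_{i\neq j}R_{ij}$, and after composing with $Q^-$, Lemma~\ref{lem:holomorphic-charge} turns it into $O(\lambda^2\log\lambda)$ of finite rank. The remaining smooth-kernel parts are Hahn holomorphic, and hence continuous at $0$, in trace norm. The same applies to the $\lambda$-derivative, as in Lemma~\ref{lem:holomorphic-single-layer}: the derivative of the singular part is $\lambda^{-1}R_{ij}Q^-$, which is $O(\lambda)$.

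Combining these with the preceding lemma, which gives Hahn holomorphy of $\widetilde H_\lambda^{-1}$ and applies equally to the block-diagonal $\widetilde H_{\mathrm{diag},\lambda}$ componentwise, shows that both $1-\widetilde H\widetilde H_{\mathrm{diag}}^{-1}$ and the trace-class expression in $\Xi_T'$ are continuous trace-class valued near $0$. For the derivative I would rewrite
\begin{align*}
\dot{\widetilde H}\widetilde H^{-1}-\dot{\widetilde H}_{\mathrm{diag}}\widetilde H_{\mathrm{diag}}^{-1}=\dot{\widetilde H}(\widetilde H^{-1}-\widetilde H_{\mathrm{diag}}^{-1})+(\dot{\widetilde H}-\dot{\widetilde H}_{\mathrm{diag}})\widetilde H_{\mathrm{diag}}^{-1},
\end{align*}
where $\widetilde H^{-1}-\widetilde H_{\mathrm{diag}}^{-1}=\widetilde H^{-1}(\widetilde H_{\mathrm{diag}}-\widetilde H)\widetilde H_{\mathrm{diag}}^{-1}$. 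Continuity of $\det$ and $\tr$ in trace norm then gives continuity of both functions.

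Finally, since $\widetilde H_0\widetilde H_{\mathrm{diag},0}^{-1}$ is invertible, $\det\neq0$ near $0$ and the logarithm is continuous.

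The main obstacle is the trace-norm continuity at $0$ of the off-diagonal pieces, and of $\dot{\widetilde H}_\lambda$ composed with them. This must be done with Hahn-holomorphic rather than holomorphic expansions, and one has to check carefully that every $\log\lambda$ term is paired with a factor $Q^-$ (and not left isolated, for instance in $\dot D$). If a leftover $\lambda^{-1}$ term survives in $\dot{\widetilde H}$, I would absorb it by writing $Q^-_{\lambda/\kappa_-}$ via Lemma~\ref{lem:holomorphic-charge} before differentiating.
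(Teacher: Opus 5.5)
Your proposal is correct and follows the paper's proof. Both transport the determinant to $H^{1/2}(\partial\Omega)$ via $J^-$, observe that $1-\widetilde H_\lambda\widetilde H_{\mathrm{diag},\lambda}^{-1}=(\widetilde H_{\mathrm{diag},\lambda}-\widetilde H_\lambda)\widetilde H_{\mathrm{diag},\lambda}^{-1}$ and the Jacobi-formula integrand are Hahn holomorphic trace-class families, and use the preceding lemma on $\widetilde H_\lambda^{-1}$. Your splitting of the derivative term differs only cosmetically from the paper's, and your explicit handling of the $\log\lambda$ term via Lemma \ref{lem:holomorphic-charge} and of the nonvanishing determinant at $\lambda=0$ fills in details the paper leaves implicit.
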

\begin{proof}
For $\lambda \in \mathbb{C}^+$, $\Xi_T$ is the fixed space Fredholm determinant
\begin{align*}
\Xi_T(\lambda) = \log \det_{H^{1/2}(\partial \Omega)} \left(\widetilde{H}_\lambda \widetilde{H}_{\mathrm{diag}, \lambda}^{-1}\right).
\end{align*}
For the continuity of $\Xi_T$ it therefore suffices to observe that
\begin{align*}
1 - \widetilde{H}_\lambda \widetilde{H}_{\mathrm{diag}, \lambda}^{-1} = \left(\widetilde{H}_{\mathrm{diag}, \lambda} - \widetilde{H}_\lambda\right) \widetilde{H}_{\mathrm{diag}, \lambda}^{-1}
\end{align*}
is a Hahn holomorphic family of trace class operators.

For the derivative we have instead
\begin{align*}
\Xi_T'(\lambda) &=  \tr_{H^{1/2}(\partial\Omega)} \left[ \left(\frac{d}{d\lambda} \widetilde{H}_\lambda\right) \widetilde{H}_\lambda^{-1} - \left(\frac{d}{d\lambda} \widetilde{H}_{\mathrm{diag}, \lambda}\right) \widetilde{H}_{\mathrm{diag}, \lambda}^{-1} \right]\\
& = \tr \left[\left(\frac{d}{d \lambda}\left(\widetilde{H}_\lambda-\widetilde{H}_{\mathrm{diag}, \lambda}\right)\right) \widetilde{H}_\lambda^{-1}+\left(\frac{d}{d \lambda} \widetilde{H}_{\lambda, \mathrm{diag}}\right)\left(\widetilde{H}_\lambda^{-1}-\widetilde{H}_{\mathrm{diag}, \lambda}^{-1}\right) \right]
\end{align*}
and the argument of the trace is manifestly a Hahn holomorphic family of trace class operators.
\end{proof}

\begin{proposition}
In $d=2$, $\Xi_N(\lambda)$ is bounded near $\lambda=0$ and $|\Xi_N'(\lambda)| \leq C|\log \lambda|$.
\end{proposition}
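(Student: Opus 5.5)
We write $N_\lambda N_{\mathrm{diag},\lambda}^{-1}$ in the factorised form obtained before Proposition \ref{prop:invert-hypersingular},
\begin{align*}
N_\lambda N_{\mathrm{diag},\lambda}^{-1} = \left(\tfrac12 - D_\lambda'\right)\left(\tfrac12 - D_{\mathrm{diag},\lambda}'\right)^{-1},
\end{align*}
which is valid for $\lambda\in\mathbb{C}^+$. The Dirichlet-to-Neumann factor $Q^-_\lambda$ has cancelled, so the $\lambda^{-2}\Pi$ pole of $(Q^-_\lambda)^{-1}$ from Proposition \ref{prop:expand-d-to-n-inv} plays no role. In $d=2$ the single layer $S_\lambda$ carries the $\log\lambda$ singularity. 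However, $D_\lambda'$ involves only normal derivatives of $G_\lambda$, and by item (4) of the lemma on expansions of the layer operators $D_\lambda, D'_\lambda, N_\lambda$ are Hahn holomorphic at $0$ with Hahn holomorphic derivatives. The plan is therefore to run the Neumann argument directly in the fixed space $H^{-1/2}(\partial\Omega)$.

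\textbf{Step 1: regularity of the inverse.} We show that $(\frac12 - D'_{\mathrm{diag},\lambda})^{-1}$ is Hahn holomorphic near $0$. Each diagonal block $\frac12 - D'_{\lambda,i}$ is Fredholm of index $0$ by Theorem \ref{thm:boundary-ops-fredholm} and is Hahn holomorphic. At $\lambda=0$ it is injective, since $\ker(\frac12 - D_0)=\{0\}$ by Proposition \ref{prop:compute-bd-op-kernels}(2) and $0$ is not a Dirichlet eigenvalue; the transpose has the same index and hence is also injective. The Hahn holomorphic Fredholm theorem of \cite{MR3227433} then shows that the inverse has no singular terms, so it is Hahn holomorphic and in particular bounded near $0$.

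\textbf{Step 2: trace class and the bound on $\Xi_N$.} We write
\begin{align*}
1 - N_\lambda N_{\mathrm{diag},\lambda}^{-1} = (D_\lambda' - D'_{\mathrm{diag},\lambda})\left(\tfrac12 - D'_{\mathrm{diag},\lambda}\right)^{-1}.
\end{align*}
The off-diagonal part $D'_\lambda - D'_{\mathrm{diag},\lambda}$ has a smooth kernel, because the boundary components are separated by distance $\delta>0$. We claim it is trace class with trace norm bounded near $0$ by Lemma \ref{lem:trace-class-smoothing-ops}. Its kernel $\partial_{\nu_x}G_\lambda(x,y)$ for $x,y$ on different components has the expansion $\partial_{\nu_x}G_0 + O(\lambda^2\log\lambda)$ in smooth norms, and the constant $-\frac{1}{2\pi}\log\lambda$ term is killed by the derivative. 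Hence $\det(N_\lambda N_{\mathrm{diag},\lambda}^{-1})$ is continuous up to $0$. For boundedness of $\Xi_N$ we also need nonvanishing of the determinant at $0$, which amounts to injectivity of $\frac12 - D'_0$ on the full $\partial\Omega$; this is again Proposition \ref{prop:compute-bd-op-kernels}(2). Boundedness of $\log\det$ then follows.

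\textbf{Step 3: the derivative bound, which is the expected main obstacle.} Jacobi's formula gives
\begin{align*}
\Xi_N'(\lambda) = -\tr\Big[\dot D'_\lambda \left(\tfrac12 - D'_\lambda\right)^{-1} - \dot D'_{\mathrm{diag},\lambda}\left(\tfrac12 - D'_{\mathrm{diag},\lambda}\right)^{-1}\Big].
\end{align*}
We regroup this as
\begin{align*}
-\tr\Big[(\dot D'_\lambda - \dot D'_{\mathrm{diag},\lambda})\left(\tfrac12 - D'_\lambda\right)^{-1} + \dot D'_{\mathrm{diag},\lambda}\Big(\left(\tfrac12 - D'_\lambda\right)^{-1} - \left(\tfrac12 - D'_{\mathrm{diag},\lambda}\right)^{-1}\Big)\Big],
\end{align*}
and both terms are trace class. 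The inverses are bounded. The off-diagonal derivative has smooth kernel $\partial_\lambda\partial_{\nu_x}G_\lambda = O(\lambda\log\lambda)$, which is harmless.

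The difficulty is the factor $\dot D'_{\mathrm{diag},\lambda}$ in the second term. It is a diagonal operator that is not smoothing, and in $d=2$ its operator norm may grow like $|\log\lambda|$ because of the $\lambda\log\lambda$ and $\lambda^{-1}$-type terms in the Hankel expansion after differentiation. I would try to bound it first by operator norm using $\dot D'_\lambda = 2\lambda\tilde S^t_\lambda\tilde D_\lambda$ from Lemma \ref{lem:bd-op-derivative}. Here $\tilde S_\lambda$ is $O(|\log\lambda|)$ on compact sets and the factor $2\lambda$ compensates the $L^2$ growth at infinity, giving $\|\dot D'_{\mathrm{diag},\lambda}\|\lesssim|\log\lambda|$. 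Combined with the bounded trace norm of the difference of inverses, which is the product of bounded inverses with the trace-class off-diagonal part, this yields $|\Xi_N'(\lambda)|\le C|\log\lambda|$.
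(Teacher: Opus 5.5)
Your argument is correct, but it takes a different route from the paper's.

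\textbf{The paper's route.} The paper works with $N_\lambda^{-1}-N_{\mathrm{diag},\lambda}^{-1}$ through the expansion of $(Q^-_\lambda)^{-1}$ into the pole term $\lambda^{-2}\Pi$ plus $M_\lambda$. The resulting term $\lambda^{-4}\Pi T_\lambda\Pi$ is governed by $\lambda^{-4}\langle e_i,T_\lambda e_j\rangle=\int_{\Omega_i\times\Omega_j}G_\lambda$, which in $d=2$ contains $-\frac{|\Omega_i||\Omega_j|}{2\pi}\log\lambda$. This gives $\norm{N_\lambda^{-1}-N_{\mathrm{diag},\lambda}^{-1}}_{\Nuc}\le C|\log\lambda|$. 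The bound is then passed to $\Xi_N'=\tr(\dot N_\lambda(N_\lambda^{-1}-N_{\mathrm{diag},\lambda}^{-1}))$ using boundedness of $\dot N_\lambda$.

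\textbf{Your route.} You evaluate the determinant on the factorised form $(\tfrac12-D'_\lambda)(\tfrac12-D'_{\mathrm{diag},\lambda})^{-1}$. There the Dirichlet-to-Neumann factor, the only carrier of the monopole pole, cancels identically. What remains is Hahn holomorphic, and the off-diagonal pieces have smooth kernels from which the constant $\log\lambda$ is removed by $\partial_{\nu_x}$. Jacobi's formula and your regrouping then finish the argument with no pole bookkeeping.

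\textbf{Your route gives more than the statement.} The obstacle you anticipate in Step 3 is not there. Item (4) of the layer-operator lemma already makes $\dot D'_\lambda$ bounded near $0$. Explicitly, $\partial_\lambda\partial_{\nu_x}G_\lambda(x,y)=-\frac{\rmi}{4}\lambda r\,\mathrm{H}^{(1)}_0(\lambda r)\,\partial_{\nu_x}r$ with $r=|x-y|$, because the $\lambda^{-1}$ term of $\partial_\lambda G_\lambda$ is constant in $x$. This kernel is $O(|\lambda\log\lambda|)$. So both of your terms are small, and $\Xi_N'$ is bounded near $0$ (in fact $O(|\lambda\log\lambda|)$), which is sharper than the stated $C|\log\lambda|$.

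Your cruder potential-based bound also suffices; via the layer potential bounds of the appendix it gives roughly $|\log\lambda|^{1/2}$. One small slip: $2\lambda\widetilde S^t_\lambda\widetilde D_\lambda$ is $\dot D_\lambda$, whereas $\dot D'_\lambda=2\lambda\widetilde D^t_\lambda\widetilde S_\lambda$. The two have the same norm, so nothing changes.

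\textbf{What the paper's route buys.} It yields a trace-norm estimate on the operator difference $N_\lambda^{-1}-N_{\mathrm{diag},\lambda}^{-1}$ itself. That is what controls the relative resolvent $R_{\mathrm{rel},\lambda}=\widetilde D_\lambda(N_{\mathrm{diag},\lambda}^{-1}-N_\lambda^{-1})\widetilde D'_\lambda$ near $0$. Your argument controls only the scalar function $\Xi_N$. In exchange, it shows that the $\log\lambda$ present in that operator difference does not survive in $\Xi_N'$.
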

\begin{proof}
The only thing that changes in $d=2$ is that $N_\lambda^{-1} - N_{\mathrm{diag}, \lambda}^{-1}$ is no longer Hahn holomorphic, but has a $\log \lambda$ singularity.
Recall that with $T_\lambda = N_\lambda - N_{\mathrm{diag}, \lambda}$ we compute
\begin{equation}
        \langle e_i,T_\lambda e_j\rangle_{\partial\Omega}
        =\lambda^4\int_{\Omega_i\times\Omega_j}G_\lambda(x,y)\,dx\,dy.
        \label{eq:d2-neumann-double-projection}
\end{equation}
The $\lambda^4$ factor remains, but the estimate of the integral picks up a $\log\lambda$ in $d=2$:
\begin{align*}
        \lambda^{-4}\langle e_i,T_\lambda e_j\rangle_{\partial\Omega}
        =-\frac{|\Omega_i||\Omega_j|}{2\pi}\log\lambda+O(1).
\end{align*}
Hence $\norm{N_\lambda^{-1} - N_{\mathrm{diag}, \lambda}^{-1}}_{\Nuc} \leq C |\log\lambda|$ near $\lambda =0$ and therefore the same applies to $|\Xi'_N(\lambda)|$.
\end{proof}

\section{Computations for parallel plates.}

The case of parallel plates occupies a distinguished place in the theoretical physics literature on the Casimir effect due to its computational accessibility.
Parallel plates are non-compact obstacles, and therefore our approach must be modified to handle this case.
It was shown in \cite{strohmaierDimensionalReductionFormulae2024} (for Dirichlet boundary conditions) that a partial Fourier transform along the plate directions affords a spectral decomposition which reduces the problem to a 1-dimensional one.
The reduced obstacles are points on the line, and therefore compact.
Here we sketch how to extend this result to transmission boundary conditions.

\subsection{Neumann boundary conditions.}
As an introduction to the method, we will consider the case of Neumann boundary conditions, which is exactly analogous to the Dirichlet case treated in \cite{strohmaierDimensionalReductionFormulae2024}.

Let $\Omega = \Omega_1 \sqcup \Omega_2 \subset \mathbb{R}$ be a disjoint union of closed intervals.
In $\mathbb{R}^3$ our obstacles have the form $\widetilde{\Omega}_i = \Omega_i \times \mathbb{R}^2$.
We denote coordinates on the line $\mathbb{R}$ by $x$ and coordinates on the plates $\mathbb{R}^2$ by $y = (y_1, y_2)$.
We will use tildes to denote objects on $\mathbb{R}^3$; operators without tilde refer to the corresponding 1-dimensional objects.

The partial Fourier transform is the unitary isomorphism
\begin{align*}
L^2(\mathbb{R}^3) \to L^2(\mathbb{R}, L^2(\mathbb{R}^2))\\
\widehat{f}(x)(\xi) = \frac{1}{2\pi} \int_{\mathbb{R}^2} f(x,y) e^{-i y \cdot \xi} \mathrm{d} y.
\end{align*}

Let $\widetilde{\Delta} = \partial_x^2 + \partial_{y_1}^2 + \partial_{y_2}^2$ be the Laplace operator on $\mathbb{R}^3$, and let $\widetilde{R}_{s,N}$ be the relative operator
\begin{align*}
\widetilde{R}_{s,N} = (-\widetilde{\Delta}_{N}+m^2)^{s} - (-\widetilde{\Delta}_{N,1}+m^2)^{s} - (-\widetilde{\Delta}_{N,2}+m^2)^{s} + (-\widetilde{\Delta}_{0}+m^2)^{s}.
\end{align*}
Conjugating with the partial Fourier transform maps $\widetilde{R}_{s,N}$ to
\begin{align*}
R_{s,N}(\xi)= \left(-\Delta_{N}+m^2+\xi^2\right)^{s} + \left(-\Delta_0+m^2+\xi^2\right)^{s} 
- \left(-\Delta_{N,1}+m^2+\xi^2\right)^{s}-\left(-\Delta_{N,2}+m^2+\xi^2\right)^{s}.
\end{align*}

This operator again has a representation in terms of the 1-dimensional relative resolvent:
\begin{align}
    \label{eq:1d-neumann-resolvent-rep}
R_{s,N}(\xi)=\frac{\mathrm{i}}{\pi} \int_{\tilde{\Gamma}} \lambda\left(\left(\lambda^2+m^2+\xi^2\right)^{s}-\left(m^2+\xi^2\right)^{s}\right) R_{\mathrm{rel},N}(\lambda) \mathrm{d} \lambda
\end{align}

The relative trace (in a von Neumann algebra sense; see \cite{strohmaierDimensionalReductionFormulae2024} for fuller explanations) is computed as
\begin{align*}
&\operatorname{tr}\left(\widetilde{R}_{s,N}\right)=\frac{1}{(2 \pi)^2} \int_{\mathbb{R}^2} \operatorname{tr}\left(R_{s,N}(\xi)\right) \mathrm{d} \xi\\
=\ &\frac{\mathrm{i}}{(2\pi)^2\pi} \int_{\tilde{\Gamma}} \lambda\left(\left(\lambda^2+m^2+\xi^2\right)^{s}-\left(m^2+\xi^2\right)^{s}\right) \operatorname{tr}\left(R_{\text {rel }, N}(\lambda)\right) \mathrm{d} \lambda \mathrm{d} \xi.
\end{align*}
Using that, for sufficiently negative $\operatorname{Re} s$, we have
\begin{align*}
\int_{\mathbb{R}^2}\left(\lambda^2+m^2+\xi^2\right)^{s} d \xi=\pi \frac{\Gamma\left(-s-1\right)}{\Gamma\left(-s\right)}\left(\lambda^2+m^2\right)^{1+s},
\end{align*}
one can do the $\xi$-integral to obtain
\begin{align*}
\operatorname{tr}\left(\widetilde{R}_{s,N}\right)=-\frac{1}{2\pi\Gamma\left(-s\right) \Gamma\left(1+s\right)} \int_m^{\infty} \lambda\left(\lambda^2-m^2\right)^{s} \Xi_N(i \lambda) \mathrm{d} \lambda.
\end{align*}
Here $\Xi_N$ is defined through the 1-dimensional relative resolvent.

The kernel of the free resolvent $(-\Delta_0 - \lambda^2)^{-1}$ in one dimension is (with $\operatorname{Im} \lambda > 0$)
\begin{align*}
G_{\lambda}(x,y) = \frac{\rmi}{2\lambda} e^{i\lambda|x-y|}.
\end{align*}
Let $a > 0$ and consider the obstacles $\Omega_1 = \{-a/2\}$ and $\Omega_2 = \{a/2\}$.
These correspond to infinitesimally thin plates.
One can also take the limit $\epsilon \to 0$ of extended plates
$\Omega_1 = [-a/2 - \epsilon, -a/2 + \epsilon]$ and $\Omega_2 = [a/2 - \epsilon, a/2 + \epsilon]$.
The resulting $\Xi$ function does not depend on $\epsilon$,
so we can safely take the limit $\epsilon \to 0$.
The computation is a straightforward generalization of the one we show here; we leave the details to the reader.

The boundary layer operators for $\partial\Omega = \{-a/2, a/2\}$ are given by
\begin{align*}
S_{\lambda} = \frac{\rmi }{2\lambda}
\begin{pmatrix}
1 & e^{\rmi \lambda a}\\ e^{\rmi \lambda a} & 1
\end{pmatrix}, \quad
D_{\lambda} = D_{\lambda}' = -\frac{1}{2} \begin{pmatrix}
0 & e^{\rmi \lambda a}\\ e^{\rmi \lambda a} & 0
\end{pmatrix}, \quad
N_{\lambda} = -\frac{\rmi \lambda}{2}
\begin{pmatrix}
-1 & e^{\rmi \lambda a}\\
e^{\rmi \lambda a} & -1
\end{pmatrix}.
\end{align*}
It is straightforward to verify the relation $S_\lambda N_\lambda = D_\lambda^2 - \frac14$.

We obtain 
\begin{align*}
N_\lambda N_{\mathrm{diag},\lambda}^{-1}
= \begin{pmatrix}
1 & -e^{\rmi \lambda a}\\
-e^{\rmi \lambda a} & 1
\end{pmatrix}
\end{align*}
and
\begin{align*}
\Xi_N(\lambda) = \log \left(1- e^{2i\lambda a}\right)
\end{align*}
which is the same result as for the Dirichlet case computed in \cite{strohmaierDimensionalReductionFormulae2024}.

\subsection{Transmission boundary conditions.}
The transmission case introduces several complications compared to the Dirichlet and Neumann cases.
\begin{enumerate}
\item The 1-dimensional boundary operators are matrices of twice the size, since we incorporate Dirichlet and Neumann data simultaneously.
\item We must work with extended rather than pointlike obstacles to impose transmission boundary conditions.
\item The $\xi$-integral cannot be completely eliminated, because the shift in the spectral parameter is $\kappa_-^2 \xi^2$ in the interior domain and $\kappa_+^2\xi^2$ in the exterior domain, and therefore cannot be absorbed into the mass term. Thus the analogue of $(\ref{eq:1d-neumann-resolvent-rep})$ is false for transmission conditions.
\end{enumerate}

The operator $\widetilde{\Delta}_T$ acts by
\begin{align*}
\widetilde{\Delta}_T u = \begin{cases}
\kappa_+^2 \Delta u & \supp(u) \subset \Omega^+\\
\kappa_-^2 \Delta u & \supp(u) \subset \Omega^-.
\end{cases}
\end{align*}
After conjugating by the partial Fourier transform, this becomes
\begin{align*}
\Delta_T(\xi) u = \begin{cases} \kappa_+^2 (u'' - \xi^2 u) \text{ in } \Omega^+,\\
\kappa_-^2 (u'' - \xi^2 u) \text{ in } \Omega^-.
\end{cases}
\end{align*}

The relative operator
\begin{align}
R_{s, T}(\xi) = (-\Delta_T(\xi) + m^2)^{s} + (-\kappa_+^2\Delta_0 + m^2)^{s} - (-\Delta_{T,1}(\xi) + m^2)^{s} - (-\Delta_{T,2}(\xi) + m^2)^{s}
\end{align}
has the representation
\begin{align*}
R_{s, T}(\xi) = \frac{\mathrm{i}}{\pi} \int_{\tilde{\Gamma}} \lambda\left(\left(\lambda^2+m^2\right)^{s}-\left(m^2\right)^{s}\right) R_{\mathrm{rel},T}(\lambda, \xi) \mathrm{d} \lambda
\end{align*}
where
\begin{align*}
R_{\mathrm{rel},T}(\lambda, \xi) = (-\Delta_T(\xi) + \lambda^2)^{-1} + (-\kappa_+^2\Delta_0 + \lambda^2)^{-1} - (-\Delta_{T,1}(\xi) + \lambda^2)^{-1} - (-\Delta_{T,2}(\xi) + \lambda^2)^{-1}.
\end{align*}
We can deform the contour to obtain
\begin{align*}
\operatorname{tr}(R_{s, T}(\xi))
= \frac{2}{\pi} \sin\left(\pi s\right) \int_m^\infty \lambda(\lambda^2 - m^2)^{s} \operatorname{tr} R_{\mathrm{rel}, T}(i \lambda, \xi) \mathrm{d} \lambda.
\end{align*}
Therefore we have
\begin{align*}
\operatorname{tr}(\widetilde{R}_{s, T})
= \frac{1}{(2 \pi)^2} \int_{\mathbb{R}^2} \operatorname{tr}\left(R_{s,T}(\xi)\right) \mathrm{d} \xi
 = \frac{1}{2\pi^3} \sin\left(\pi s\right)\int_m^\infty \lambda(\lambda^2 - m^2)^{s} \operatorname{tr} R_{\mathrm{rel},T}(i\lambda,\xi) \mathrm{d} \xi \mathrm{d} \lambda.
\end{align*}
Using the relation $\operatorname{tr} R_{\mathrm{rel},T}(\lambda,\xi) = -\frac{1}{2\lambda}\frac{\partial}{\partial \lambda} \Xi_{T}(\lambda, \xi)$ we obtain
\begin{align*}
\operatorname{tr}(\widetilde{R}_{s, T})
= \frac{s}{2\pi^3} \sin\left(\pi s\right)\int_m^\infty \lambda(\lambda^2 - m^2)^{s-1} \Xi_T(\rmi \lambda, \xi)\mathrm{d} \xi \mathrm{d} \lambda
\end{align*}
The $\Xi$ function is given by
\begin{align*}
&\Xi_{T}(\lambda, \xi) = \log \det_{\mathcal{B}_{\sigma_+}^-}(P_{\sigma_+}^-H_\lambda(\xi)P^-_{\sigma_-} (P^-_{\sigma_+}H_{\mathrm{diag},\lambda}(\xi)P^-_{\sigma_-})^{-1})\\
&H_{\lambda}(\xi) = P_{\sigma_{+}}^{+} \M-\M P_{\sigma_{-}}^{-}=\M P_{\sigma_{-}}^{+}-P_{\sigma_{+}}^{-} \M=-A_{\sigma_{+}} \M-\M A_{\sigma_{-}}\\
&\sigma_{ \pm}^2:=\frac{\lambda^2}{\kappa_{ \pm}^2}-\xi^2.
\end{align*}
The Casimir energy per unit area is the case $s=1/2$ (and $m=0$):
\begin{align*}
E_{\mathrm{Cas}} = \frac{1}{2} \operatorname{tr} \widetilde{R}_{1/2, T}
= \frac{1}{8\pi^3} \int_0^\infty \Xi_T(\rmi \lambda, \xi) \mathrm{d} \xi \mathrm{d} \lambda.
\end{align*}
Note that 
\begin{align*}
    \sigma_{\pm}(\rmi t, \xi) = \sqrt{-(t/\kappa_{\pm})^2 - \xi^2} = \rmi  \sqrt{(t/\kappa_{\pm})^2 + \xi^2}
    =: \rmi  \eta_\pm(t,\xi),\qquad t > 0.
\end{align*}

As stated above we must work with extended obstacles $\Omega_1 = [-d/2 - a, - d/2]$, $\Omega_2 = [d/2, d/2 + a]$; thus $d$ represents the distance between the plates and $a$ the thickness of the plates.
We therefore have 4 boundary components; the combination of Dirichlet and Neumann data gives us $8 \times 8$ matrices as boundary operators.

Notebooks for WolframLanguage and Jupyter verifying our computations are available at \href{https://github.com/qft-hofmann/rel-trace-notebook}{Github}.
We define the reflection coefficient
\begin{align*}
R(\lambda, \xi) = \frac{\kappa_-^2 \sigma_- - \kappa_+^2 \sigma_+}{\kappa_-^2 \sigma_- + \kappa_+^2 \sigma_+}.
\end{align*}
Then we find for $\Xi_T$:
\begin{align}
    \label{eq:finite-slabs-trans-xi}
\Xi_T(\rmi t, \xi) = \log \left(\frac{\left(1-R^2 e^{-2 a \eta_{-}}\right)^2-R^2 e^{-2 d \eta_{+}}\left(1-e^{-2 a \eta_{-}}\right)^2}{\left(1-R^2 e^{-2a \eta_{-}}\right)^2}\right)
\end{align}
To the best of our knowledge, this formula has not appeared in the literature before.

In the limit $a\to\infty$ we have $e^{-2a\eta_{\pm}} \to 0$ and $\Xi_T(\lambda, \xi)$ reduces to
\begin{align*}
\Xi_T(\rmi t,\xi) = \log \left(1 - R^2 e^{-2d\eta_+}\right).
\end{align*}
This result agrees with the computation of Milton et al. in \cite{miltonCasimirEnergyDispersion2010}.
In that article, the authors consider the electromagnetic field and therefore obtain a sum of such contributions from the TE mode and the TM mode, which can be treated as independent scalar fields with transmission boundary conditions.

Note that in the limit $a \to 0$ we obtain $\Xi_T = 2 \log(1+ R) - 2\log(1-R)$.
The Casimir energy is therefore independent of the distance $d$ between the plates, and hence there is no Casimir force.
Thus the effect of transmission boundary conditions vanishes in the limit
of infinitesimally thin plates, in contrast to Dirichlet and Neumann boundary conditions.

\appendix

\section{Kernel estimates.}

In this Appendix we collect estimates on the free Green function $G_\lambda$ which are needed in the main text.
We need to estimate both high frequency $|\lambda| \to \infty$ and low frequency $|\lambda| \to 0$ behaviour, and we need to track growth in different norms.
But the main distinction is between the near-field regime with $(x,y) \in B \times B$ ($B$ bounded), and the separated regime, where $(x,y) \in U \times V$ with $\operatorname{dist}(U,V) > 0$.
An important special case of the separated regime is the far-field regime, in which one of $U, V$ is unbounded.
The most convenient way to treat the different cases of the separated regime is by estimating $G_\lambda(x_0, \cdot)$ for a fixed $x_0$.

\subsection{Pointwise estimate.}
The following pointwise statement is the basis of all subsequent estimates.
It is a slight reformulation of Lemma A.2 in \cite{HSW}.
\begin{lemma}
    \label{lem:green-function-deriv-bounds}
    Let $G_\lambda$ be the kernel of $(-\Delta - \lambda^2)^{-1}$ for $\operatorname{Im}(\lambda) > 0$.

    By symmetry we can write $G_\lambda(x,y) = g_\lambda(x-y)$, where $g_\lambda(\xi)$ actually depends only on $|\xi|$.
    
    Fix $\epsilon>0$.
    Then for every multi-index $\alpha\in \mathbb{N}_0^d$ there exist constants $C_{\alpha,\epsilon}>0$ and $c_\epsilon>0$ such that the following estimates hold for all $\xi\in \mathbb{R}^d\setminus{0}$ and all $\lambda \in \mathfrak{D}_\epsilon$:
    \begin{enumerate}
    \item If $|\lambda||\xi|\le 1$, then
    \begin{align}
    |D^\alpha g_\lambda(\xi)|
    \le
    C_{\alpha,\epsilon}
    \begin{cases}
    |\xi|^{-(d-2+|\alpha|)}, & d\ge 3 \text{ or } |\alpha|>0,\\
    1+\bigl|\log\bigl(|\lambda| |\xi|\bigr)\bigr|, & d=2,\ \alpha=0.
    \end{cases}
\end{align}
    
    \item If $|\lambda||\xi|\ge 1$, then
    \begin{align}
    |D^\alpha g_\lambda(\xi)|
    \le
    C_{\alpha,\epsilon} |\lambda|^{d-2+|\alpha|}e^{-\operatorname{Im}\lambda |\xi|}.
    \end{align}
    
    \item Consequently, if $|\xi|>1$, then
    \begin{align}
    |D^\alpha g_\lambda(\xi)|
    \le
    C_{\alpha,\epsilon}e^{-c_\epsilon \operatorname{Im}\lambda |\xi|}
    \begin{cases}
    |\xi|^{-(d-2+|\alpha|)}, & d\ge 3 \text{ or } |\alpha|>0,\\
    1+\bigl|\log(|\lambda|)\bigr|, & d=2,\ \alpha=0.
    \end{cases}
\end{align}
    \end{enumerate}
    \end{lemma}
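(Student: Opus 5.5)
The plan is to work directly from the explicit Hankel representation \eqref{eqn:GreensFunction}. With $r=|\xi|$ and $\nu=\frac{d-2}{2}$ we have $g_\lambda(\xi)=c_d\,\lambda^{\nu}r^{-\nu}\mathrm{H}^{(1)}_\nu(\lambda r)$. Put differently, $g_\lambda(\xi)=\lambda^{d-2}h(\lambda r)$, where $h(z)=c_d z^{-\nu}\mathrm{H}^{(1)}_\nu(z)$ is holomorphic on the upper half plane. Derivatives in $\xi$ are handled by the chain rule through the radial variable. Writing $\xi\mapsto r$, the derivative $D^\alpha$ of a radial function $F(r)$ is a finite sum of terms $F^{(k)}(r)\,\partial^\beta$-products of $r$. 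Each factor is homogeneous, of the form $r^{k-|\alpha|}$ times a bounded function of $\xi/r$. Hence it suffices to bound $\partial_r^k g_\lambda(r)=\lambda^{d-2+k}h^{(k)}(\lambda r)$ by $r^{-(d-2+k)}$ in regime (1) and by $|\lambda|^{d-2+k}e^{-\operatorname{Im}\lambda\, r}$ in regime (2). Combining with $r^{k-|\alpha|}$ then gives the claimed powers, since in regime (2) $r^{-1}\le|\lambda|$.

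For regime (1), $|z|=|\lambda| r\le 1$, I use the small-argument structure of $h$. For $d\ge3$ this is $z^{-\nu}\mathrm{H}^{(1)}_\nu(z)=z^{-2\nu}(a(z^2)+\text{lower})$, with logarithmic terms $z^{2\nu}\log z$ times an entire function in even $d$. Consequently $|h^{(k)}(z)|\le C|z|^{-(d-2+k)}$ on $\{|z|\le1\}\cap\overline{\mathfrak{D}_\epsilon}$. Substituting $z=\lambda r$ gives $\lambda^{d-2+k}|h^{(k)}(\lambda r)|\le C r^{-(d-2+k)}$. For $d=2$ I use $\mathrm{H}^{(1)}_0(z)=\frac{2\rmi}{\pi}\log z+\text{entire}$, which gives $1+|\log(|\lambda|r)|$ for $k=0$ (note $|\arg\lambda|\le\pi$ on the sector) and $r^{-k}$ for $k\ge1$.

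For regime (2), $|z|\ge1$ with $z$ in the closed sector $\epsilon\le\arg z\le\pi-\epsilon$, I use the Hankel asymptotic expansion $\mathrm{H}^{(1)}_\nu(z)=\sqrt{2/(\pi z)}\,e^{\rmi(z-\nu\pi/2-\pi/4)}(1+O(|z|^{-1}))$. This holds uniformly in closed subsectors of $-\pi<\arg z<2\pi$, together with the analogous expansions of its derivatives, which have the same form. This yields $|h^{(k)}(z)|\le C|z|^{-\nu-1/2}e^{-\operatorname{Im} z}\le Ce^{-\operatorname{Im} z}$. Substituting $z=\lambda r$ gives $\operatorname{Im}z=r\operatorname{Im}\lambda$ and hence the stated bound. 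The one step needing care is the uniformity of the asymptotics on the sector up to $|z|=1$. I would get it from the integral representation $\mathrm{H}^{(1)}_\nu(z)=C z^{-1/2}e^{\rmi z}\int_0^\infty e^{-t}t^{\nu-1/2}(1+\rmi t/(2z))^{\nu-1/2}dt$. Since $1+\rmi t/(2z)$ stays away from $0$ uniformly on the sector, this gives the bound directly.

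Item (3) follows by combining (1) and (2). If $|\lambda||\xi|\ge1$, then $|\lambda|^{d-2+|\alpha|}e^{-\operatorname{Im}\lambda|\xi|}$ is bounded by $C|\xi|^{-(d-2+|\alpha|)}e^{-c\operatorname{Im}\lambda|\xi|}$ for any $c<1$. To see this, write $|\lambda|\le \operatorname{Im}\lambda/\sin\epsilon$ and absorb the polynomial factor $(\operatorname{Im}\lambda|\xi|)^{m}$ into $e^{-(1-c)\operatorname{Im}\lambda|\xi|}$. If $|\lambda||\xi|\le1$, then $\operatorname{Im}\lambda|\xi|\le1$, so $e^{-c\operatorname{Im}\lambda|\xi|}\ge e^{-c}$ and (1) applies directly. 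In $d=2$, $\alpha=0$ with $|\xi|>1$, we have $|\log(|\lambda||\xi|)|\le|\log|\lambda||+\log|\xi|$. The factor $\log|\xi|$ is absorbed by the exponential when $\operatorname{Im}\lambda|\xi|$ is large; it is harmless when $|\lambda||\xi|\le 1$, since then $\log|\xi|\le|\log|\lambda||$.
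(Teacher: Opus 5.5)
Your proof is correct. The paper does not prove this lemma itself; it only says it is a reformulation of Lemma A.2 in \cite{HSW}. Your argument is therefore a self-contained substitute for that citation, and it follows the natural route.

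The scaling $g_\lambda(\xi)=\lambda^{d-2}h(\lambda|\xi|)$, with $h(z)=c_d z^{-\nu}\mathrm{H}^{(1)}_\nu(z)$, reduces all the uniformity over the unbounded sector $\mathfrak{D}_\epsilon$ to bounds for $h^{(k)}$ on the two sets $\{0<|z|\le 1\}$ and $\{|z|\ge 1\}$ inside $\epsilon\le\arg z\le\pi-\epsilon$. The homogeneity of $\partial^\beta|\xi|$ then converts radial derivatives into $D^\alpha$ with exactly the claimed powers.

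A by-product of your argument is that item (3) holds for all $\xi\neq 0$, except when $d=2$ and $\alpha=0$. That is the only case where $|\xi|>1$ is really used, namely to pass from $\log(|\lambda||\xi|)$ to $\log|\lambda|$.

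There is one harmless slip. In $d=2$ the small-argument structure is $\mathrm{H}^{(1)}_0(z)=\frac{2\rmi}{\pi}(\log z)\,J_0(z)+(\text{entire})$, not $\frac{2\rmi}{\pi}\log z+(\text{entire})$. The difference $\frac{2\rmi}{\pi}(\log z)(J_0(z)-1)$ has $k$-th derivative of size $O\bigl(|z|^{2-k}(1+|\log|z||)\bigr)$ on $0<|z|\le 1$ in the sector. This lies within the bounds you need, namely $1+|\log|z||$ for $k=0$ and $|z|^{-k}$ for $k\ge 1$, so your conclusion is unchanged.
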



\begin{corollary}\label{lem:far-field-pt-bounds}
        Assume $d\ge 3$. Let $K\subset B_R(0)$ be compact, let $\epsilon\in(0,\pi/2)$, and let
        $\alpha,\beta$ be multiindices. Then there exist constants
        $C_{\alpha,\beta,\epsilon}>0$ and $c_\epsilon>0$ such that
        \[
        |D_x^\alpha D_z^\beta G_\lambda(x,z)|
        \le C_{\alpha,\beta,\epsilon}\,
        |x|^{2-d-|\alpha|-|\beta|} e^{-c_\epsilon (\Im\lambda)|x|}
        \]
        for all $\lambda\in\mathfrak D_\epsilon$, all $z\in K$, and all $|x|>2R$.
\end{corollary}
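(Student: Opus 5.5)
Writing $G_\lambda(x,z)=g_\lambda(x-z)$, I will deduce the statement from part (3) of Lemma \ref{lem:green-function-deriv-bounds}, applied with $\xi = x-z$. The chain rule gives $D_x^\alpha D_z^\beta G_\lambda(x,z) = (-1)^{|\beta|}(D^{\alpha+\beta} g_\lambda)(x-z)$, so it suffices to bound $D^{\gamma}g_\lambda(\xi)$ with $\gamma=\alpha+\beta$ at $\xi=x-z$.

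First, the geometry. Since $z\in K\subset B_R(0)$ and $|x|>2R$, we have
\begin{align*}
|x-z|\ge |x|-|z|\ge |x|-R > \tfrac12|x|,\qquad |x-z|\le |x|+R< \tfrac32|x| .
\end{align*}
So $|\xi|$ and $|x|$ are comparable, with constants independent of $x$, $z$ and $\lambda$.

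Next, the bound itself. If $|\xi|>1$, part (3) of the lemma applies directly. Because $d\ge3$, the case distinction there always gives the power $|\xi|^{-(d-2+|\gamma|)}$. Using $|\xi|>\frac12|x|$, both in the negative power and in the exponent, yields
\begin{align*}
|D^\gamma g_\lambda(\xi)|\le C\,2^{d-2+|\gamma|}|x|^{2-d-|\gamma|}e^{-\frac{c_\epsilon}{2}\Im\lambda\,|x|}.
\end{align*}
This is the claim after renaming $c_\epsilon/2$ as $c_\epsilon$.

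If $|\xi|\le 1$, the condition is not automatic unless $R\ge 2$. I will first try to avoid the case entirely by enlarging $R$ to $\max(R,2)$; that would, however, change the hypothesis region. So instead I plan to handle it with parts (1) and (2) of the lemma. When $|\lambda||\xi|\le1$, part (1) gives $|\xi|^{-(d-2+|\gamma|)}$. Moreover $\Im\lambda\,|x|\le 2|\lambda||\xi|\le 2$, so the exponential factor is bounded below and can be inserted at the cost of a constant.

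When $|\lambda||\xi|\ge1$, part (2) gives $|\lambda|^{d-2+|\gamma|}e^{-\Im\lambda|\xi|}$. On $\mathfrak D_\epsilon$ we have $\Im\lambda\ge \sin\epsilon\,|\lambda|$, so $|\lambda|^{k}\le (\sin\epsilon)^{-k}(\Im\lambda)^k$. The polynomial factor $(\Im\lambda|\xi|)^k$ is then absorbed into half of the exponential, leaving $|\xi|^{-k}e^{-\frac12\Im\lambda|\xi|}$, and we conclude as before using comparability of $|\xi|$ and $|x|$.

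The main (minor) obstacle is this bookkeeping of constants, in particular absorbing the powers of $|\lambda|$ into the exponential using the sector condition. Everything else is immediate from the lemma.
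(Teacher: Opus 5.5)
Your argument is correct and is exactly the deduction the paper intends. The paper states this corollary without a separate proof, as an immediate consequence of Lemma~\ref{lem:green-function-deriv-bounds} via the chain rule and the comparability $\tfrac12|x|<|x-z|<\tfrac32|x|$, and your absorption of $|\lambda|^{d-2+|\alpha|+|\beta|}$ into half the exponential using $\operatorname{Im}\lambda\ge\sin\epsilon\,|\lambda|$ on $\mathfrak{D}_\epsilon$ supplies the remaining details correctly. (It also shows that the split into $|\xi|>1$ and $|\xi|\le 1$ is unnecessary for $d\ge 3$, since parts (1) and (2) alone cover all $\xi\neq 0$.)
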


\subsection{Separated regime estimates.}

The following are useful consequences of the pointwise estimates \ref{lem:green-function-deriv-bounds}.
These are not the most precise possible estimates; we have absorbed some polynomial or log growth into the exponential term.
This declutters the estimates and is sufficient for our purposes.

\begin{corollary}
    \label{cor:kernel-pointwise-CN}
For any compact $K\subset \mathbb{R}^d$, $N\in \mathbb{N}$, $\epsilon > 0$ there exist constants $C_1, C_2 > 0$ such that for all $\lambda \in \mathfrak{D}_\epsilon$ and all $x_0 \in \mathbb{R}^d$ with $\delta := \operatorname{dist}(x_0, K) > 0$, we have
\begin{align}
\norm{D^\alpha_x G_\lambda(x_0, \cdot)}_{C^N(K)} \leq C_1 \delta^{2-d} e^{-C_2 \delta \operatorname{Im}\lambda}.
\end{align}
\end{corollary}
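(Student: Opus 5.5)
The plan is to reduce the estimate to the pointwise bounds of Lemma~\ref{lem:green-function-deriv-bounds}, applied to $g_\lambda(x_0-y)$ and its $y$-derivatives for $y\in K$. Since $G_\lambda(x_0,y)=g_\lambda(x_0-y)$, every derivative $D_x^\alpha D_y^\beta G_\lambda(x_0,y)$ with $|\beta|\le N$ equals, up to sign, $(D^{\alpha+\beta}g_\lambda)(x_0-y)$. The $C^N(K)$ norm is therefore controlled by
\begin{align*}
\sup_{|\gamma|\le |\alpha|+N}\ \sup_{|\xi|\ge \delta,\ \xi\in x_0-K} |D^\gamma g_\lambda(\xi)|.
\end{align*}
Here $|x_0-y|\ge\delta$ for all $y\in K$, and also $|x_0-y|\le \delta+\operatorname{diam}K$. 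I will read the constants as allowed to depend on $\alpha$ and $K$, with the bound meant for $\delta$ in a bounded range or with polynomial factors absorbed as the appendix announces.

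I split into the two frequency regimes of the lemma at each point $\xi=x_0-y$.

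If $|\lambda||\xi|\ge1$, part (2) of the lemma gives
\begin{align*}
|D^\gamma g_\lambda(\xi)|\le C|\lambda|^{d-2+|\gamma|}e^{-\operatorname{Im}\lambda|\xi|}.
\end{align*}
On $\mathfrak{D}_\epsilon$ we have $|\lambda|\le \operatorname{Im}\lambda/\sin\epsilon$, so I absorb the polynomial factor into half of the exponential:
\begin{align*}
|\lambda|^{k}e^{-\operatorname{Im}\lambda|\xi|/2}\le C_k|\xi|^{-k}.
\end{align*}
This leaves $C|\xi|^{2-d-|\gamma|}e^{-\operatorname{Im}\lambda|\xi|/2}$.

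If $|\lambda||\xi|\le1$, part (1) gives $|\xi|^{2-d-|\gamma|}$ (or a logarithm when $d=2$, $\gamma=0$). In this regime $\operatorname{Im}\lambda|\xi|\le1$, so multiplying by $e\cdot e^{-\operatorname{Im}\lambda|\xi|}\ge1$ inserts the exponential factor for free.

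Combining the two regimes yields $|D^\gamma g_\lambda(\xi)|\le C|\xi|^{2-d-|\gamma|}e^{-\operatorname{Im}\lambda|\xi|/2}$. Using $|\xi|\ge\delta$ in the exponential, and taking $\sup_{|\xi|\ge\delta}$ in the power (bounded when $\delta$ ranges over a bounded set after adjusting constants), gives the claim with $C_2=1/2$ modulo the harmless factor.

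The main obstacle is the bookkeeping of the prefactor $\delta^{2-d}$ versus the extra negative powers $\delta^{-|\gamma|}$ coming from derivatives. I would handle this by conceding a slightly smaller exponent: for $\delta\ge\delta_0$ the powers are bounded, and the statement is used only with $\delta$ bounded below by the obstacle separation. Any residual powers $\delta^{-|\gamma|}$, and the $d=2$ logarithm $1+|\log(|\lambda||\xi|)|\lesssim 1+|\log|\lambda||+|\log\delta|$, are absorbed into $C_1$ and into a reduction of $C_2$. For large $|\lambda|$, $|\log|\lambda||\le C e^{\eta\operatorname{Im}\lambda}$ on $\mathfrak{D}_\epsilon$. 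For small $|\lambda|$, part (3) of the lemma shows the logarithm in fact disappears whenever $|\lambda||\xi|\le1$ is excluded, and otherwise I accept the mild loss consistent with the paper's convention of absorbing polynomial and log growth.
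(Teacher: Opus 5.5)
Your route is the paper's. The paper gives no separate argument, only the remark that the corollary follows from Lemma~\ref{lem:green-function-deriv-bounds} by absorbing polynomial and logarithmic growth into the exponential.

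Your central step is correct. Using $\operatorname{Im}\lambda\ge|\lambda|\sin\epsilon$ on $\mathfrak{D}_\epsilon$, parts (1) and (2) combine to $|D^\gamma g_\lambda(\xi)|\le C|\xi|^{2-d-|\gamma|}e^{-\operatorname{Im}\lambda|\xi|/2}$ for all $\xi\neq0$, when $d\ge3$ or when $d=2$ and $|\gamma|\ge1$. Both factors are nonincreasing in $|\xi|\ge\delta$, so this gives the claim with $C_2=1/2$ and $C_1$ depending on $\delta_0$, provided $\delta\ge\delta_0>0$. You are right that such a lower bound is genuinely necessary as soon as $|\alpha|+N>0$. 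The factor $\delta^{-|\alpha|-N}$ is really there, and it cannot be traded for a smaller $C_2$, since the exponential gives nothing when $\delta$ is small. Note that what you need is a lower bound on $\delta$, not a bounded range. This restriction is harmless, because every application in the paper has $x_0$ and $K$ in different obstacles.

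Your $d=2$ paragraph, however, contains an error. Part (3) of the lemma does not remove the logarithm: for $d=2$, $\alpha=0$ it still carries the factor $1+|\log|\lambda||$. For fixed $x_0$, $K$ and $\lambda\to0$ you are unavoidably in the regime $|\lambda||\xi|\le1$. Since $e^{-C_2\delta\operatorname{Im}\lambda}\to1$ as $\lambda\to0$, this growth cannot be absorbed.

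In fact the inequality as stated is false in $d=2$ for $\alpha=0$. By the small-$\lambda$ expansion in Section~\ref{sec:dimension-two}, $G_\lambda(x_0,y)=-\frac{1}{2\pi}\log\lambda+O(1)$ uniformly in $y\in K$. So the $C^0(K)$ norm grows like $\frac{1}{2\pi}|\log|\lambda||$ while the right-hand side stays bounded.

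So instead of ``accepting a mild loss'', you should state the corrected bound. In $d=2$ with $\alpha=0$, the right-hand side needs an extra factor $1+|\log|\lambda||$ near $\lambda=0$. This is exactly the $\log\lambda$ that the paper itself picks up in its $d=2$ treatment of $\Xi_N$. When $|\alpha|\ge1$, every derivative entering the $C^N(K)$ norm has order $|\gamma|\ge1$, no logarithm occurs, and your argument goes through in $d=2$ unchanged.
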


\begin{corollary}
    \label{cor:kernel-pointwise-sobolev}
    Let $U \subset \mathbb{R}^d$ be an open domain.
    For all $\epsilon > 0$, $\delta' > 0$, $m\in \mathbb{R}$ there exists $C > 0$ such that for all $\lambda \in \mathfrak{D}_\epsilon$ and all $x_0 \in \mathbb{R}^d$ with $\delta := \operatorname{dist}(x_0, U) > \delta'$, we have
    \begin{align}
    \norm{D^\alpha_x G_\lambda(x_0, \cdot)}_{H^m(U)} \leq C \rho(\operatorname{Im} \lambda)^{1/2} e^{-\delta' \operatorname{Im} \lambda}
    \end{align}
\end{corollary}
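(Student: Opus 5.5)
The plan is to reduce the statement to the pointwise kernel bound of Corollary \ref{cor:kernel-pointwise-CN}, or more precisely to its Sobolev analogue. By translation invariance it suffices to control $x\mapsto D^\alpha_x G_\lambda(x_0,x)$ on $U$, where $\operatorname{dist}(x_0,U)=\delta>\delta'$. For $m\le 0$ the $H^m(U)$ norm is dominated by the $L^2(U)$ norm. For $m\in\mathbb{N}$ we have
\begin{align*}
\norm{v}_{H^m(U)}^2\lesssim\sum_{|\beta|\le m}\norm{D^\beta v}_{L^2(U)}^2 ,
\end{align*}
because $G_\lambda(x_0,\cdot)$ is smooth on a neighbourhood of $\overline U$ and restriction of a smooth function extends it. Non-integer $m$ then follows by monotonicity in $m$. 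It therefore suffices to bound $\int_U |D^\gamma g_\lambda(x-x_0)|^2\,dx$ for each multi-index $\gamma$ with $|\gamma|\le |\alpha|+m$, uniformly in $x_0$ and $\lambda\in\mathfrak{D}_\epsilon$.

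For this I will use Lemma \ref{lem:green-function-deriv-bounds} and split $U$ according to whether $|\lambda||x-x_0|\le 1$ or $|\lambda||x-x_0|\ge 1$. Since $|x-x_0|\ge\delta>\delta'$ on $U$, the near region is empty unless $|\lambda|\le 1/\delta'$. For such $\lambda$, part (1) gives the bound $|x-x_0|^{-(d-2+|\gamma|)}$ there, which is bounded by a constant depending on $\delta'$; in $d=2$, $\gamma=0$ one gets a $\log$ bound instead. On the near region $\operatorname{Im}\lambda\le 1/\delta'$, so a factor $e^{-\delta'\operatorname{Im}\lambda}$ can be inserted at the cost of a constant. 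On the far region, part (2) gives
\begin{align*}
|D^\gamma g_\lambda(x-x_0)|\le C|\lambda|^{d-2+|\gamma|}e^{-\operatorname{Im}\lambda\,|x-x_0|}.
\end{align*}
Because $\lambda\in\mathfrak{D}_\epsilon$, we have $|\lambda|\le \operatorname{Im}\lambda/\sin\epsilon$. I then write $e^{-\operatorname{Im}\lambda|x-x_0|}=e^{-\delta'\operatorname{Im}\lambda}\,e^{-\operatorname{Im}\lambda(|x-x_0|-\delta')}$ and use $|x-x_0|-\delta'\ge(1-\delta'/\delta)|x-x_0|$ together with $\delta>\delta'$. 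The polynomial factor in $|\lambda|$ is then absorbed into part of the remaining exponential, after shrinking $\delta'$ slightly, which is harmless since the statement is for arbitrary $\delta'<\delta$.

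The remaining square-integral is
\begin{align*}
\int_{|y|\ge\delta}e^{-2c\operatorname{Im}\lambda|y|}\,dy\lesssim \rho(\operatorname{Im}\lambda):=(\operatorname{Im}\lambda)^{-d}+1\text{-type factor}.
\end{align*}
This is exactly where the factor $\rho(\operatorname{Im}\lambda)^{1/2}$ in the statement comes from. I will take $\rho$ to be the resulting function, namely $\int_{|y|\ge\delta'}e^{-2c\operatorname{Im}\lambda|y|}dy$ combined with the bounded-region contribution, i.e.\ at most polynomial in $(\operatorname{Im}\lambda)^{-1}$. This factor is unavoidable when $U$ is unbounded, and for bounded $U$ it can be replaced by $|U|$.

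The main obstacle is the low-frequency, unbounded-$U$ regime. Here $\operatorname{Im}\lambda\to0$, the exponential gives no integrability, and in $d\ge3$ the decay $|y|^{2-d}$ is not square integrable at infinity for $d\le4$. This is precisely why the estimate must carry the factor $\rho(\operatorname{Im}\lambda)$. I would first try to settle for this weaker $\rho$-dependent bound, or restrict to $\operatorname{Im}\lambda$ bounded below, rather than attempt a uniform bound. The second delicate point is uniformity in $\lambda$ in the transition region $|\lambda||x-x_0|\approx1$. This is handled by the overlap of parts (1) and (2) of Lemma \ref{lem:green-function-deriv-bounds}, and the $d=2$ logarithm is absorbed into the exponential as in part (3).
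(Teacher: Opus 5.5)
The paper gives no separate proof of this corollary. It simply presents it as a consequence of Lemma \ref{lem:green-function-deriv-bounds}, and your near/far split is that argument. It covers the bounded case $U=\Omega_i$, $x_0\in\Omega_j$, which is the only one the paper uses.

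\textbf{Main gap: the weight.} As a proof of the stated estimate, your argument falls short here. The $\rho$ in the statement is not defined there, but the only weight in the paper is $\rho_d$ from \eqref{eq:rho-d-def}, and Lemma \ref{lem:monopole} does exactly this kind of integral with it. Redefining $\rho$ as \emph{the resulting function} therefore changes the statement. By discarding the prefactor $|\lambda|^{d-2+|\gamma|}$ and integrating $e^{-2c\operatorname{Im}\lambda|y|}$ over $|y|\ge\delta$, you only obtain $(\operatorname{Im}\lambda)^{-d}$. Your claim that some blow-up is unavoidable for unbounded $U$ is also false for $d\ge5$, where $\rho_d\equiv 1$; your own remark that $|y|^{2-d}$ is square-integrable there shows this.

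The missing step is the low-frequency bookkeeping used in Lemma \ref{lem:monopole}. For $|\lambda|\le 1/\delta_0$, keep the prefactor and rescale $s=|\lambda| r$. The far region then contributes $|\lambda|^{2(d-2+|\gamma|)}\int_{1/|\lambda|}^\infty e^{-2c\operatorname{Im}\lambda\, r}r^{d-1}\,dr\lesssim |\lambda|^{d-4+2|\gamma|}$. The near region contributes $\int_\delta^{1/|\lambda|} r^{3-d-2|\gamma|}\,dr$, or, for $d=2$ and $\gamma=0$, $\int_0^{1/|\lambda|}(1+|\log(|\lambda| r)|)^2 r\,dr\lesssim|\lambda|^{-2}$. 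Both are $\lesssim\rho_d(|\lambda|)\simeq\rho_d(\operatorname{Im}\lambda)$ on $\mathfrak{D}_\epsilon$. In particular, the two-dimensional logarithm ends up in $\rho$; it is not absorbed into the exponential as you say.

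\textbf{The margin between $\delta'$ and the distance.} Your \emph{shrinking $\delta'$} is not harmless for the reason you give, since in the statement $\delta'$ is fixed before $x_0$. It is, however, necessary. With rate exactly $\delta'$ and $C$ uniform over all $\operatorname{dist}(x_0,U)>\delta'$, the bound fails. Take a half-space $U$, $\alpha=0$, $m\ge 2$ and $\lambda=\rmi t$: the left side is $\gtrsim t^{N}e^{-t\delta}$ with $N>0$, so letting $\delta\downarrow\delta'$ and then $t\to\infty$ breaks it. You should state and prove the estimate for $\operatorname{dist}(x_0,U)\ge\delta_0>\delta'$, with $C$ depending on $\delta_0-\delta'$. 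This is all the application needs, since there $\operatorname{dist}(\Omega_i,\Omega_j)\ge\delta>\delta'$.

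\textbf{The norm on $U$.} $H^m(U)$ is the quotient space of restrictions. For an arbitrary open $U$, its norm is not controlled by $\sum_{|\beta|\le m}\norm{D^\beta v}_{L^2(U)}$, and smoothness of $v$ near $\overline U$ does not give that inequality. Use an explicit extension instead:
\begin{itemize}
\item take $\chi$ smooth with $\chi=1$ on $U$, support in $\{\operatorname{dist}(\cdot,U)<\theta\}$, and $|D^k\chi|\lesssim\theta^{-k}$;
\item bound $\norm{v}_{H^m(U)}\le\norm{\chi\, D^\alpha_x G_\lambda(x_0,\cdot)}_{H^m(\mathbb{R}^d)}$;
\item run your pointwise estimates on this $\theta$-neighbourhood, with $\theta<\delta_0-\delta'$.
\end{itemize}
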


\subsection{Near field estimate.}
We turn now to the only near field estimate which we need.

The following is a step in the proof of Proposition 7.1 in \cite{strohmaierRelativeTraceFormula2021},
where however it is given only for $d=3$.
\begin{lemma}
    There is a constant $C_{\epsilon,\chi}$ such that, for all $\lambda \in \mathfrak{D}_\epsilon$,
    \begin{align}
    \norm{\chi G_\lambda \chi u}_{H^{1}(\mathbb{R}^d)} &\leq C_{\epsilon,\chi} \norm{u}_{H^{-1}(\mathbb{R}^d)}\qquad d\geq 3\\
    \norm{\chi G_\lambda \chi u}_{H^{1}(\mathbb{R}^2)} &\leq C_{\epsilon,\chi}(1 + |\log|\lambda||) \norm{u}_{H^{-1}(\mathbb{R}^2)} \qquad d=2.
    \end{align}
\end{lemma}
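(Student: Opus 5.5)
We want to show that $\chi G_\lambda \chi$ maps $H^{-1}(\mathbb{R}^d)\to H^1(\mathbb{R}^d)$ with a norm bounded uniformly in $\lambda\in\mathfrak{D}_\epsilon$ (logarithmically in $d=2$). The plan is to split $\lambda$ into a high-frequency region $|\lambda|\ge 1$ and a low-frequency region $|\lambda|\le 1$, and to use the Fourier representation $G_\lambda = (|\xi|^2-\lambda^2)^{-1}$ in the first and the kernel bounds of Lemma \ref{lem:green-function-deriv-bounds} in the second.

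\textbf{High frequencies.} For $\lambda\in\mathfrak{D}_\epsilon$ with $|\lambda|\ge 1$ we have $\lambda^2 = r^2e^{2\rmi\varphi}$ with $2\varphi\in(2\epsilon,2\pi-2\epsilon)$. Hence $\operatorname{dist}(\lambda^2,[0,\infty))\ge c_\epsilon|\lambda|^2$, and therefore
\begin{align*}
\bigl||\xi|^2-\lambda^2\bigr|\ge c_\epsilon\,(|\xi|^2+|\lambda|^2)\ge c_\epsilon'(1+|\xi|^2).
\end{align*}
This is an elementary sector estimate: $|a-z|\ge \sin(\epsilon)(a+|z|)$ up to constants for $a\ge0$ and $z$ in the sector. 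It follows that the multiplier $(1+|\xi|^2)\,(|\xi|^2-\lambda^2)^{-1}$ is bounded uniformly, so $G_\lambda\colon H^{-1}\to H^1$ is uniformly bounded. Multiplication by the cutoff $\chi\in C_c^\infty$ is bounded on $H^{\pm1}$, which settles this region.

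\textbf{Low frequencies.} For $|\lambda|\le 1$ the Fourier bound degenerates near $\xi=0$, so we argue with kernels instead. Write
\begin{align*}
\chi G_\lambda\chi = \chi G_{\rmi}\chi + \chi(G_\lambda-G_{\rmi})\chi .
\end{align*}
The first term is bounded $H^{-1}\to H^1$ by the previous step. For the second term, the resolvent identity gives $G_\lambda-G_{\rmi} = (\lambda^2+1)\,G_\lambda G_{\rmi}$, which is still singular. It is cleaner to work directly with the kernel
\begin{align*}
k_\lambda(x,y)=\chi(x)\bigl(g_\lambda-g_{\rmi}\bigr)(x-y)\chi(y),
\end{align*}
which is compactly supported in $B\times B$. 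From the Hankel expansion, $g_\lambda-g_{\rmi}$ is smoother than either term separately: its singularity at $\xi=0$ is $O(|\xi|^{4-d})$, with a logarithm in even dimension. Using Lemma \ref{lem:green-function-deriv-bounds}(1) together with the small-argument series of $\mathrm H^{(1)}$, one checks that $k_\lambda$ and its derivatives up to second order in $x$ and in $y$ are bounded in $L^2(B\times B)$ uniformly for $|\lambda|\le 1$ when $d\ge3$. In $d=2$ the constant term $-\frac{1}{2\pi}\log\lambda$ and the $\lambda^2\log\lambda$ terms contribute a factor $1+|\log|\lambda||$.

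\textbf{Concluding, and the main obstacle.} An $L^2(B\times B)$ bound on $\partial_x^\alpha\partial_y^\beta k_\lambda$ for $|\alpha|,|\beta|\le1$ gives boundedness $H^{-1}\to H^1$ via Hilbert--Schmidt. The main obstacle is this low-frequency kernel bound. Behind it is the fact that $g_\lambda$ is not uniformly in $L^2$ in every regime, for instance $d=3,4$ with derivatives. If the $L^2$ bound on the difference fails for small $d$, the fallback is to compare with $G_0$ instead: for $d\ge3$, $G_0=c|x-y|^{2-d}$ and $\chi G_0\chi\colon H^{-1}\to H^1$ is bounded by elliptic regularity. For $d=2$ one uses the logarithmic kernel $G_0$ and treats $-\frac{1}{2\pi}\log\lambda\,\chi\otimes\chi$ as a rank-one term of norm $C|\log\lambda|$. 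The remainder $G_\lambda-G_0$ has kernel $O(|\lambda|^2|\xi|^{4-d})$ (or a logarithmic analogue), which is smooth enough to be handled by the Hilbert--Schmidt argument.
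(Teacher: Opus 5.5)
Your high-frequency step is correct; it even gives $G_\lambda\colon H^{-1}\to H^1$ uniformly for $|\lambda|\geq 1$ without cutoffs. The gap is in the low-frequency step.

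The Hilbert--Schmidt criterion you invoke needs $\partial_x^\alpha\partial_y^\beta k_\lambda\in L^2(B\times B)$ for $|\alpha|=|\beta|=1$. That amounts to $D^2(g_\lambda-g_{\rmi})\in L^2_{\mathrm{loc}}(\mathbb{R}^d)$. Write $g_\lambda=g_0+\lambda^2 g^{(1)}+\dots$ with $-\Delta g^{(1)}=g_0$. Then the leading singular part of $g_\lambda-g_{\rmi}$ is $c_d(\lambda^2+1)|\xi|^{4-d}$ with $c_d\neq 0$ for $d\neq 4$, and $c_4(\lambda^2+1)\log|\xi|$ for $d=4$. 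Its Hessian is homogeneous of degree $2-d$. Since $\int_{|\xi|<1}|\xi|^{4-2d}\,\mathrm{d}\xi=c\int_0^1 r^{3-d}\,\mathrm{d}r$ diverges for every $d\geq 4$, your argument only works in $d=2,3$. That is the opposite of your guess that the trouble lies in small dimension.

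No sharper kernel estimate can rescue it. The remainder $\chi(G_\lambda-G_{\rmi})\chi$ is locally an operator of order $-4$, so viewed $H^{-1}\to H^1$ it has order $-2$. Such an operator is not Hilbert--Schmidt once $d\geq 4$. Your fallback fails for the same reason: $G_\lambda-G_0$ has the same leading term $\lambda^2|\xi|^{4-d}$ (resp.\ $\lambda^2\log|\xi|$), so the Hilbert--Schmidt argument breaks down in the same dimensions.

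The missing idea is an operator-norm bound that only uses $L^1$-integrability of the kernel. This holds in every dimension, since $|\xi|^{2-d}\in L^1_{\mathrm{loc}}$. It is the paper's route:
\begin{itemize}
\item Write $\chi G_\lambda\chi=\chi R_{\eta,\lambda}\chi$, where $R_{\eta,\lambda}$ is convolution with $k_\lambda=\eta(|\xi|)g_\lambda$ and $\eta\equiv 1$ near $[0,\operatorname{diam}\operatorname{supp}\chi]$.
\item Compute $(1-\Delta)k_\lambda=\delta_0+h_\lambda$ with $h_\lambda=(1+\lambda^2)\eta g_\lambda-2\nabla\eta\cdot\nabla g_\lambda-(\Delta\eta)g_\lambda$.
\item Show $\sup_{\lambda\in\mathfrak{D}_\epsilon}\norm{h_\lambda}_{L^1}<\infty$ from the pointwise bounds of Lemma \ref{lem:green-function-deriv-bounds}, with the extra factor $1+|\log|\lambda||$ in $d=2$.
\item Conclude by Young's inequality that $R_{\eta,\lambda}\colon H^s\to H^{s+2}$ uniformly.
\end{itemize}
This covers all $\lambda\in\mathfrak{D}_\epsilon$ at once, so no high/low splitting is needed.

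If you want to keep your splitting, apply the same Young-type bound to your remainder instead of Hilbert--Schmidt. Since $(1-\Delta)g_{\rmi}=\delta_0$, one gets $(1-\Delta)[\eta(g_\lambda-g_{\rmi})]=(1+\lambda^2)\eta g_\lambda-2\nabla\eta\cdot\nabla(g_\lambda-g_{\rmi})-(\Delta\eta)(g_\lambda-g_{\rmi})$. This is uniformly in $L^1$ for $|\lambda|\leq 1$ (up to $1+|\log|\lambda||$ in $d=2$), and that gives the low-frequency bound in every dimension.
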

    \begin{proof}
The proof is essentially the one in \cite{strohmaierRelativeTraceFormula2021}.
We repeat it here to leave the reader in no doubt that it goes through for $d\geq 3$.

    As a first step we replace the cutoff functions by a cutoff of the form $\eta(|x-y|)$.
    Let $\eta \in C_0^{\infty}(\mathbb{R})$ be a function that is one near $\left[-R_1, R_1\right]$, where $R_1$ is the diameter
    of the support of $\chi$.
    This choice implies that $\chi\left(-\Delta-\lambda^2\right)^{-1} \chi=\chi R_{\eta, \lambda} \chi$, where $R_{\eta, \lambda}$ is the convolution operator $R_{\eta,\lambda} f = k_\lambda \ast f$ with
    \begin{align*}
    k_\lambda(\xi) = \eta(|\xi|) g_\lambda(\xi).
    \end{align*}

    Because $(-\Delta + 1)g_\lambda(\xi) = \delta_0(\xi) + (1+\lambda^2) g_\lambda(\xi)$, the kernel $k_\lambda$ satisfies $(1-\Delta) k_\lambda \ast f = f + h_\lambda \ast f$ where
    \begin{align*}
    h_\lambda:=\left(1+\lambda^2\right) \eta g_\lambda-2 \nabla \eta \cdot \nabla g_\lambda-(\Delta \eta) g_\lambda.
    \end{align*}
    We will show that $\sup _{\lambda \in \mathfrak{D}_\epsilon}\left\|h_\lambda\right\|_{L^1\left(\mathbb{R}^d\right)}<\infty$ and therefore
    \begin{align*}
    \norm{R_{\eta, \lambda} f}_{H^{s+2}(\mathbb{R}^d)} \leq C \norm{(1-\Delta) R_{\eta, \lambda} f}_{H^s(\mathbb{R}^d)} \leq C \norm{f + h_\lambda \ast f}_{H^s(\mathbb{R}^d)} \leq C_s \norm{f}_{H^s(\mathbb{R}^d)}.
    \end{align*}
    The last estimate is by Young's convolution inequality and the fact that convolution operators commute with $1-\Delta$.
    
    Since $\eta \equiv 1$ in a neighbourhood of zero, and $\eta$ is compactly supported, the set
    \begin{align*}
        A:=\operatorname{supp} \nabla \eta \cup \operatorname{supp} \Delta \eta
    \end{align*}
    is contained in an annulus
    \begin{align*}
    A \subseteq \left\{\xi \in \mathbb{R}^d: a \leq|\xi| \leq b\right\}
    \end{align*}
    with $0 < a < b < \infty$.
    By Lemma \ref{lem:green-function-deriv-bounds} with $m=0,1$, both $g_\lambda$ and $\nabla g_\lambda$ are uniformly bounded on $A$:
    \begin{align*}
    \sup_{\lambda \in \mathfrak{D}_\epsilon} \sup_{\xi \in A} (|(\Delta \eta)(\xi) g_\lambda(\xi)| + 2 |(\nabla \eta)(\xi) \cdot (\nabla g_\lambda)(\xi)|) < \infty,
    \end{align*}
    and since $A$ has finite measure,
    \begin{align*}
        \sup _{\lambda \in \mathfrak{D}_\epsilon}\left(\left\|(\Delta \eta) g_\lambda\right\|_{L^1}+\left\|\nabla \eta \cdot \nabla g_\lambda\right\|_{L^1}\right)<\infty.
    \end{align*}
    
    It remains to estimate $\norm{(1+\lambda^2)\eta g_\lambda}_{L^1}$.
    Choose $R_2> 0$ such that $\operatorname{supp}\eta \subset B_{R_2}(0)$.
    
    Split the ball into the two regions
    
    $$
    E_{<}:=\{|z| \leq R_2,|\lambda||z| \leq 1\}, \quad E_{>}:=\{|z| \leq R_2,|\lambda||z| \geq 1\} .
    $$

    On $E_{<}$ we have $\left|g_\lambda(z)\right| \leq C|z|^{2-d}$,
    hence, in polar coordinates,
\begin{align*}
    \int_{E_{<}}\left(1+|\lambda|^2\right)\left|g_\lambda(z)\right| \mathrm{d} z \leq C\left(1+|\lambda|^2\right) \int_0^{\min \left(R_2,|\lambda|^{-1}\right)} r^{d-1} r^{2-d} \mathrm{d} r  \leq C
\end{align*}

    On $E_{>}$ we have $\left|g_\lambda(z)\right| \leq C|\lambda|^{d-2} e^{-c_\epsilon|\lambda||z|}$.
    Therefore
\begin{align*}
    \int_{E_{>}}\left(1+|\lambda|^2\right)\left|g_\lambda(z)\right| d z \leq C\left(1+|\lambda|^2\right)|\lambda|^{d-2} \int_{|\lambda|^{-1}}^{R_2} e^{-c_\epsilon|\lambda| r} r^{d-1} \mathrm{d}r \\
    =C\left(1+|\lambda|^2\right)|\lambda|^{-2} \int_1^{R_2|\lambda|} e^{-c_\epsilon s} s^{d-1} \mathrm{d}s \leq C
\end{align*}
    
    Hence we have
    \begin{align*}
        \sup _{\lambda \in \mathfrak{D}_\epsilon}\left\|\left(1+\lambda^2\right) \eta g_\lambda\right\|_{L^1}<\infty
    \end{align*}
    which yields the claim.

    For $d=2$ the proof is the same, except that the integrals over $E_<$ and $E_>$ reproduce the log term of the $g_\lambda$ estimate.
    \end{proof}

\section{Layer potential bounds}
\label{sec:layer-potential-bounds}
In addition to the estimates of the boundary operators established in Section \ref{sec:boundary-op-poly-growth}, we need to estimate the growth in $\lambda$ of the boundary layer potentials $S_\lambda$ and $D_\lambda$.
We will use the following spectral weights to express the estimates.

For $t>0$ define
\begin{align}
\label{eq:rho-d-def}
\rho_d(t):=
\begin{cases}
1+t^{-2}, & d=2,\\
1+t^{-1}, & d=3,\\
1+\log(1+t^{-1}), & d=4,\\
1, & d\geq 5,
\end{cases}
\end{align}
and
\begin{align}
\label{eq:ell-d-def}
\ell_d(t):=
\begin{cases}
1+\log(1+t^{-1}), & d=2,\\
1, & d\geq 3.
\end{cases}
\end{align}
Thus $\ell_d(t)^{1/2}\leq C\rho_d(t)^{1/2}$ and
$\ell_d(t)\leq C\rho_d(t)^{1/2}$ for all $t>0$.

\begin{lemma}
\label{lem:bd-potential-bounds}
For $\lambda\in\mathfrak D_\epsilon$ one has
\begin{align}
\label{eq:S-global-bound}
\norm{\widetilde S_\lambda u}_{H^1(\mathbb R^d)}
&\leq C_\epsilon \rho_d(|\lambda|)^{1/2}
\norm{u}_{H^{-1/2}(\partial\Omega)},\\
\label{eq:D-global-bound}
\norm{\widetilde D_\lambda v}_{H^1(\Omega^-)\oplus H^1(\Omega^+)}
&\leq C_\epsilon\bigl(1+|\lambda|^2+\ell_d(|\lambda|)^{1/2}\bigr)
\norm{v}_{H^{1/2}(\partial\Omega)},\\
\label{eq:K-global-bound}
\norm{K_\lambda\Phi}_{H^1(\Omega^-)\oplus H^1(\Omega^+)}
&\leq C_\epsilon\bigl(|\lambda|^2+\rho_d(|\lambda|)^{1/2}\bigr)
\norm{\Phi}_{\mathscr C}.
\end{align}
In particular, for $d\geq 3$ the double layer estimate reduces to
\[
\norm{\widetilde D_\lambda v}_{H^1(\Omega^-)\oplus H^1(\Omega^+)}
\leq C_\epsilon(1+|\lambda|^2)\norm{v}_{H^{1/2}(\partial\Omega)}.
\]
\end{lemma}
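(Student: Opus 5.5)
The estimates for $\widetilde S_\lambda$ and $\widetilde D_\lambda$ will be reduced to the near field estimate of the preceding lemma and to the coercivity estimates of Corollary \ref{cor:helmholtz-uniqueness-estimate}. The bound for $K_\lambda$ then follows immediately from $K_\lambda\Phi=\widetilde D_\lambda\phi-\widetilde S_\lambda\psi$, since $\ell_d^{1/2}\lesssim\rho_d^{1/2}$ and $1\lesssim \rho_d^{1/2}$.

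\textbf{Single layer potential.} Write $\widetilde S_\lambda=G_\lambda\gamma_D^*$ and split with a cutoff $\chi\in C_c^\infty$ equal to one near $\overline\Omega$, so that $\gamma_D^*=\chi\gamma_D^*$. The piece $\chi G_\lambda\chi\gamma_D^*$ is bounded into $H^1$ by the near field lemma, since $\gamma_D^*:H^{-1/2}(\partial\Omega)\to H^{-1}$ is bounded. That lemma gives a uniform constant for $d\ge3$ and $1+|\log|\lambda||$ for $d=2$, which is dominated by $\rho_2^{1/2}$.

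The remaining piece $(1-\chi)G_\lambda\chi\gamma_D^*$ has a smooth kernel. I would estimate its $H^1$ norm by integrating the pointwise bounds of Lemma \ref{lem:green-function-deriv-bounds} and Corollary \ref{lem:far-field-pt-bounds} over $|x|>2R$. The relevant quantity is
\[
\int_{2R}^\infty r^{2(2-d)}e^{-2c\operatorname{Im}\lambda\, r}r^{d-1}\,\mathrm dr,
\]
plus the derivative terms. This integral is where the weight $\rho_d$ comes from. For $d\ge5$ it converges uniformly. For $d=4$ it grows logarithmically as $\operatorname{Im}\lambda\to0$. For $d=3$ it grows like $(\operatorname{Im}\lambda)^{-1}$, and for $d=2$ (with the log-corrected kernel) like $(\operatorname{Im}\lambda)^{-2}$. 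On $\mathfrak D_\epsilon$ we have $\operatorname{Im}\lambda\simeq|\lambda|$, so these growth rates match $\rho_d(|\lambda|)$. Taking the square root gives the stated bound.

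\textbf{Double layer potential.} I would use the interior and exterior problems rather than kernels. The function $u=\widetilde D_\lambda v$ solves the Helmholtz equation on $\Omega^\pm$, with $\gamma_N^\pm u=N_\lambda v$ and $\gamma_D^\pm u=(\pm\tfrac12+D_\lambda)v$.

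For $|\lambda|\ge R$, Corollary \ref{cor:helmholtz-uniqueness-estimate}, applied in the first form (Dirichlet data) on $\Omega^-$ and on $\Omega^+$ (the latter with decay at infinity), gives
\[
\|u_\pm\|_{H^1}\lesssim|\lambda|^2\|(\pm\tfrac12+D_\lambda)v\|_{H^{1/2}}.
\]
To close this I need a bound on $\|D_\lambda\|$, which I would get from a trace of the potential bound. To avoid circularity, I would instead proceed as follows. Take an extension $w=\eta_\pm v$, chosen with $[\gamma_D w]=v$. Then $\widetilde D_\lambda v-w$ has no Dirichlet jump, and it solves a problem with source $(\Delta+\lambda^2)w\in H^{-1}$ of size $\lesssim(1+|\lambda|^2)\|v\|$. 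The third Green identity expresses it as $G_\lambda f-\widetilde S_\lambda[\gamma_N w]$ restricted appropriately. Combining this with the near field lemma gives the factor $1+|\lambda|^2$.

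For small $|\lambda|$, the same decomposition together with the near and far field kernel bounds applies. Because the kernel has an extra derivative, the far field decay is $|x|^{1-d}$, which is integrable in $L^2$ for $d\ge3$. In $d=2$ only the log from the near field lemma survives, which gives $\ell_d^{1/2}$.

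\textbf{Main obstacle.} I expect the hardest part to be the low-frequency far field bookkeeping: getting exactly the weights $\rho_d$ and $\ell_d$, especially the logarithmic corrections in $d=2$ and $d=4$. A secondary difficulty is handling the double layer in the Neumann-trace formulation without circular use of $D_\lambda$ bounds.
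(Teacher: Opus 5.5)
Your plan is sound in outline and differs from the paper's in two places. For $\widetilde S_\lambda$, the paper splits $u=Pu+q(u)\psi_0$. The zero-charge far field is controlled through the kernel $G_\lambda(x,y)-\int_{\partial\Omega}G_\lambda(x,z)\psi_0(z)\,d\sigma(z)$, which decays like $|x|^{1-d}$ and costs only $\ell_d^{1/2}$. The weight $\rho_d^{1/2}$ is thus confined to the single function $(1-\chi)\widetilde S_\lambda\psi_0$. You instead bound the whole far field using the $|x|^{2-d}$ decay. For the stated estimate this is enough and shorter; the paper's split additionally shows that $\rho_d$ enters only through the charge direction.

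For the near field of $\widetilde D_\lambda$, the paper uses the Dirichlet solution operator, $\mathbf 1_{\Omega^-}\widetilde S_\lambda S_\lambda^{-1}\psi=-\widetilde D_\lambda\psi+\widetilde S_\lambda Q_\lambda^-\psi$, together with bounds on $S_\lambda^{-1}$ and $Q_\lambda^-$ from Proposition~\ref{prop:boundary-op-bounds}. You use a $\lambda$-independent extension and the third Green identity, so only the near-field resolvent lemma enters, and $1+|\lambda|^2$ comes from the $\lambda^2 w$ part of the source. That is more elementary. It also needs no uniform bounds on $S_\lambda^{-1}$ and $Q_\lambda^-$, which Proposition~\ref{prop:boundary-op-bounds} states only away from $\lambda=0$. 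The far field of $\widetilde D_\lambda$ and the bound for $K_\lambda$ are as in the paper.

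Two points need repair, and the second is a real gap.

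First, $\gamma_N w$ is not defined for an arbitrary extension $w=\eta_\pm v\in H^1$, since $\Delta w$ need not be in $L^2$. Take instead $w=\chi_0\widetilde D_{\rmi}v$, with $\chi_0\in C_c^\infty(\mathbb R^d)$ equal to $1$ near $\partial\Omega$; this is a fixed-$\lambda$ mapping property, so there is no circularity. The jump relations give $[\gamma_D w]=v$ and $[\gamma_N w]=0$. Then $f:=(-\Delta-\lambda^2)w$ is a compactly supported $L^2$ function with $\|f\|_{L^2}\lesssim(1+|\lambda|^2)\|v\|_{H^{1/2}}$, and the third Green identity reads simply $\widetilde D_\lambda v=w-G_\lambda f$.

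Second, in $d=2$ the near-field lemma does not contribute $\ell_2^{1/2}$, as you claim, but a full factor $1+|\log|\lambda||$. Your argument therefore gives $(1+|\lambda|^2)\ell_2(|\lambda|)$ for the near field, which is worse than \eqref{eq:D-global-bound} as $\lambda\to0$. The $\ell_2^{1/2}$ in the statement is the far-field term $\bigl(\int_{2R}^\infty r^{-1}e^{-2c\operatorname{Im}\lambda\,r}\,dr\bigr)^{1/2}$.

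The missing observation is a cancellation. By the expansion $G_\lambda=G_0-\frac1{2\pi}\log\lambda+O(\lambda^2\log\lambda)$ of Section~\ref{sec:dimension-two}, the logarithm acts on $f$ only through $\int f$. The divergence theorem gives $\int f=\int_{\partial\Omega}[\gamma_N w]\,d\sigma-\lambda^2\int w=-\lambda^2\int w$. So the logarithmic contribution is $O(|\lambda|^2|\log|\lambda||)\|v\|_{H^{1/2}}$, and the near field is uniformly bounded. The $K_\lambda$ bound survives even without this, since $\ell_2\lesssim\rho_2^{1/2}$.

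Finally, two small slips:
\begin{enumerate}
\item $1+|\log|\lambda||$ is not $\lesssim\rho_2(|\lambda|)^{1/2}$ as $|\lambda|\to\infty$. The logarithm in the near-field lemma only arises for $|\lambda|\lesssim1$, so the factor is really $\ell_2(|\lambda|)$, as the paper uses.
\item Before integrating over $|x|>2R$, you must treat the compact region $\operatorname{supp}(1-\chi)\cap B_{2R}(0)$ with the separated kernel bounds.
\end{enumerate}
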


The estimates in this Lemma neatly split into a near field (Lemma \ref{lem:near-field-layer-estimate}) and a far field estimate (Lemma \ref{lem:far-field-layer-estimate}).
In the case of the single layer potential, the far field contains a monopole contribution which grows like $\rho_d(|\lambda|)^{1/2}$ (Lemma \ref{lem:monopole}).

To split off the monopole term, we introduce the total charge
\[
q(u):=\int_{\partial\Omega}u\,d\sigma, \qquad u \in H^{-1/2}(\partial\Omega).
\]
Let
\[
\psi_0:=|\partial\Omega|^{-1}\mathbf 1_{\partial\Omega},
\qquad q(\psi_0)=1,
\]
and set
\[
Pu:=u-q(u)\psi_0 .
\]
Since $q(Pu)=0$, the operator $P$ removes the monopole contribution of $u$.

\begin{lemma}[Far field estimate]
\label{lem:far-field-layer-estimate}
Let $\chi\in C_c^\infty(\mathbb R^d)$ satisfy $\chi\equiv 1$ on a
neighbourhood of $\partial\Omega$.  Then, for all
$\lambda\in\mathfrak D_\epsilon$,
\begin{align}
\label{eq:far-S-zero-charge}
\norm{(1-\chi)\widetilde S_\lambda Pu}_{H^1(\mathbb R^d)}
&\leq C_{\epsilon,\chi}\ell_d(|\lambda|)^{1/2}
\norm{Pu}_{H^{-1/2}(\partial\Omega)},\\
\label{eq:far-D}
\norm{(1-\chi)\widetilde D_\lambda v}_{H^1(\Omega^-)\oplus H^1(\Omega^+)}
&\leq C_{\epsilon,\chi}\ell_d(|\lambda|)^{1/2}
\norm{v}_{H^{1/2}(\partial\Omega)}.
\end{align}
For $d\geq 3$ the factor $\ell_d(|\lambda|)^{1/2}$ in
\eqref{eq:far-S-zero-charge} and \eqref{eq:far-D} can be replaced by
$e^{-c_\epsilon\delta'\operatorname{Im}\lambda}$ for every
$0<\delta'<\operatorname{dist}(\operatorname{supp}(1-\chi),\partial\Omega)$.
\end{lemma}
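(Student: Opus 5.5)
The idea is to write the far-field pieces as integral operators with kernels that are smooth in the relevant region, and to estimate them via the pointwise kernel bounds of Lemma \ref{lem:green-function-deriv-bounds} and Corollary \ref{lem:far-field-pt-bounds}. Fix $\rho>0$ with $\operatorname{supp}(1-\chi)\subset\{x:\operatorname{dist}(x,\partial\Omega)\ge \rho\}$. For $x$ in this set, $y\mapsto G_\lambda(x,y)$ and $y\mapsto\partial_{\nu,y}G_\lambda(x,y)$ are smooth near $\partial\Omega$. Duality then gives
\[
|(1-\chi)\widetilde S_\lambda Pu(x)|\le \norm{G_\lambda(x,\cdot)}_{H^{1/2}(\partial\Omega)}\norm{Pu}_{H^{-1/2}(\partial\Omega)},
\]
and the analogous bound with $\norm{\partial_\nu G_\lambda(x,\cdot)}_{H^{-1/2}}$ and $\norm{v}_{H^{1/2}}$ for $\widetilde D_\lambda$. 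The same holds after applying $\nabla_x$. Each boundary norm is bounded by a $C^1$ norm of the kernel on a compact neighbourhood $K$ of $\partial\Omega$. The $H^1$ norms then reduce to the weighted integrals
\[
\int_{\operatorname{supp}(1-\chi)}\bigl(\norm{G_\lambda(x,\cdot)}_{C^1(K)}^2+\norm{\nabla_x G_\lambda(x,\cdot)}_{C^1(K)}^2\bigr)\,dx ,
\]
with a second derivative in $y$ for the double layer. I would split this integral into a bounded shell $\rho\le\operatorname{dist}(x,\partial\Omega)$, $|x|\le 2R$, and the far region $|x|>2R$.

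On the bounded shell, Corollary \ref{cor:kernel-pointwise-CN} gives a bound $C\rho^{2-d}e^{-C_2\rho\operatorname{Im}\lambda}$. This is uniform in $\lambda$, apart from the $\log|\lambda|$ factor in $d=2$ when $\alpha=0$. The $d=2$ log is controlled by $\ell_2(|\lambda|)$, and in the main case below it is killed by the charge-zero condition anyway. For $d\ge3$ the shell contributes $C e^{-c\delta'\operatorname{Im}\lambda}$ for every $\delta'<\rho$.

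For $|x|>2R$ and $d\ge3$, Corollary \ref{lem:far-field-pt-bounds} gives $|D^\alpha_xD^\beta_y G_\lambda(x,y)|\lesssim|x|^{2-d-|\alpha|-|\beta|}e^{-c_\epsilon\operatorname{Im}\lambda|x|}$. The charge-zero condition $q(Pu)=0$ lets me replace $G_\lambda(x,y)$ by $G_\lambda(x,y)-G_\lambda(x,0)$ in the single layer. By the mean value theorem this gains one power of decay, giving $|x|^{1-d}e^{-c\operatorname{Im}\lambda|x|}$. For the double layer the normal derivative already supplies this gain. The squared integrand is then $|x|^{2-2d}$ times an exponential, which is integrable at infinity uniformly in $\lambda$, because $2-2d+d-1=1-d<-1$ for $d\ge 3$.

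The main obstacle is making the exponential prefactor $e^{-c_\epsilon\delta'\operatorname{Im}\lambda}$ uniform, and handling $d=2$. For the exponential, I would write $e^{-c\operatorname{Im}\lambda|x|}\le e^{-c\delta'\operatorname{Im}\lambda}e^{-c\operatorname{Im}\lambda(|x|-\delta')}$ and bound the leftover integral uniformly in $\lambda$ using the polynomial decay. For $d=2$ the far integral is only borderline integrable: the zero-charge kernel is $O(|x|^{-1})$ and its square gives $\int|x|^{-2}\,dx$, which diverges logarithmically when $\operatorname{Im}\lambda\to0$. I would instead use the exponential cutoff at $|\lambda||x|\sim1$ from Lemma \ref{lem:green-function-deriv-bounds}(2). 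Splitting at $|x|=|\lambda|^{-1}$ gives $\int_{2R}^{|\lambda|^{-1}}r^{-1}\,dr\lesssim 1+\log(1+|\lambda|^{-1})=\ell_2(|\lambda|)$. The region beyond that radius is bounded uniformly, using $|\lambda|^{d-2+|\alpha|}e^{-\operatorname{Im}\lambda|\xi|}$ together with $\operatorname{Im}\lambda\ge\sin\epsilon\,|\lambda|$ on $\mathfrak D_\epsilon$. Taking square roots produces the factor $\ell_d(|\lambda|)^{1/2}$ claimed in the lemma.
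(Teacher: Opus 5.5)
Your proposal is correct and follows essentially the same route as the paper: you use $q(Pu)=0$ to subtract a $y$-independent term from the kernel, gain one power of $|x|$ by the mean value theorem, and integrate the pointwise bounds of Lemma \ref{lem:green-function-deriv-bounds} over a bounded shell and a far region, with the $d=2$ logarithm coming from the tail up to radius $|\lambda|^{-1}$. The only difference is cosmetic: the paper subtracts the average $\int_{\partial\Omega}G_\lambda(x,z)\psi_0(z)\,d\sigma(z)$ rather than $G_\lambda(x,0)$, and this average is finite for every $x\in\operatorname{supp}(1-\chi)$, whereas $G_\lambda(x,0)$ can be singular there. This matters for the bounded-shell step in $d=2$, where you rely on the same subtraction to remove the $\log|\lambda|$ term.
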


\begin{proof}
Choose a bounded open set $U$ with
\[
\partial\Omega\Subset U\Subset\{\chi=1\},
\]
and choose $R>0$ so that $U\subset B_R(0)$.  For
$x\in\mathbb R^d\setminus U$ and $y\in U$ define the zero-charge kernel
\[
Z_\lambda(x,y):=
G_\lambda(x,y)-\int_{\partial\Omega}G_\lambda(x,z)\psi_0(z)\,d\sigma(z).
\]
Since $q(Pu)=0$,
\[
\widetilde S_\lambda Pu(x)
=
\bigl\langle \gamma_D^*Pu,Z_\lambda(x,\cdot)\bigr\rangle_{H^{-1}(U),H^1(U)} .
\]

We first estimate $Z_\lambda$.  If
$x\in \operatorname{supp}(1-\chi)\cap B_{2R}(0)$, then $x$ has positive
distance from $U$.  The pointwise estimates of Appendix~A, Lemma
\ref{lem:green-function-deriv-bounds}, applied to derivatives in the $y$ and
mixed $x,y$ variables, imply
\[
\norm{Z_\lambda(x,\cdot)}_{H^1(U)}
+\norm{\nabla_x Z_\lambda(x,\cdot)}_{H^1(U)}
\leq C_{\epsilon,\chi}.
\]
If $d\geq3$, the same argument gives the sharper factor
$e^{-c_\epsilon\delta'\operatorname{Im}\lambda}$.

Now assume $|x|>2R$.  For $y,z\in U$, the mean-value theorem gives
\[
G_\lambda(x,y)-G_\lambda(x,z)
=
\int_0^1 (y-z)\cdot
\nabla_2G_\lambda\bigl(x,z+t(y-z)\bigr)\,dt .
\]
The derivative estimates in Appendix~A, Lemma
\ref{lem:green-function-deriv-bounds}, give, after possibly decreasing
$c_\epsilon>0$,
\begin{align}
\label{eq:Z-tail-1}
\norm{Z_\lambda(x,\cdot)}_{H^1(U)}
&\leq C_{\epsilon,\chi}|x|^{1-d}
e^{-c_\epsilon(\operatorname{Im}\lambda)|x|},\\
\label{eq:Z-tail-2}
\norm{\nabla_xZ_\lambda(x,\cdot)}_{H^1(U)}
&\leq C_{\epsilon,\chi}|x|^{-d}
e^{-c_\epsilon(\operatorname{Im}\lambda)|x|}.
\end{align}
Indeed, when $|\lambda||x|>1$ the polynomial powers of $|\lambda|$ in the
large-argument estimate are absorbed into the exponential by
$s^N e^{-cs}\leq C_{N,c}e^{-cs/2}$.

The estimates above imply
\[
|\widetilde S_\lambda Pu(x)|
\leq C\norm{Pu}_{H^{-1/2}(\partial\Omega)}
\norm{Z_\lambda(x,\cdot)}_{H^1(U)}
\]
and the same bound with $\nabla_xZ_\lambda$ for
$|\nabla_x\widetilde S_\lambda Pu(x)|$.  Hence the tail contribution is
controlled by
\[
\int_{2R}^\infty r^{2(1-d)}e^{-2c_\epsilon(\operatorname{Im}\lambda)r}
r^{d-1}\,dr
=
\int_{2R}^\infty r^{1-d}e^{-2c_\epsilon(\operatorname{Im}\lambda)r}\,dr .
\]
This integral is uniformly bounded for $d\geq3$.  For $d=2$ it is bounded by
\[
C\bigl(1+\log(1+|\lambda|^{-1})\bigr),
\]
because $C_\epsilon^{-1} |\lambda|\operatorname{Im}\lambda \leq C_\epsilon |\lambda|$ on
$\mathfrak D_\epsilon$ for some $C_\epsilon > 0$.  The gradient tail is uniformly bounded in every
dimension $d\geq2$.  This proves \eqref{eq:far-S-zero-charge}.

The proof of \eqref{eq:far-D} is analogous.  Since
\[
\widetilde D_\lambda v(x)
=
\int_{\partial\Omega}\partial_{\nu_y}G_\lambda(x,y)v(y)\,d\sigma(y),
\]
only derivatives in the boundary variable occur.  The estimates just used for
$\nabla_yG_\lambda$ and $\nabla_x\nabla_yG_\lambda$, together with the
continuous embedding
$H^{1/2}(\partial\Omega)\hookrightarrow L^2(\partial\Omega)$, give
\[
|\widetilde D_\lambda v(x)|
\leq C\norm{v}_{H^{1/2}(\partial\Omega)}
|x|^{1-d}e^{-c_\epsilon(\operatorname{Im}\lambda)|x|}
\]
and
\[
|\nabla_x\widetilde D_\lambda v(x)|
\leq C\norm{v}_{H^{1/2}(\partial\Omega)}
|x|^{-d}e^{-c_\epsilon(\operatorname{Im}\lambda)|x|}.
\]
Integration gives the same dimensional factor as above.  This proves
\eqref{eq:far-D}.
\end{proof}

\begin{lemma}[Near field estimate]
\label{lem:near-field-layer-estimate}
Let $\chi\in C_c^\infty(\mathbb R^d)$.  Then
\begin{align}
\label{eq:near-S}
\norm{\chi\widetilde S_\lambda u}_{H^1(\mathbb R^d)}
&\leq C_{\epsilon,\chi}\ell_d(|\lambda|)
\norm{u}_{H^{-1/2}(\partial\Omega)},\\
\label{eq:near-D}
\norm{\chi\widetilde D_\lambda v}_{H^1(\Omega^-)\oplus H^1(\Omega^+)}
&\leq C_{\epsilon,\chi}(1+|\lambda|^2)
\norm{v}_{H^{1/2}(\partial\Omega)}.
\end{align}
\end{lemma}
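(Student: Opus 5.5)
We estimate the two bounds in the Near field Lemma by writing the layer potentials as compositions of the free resolvent with the trace maps, then applying the near-field resolvent estimate from Appendix A. Choose $\chi_1\in C_c^\infty(\mathbb R^d)$ with $\chi_1\equiv 1$ on a neighbourhood of $\operatorname{supp}\chi\cup\partial\Omega$. Since $\gammaDir{}$ has support in $\partial\Omega$, we have $(\gammaDir{})^*u=\chi_1(\gammaDir{})^*u$. It follows that
\begin{align*}
\chi\widetilde S_\lambda u=\chi\,\chi_1 G_\lambda\chi_1(\gammaDir{})^*u .
\end{align*}
The map $(\gammaDir{})^*:H^{-1/2}(\partial\Omega)\to H^{-1}(\mathbb R^d)$ is bounded, being the transpose of the trace $H^1\to H^{1/2}(\partial\Omega)$. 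The near field lemma of Appendix A then gives $\norm{\chi_1 G_\lambda\chi_1 f}_{H^1}\le C_{\epsilon,\chi}\,\ell_d(|\lambda|)\norm{f}_{H^{-1}}$. For $d\ge3$ the factor is $1$, and for $d=2$ it is $1+|\log|\lambda||$, which is comparable to $\ell_2(|\lambda|)$ for small $|\lambda|$. For large $|\lambda|$ the logarithm $\log|\lambda|$ is not controlled by $\ell_2$. Here we go back into that lemma's proof: it actually yields $\sup_\lambda\norm{h_\lambda}_{L^1}<\infty$ for $|\lambda|\ge1$ in every dimension, so the logarithm only enters as $|\lambda|\to0$. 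Multiplication by $\chi$ is bounded on $H^1$. This proves \eqref{eq:near-S}.

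For \eqref{eq:near-D} the same device cannot be used directly, because $(\gammaNeu{})^*$ is not bounded into $H^{-1}(\mathbb R^d)$ uniformly in a way that gives $H^1$ on each side. Instead I will use the transmission-type structure. Let $\eta_\pm$ be the right inverses of $\gammaDir{\pm}$ from the trace lemma, and set $w:=\chi_1\eta_+ v$ on $\Omega^+$ and $w:=0$ on $\Omega^-$. The cutoff $\chi_1$ makes $w$ compactly supported with $\norm{w}_{H^1(\Omega^\pm)}\lesssim\norm{v}_{H^{1/2}}$, and we have $[\gammaDir{}w]=v$. Lemma \ref{lem:third-green} (the third Green identity) applied to $u:=\widetilde D_\lambda v-w$ should give the following. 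The function $u$ has no Dirichlet jump, and its Neumann jump is $-[\gammaNeu{} w]=-\gammaNeu{+}w$. Moreover $(-\Delta-\lambda^2)u=(\Delta+\lambda^2)w$ holds piecewise. Rather than this route, it is cleaner to argue variationally: $\widetilde D_\lambda v-w$ equals $G_\lambda F$, where $F\in H^{-1}$ is the functional $\phi\mapsto\int_{\Omega^+}\nabla w\cdot\nabla\phi-\lambda^2\int w\phi$. This follows since $u\in H^1$ across $\partial\Omega$ and the jump of the conormal derivative is absorbed into $F$. Then $\norm{F}_{H^{-1}}\lesssim(1+|\lambda|^2)\norm{v}_{H^{1/2}}$. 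Now
\begin{align*}
\chi\widetilde D_\lambda v=\chi w+\chi\chi_1G_\lambda\chi_1F,
\end{align*}
and the near field estimate gives the bound $C(1+|\lambda|^2)\ell_d(|\lambda|)\norm{v}$.

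The main obstacle is to remove the $\ell_2$ factor in $d=2$ for the double layer, as the statement claims $(1+|\lambda|^2)$ alone. My plan is to observe that $F$ has zero total charge, $F(1)=-\lambda^2\int w$, up to an $O(|\lambda|^2)$ term. I would then split $F=F_0+c\,\delta$-type monopole, where the monopole has size $|\lambda|^2$. On zero-charge data the logarithmic part $-\tfrac1{2\pi}\log\lambda$ of $G_\lambda$ is annihilated, because it is a constant kernel. The monopole contributes $|\lambda|^2|\log\lambda|$, which is bounded near $0$. I expect this bookkeeping with the $\log\lambda$ expansion of $G_\lambda$ to be the delicate step; the rest is routine composition of bounded maps.
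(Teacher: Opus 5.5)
Your single-layer argument is the paper's. Your remark that the proof of the Appendix~A near-field lemma only produces the logarithm as $|\lambda|\to0$ is what justifies the factor $\ell_d$ rather than $1+|\log|\lambda||$; the paper passes over this point.

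For the double layer you take a genuinely different route. The paper applies the third Green identity to the interior Dirichlet solution $\mathcal U\psi=\widetilde S_\lambda S_\lambda^{-1}\psi$ extended by zero. This gives $\widetilde D_\lambda\psi=\widetilde S_\lambda Q_\lambda^-\psi-\mathcal U\psi$, and the paper then combines \eqref{eq:near-S} with the bounds on $Q_\lambda^-$ and $S_\lambda^{-1}$ from Proposition~\ref{prop:boundary-op-bounds}.

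You instead subtract an exterior extension $w$ of $v$, so that $\widetilde D_\lambda v-w\in H^1(\mathbb R^d)$. The vanishing Neumann jump of $\widetilde D_\lambda v$ then shows that this difference solves a free Helmholtz equation with a compactly supported source $F\in H^{-1}$ of norm $\lesssim(1+|\lambda|^2)\norm{v}_{H^{1/2}(\partial\Omega)}$. This is correct. There are two harmless slips:
\begin{itemize}
\item with your definition of $F$ one actually gets $\widetilde D_\lambda v-w=-G_\lambda F$;
\item $\chi_1$ is not identically $1$ on $\operatorname{supp}w\subset\operatorname{supp}\chi_1$, so you need a second cutoff equal to $1$ on $\operatorname{supp}\chi_1$ before invoking the Appendix lemma.
\end{itemize}

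What your route buys is independence from the invertibility of $S_\lambda$ and from the Dirichlet-to-Neumann bounds of Proposition~\ref{prop:boundary-op-bounds}. That proposition is stated only on $\mathfrak D_\epsilon\setminus B_R(0)$, where $\ell_2$ is harmless. Your argument therefore covers small $|\lambda|$ directly:
\begin{itemize}
\item uniformly for $d\ge3$;
\item for $d=2$, through your charge splitting. Here $F(1)=-\lambda^2\int w$, and the constant $\log\lambda$ part of $G_\lambda$ annihilates zero-charge data by the expansion in Section~\ref{sec:dimension-two}. This removes the factor $\ell_2$ that \eqref{eq:near-S} would otherwise contribute.
\end{itemize}
The growth $|\lambda|^2$ also becomes transparent: it comes from the term $\lambda^2\int w\phi$.

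The paper's route is shorter given Section~\ref{sec:boundary-op-poly-growth} and mirrors McLean's Theorem~6.11. Near $\lambda=0$, however, it needs a similar supplement, for instance Lemma~\ref{lem:holomorphic-charge} to see that the charges of $Q_\lambda^-\psi$ are $O(\lambda^2)$, together with well-posedness of the interior Dirichlet problem to bound $\mathcal U\psi$.
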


\begin{proof}
Choose $\chi_1\in C_c^\infty(\mathbb R^d)$ with $\chi_1\equiv1$ on a
neighbourhood of $\partial\Omega$ and on a neighbourhood of
$\operatorname{supp}\chi$.  Since $\gamma_D^*u$ is supported on
$\partial\Omega$,
\[
\chi\widetilde S_\lambda u
=
\chi G_\lambda\chi_1\gamma_D^*u .
\]
The localized resolvent estimate obtained from Appendix~A, Lemma
\ref{lem:green-function-deriv-bounds}, gives
\[
\norm{\chi G_\lambda\chi_1 f}_{H^1(\mathbb R^d)}
\leq C_{\epsilon,\chi}\ell_d(|\lambda|)
\norm{f}_{H^{-1}(\mathbb R^d)} .
\]
Here the factor $\ell_d$ is needed only for the two-dimensional constant
logarithmic term of the Green function.  Since
\[
\gamma_D^*:H^{-1/2}(\partial\Omega)\longrightarrow H^{-1}(\mathbb R^d)
\]
is continuous, \eqref{eq:near-S} follows.

The double layer potential is more involved.
We cannot simply replace $\gammaDir{*}$ by $\gammaNeu{*}$, because the latter maps into the dual of $H^1_\Delta$, rather than into $H^{-1}_\comp$.
Instead we follow the strategy of Theorem 6.11 of McLean \cite{mcleanStronglyEllipticSystems2000}.

On $\mathfrak{D}_\epsilon$, $S_\lambda$ is invertible and hence we have the Dirichlet solution operator $\mathcal{U} = \widetilde{S}_\lambda S_{\lambda}^{-1}$.
Now for $\psi \in H^{1/2}(\partial\Omega)$, put
\begin{align*}
\mathcal{U}\psi = \begin{cases}
\widetilde{S}_\lambda S_\lambda^{-1} \psi & \text{on $\Omega^-$}\\
0 & \text{on $\Omega^+$}.
\end{cases}
\end{align*}
Then $(-\Delta - \lambda^2) \mathcal{U}\psi = 0$ on $\Omega^\pm$.
The third Green identity therefore gives
\begin{align*}
\mathcal{U}\psi = \widetilde{D}_\lambda [\gamma_0 \mathcal{U}\psi] - \widetilde{S}_\lambda [\gamma_1 \mathcal{U}\psi]
\end{align*}
and since $\gamma_0^+ \mathcal{U}\psi = \gamma_1^+ \mathcal{U}\psi = 0$, this is
\begin{align*}
\mathcal{U}\psi = - \widetilde{D}_\lambda \psi + \widetilde{S}_\lambda Q_\lambda^{-} \psi
\end{align*}
where $Q_\lambda^-$ is the interior Dirichlet-to-Neumann operator.
Proposition \ref{prop:boundary-op-bounds} gives the sector bounds
\begin{align*}
\norm{Q_\lambda^{-}\psi }_{H^{-1/2}} \leq C_\epsilon(1+|\lambda|^2) \norm{\psi}_{H^{1/2}},
\qquad
\norm{S_\lambda^{-1}\psi}_{H^{-1/2}} \leq C_\epsilon' (1+ |\lambda|^2) \norm{\psi}_{H^{1/2}}.
\end{align*}

Hence
\begin{align*}
&\norm{\mathcal{U}\psi}_{H^1(\Omega^-)} = \norm{\widetilde{S}_\lambda S_\lambda^{-1} \psi}_{H^1(\Omega^-)}
\leq \norm{\chi \widetilde{S}_\lambda S_\lambda^{-1} \psi}_{H^1(\mathbb{R}^d)}
\lesssim \norm{S_\lambda^{-1}\psi}_{H^{-1/2}}
\lesssim (1+ |\lambda|^2) \norm{\psi}_{H^{1/2}},\\
&\norm{\chi\widetilde{S}_\lambda Q_\lambda^{-}\psi}_{H^1(\mathbb{R}^d)}
\lesssim \norm{Q_\lambda^{-}\psi}_{H^{-1/2}}
\lesssim (1 + |\lambda|^2) \norm{\psi}_{H^{1/2}}.
\end{align*}
Therefore
\begin{align*}
\norm{\chi\widetilde{D}_\lambda \psi}_{H^1(\Omega^+)\oplus H^1(\Omega^-)}
\leq C_{\epsilon,\chi} (1+ |\lambda|^2) \norm{\psi}_{H^{1/2}(\partial\Omega)}.
\end{align*}
\end{proof}

\begin{lemma}[Monopole contribution]
\label{lem:monopole}
For $\lambda\in\mathfrak D_\epsilon$,
\begin{align}
\label{eq:monopole-bound}
\norm{(1-\chi)\widetilde S_\lambda\psi_0}_{H^1(\mathbb R^d)}
\leq C_{\epsilon,\chi}\rho_d(|\lambda|)^{1/2}.
\end{align}
\end{lemma}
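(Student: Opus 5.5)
The plan is a direct estimate of the $H^1$ norm of $w(x):=(1-\chi)(x)\,\widetilde S_\lambda\psi_0(x)$ using the pointwise kernel bounds of Lemma \ref{lem:green-function-deriv-bounds}. The integral in $y$ is over $\partial\Omega$ against the bounded density $\psi_0$, while $x$ stays a positive distance from $\partial\Omega$ on $\operatorname{supp}(1-\chi)$. So
\[
|w(x)|\lesssim \sup_{y\in\partial\Omega}|G_\lambda(x,y)|,\qquad |\nabla w(x)|\lesssim \sup_{y\in\partial\Omega}\bigl(|G_\lambda(x,y)|+|\nabla_xG_\lambda(x,y)|\bigr),
\]
where the $\nabla\chi$ term is supported in a compact annulus. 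I split $\operatorname{supp}(1-\chi)$ into a bounded part $B_{2R}(0)$, with $\partial\Omega\subset B_R(0)$, and the tail $|x|>2R$.

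On the bounded part, $|x-y|$ is bounded above and below. Lemma \ref{lem:green-function-deriv-bounds} then gives $|G_\lambda|\le C$ for $d\geq3$ and $|G_\lambda|\le C(1+|\log|\lambda||)$ for $d=2$, and $|\nabla_x G_\lambda|\le C$ in every dimension. This contributes at most $C\,\ell_d(|\lambda|)\le C\rho_d(|\lambda|)^{1/2}$, using the inequality recorded after (\ref{eq:rho-d-def}).

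On the tail, $|x-y|\simeq |x|$, and (as in the proof of Lemma \ref{lem:far-field-layer-estimate}, absorbing powers of $|\lambda||x|$ into the exponential) I get
\[
|G_\lambda(x,y)|\lesssim |x|^{2-d}e^{-c_\epsilon \operatorname{Im}\lambda\,|x|}\quad(d\ge3),\qquad |\nabla_xG_\lambda(x,y)|\lesssim |x|^{1-d}e^{-c_\epsilon\operatorname{Im}\lambda\,|x|}.
\]
In $d=2$ the value bound is $C(1+|\log(|\lambda||x|)|)$ for $|\lambda||x|\le1$ and exponentially small beyond. Since $\operatorname{Im}\lambda\ge \sin\epsilon\,|\lambda|$ on $\mathfrak D_\epsilon$, I would substitute $t=|\lambda|r$ to evaluate the tail integrals.

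The gradient tail is $\int_{2R}^\infty r^{2-2d}e^{-cr|\lambda|}r^{d-1}\,dr$, which is bounded for $d\ge 2$. The $L^2$ tail of the value is $\int_{2R}^\infty r^{4-2d+d-1}e^{-c|\lambda| r}\,dr=\int_{2R}^\infty r^{3-d}e^{-c|\lambda|r}\,dr$. This gives $\lesssim 1+|\lambda|^{-1}$ for $d=3$, $\lesssim 1+\log(1+|\lambda|^{-1})$ for $d=4$, and $O(1)$ for $d\ge5$. In $d=2$, $\int r(1+\log^2(|\lambda| r))e^{-c|\lambda|r}\,dr\lesssim |\lambda|^{-2}+1$ after the scaling $t=|\lambda|r$, since $\int_0^\infty t(1+\log^2t)e^{-ct}\,dt<\infty$. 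In all cases the squared norm is $\lesssim\rho_d(|\lambda|)$, which gives (\ref{eq:monopole-bound}).

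The main point requiring care is $d=2$. The lower limit $2R$ combined with the logarithm needs the scaling argument, and the regime $|\lambda|$ large must be checked separately; there the integral is exponentially small and harmless. I expect no further obstacle, because unlike Lemma \ref{lem:far-field-layer-estimate} no cancellation is being exploited: this lemma precisely isolates the non-decaying monopole $|x|^{2-d}$, whose non-square-integrability as $\lambda\to0$ produces $\rho_d$.
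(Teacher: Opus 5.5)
Your proposal is correct and follows the paper's proof essentially step for step: the split into $\operatorname{supp}(1-\chi)\cap B_{2R}(0)$ and the tail $|x|>2R$, the pointwise bounds of Lemma \ref{lem:green-function-deriv-bounds}, the radial integral $\int_{2R}^\infty r^{3-d}e^{-c\operatorname{Im}\lambda\, r}\,dr$ for $d\geq3$, and the scaling $s=|\lambda|r$ in $d=2$. There are two harmless slips in $d=2$: on the bounded part the logarithmic bound only applies when $|\lambda||x-y|\leq1$, so it is really $C(1+\log(1+|\lambda|^{-1}))$ rather than $C(1+|\log|\lambda||)$, and the gradient tail $\int_{2R}^\infty r^{-1}e^{-c|\lambda|r}\,dr$ is not uniformly bounded but grows like $\log(1+|\lambda|^{-1})$ (as the paper notes), which is still dominated by $\rho_2(|\lambda|)=1+|\lambda|^{-2}$.
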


\begin{proof}
On the compact set
$\operatorname{supp}(1-\chi)\cap B_{2R}(0)$ the kernel is separated from the
boundary.  Appendix~A, Lemma \ref{lem:green-function-deriv-bounds}, gives a
uniform bound in dimensions $d\geq3$ and a bound
$C(1+\log(1+|\lambda|^{-1}))$ in dimension $d=2$.  Both are bounded by
$C\rho_d(|\lambda|)^{1/2}$.

It remains to estimate the exterior tail $|x|>2R$.  For $d\geq3$, Appendix~A,
Corollary \ref{lem:far-field-pt-bounds}, gives
\[
|\widetilde S_\lambda\psi_0(x)|
\leq C_{\epsilon,\chi}|x|^{2-d}
e^{-c_\epsilon(\operatorname{Im}\lambda)|x|},
\qquad
|\nabla_x\widetilde S_\lambda\psi_0(x)|
\leq C_{\epsilon,\chi}|x|^{1-d}
e^{-c_\epsilon(\operatorname{Im}\lambda)|x|}.
\]
The gradient term is square-integrable uniformly for all $d\geq3$.  The
$L^2$ term gives
\[
\int_{2R}^{\infty}
r^{2(2-d)}e^{-2c_\epsilon(\operatorname{Im}\lambda)r}r^{d-1}\,dr
=
\int_{2R}^{\infty}
r^{3-d}e^{-2c_\epsilon(\operatorname{Im}\lambda)r}\,dr .
\]
This is bounded by $C|\lambda|^{-1}$ for $d=3$, by
$C(1+\log(1+|\lambda|^{-1}))$ for $d=4$, and by $C$ for $d\geq5$.

For $d=2$, Lemma \ref{lem:green-function-deriv-bounds} gives, after enlarging
$R$ if necessary,
\[
|\widetilde S_\lambda\psi_0(x)|
\leq C
\begin{cases}
1+|\log(|\lambda||x|)|, & |\lambda||x|\leq 1,\\
e^{-c_\epsilon(\operatorname{Im}\lambda)|x|}, & |\lambda||x|\geq 1,
\end{cases}
\]
and
\[
|\nabla_x\widetilde S_\lambda\psi_0(x)|
\leq C|x|^{-1}e^{-c_\epsilon(\operatorname{Im}\lambda)|x|}.
\]
For the $L^2$ term, the change of variables $s=|\lambda|r$ gives
\[
\int_{2R}^{\infty}|\widetilde S_\lambda\psi_0(r)|^2 r\,dr
\leq C|\lambda|^{-2}
\left[
\int_0^1(1+|\log s|)^2s\,ds+
\int_1^\infty e^{-c_\epsilon s}s\,ds
\right]
\leq C|\lambda|^{-2}.
\]
The gradient term is smaller, up to a logarithmic factor, and is therefore
also bounded by $C|\lambda|^{-2}$ after squaring.  Taking square roots proves
\eqref{eq:monopole-bound} in dimension two.  Combining the preceding estimates
proves the lemma.
\end{proof}

\begin{proof}[Proof of Lemma \ref{lem:bd-potential-bounds}]
Let $u\in H^{-1/2}(\partial\Omega)$.  Decompose
\[
u=Pu+q(u)\psi_0.
\]
Then
\[
\widetilde S_\lambda u
=
\chi\widetilde S_\lambda u
+
(1-\chi)\widetilde S_\lambda Pu
+
q(u)(1-\chi)\widetilde S_\lambda\psi_0 .
\]
Using Lemma \ref{lem:near-field-layer-estimate}, Lemma
\ref{lem:far-field-layer-estimate}, and Lemma \ref{lem:monopole}, together
with
\[
\ell_d(|\lambda|)\leq C\rho_d(|\lambda|)^{1/2},
\qquad
\ell_d(|\lambda|)^{1/2}\leq C\rho_d(|\lambda|)^{1/2},
\]
gives \eqref{eq:S-global-bound}.

For the double layer potential,
\[
\widetilde D_\lambda v
=
\chi\widetilde D_\lambda v
+
(1-\chi)\widetilde D_\lambda v .
\]
The near-field estimate \eqref{eq:near-D} and the far-field estimate
\eqref{eq:far-D} give \eqref{eq:D-global-bound}.

Finally, if $\Phi=(\phi,\psi)$, then
\[
K_\lambda\Phi=\widetilde D_\lambda\phi-\widetilde S_\lambda\psi .
\]
Combining \eqref{eq:S-global-bound} and \eqref{eq:D-global-bound}, and using
$\ell_d^{1/2}\leq C\rho_d^{1/2}$ and $\rho_d\geq 1$, gives
\[
\norm{K_\lambda\Phi}_{H^1(\Omega^-)\oplus H^1(\Omega^+)}
\leq
C_\epsilon\bigl(|\lambda|^2+\rho_d(|\lambda|)^{1/2}\bigr)
\norm{\Phi}_{\mathscr C}.
\]
This proves \eqref{eq:K-global-bound}.
\end{proof}

\section{Trace-class and nuclear operators.}
If $H_1, H_2$ are Hilbert spaces, we write $\Nuc(H_1, H_2)$ for the space of nuclear operators $H_1 \to H_2$.
The space $\Nuc(H, H)$ is usually called the \emph{trace class}; sometimes that term is also applied to $\Nuc(H_1, H_2)$ for $H_1 \neq H_2$.
If $T: H_1 \to H_2$ is a bounded linear operator, $T^*T : H_1 \to H_1$ is a positive operator, and therefore we can define $|T| = (T^*T)^{1/2}$,
called the \emph{absolute value} of $T$.
If $T \in \Nuc(H_1, H_2)$ then $A T \in \Nuc(H_1, H_3)$ and $T B \in \Nuc(H_0, H_2)$ for any bounded operators $A: H_2 \to H_3$ and $B: H_0 \to H_1$.

\begin{proposition}[Theorem 48.2, \cite{trevesTopologicalVectorSpaces1967}]
Let $T: H_1 \to H_2$ be a bounded linear operator between Hilbert spaces.
The following are equivalent:
\begin{enumerate}
\item $T \in \Nuc(H_1, H_2)$.
\item $|T| \in \Nuc(H_1, H_1)$
\item $|T|$ is compact and $\Tr |T| < \infty$.
\end{enumerate}
If any of these equivalent conditions is satisfied, $\norm{T}_{\Nuc(H_1, H_2)} = \Tr |T|$.
\end{proposition}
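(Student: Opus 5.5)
The plan is to reduce everything to the polar decomposition and the spectral theorem for compact positive operators. I take $\Nuc(H_1,H_2)$ to be the nuclear operators in the sense of Grothendieck: those $T$ admitting a representation $T=\sum_k \langle x_k,\cdot\rangle y_k$ with $x_k\in H_1$, $y_k\in H_2$ and $\sum_k\norm{x_k}\norm{y_k}<\infty$. The nuclear norm is the infimum of $\sum_k\norm{x_k}\norm{y_k}$ over all such representations.

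\textbf{Polar decomposition and (1)$\Leftrightarrow$(2).} Write $T=U|T|$, where $U:H_1\to H_2$ is a partial isometry with initial space $\overline{\operatorname{ran}|T|}$; then $|T|=U^*T$. The ideal property stated just before the proposition handles both directions. If $T$ is nuclear, then $|T|=U^*T$ is nuclear since $U^*$ is bounded. Conversely, if $|T|$ is nuclear, then $T=U|T|$ is nuclear since $U$ is bounded.

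\textbf{(2)$\Leftrightarrow$(3).} A nuclear operator is a norm limit of finite-rank operators (truncate the series), so it is compact. Thus (2) gives that $|T|$ is compact and positive. By the spectral theorem, $|T|=\sum_k\mu_k\langle e_k,\cdot\rangle e_k$ with $\mu_k\ge0$ and $(e_k)$ orthonormal. It remains to see that $\sum_k\mu_k<\infty$; this follows from the norm inequality below applied to $|T|$, whose trace is $\sum_k\mu_k$.

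Conversely, if $|T|$ is compact with $\Tr|T|=\sum_k\mu_k<\infty$, then the spectral expansion itself is a nuclear representation of $|T|$ with cost $\sum_k\mu_k$, which gives (2).

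\textbf{The norm identity.} For the upper bound, compose the spectral expansion with $U$:
\[
T=\sum_k\mu_k\langle e_k,\cdot\rangle\,Ue_k,
\qquad \norm{Ue_k}\le 1 .
\]
This shows $\norm{T}_{\Nuc}\le\sum_k\mu_k=\Tr|T|$.

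The lower bound is the step I expect to need the most care. Take an arbitrary representation $T=\sum_j\langle x_j,\cdot\rangle y_j$. Then
\[
\Tr|T|=\sum_k\langle e_k,U^*Te_k\rangle=\sum_k\sum_j\langle e_k,x_j\rangle\langle U e_k,y_j\rangle .
\]
The double series converges absolutely, so I exchange the sums. For each fixed $j$, Cauchy--Schwarz in $k$ together with Bessel's inequality bounds the inner sum over $k$ by $\norm{x_j}\norm{U^*y_j}\le\norm{x_j}\norm{y_j}$. This bound also justifies the absolute convergence. Summing over $j$ and taking the infimum over representations gives $\Tr|T|\le\norm{T}_{\Nuc}$, which completes the norm identity.
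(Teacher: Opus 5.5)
Your proof is correct. The paper gives no argument of its own here: it quotes the statement as Theorem 48.2 of Treves and uses it as a black box, so there is nothing to compare step by step.

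You instead supply a complete, self-contained proof starting from Grothendieck's definition of the nuclear norm as an infimum over representations:
\begin{itemize}
\item The ideal property together with $T=U|T|$ and $|T|=U^*T$ gives (1)$\Leftrightarrow$(2).
\item Composing the spectral expansion of $|T|$ with $U$ gives a representation of cost $\Tr|T|$. This yields $\norm{T}_{\Nuc}\le\Tr|T|$ and (3)$\Rightarrow$(2).
\item The Cauchy--Schwarz and Bessel estimate against an arbitrary representation gives $\Tr|T|\le\norm{T}_{\Nuc}$.
\end{itemize}

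It is worth stating explicitly that this last estimate does not presuppose $\sum_k\mu_k<\infty$. Since $\mu_k\ge 0$ and $\mu_k\le\sum_j|\langle x_j,e_k\rangle\langle Ue_k,y_j\rangle|$ for each $k$, Tonelli bounds $\sum_k\mu_k$ by $\sum_j\norm{x_j}\norm{y_j}$ directly. That is exactly why invoking it in (2)$\Rightarrow$(3) is not circular. You could also apply it to $T$ itself, since (2)$\Rightarrow$(1) is already established.

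One cosmetic slip: with $\langle x_j,\cdot\rangle$ linear, the coefficient is $\langle x_j,e_k\rangle$, not $\langle e_k,x_j\rangle$. Only absolute values enter the estimate, so nothing changes.

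Compared with the bare citation, your route costs half a page. In exchange, it makes explicit which definition of $\Nuc(H_1,H_2)$ and its norm the paper is implicitly using. It also shows that under that definition $\norm{T}_{\Nuc}=\Tr|T|$ holds exactly, not merely up to equivalence of norms.
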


We give the following useful characterizations of $\Nuc(H_1, H_2)$.
\begin{proposition}
A bounded linear map $T: H_1 \to H_2$ is nuclear if and only if either of the following conditions is satisfied.
\begin{enumerate}
\item There exist orthonormal sequences $(e_n)_{n\in \mathbb{N}}$ in $H_1$ and $(f_n)_{n\in \mathbb{N}}$ in $H_2$ and a sequence $(\lambda_n)_{n\in \mathbb{N}} \in \ell^1(\mathbb{N})$ such that $T x = \sum_{n=1}^\infty \lambda_n \langle x, e_n\rangle f_n$ for all $x \in H_1$.
In this case $\norm{T}_{\Nuc} \leq \norm{\lambda}_{\ell^1}$.
\item $T = RS$ for some Hilbert-Schmidt operators $R: H \to H_2$ and $S: H_1 \to H$ for some Hilbert space $H$.
In this case $\norm{T}_{\Nuc} \leq \norm{R}_{\HS} \norm{S}_{\HS}$.
\end{enumerate}
\end{proposition}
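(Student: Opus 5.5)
The two characterizations will follow from the singular value (Schmidt) decomposition of compact operators between Hilbert spaces, combined with the preceding Proposition (Theorem 48.2 of \cite{trevesTopologicalVectorSpaces1967}), which identifies $\Nuc(H_1,H_2)$ with those $T$ for which $|T|$ is compact with $\Tr|T|<\infty$, and gives $\norm{T}_{\Nuc}=\Tr|T|$.

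\textbf{Necessity.} Suppose $T$ is nuclear. Then $|T|$ is compact and positive, so it has an orthonormal eigenbasis $(e_n)$ of $(\ker |T|)^\perp$ with eigenvalues $s_n>0$, and $\sum s_n=\Tr|T|<\infty$. Polar decomposition gives $T=U|T|$, where $U$ is a partial isometry that is isometric on $\overline{\operatorname{ran}|T|}$. Setting $f_n:=Ue_n$ produces an orthonormal family with $Tx=\sum_n s_n\langle x,e_n\rangle f_n$, which is condition (1). For (2), take $H:=H_1$, $S:=|T|^{1/2}$ and $R:=U|T|^{1/2}$. Both are Hilbert--Schmidt, since $\norm{|T|^{1/2}}_{\HS}^2=\sum s_n<\infty$.

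\textbf{Sufficiency of (1).} Suppose $T=\sum\lambda_n\langle\cdot,e_n\rangle f_n$ with $\lambda\in\ell^1$. Each rank-one term $\langle\cdot,e_n\rangle f_n$ has $|T_n|$ equal to the projection onto $e_n$, hence trace norm $1$. Since $\Nuc(H_1,H_2)$ is a Banach space with norm $\Tr|\cdot|$, the series converges absolutely in this norm and $\norm{T}_{\Nuc}\le\norm{\lambda}_{\ell^1}$. The one ingredient here is the triangle inequality for $\Tr|\cdot|$. If I want to avoid citing it, I would use the variational formula $\Tr|T|=\sup\sum_k|\langle Tx_k,y_k\rangle|$, the supremum taken over orthonormal families $(x_k)\subset H_1$, $(y_k)\subset H_2$. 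Indeed, $\sum_k|\langle Tx_k,y_k\rangle|\le\sum_n|\lambda_n|\sum_k|\langle x_k,e_n\rangle||\langle f_n,y_k\rangle|\le\sum_n|\lambda_n|$ by Cauchy--Schwarz and Bessel's inequality.

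\textbf{Sufficiency of (2).} Suppose $T=RS$ with $R,S$ Hilbert--Schmidt. Then $T$ is compact, and with the Schmidt decomposition $T=\sum s_n\langle\cdot,e_n\rangle f_n$ we compute
\[
\sum s_n=\sum\langle RSe_n,f_n\rangle=\sum\langle Se_n,R^*f_n\rangle\le\Big(\sum\norm{Se_n}^2\Big)^{1/2}\Big(\sum\norm{R^*f_n}^2\Big)^{1/2}\le\norm{S}_{\HS}\norm{R}_{\HS}.
\]
This gives both nuclearity and the norm bound $\norm{T}_{\Nuc}\le\norm{R}_{\HS}\norm{S}_{\HS}$.

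The main obstacle is the subadditivity and variational characterization of the trace norm in the sufficiency of (1). Everything else is direct bookkeeping with orthonormal bases.
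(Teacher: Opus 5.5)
Your proof is correct. The paper gives no proof to compare it with: the proposition is recorded in the appendix as a standard fact, right after the citation of Treves' Theorem 48.2. Your route is the standard textbook argument. You reduce to that preceding proposition ($T$ is nuclear iff $\Tr|T|<\infty$, with $\norm{T}_{\Nuc}=\Tr|T|$). You get necessity from the polar and Schmidt decompositions. You get the sufficiency of (2) by testing $RS$ against its own Schmidt pair and applying Cauchy--Schwarz.

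A few details are worth making explicit.

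\textbf{Sufficiency of (1), Banach-space route.} The trace-norm limit of the partial sums really is $T$. This holds because $\norm{A}\le\norm{A}_{\Nuc}$, so trace-norm convergence forces operator-norm convergence to the pointwise limit $T$.

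\textbf{Sufficiency of (1), variational route.} You only need the easy half of the variational formula. Since $|T|=U^*T$, you have $\Tr|T|=\sum_k\langle Tx_k,Ux_k\rangle$ for an orthonormal basis $(x_k)$ of $(\ker T)^\perp$, and $(Ux_k)$ is orthonormal. This needs no a priori compactness of $T$. Your Bessel estimate then gives $\Tr|T|\le\norm{\lambda}_{\ell^1}$, and finiteness of the trace makes $|T|$ compact.

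\textbf{Finite rank.} When $T$ has finite rank, the sequences in (1) must be padded with zero coefficients so that they are indexed by $\mathbb{N}$.

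\textbf{Hilbert--Schmidt property of $R$.} For $R=U|T|^{1/2}$, use either the ideal property or the fact that $U$ is isometric on $\overline{\operatorname{ran}|T|^{1/2}}$.

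\textbf{Remark on definitions.} Under Treves' own definition of nuclear maps, the implication (1) $\Rightarrow$ nuclear with $\norm{T}_{\Nuc}\le\norm{\lambda}_{\ell^1}$ is immediate. That definition takes the infimum of $\sum|\lambda_k|\norm{x_k'}\norm{y_k}$ over representations. The paper, however, records only the $\Tr|T|$ characterization. In that framework your use of the triangle inequality, or of the variational bound, is exactly what is needed.
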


On the boundary of a Lipschitz domain $\Omega$, the notion of a ``smooth'' kernel loses some of its meaning, since the smoothness of the kernel is limited by the regularity of the boundary.
However, the restrictions to $\partial\Omega$ of smooth kernels still possess useful properties, such as being trace class.
\begin{lemma}
    \label{lem:trace-class-smoothing-ops}
Let $\Omega_1, \ldots, \Omega_k\subset \mathbb{R}^d$ be bounded Lipschitz domains with $\overline{\Omega}_i \cap \overline{\Omega}_j = \emptyset$ for $i \neq j$.
Choose bounded open neighbourhoods $\overline{\Omega}_i \subset U_i \subset \mathbb{R}^d$ with $\operatorname{dist}(U_i, U_j) > 0$ for $i \neq j$.
Write $\Omega = \bigcup_{i=1}^k \Omega_i$ and $U = \bigcup_{i=1}^k U_i$.
Let $A$ be a bounded operator on $L^2(\mathbb{R}^d)$ with kernel $K \in \mathscr{D}'(\mathbb{R}^d \times \mathbb{R}^d)$.
Assume that $K$ is smooth away from the diagonal $\{(x,x)\mid x \in \mathbb{R}^d\}$.

For every $i\neq j$ the following off-diagonal block operators are nuclear
\begin{align}
p_j A p_i: H^s(\Omega) &\to H^t(\Omega), \qquad s > -\frac{1}{2}, t \in \mathbb{R} \label{eq:off-block-nuclear-list-1}\\
q_j \gammaDir{} A \gammaDir{*} q_i: H^s(\partial \Omega) &\to H^t(\partial \Omega), \qquad s, t \in (-1,1) \label{eq:off-block-nuclear-list-2}\\
q_j \gammaDir{} A \gammaNeu{*} q_i: H^{s}(\partial \Omega) &\to H^t(\partial \Omega), \qquad s \in (0,1), t \in(-1,1) \label{eq:off-block-nuclear-list-3}\\
q_j\gammaNeu{} A \gammaDir{*} q_i: H^s(\partial \Omega) &\to H^{t}(\partial \Omega), \qquad s\in (-1, 1), t \in (-1, 0) \label{eq:off-block-nuclear-list-4}\\
q_j \gammaNeu{} A \gammaNeu{*} q_i: H^{s}(\partial \Omega) &\to H^{t}(\partial \Omega) \qquad s \in (0,1), t\in (-1,0).\label{eq:off-block-nuclear-list-last}
\end{align}

In each case, there exists a compact set $E_{ji} \subset U_j \times U_i$, $N \in \mathbb{N}$ and $C_{s,t,N} > 0$ such that
\begin{align}
\norm{T_{ji}}_{\Nuc}\leq C_{s,t,N} \norm{K}_{C^N(E_{ji})}
\end{align}
where $T_{ji}$ refers generically to the operator on the left-hand side of Equations \eqref{eq:off-block-nuclear-list-1}--\eqref{eq:off-block-nuclear-list-last}.
\end{lemma}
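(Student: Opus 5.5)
The approach is to factor each off-diagonal block through a smoothing operator with kernel supported near $U_j\times U_i$, and to write that operator as a product of two Hilbert--Schmidt operators. Fix cutoffs $\chi_i\in C_c^\infty(U_i)$ with $\chi_i\equiv1$ on a neighbourhood of $\overline{\Omega}_i$. Since the traces $\gammaDir{},\gammaNeu{}$ and their adjoints, as well as the projections $p_i$, only see a neighbourhood of $\overline\Omega_i$, each block in \eqref{eq:off-block-nuclear-list-1}--\eqref{eq:off-block-nuclear-list-last} can be written as
\begin{align*}
T_{ji} = \mathcal{T}_j\,\bigl(\chi_j A\chi_i\bigr)\,\mathcal{T}_i^{*},
\end{align*}
where $\mathcal{T}_j$ is one of $p_j$, $q_j\gammaDir{}$, $q_j\gammaNeu{}$, and $\mathcal{T}_i^*$ is the corresponding adjoint. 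The kernel of $\chi_jA\chi_i$ is $\chi_j(x)K(x,y)\chi_i(y)$. It is smooth and compactly supported in $E_{ji}:=\operatorname{supp}\chi_j\times\operatorname{supp}\chi_i$, which is compact and disjoint from the diagonal because $\operatorname{dist}(U_i,U_j)>0$.

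The first step is the mapping properties of the outer factors. I would choose bounded Lipschitz (or smooth) sets $W_i\Subset U_i$ containing $\operatorname{supp}\chi_i$ and check the following.
\begin{enumerate}
\item The adjoints map $\gammaDir{*}:H^s(\partial\Omega_i)\to H^{-1}_{\mathrm{comp}}(W_i)$ continuously for $s>-1/2$, which covers the ranges listed.
\item For the Neumann trace, duality with Lemma~\ref{lem:continuity-neumann-trace} shows that $\gammaNeu{*}$ maps into the dual of $H^{s'+1/2}_\Delta$. This is not a Sobolev space. However, the kernel is smooth, so $\chi_jA\chi_i$ maps distributions of finite order supported in $\overline{W_i}$ into $C^\infty(\overline{W_j})$, and the pairing only needs $K$ in $C^N$.
\item The outgoing factors $\gammaDir{}$ and $\gammaNeu{}$ are continuous from $H^{2}(W_j)$ into $H^t(\partial\Omega_j)$ for the stated $t$ ranges.
\end{enumerate}
To sidestep the $H_\Delta$ issue entirely, I would use the concrete formula $\gammaNeu{}u=\nu\cdot\gamma(\nabla u)$, valid for $u\in H^2$. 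Then $\gammaNeu{*}$ acting on $\phi$ is the distribution $-\operatorname{div}(\nu\phi\,\delta_{\partial\Omega})$, which lies in $H^{-2}_{\mathrm{comp}}$.

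The core step is nuclearity of $B:=\chi_jA\chi_i:H^{-M}_{\mathrm{comp}}(W_i)\to H^{M}(W_j)$ for a fixed large $M$ (say $M=2$), with
\begin{align*}
\norm{B}_{\Nuc}\le C\norm{K}_{C^N(E_{ji})}.
\end{align*}
The plan is to factor through a ball $B_0\supset W_i\cup W_j$:
\begin{align*}
B=\bigl(B(1-\Delta)^{L}\bigr)\circ(1-\Delta)^{-L}.
\end{align*}
Here $(1-\Delta)^{-L}$ is localized with cutoffs: for $L$ large (about $d/2+M$), it maps $H^{-M}_{\mathrm{comp}}$ Hilbert--Schmidt into $L^2(B_0)$, since its kernel is in $L^2$ after localization. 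The remaining factor $B(1-\Delta)^{L}$ has kernel $(1-\Delta_y)^{L}$ applied to $\chi_jK\chi_i$, which is again compactly supported and smooth. It is Hilbert--Schmidt from $L^2$ to $H^M$, with Hilbert--Schmidt norm bounded by a $C^{N}$ norm of the kernel, $N\approx 2L+M$. The Proposition on characterizations of nuclear operators then gives nuclearity with the bound $\norm{R}_{\HS}\norm{S}_{\HS}\lesssim\norm{K}_{C^N(E_{ji})}$. Composing with the bounded outer factors from the first step yields the estimate in every listed case, with $N$ independent of $K$.

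I expect the main obstacle to be the Neumann cases \eqref{eq:off-block-nuclear-list-3}--\eqref{eq:off-block-nuclear-list-last} on merely Lipschitz boundaries. There $\nu$ is only $L^\infty$, so I would have to confirm that $\phi\mapsto\operatorname{div}(\nu\phi\,\delta_{\partial\Omega})$ is bounded $H^s(\partial\Omega)\to H^{-2}_{\mathrm{comp}}$ for $s>0$. I would argue this by duality: for $u\in H^2$, $\nabla u|_{\partial\Omega}\in H^{1/2}(\partial\Omega)\subset L^2$, so the pairing $\int\phi\,\nu\cdot\nabla u$ is bounded by $\norm{\phi}_{L^2}\norm{u}_{H^2}$. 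I would also need to confirm that this agrees with the weak Neumann trace on the smooth functions $B\,(\cdot)$, which holds because those functions are $C^\infty$ near $\partial\Omega_j$.
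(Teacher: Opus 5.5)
Your argument is correct, but it is organized differently from the paper's. The paper cuts $K$ off to $C_c^\infty(U_j\times U_i)$, places it on a torus and expands it in plane waves, $K(x,y)=\sum_{m,n}a_{mn}e_m(x)e_n(y)$. The rapid decay of $a_{mn}$, controlled by $\|K\|_{C^{2M}(E_{ji})}$, writes $T_{ji}$ as an absolutely convergent sum of rank-one operators. Nuclearity then reduces to polynomial-in-$n$ bounds for the norms of $p_je_m$, $\gamma_D e_m$, $\gamma_N e_m$ and for the corresponding dual norms.

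You instead isolate one nuclear core, $\chi_jA\chi_i:H^{-M}_{\mathrm{comp}}\to H^{M}$. You prove its nuclearity by the Hilbert--Schmidt factorization through the localized Bessel potential, which works once $2L>M+d/2$ because the Bessel kernel then lies in $H^M$. After that you only need \emph{boundedness} of the trace maps and their adjoints between Sobolev spaces. This is more modular: nuclearity lives entirely in the smooth kernel, and the Lipschitz trace theory enters only through continuity statements. It also makes transparent that the admissible ranges of $s,t$ are exactly those for which the outer factors are bounded. The paper's route avoids operator factorizations and the duality argument for $\gamma_N^*$ altogether, at the price of estimating traces of exponentials.

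Two bookkeeping points need fixing.

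First, the range $s>-1/2$ in your item (1) does \emph{not} cover cases \eqref{eq:off-block-nuclear-list-2} and \eqref{eq:off-block-nuclear-list-4}, where $s\in(-1,1)$. For $s\in(-1,0)$, use that $\gamma_D:H^\sigma(\mathbb{R}^d)\to H^{\sigma-1/2}(\partial\Omega)$ is bounded for $\sigma\in(1/2,3/2)$. This gives $\gamma_D^*:H^s(\partial\Omega)\to H^{s-1/2}_{\mathrm{comp}}\subset H^{-2}_{\mathrm{comp}}$, so your $M=2$ core still applies.

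Second, case \eqref{eq:off-block-nuclear-list-1} is not covered by your list of outer factors. You need that $p_i$ followed by extension by zero maps $H^s(\Omega)$ boundedly into $H^{-M}_{\mathrm{comp}}$. This is where $s>-1/2$ is used, via $H^s(\Omega_i)=\widetilde H^s(\Omega_i)$ for $|s|<1/2$. Since $t\in\mathbb{R}$ is arbitrary in that case, you must also take $M\ge\max(2,t)$ rather than $M=2$, so $N$ depends on $t$.
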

\begin{rem}
    \label{rem:off-diag-trace-zero}
If $T_{ji}$ is a trace-class off-diagonal block operator on a Hilbert space $\mathcal{H} = \bigoplus_i \mathcal{H}_i$, then automatically $\tr_{\mathcal{H}}(T_{ij}) = 0$.
\end{rem}
\begin{proof}[Proof of Lemma \ref{lem:trace-class-smoothing-ops}.]
Fix $i, j \in \{1, \ldots, k\}$ with $i \neq j$.
By modifying $K$ by a smooth cutoff function which is $\equiv 1$ near $U$, we can assume that $K \in C_c^\infty(U_j \times U_i)$.
Note that the bounded sets $U_i, U_j\subset \mathbb{R}^d$ can be considered as a subset of the torus $\mathbb{T}^d$, and therefore the kernel $K$ can be expanded in a Fourier series $K(x,y) = \sum_{m,n \in \mathbb{Z}^d} a_{m,n} e_m(x) e_n(y)$
with Fourier basis functions $e_m(x) = e^{2\pi \rmi  m \cdot x / L}$ for some $L>0$ such that $U_i, U_j \subset [-L/2,L/2]^d$.
Since $K$ is smooth, the Fourier coefficients $a_{m,n}$ are rapidly decaying: for all $M \in \mathbb{N}$, integration
by parts gives
\[
        |a_{mn}|
        \le C_M \|K\|_{C^{2M}(E_{ji})}
              \langle m\rangle^{-M}\langle n\rangle^{-M},
        \qquad
        \langle m\rangle=(1+|m|^2)^{1/2}.
\]

It follows that to estimate the trace norm of $K$, considered as a map between certain Sobolev spaces $\mathcal{H}_1 \to \mathcal{H}_2$,
it suffices to find bounds on $\norm{e_m}_{\mathcal{H}_2}, \norm{e_n}_{\mathcal{H}_1^*}$ which have at most polynomial growth in $m, n$.

We now proceed to give the required polynomial estimates.

For \ref{eq:off-block-nuclear-list-1}, the spaces are $H^t(\Omega_j)$ and $H^s(\Omega_i)^* = H^{-s}_{\overline{\Omega}_i}$.
In the former case the norm on $\Omega_i$ is estimated by the norm of any extension to $\mathbb{R}^d$:
\begin{align*}
\norm{e_m}_{H^t(\Omega_j)} \leq \norm{e_m \phi}_{H^t(\mathbb{R}^d)}
\lesssim \langle m\rangle^t
\end{align*}
for any cutoff function $\phi$ with $\phi \equiv 1$ near $\Omega_j$.
The space $H^{-s}_{\overline{\Omega}_i}$ is the space of distributions in $H^{-s}(\mathbb{R}^d)$ with support in $\overline{\Omega}_i$.
Hence
\begin{align*}
\norm{e_n}_{H^s(\Omega_i)^*} = \norm{e_n \chi_{\Omega_i}}_{H^{-s}(\mathbb{R}^d)} \lesssim \langle n\rangle^{-s}, s > -\frac{1}{2}.
\end{align*}
Here $\chi_{\overline{\Omega}_i}$ is the characteristic function of $\overline{\Omega}_i$, i.e. a sharp cutoff function.
Hence the regularity of $\chi_{\overline{\Omega}_i} e_n$ is less than $H^{1/2}$.

For (\ref{eq:off-block-nuclear-list-2}) -- (\ref{eq:off-block-nuclear-list-last}), the ranges of the Sobolev orders are determined by the continuity of the trace maps.
For any $s \in (0,1)$, we have
\begin{align*}
&\norm{\gammaDir{} e_n}_{H^{s}(\partial U)} \lesssim \norm{e_n}_{H^{s+1/2}(\widetilde{U})} \lesssim \langle n\rangle^{s+1/2},\\
&\norm{\gammaNeu{} e_n}_{H^{s-1}(\partial U)} \lesssim \norm{e_n}_{H^{s+1/2}(\widetilde{U})} + \norm{\Delta e_n}_{L^2(\widetilde{U})} \lesssim (\langle n\rangle^{s+1/2} + \langle n \rangle^2).
\end{align*}
The nuclearity ranges specified in the statement of the lemma then follow.
\end{proof}

\subsection{Fredholm determinant.}
The Fredholm determinant of an operator $1 + A$ on a Hilbert space $\mathcal{H}$ satisfies the standard bound
(\cite{simonNotesInfiniteDeterminants1977}, Equation (3.7)):
\begin{align}
    \label{eq:fredholm-det-bound}
|\operatorname{det}(1+A) - 1| \leq \norm{A}_{\Nuc} \exp \left(\norm{A}_{\Nuc(\mathcal{H})}+ 1 \right)
.
\end{align}

We also need the following Hilbert space generalization of \emph{Jacobi's formula}
\begin{lemma}
    \label{lem:fredholm-det-deriv}
If $A(\lambda)$ is a holomorphic trace-class perturbation of the identity,
then the Fredholm determinant $\det A(\lambda)$ is holomorphic and
\begin{align}
    \label{eq:fredholm-jacobi-formula}
    \frac{d}{d\lambda} \log \det A(\lambda) = \tr \left( A(\lambda)^{-1} \frac{d}{d\lambda} A(\lambda)\right).
\end{align}
\end{lemma}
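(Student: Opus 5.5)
The plan is to reduce the statement to the finite-dimensional Jacobi formula by approximating with finite-rank perturbations, using the continuity of the Fredholm determinant in trace norm. Write $A(\lambda) = 1 + B(\lambda)$ with $\lambda \mapsto B(\lambda)$ holomorphic with values in $\Nuc(\mathcal{H})$.

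\textbf{Holomorphy.} Fix an orthonormal basis and let $\Pi_n$ be the orthogonal projections onto the span of the first $n$ basis vectors. Set $B_n(\lambda) := \Pi_n B(\lambda)\Pi_n$. Then $\det(1+B_n(\lambda))$ is the ordinary determinant of an $n\times n$ matrix whose entries are holomorphic in $\lambda$, so it is holomorphic. Since $\Pi_n \to 1$ strongly, $\norm{B_n(\lambda)-B(\lambda)}_{\Nuc}\to 0$ for each fixed $\lambda$. I would upgrade this to locally uniform convergence in $\lambda$, using that $B$ is continuous into $\Nuc$ and therefore has compact image on compact sets, where strong convergence of $\Pi_n$ gives uniform trace-norm convergence. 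By the local Lipschitz estimate $|\det(1+X)-\det(1+Y)|\le \norm{X-Y}_{\Nuc}\exp(\norm{X}_{\Nuc}+\norm{Y}_{\Nuc}+1)$ (the standard companion of (\ref{eq:fredholm-det-bound}) in \cite{simonNotesInfiniteDeterminants1977}), the determinants then converge locally uniformly. A locally uniform limit of holomorphic functions is holomorphic, so $\det A(\lambda)$ is holomorphic.

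\textbf{Derivative formula.} Because the convergence is locally uniform, Weierstrass' theorem gives $\frac{d}{d\lambda}\det(1+B_n)\to\frac{d}{d\lambda}\det A$. For matrices, Jacobi's formula reads $\frac{d}{d\lambda}\det(1+B_n)=\det(1+B_n)\tr\bigl((1+B_n)^{-1}B_n'\bigr)$.

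\textbf{Main obstacle.} The hard step is passing to the limit on the right-hand side. Here I would argue as follows:
\begin{itemize}
\item $B_n' = \Pi_n B'\Pi_n \to B'$ in trace norm, because $B'$ is again trace-class valued: the Cauchy integral of a $\Nuc$-valued holomorphic function converges in $\Nuc$.
\item Near any point where $A(\lambda)$ is invertible, $(1+B_n)^{-1}\to A^{-1}$ in operator norm, since $B_n\to B$ in norm and inversion is continuous.
\item Hence $(1+B_n)^{-1}B_n' \to A^{-1}A'$ in $\Nuc$, and the trace, being continuous on $\Nuc$, converges.
\end{itemize}
This yields $\frac{d}{d\lambda}\det A=\det A\,\tr(A^{-1}A')$. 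Dividing by $\det A \neq 0$, which holds wherever $A$ is invertible by \cite[Theorem 3.9]{simonNotesInfiniteDeterminants1977}, gives (\ref{eq:fredholm-jacobi-formula}). The formula is only meaningful where $A(\lambda)$ is invertible, so the statement should be read with that implicit hypothesis, which holds at every point where the paper applies it.
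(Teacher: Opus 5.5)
Your argument is correct, but it takes a different route from the paper.

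The paper works directly in infinite dimensions at a single point $\lambda_0=0$:
\begin{itemize}
\item it cites holomorphy of $\det(1+B(\lambda))$ from Simon's Theorem 3.3;
\item it uses the Lipschitz estimate to replace $B(\lambda)$ by its linearization $B(0)+\lambda B'(0)$, at the cost of an $O(\lambda^2)$ error in the determinant;
\item it factors $\det(1+B(0)+\lambda B'(0))=\det(1+B(0))\det(1+\lambda(1+B(0))^{-1}B'(0))$ by multiplicativity;
\item it reads off the derivative from $\det(1+\lambda X)=1+\lambda\tr X+O(\lambda^2)$.
\end{itemize}

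You instead reduce to the matrix Jacobi formula through the compressions $\Pi_n B\Pi_n$. This forces you to control three limits: locally uniform convergence of the determinants (via compactness of $B(K)$ in $\Nuc$), convergence of the derivatives by Weierstrass, and norm convergence of $(1+B_n)^{-1}$ together with trace-norm convergence of $B_n'$. In exchange you get holomorphy of $\det A$ as a by-product rather than citing it, and you need only the Lipschitz bound and finite-dimensional linear algebra. The paper's proof is shorter, but it relies on multiplicativity of the Fredholm determinant and on the first-order expansion of $\det(1+\lambda X)$.

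Both proofs need $A(\lambda)$ to be invertible at the point of differentiation; the paper uses $(1+B(0))^{-1}$ without comment. Your remark that this hypothesis is implicit in the statement therefore applies equally to the paper's version.
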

\begin{proof}
The statement follows from Theorem 3.3 and Theorem 6.5 in \cite{simonNotesInfiniteDeterminants1977}.
Without loss of generality assume that we differentiate at $\lambda_0 = 0$.

We will write $A(\lambda) = I + B(\lambda)$ where $B(\lambda)$ is holomorphic and trace class.
By Theorem 3.3 loc.~cit., $\det(1 + B(\lambda))$ is holomorphic.

Now Taylor expand $B(\lambda) = B(0) + B'(0)\lambda + O(\lambda^2)$.
The Lipschitz estimate in Theorem 6.5 loc.~cit.~ yields, with $C>0$ independent of $\lambda$, in a neighborhood of $\lambda = 0$:
\begin{align*}
&|\det (1+ B(\lambda)) - \det(1 + B(0) + \lambda B'(0))|\\
\leq\ &\norm{B(\lambda) - B(0) - \lambda B'(0)}_n \exp(\Gamma_n (\norm{B(\lambda)}_n + \norm{B(0) + \lambda B'(0)}_n+1)^n)\\
\leq\ & C \lambda^2 
\end{align*}
This implies
\begin{align*}
&\det(1+ B(\lambda)) = \det(1 + B(0) + \lambda B'(0)) + O(\lambda^2)\\
=\ &\det(1+B(0)) \det(1 + \lambda (1+B(0))^{-1}B'(0)) + O(\lambda^2)\\
=\ &\det(1+B(0))(1 + \lambda \tr((1+B(0))^{-1}B'(0))) + O(\lambda^2).
\end{align*}
hence
\begin{align*}
\left.\frac{\mathrm{d}}{\mathrm{d}\lambda}\right|_{\lambda = 0} \det (1+ B(\lambda)) = \det(1+B(0)) \tr((1+B(0))^{-1}B'(0)).
\end{align*}
Then the logarithmic derivative is given by
\begin{align*}
    &\left.\frac{\mathrm{d}}{\mathrm{d}\lambda}\right|_{\lambda = 0} \log \det (A(\lambda))
= \left.\frac{\mathrm{d}}{\mathrm{d}\lambda}\right|_{\lambda = 0} \log \det (1 + B(\lambda))
= \operatorname{det}(1+B(0))^{-1} \left.\frac{\mathrm{d}}{\mathrm{d}\lambda}\right|_{\lambda = 0} \det(1 + B(\lambda))\\
=\ &\tr((1+B(0))^{-1} B'(0))
= \tr (A(0)^{-1} A'(0)).
\end{align*}
\end{proof}


To check that an operator family is holomorphic in the trace class, we use the following lemma.
\begin{lemma}
\label{lem:trace-class-holo}
Let $H$ be a Hilbert space, $D \subset \mathbb{C}$ an open domain and $D \to \mathcal{L}(H)$, $z \mapsto A(z)$ a holomorphic family of bounded operators.
Suppose that $\sup_{z \in K} \norm{A(z)}_{\Nuc(H)} < \infty$ for every compact $K \subset D$.
Then $A(z)$ is holomorphic in the $\Nuc(H)$-topology.
\end{lemma}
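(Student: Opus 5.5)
The proof will be a Cauchy-estimate argument upgrading weak holomorphy to holomorphy in the Banach space $\Nuc(H)$. We use that $\Nuc(H)$ is the dual of the compact operators $\mathcal{K}(H)$ under the pairing $\langle T, K\rangle = \tr(TK)$, together with the fact that norm-bounded families which are holomorphic against a separating family of functionals are holomorphic in norm.

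\emph{Step 1 (weak holomorphy).} For every finite-rank $F = \sum_k \langle \cdot, u_k\rangle v_k$, the function $z\mapsto \tr(A(z)F)=\sum_k\langle A(z)v_k,u_k\rangle$ is holomorphic, since $A$ is holomorphic in $\mathcal{L}(H)$. For a general compact $K$, choose finite-rank $F_n\to K$ in operator norm. Fix a compact $L\subset D$. Then
\[
|\tr(A(z)(K-F_n))|\le \sup_{z\in L}\norm{A(z)}_{\Nuc}\,\norm{K-F_n}.
\]
So $z\mapsto \tr(A(z)K)$ is a locally uniform limit of holomorphic functions, and is therefore holomorphic.

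\emph{Step 2 (Cauchy formula in trace norm).} Fix a closed disc $\overline{B_r(z_0)}\subset D$ and set $M=\sup_{|z-z_0|\le r}\norm{A(z)}_{\Nuc}<\infty$. Define the candidate derivative by the contour integral
\[
A_1(z)=\frac{1}{2\pi \rmi}\oint_{|w-z_0|=r}\frac{A(w)}{(w-z)^2}\,dw .
\]
This integral converges in operator norm because $A$ is continuous in $\mathcal{L}(H)$. Consider the difference quotient remainder
\[
\frac{A(z+h)-A(z)}{h}-A_1(z)=\frac{h}{2\pi\rmi}\oint\frac{A(w)}{(w-z-h)(w-z)^2}\,dw .
\]
This identity holds in $\mathcal{L}(H)$ by the usual Cauchy formula. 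To bound its trace norm, test it against compact $K$ with $\norm{K}\le1$. Since the trace can be moved inside the operator-norm-convergent integral when paired with the fixed finite-rank approximants, and then by the density argument of Step 1, we get $|\tr(\,\cdot\,K)|\le C|h|M$ for $|z-z_0|<r/2$.

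By duality, $\norm{T}_{\Nuc}=\sup_{\norm{K}\le1,\,K\in\mathcal{K}}|\tr(TK)|$, so the remainder has trace norm at most $C|h|M$. In particular it lies in $\Nuc(H)$, because its pairing with compacts is bounded and $\Nuc(H)=\mathcal{K}(H)^*$, and the bounded operator it represents is identified with that functional. Hence $A$ is complex differentiable in $\Nuc(H)$ at every point, i.e. holomorphic.

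\emph{Main obstacle.} The delicate point is the last identification: an operator in $\mathcal{L}(H)$ whose pairings $\tr(TF)$ with finite-rank $F$ are bounded by $c\norm{F}$ must lie in $\Nuc(H)$ with $\norm{T}_{\Nuc}\le c$. I would prove this directly via the polar decomposition $T=V|T|$. For any finite orthonormal family $(e_k)$, take $F=\sum_k\langle\cdot,Ve_k\rangle e_k$, which has $\norm{F}\le1$. Then
\[
\sum_k\langle |T|e_k,e_k\rangle=\tr(TF)\le c,
\]
which gives $\tr|T|\le c$. With this lemma in hand, every step above is routine.
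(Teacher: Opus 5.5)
Your proof is correct, but it takes a somewhat different route from the paper.

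\textbf{The paper's route.} The paper expands $A(z)=\sum_n A_n(z-z_0)^n$ in $\mathcal{L}(H)$ and writes the Taylor coefficients $A_n$ as Cauchy integrals. It then asserts that these integrals converge in the trace-norm topology because $\norm{A(w)}_{\Nuc}$ is bounded on the contour. This gives $\norm{A_n}_{\Nuc}\le C r^{-n}$, and hence the power series converges in $\Nuc(H)$.

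\textbf{Your route.} You bypass the power series and control the difference-quotient remainder directly. The real difference lies in the key analytic step. The integrand is a priori only continuous in operator norm. So the paper's one-line claim that the Cauchy integral ``converges in the trace norm topology'' silently needs a justification, such as strong measurability into $\Nuc(H)$ or lower semicontinuity of the trace norm under operator-norm limits. Your polar-decomposition lemma ($|\tr(TF)|\le c\norm{F}$ for all finite-rank $F$ implies $\norm{T}_{\Nuc}\le c$) supplies exactly this in an elementary way. It would also patch the paper's argument verbatim, applied to $A_n$ with
\[
|\tr(A_nF)|\le M r^{-n}\norm{F}.
\]

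\textbf{What each buys.} The paper's version is shorter and yields the trace-norm Taylor expansion directly. Yours is self-contained and makes the delicate point explicit.

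\textbf{Streamlining.} Your Step 1 and the detour through $\Nuc(H)=\mathcal{K}(H)^*$ are unnecessary. Pair the remainder with finite-rank $F$, move $\tr(\,\cdot\,F)$ inside the operator-norm integral, and apply your lemma; this already gives $\norm{R}_{\Nuc}\le C|h|M$. As literally phrased, ``testing against compact $K$'' is not yet meaningful for an operator not known to be trace class. It is good that you isolated the finite-rank version as the real content.
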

\begin{proof}
Since $A(z)$ is holomorphic in $\mathcal{L}(H)$, near any $z_0 \in D$ it possesses a power series expansion $A(z) = \sum_{n=0}^\infty A_n (z-z_0)^n$ with $A_n \in \mathcal{L}(H)$.
The coefficients $A_n$ are given by the Cauchy integral formula
\begin{align*}
A_n = \frac{1}{2\pi \rmi} \int_{|z-z_0|=r} \frac{A(z)}{(z-z_0)^{n+1}} \mathrm{d} z
\end{align*}
where we choose $r > 0$ such that the closed disk $\overline{B(z_0, r)}$ is contained in $D$.
Since $\norm{A(z)}_{\Nuc(H)}$ is uniformly bounded on the integration contour, the integral converges in the trace norm topology.
Hence $\norm{A_n}_{\Nuc(H)} \leq C r^{-n}$ for some $C > 0$.
Therefore the power series of $A(z)$ converges in the trace norm topology for $|z-z_0| < r$.
\end{proof}

\end{document}